\newlist{abbrv}{itemize}{1}
\setlist[abbrv,1]{label=,labelwidth=1in,align=parleft,itemsep=0.1\baselineskip,leftmargin=!}
\newtheorem{thm}{Theorem}[section]
\newtheorem{lm}{Lemma}[section]
\newtheorem{prop}{Proposition}[section]
\newtheorem{rmk}{Remark}[section]
\newcommand{\prob}{\mathbb{P}}
\newcommand{\intZ}{\mathbb{Z}}
\newcommand{\realR}{\mathbb{R}}
\newcommand{\complexC}{\mathbb{C}}
\newcommand{\FGUE}{F_{GUE}}
\newcommand{\FGOE}{F_{GOE}}
\newcommand{\polylog}{\mathrm{Li}}
\newcommand{\sign}{\mathrm{sgn}}
\newcommand{\dist}{{\rm{dist}}\,}
\newcommand{\FS}{F_2}
\newcommand{\FF}{F_1}
\newcommand{\Ks}{\mathcal{K}^{(2)}}
\newcommand{\Kf}{\mathcal{K}^{(1)}}
\newcommand{\inodes}{S}
\newcommand{\conf}{\mathcal{X}}
\newcommand{\band}{\mathrm{B}}
\renewcommand{\dd}{{\mathrm d}}
\newcommand{\ii}{\mathrm{i}}
\newcommand{\ddbar}[1]{\frac{{\mathrm d}#1}{2\pi {\mathrm i}#1}}
\newcommand{\ddbarr}[1]{\frac{{\mathrm d}#1}{2\pi {\mathrm i}}}
\newcommand{\oout}{{\rm out}}
\newcommand{\LL}{{\rm left}}
\newcommand{\RR}{{\rm right}}
\newcommand{\zz}{\mathbf{z}}
\newcommand{\rr}{\mathbbm{r}}
\newcommand{\roots}{R}
\newcommand{\gap}{d}
\newcommand{\constf}{\mathcal{C}_N^{(1)}}
\newcommand{\consts}{\mathcal{C}_N^{(2)}}
\newcommand{\const}{\mathcal{C}_N}
\newcommand{\constc}{\mathcal{C}}
\newcommand{\hftn}{\mathfrak{h}}
\newcommand{\gee}{G}
\newcommand{\nodesmapping}{\mathcal{M}}
\numberwithin{equation}{section} %\numberwithin{lem}{section}
\numberwithin{thm}{section}
\author{Jinho Baik\footnote{Department of Mathematics, University of Michigan,
Ann Arbor, MI, 48109. Email: \texttt{baik@umich.edu}} 
 and Zhipeng Liu\footnote{Courant Institute of Mathematical Sciences, New York University, New York, NY 10012. Email: \texttt{zhipeng@cims.nyu.edu}}}
\date{\today}
\begin{document}
\title{Fluctuations of TASEP on a ring  in relaxation time scale}
%with flat or step initial condition in relaxation time scale}
\maketitle

\begin{abstract}
We consider the totally asymmetric simple exclusion process on a ring with flat and step initial conditions.
We assume that  the size of the ring and the number of particles tend to infinity proportionally and evaluate the fluctuations of  tagged particles and currents.
%The interest is the influence of the ring size to the fluctuations. % compared with the time scale. 
The crossover from the KPZ dynamics to the equilibrium dynamics occurs 
when the time is proportional to the 3/2 power of the ring size. 
%Tracy-Widom distributions %, which is the large time fluctuations of infinite system, 
%to the Gaussian distribution %, which is the large time fluctuations of finite system, 
%occurs %at the relaxation time scale 
We compute the limiting distributions in this relaxation time scale. 
%The limiting distributions are believed to interpolate the Tracy-Widom distributions and the Gaussian distribution. 
%after the KPZ scaling.  that with the KPZ scaling the limiting distributi
%The limit theorems in this relaxation time scale are proved and the limiting distributions are computed explicitly. 
The analysis is based on an explicit formula of the finite-time one-point distribution obtained from  the coordinate Bethe ansatz method. 
%{\Cr Add we also compute the finite-time  one-point distribution function explicitly, or something like.} {\Cb It seems that finite time distributions are not as important as the main theorems, so probably we just discuss it in the introduction part.}
\end{abstract}

\section{Introduction}
\label{sec:introduction}

%\subsection{TASEP on a ring}
%{\Cr integrated current. Average density}

Consider interacting particle systems in one-dimension in the KPZ universality class such as 
the asymmetric simple exclusion processes.
The one-point fluctuations (of the location of a particle or the integrated current at a site, say) in large time $t$ are of order $t^{1/3}$ if the system size is infinite %. %. %is on the infinite space.
%In this case, the fluctuations 
and converge typically to the Tracy-Widom distributions. 
These are proved for the totally asymmetric simple exclusion process (TASEP), the asymmetric simple exclusion process (ASEP), and a few other related integrable models for a few choices of initial conditions 
(see, for example, \cite{Johansson00, Borodin-Ferrari-Prahofer-Sasamoto07, Tracy-Widom09, Amir-Corwin-Quastel11}; see also \cite{Corwin11} for a survey). 
%{\Cr more reference maybe?} {\Cb Maybe add \cite{Baik-Ferrari-Peche10, Ben_Arous-Corwin11,Borodin-Ferrari-Sasamoto08a,Borodin-Ferrari-Sasamoto09} as well?}
On the other hand, if the system size is finite, then the system eventually reaches an equilibrium and hence the one-point fluctuations follow the $t^{1/2}$ scale and the Gaussian distribution. 
%{\Cr (see, for example, (equation number) below) (State super-relaxation result later.)}
%(see, for example, \cite{Derrida-Lebowitz98} for a large deviation result for the spatially averaged current on the ring. {\Cr Is this correct? What is in this paper? Gaussian known? Can we prove for our case?}). 
In this paper we assume that the system size $L$ grows with time $t$ and 
consider the crossover %transitional
regime from the KPZ dynamics to the equilibrium dynamics. 
In the KPZ regime, the spatial correlations are  of order $t^{2/3}$. 
Hence if the system size $L$ is of order $t^{2/3}$, then all of the particles in the system are correlated.
This suggests that the transition, or the relaxation, occurs when $t=O(L^{3/2})$ \cite{Gwa-Spohn92, Derrida-Lebowitz98, LeeKim06, Brankov-Papoyan-Poghosyan-Priezzhev06, Gupta-Majumdar-Godreche-Barma07, Proeme-Blythe-Evans11}.
%, which is called the relaxation time scale.  
%This time scale is called  in this relaxation time scale $t=O(L^{3/2})$.
%This is the transitional regime from the KPZ dynamics to the equilibrium dynamics. 
%{\Cr Which paper considered relaxation time scale first? (See, [Gwa-Spohn])?}
%The observation of the significance of this relaxation time scale goes to [Family-Vicseck] in which the Monte Carlo simulation of Eden model was performed, which then prompted the famous work of Kardar, Parisi, and Zhang [KPZ]. 

%In this paper w
We focus on one particular model: the TASEP on a ring.
%, in the relaxation time scale $t=O(L^{3/2})$.
% and study the fluctuations in the relaxation time scale.
A ring of size $L$ is identified as $\intZ_L=\intZ/L$ which can be represented by the set $\{0,1, \cdots, L-1\}$. The point $L$ is identified with $0$. 
We assume that there are $N$ particles, and they travel to the right following the usual TASEP rules, but  a particle at site $L-1$ can jump to the right  if the site $0$ is empty, and once it jumps, then it moves to the site $0$. 
The TASEP on a ring is  equivalent to the periodic TASEP. In the periodic TASEP, 
the particles are on $\intZ$ such that if a particle is at site $i$, then there are particles at sites $i+nL$ for all $n\in \intZ$, and if a particle at site $i$ jumps to the right, then the particles at sites $i+nL$, $n\in \intZ$, all jump to the right.
A particle in the TASEP on a ring is in correspondence with an infinitely many particles of the periodic TASEP each of which encodes a winding number around the ring of the particle on the ring. 
%The location of a particle in the periodic TASEP takes into account of the 
%winding number around the ring of the corresponding particle in a ring. 
We also note that if we only consider the particles in one period in the periodic TASEP, 
then their dynamics are equivalent to the TASEP in the configuration space
\begin{equation}
\label{eq:def_conf}
\conf_N(L) := \{(x_1,x_2,\cdots,x_N)\in \intZ^N: x_1 <x_2 <\cdots <x_N <x_1+L  \}.
\end{equation}
The difference is that there are $N$ particles in  the TASEP in $\conf_N(L)$ while the periodic TASEP has infinitely many particles. 
We use three systems, the TASEP on a ring, the periodic TASEP, and the TASEP in $\conf_N(L)$,  interchangeably and  make comments only if a distinction is needed. 

As for the initial conditions, we consider the flat and step initial conditions.
The number of particles in the ring of size $L$ is denoted by $N$ where $N<L$. 
We consider  the limit as $N, L\to \infty$ proportionally and time $t=O(L^{3/2})$, and prove the limit theorems for the fluctuations of the location of a tagged particle in the periodic TASEP and also the  
integrated current of a fixed site.  
We  show that the order of the fluctuations is still $t^{1/3}$ as in the KPZ universality class but the limiting distributions are changed, which we compute explicitly. 
The limiting distributions depend continuously on the rescaled time parameter $\tau$ which is proportional to $tL^{-3/2}$. 
For the step initial condition, the limiting distribution depends on one more parameter. 
Due to the ring geometry, the rightmost particle eventually meets up with the leftmost particle which is in a high density profile due to the step initial condition, and therefore there is a shock. % for the step case. 
The shock travels with speed $1-2\rho$ on average 
where $\rho=N/L$ is the average density of particles, while the particles travel with speed asymptotically equal to $1-\rho$ on average. 
Hence due to the ring geometry, a particle meets up with the shock once every $O(L)$ time. 
For $t \ll L^{3/2}$, the fluctuations of the number of jumps by a particle are distributed asymptotically as  the GUE Tracy-Widom distribution $F_{GUE}$, as in the $L=\infty$ case, if the particle is away from the shock. 
However, if the particle is at the same location as the shock at the same time, the fluctuations are given by $(F_{GUE})^2$. (To be precise, for $t=O(L)$, they are given by $F_{GUE}(x)F_{GUE}(cx)$ for some positive constant $c$ which depends on $t/L$. For $L\ll t\ll L^{3/2}$, the constant $c$ is $1$.) % and the limiting distribution becomes $(F_{GUE}(x))^2$.)  
The change from $F_{GUE}$ to $(F_{GUE})^2$ at the shock is a similar phenomena to the anomalous shock fluctuations studied by Ferrari and Nejjar \cite{Ferrari-Nejjar15} for the TASEP on the infinite lattice $\intZ$ (see also Section~\ref{sec:dlpp} below.). 
In the relaxation time scale $t=O(L^{3/2})$, the effect of the shock becomes \emph{continuous} in the following sense. 
If we introduce a parameter $\gamma$ to measure the scaled distance of a tagged particle to the shock, or equivalently the scaled time until the next encounter with the shock, 
% it takes to meet up with the shock next time, 
then the fluctuations of the location of a tagged particle %for step initial condition %in  the relaxation scale 
converge to  a two-parameter family of limiting distributions depending continuously on  $\tau$ and $\gamma$. 
The limiting distribution for the flat initial condition in the relaxation time scale, on the other hand, depends only on $\tau$. 
%Hence if we introduce a parameter $\gamma$ to measure the relative distance of the particle location to the shock, or equivalently the relative time it takes to meet up with the shock next time (with the convention that $\gamma=1/2$ means the particle is at the same location as the shock), then 
%in the sub-relaxation time scale $t\ll L^{3/2}$, the limiting distribution depends on whether $\gamma\notin 1/2+\intZ$ or $\gamma\in 1/2+\intZ$. 
%This parameter $\gamma$ is still important in the relaxation time scale $t=O(L^{3/2})$, and it turns out that the limiting distribution depends on $\gamma$ \emph{continuously}. 
%Hence for the step initial condition, we arrive at a two-parameter family of limiting distributions depending continuously on $\tau$ and $\gamma$. 

In order to prove the asymptotic result, we first obtain an explicit formula for the finite-time distribution function for the location of a particle in the TASEP in $\conf_N(L)$ by using the coordinate Bethe ansatz method\cite{Bethe31,Gwa-Spohn92,Schutz97,Tracy-Widom08}.
Namely, we first solve the Kolmogorov equation explicitly and obtain the transition probability by solving the free evolution equation
with appropriate boundary conditions and initial condition.
The condition $x_N<x_1+L$ gives an extra boundary condition compared with the TASEP on $\intZ$. 
%  for the system by adapting the method of Sch\"utz \cite{Schutz97, Rakos-Schutz05}, and Tracy and Widom % for (T)ASEP on $\intZ$ to the TASEP on $\conf_N(L)$. 
We then sum over all but one particle to obtain a formula for the finite-time distribution function for one particle for general initial condition. 
This formula can be further simplified for the flat and the step initial conditions which are suitable for asymptotic analysis. 
%, a further computation simplified the formula to a formula suitable for asymptotic analysis. 
The final formula for the finite-time distribution functions is given in terms of an integral 
involving a Fredholm determinant on a finite \emph{discrete} set (see Section~\ref{sec:disfsc} below). 
If we take $L, N \to \infty$ while fixing $t$ and the other parameters, the discrete set becomes a continuous contour and we recover the Fredholm determinant formulas for the TASEP on $\intZ$ for the step and flat initial conditions \cite{Johansson00, Borodin-Ferrari08, Borodin-Ferrari-Prahofer-Sasamoto07}.  
%{\Cr Check again.} {\Cb Confirmed. See the remarks. Johansson's paper includes the Fredholm determinant formula? I compared with the Borodin-Ferrari paper instead.}
%The change of the continuous contour for the $\intZ$ case to the discrete set for the ring is consistent with the fact that the spectrum for the generator for the system using the Bethe ansatz are continuous for the former and discrete for the later. 
%Since we only consider the TASEP, the difficult combinatorial identities which were used in the ASEP in the analysis of  \cite{Tracy-Widom08, Tracy-Widom08a, Tracy-Widom09} are not needed. 
%However, the change from the space $\intZ$ to the ring (more precisely in the configuration space $\conf_N(L)$) makes the computation significantly more complicated compared with the TASEP on $\intZ$. 

%This is because the ring structure brings in the discreteness of the system, whose asymptotic analysis is much more involved.

%However, the change from the space $\intZ$ to the ring (more precisely in the configuration space $\conf_N(L)$) makes the computation of the one-point distribution significantly more complicated compared with the TASEP on $\intZ$,  and the asymptotic analysis also becomes more involved due to the structure of the finite-time distribution formula. 
% just like the TASEP on $\intZ$. 

We only discuss the relaxation time scale in this paper. 
The results for sub-relaxation time scale, $t\ll L^{3/2}$, are discussed in a separate paper \cite{Baik-Liu16b}, and those for super-relaxation time scale, $t\gg L^{3/2}$, will appear in an upcoming paper.
%the sub-relaxation scale and the super
%However, it is possible to show that the limiting distributions of TASEP on a ring with the same initial conditions belong to the Tracy-Widom distributions for the sub-relaxation $t\ll L^{3/2}$ (see also Section~\ref{sec:dlpp} below), and Gaussian distributions for the super-relaxation scales $t\gg L^{3/2}$. 
%(The sub-relaxation scale is easier to understand if we use the directed last passage percolation model.) 
%These results will appear in an upcoming paper. 
%Together with the result at the relaxation scale $t=O(L^{3/2})$ in this paper, we give the full picture of the limiting distributions of the TASEP on a ring with flat and step initial conditions. 
%{\Cr May need to change the above paragraph.}

%the finite-time distribution formulas in this paper can also be used to show that 

%evaluate the limits in the sub-relaxation and the super-relaxation time scales. 
%These results will be 

\bigskip

%{\Cr JB: I expanded the next two paragraphs. Check.}
The TASEP on a ring was studied in several physics papers. 
The relaxation time scale $t=O(L^{3/2})$ %for the TASEP on a ring 
was first studied by Gwa and Spohn \cite{Gwa-Spohn92}.
% in a physics paper \cite{Gwa-Spohn92}.
They considered the eigenvalues %(Bethe eigenvalues) 
of the generator of the system using the Bethe ansatz method, and argued through numerics that the spectral gap is of order  $L^{-3/2}$.
This can be interpreted as an indication that the relaxation scale is $t=O(L^{3/2})$.  
The spectral gap was further studied in \cite{Golinelli-Mallick04, Golinelli-Mallick05}.
The method of  \cite{Gwa-Spohn92} was extended by Derrida and Lebowitz  \cite{Derrida-Lebowitz98} to compute the large deviations for the total current by all particles in the super-relaxation scale $t\gg L^{3/2}$ (see also \cite{Derrida07, Touchette09} for survey). 
%The above two papers deal with the stationary initial condition. 
Using a different Bethe ansatz method, namely the coordinate Bethe ansatz, 
Priezzhev \cite{Priezzhev2003} computed the finite-time transition probability for general initial conditions
by adapting the analysis of Sch\"utz \cite{Schutz97} for the TASEP on $\intZ$. 
The result was given in terms of a certain series, and it was further refined in \cite{Poghosyan-Priezzhev08}. % by adapting the work of R\'akos and Sch\"utz \cite{Rakos-Schutz05}. 
A different approach to find the transition probability was also presented in \cite{Povolotsky-Priezzhev07}. 
However, the asymptotic results for currents and particle locations in the relaxation time scale were not obtained from the finite-time transition probability formulas. 
Some other heuristic arguments and non-rigorous asymptotic results can be found in \cite{LeeKim06, Brankov-Papoyan-Poghosyan-Priezzhev06, Gupta-Majumdar-Godreche-Barma07, Proeme-Blythe-Evans11}.

More recently, Prolhac studied the bulk Bethe eigenvalues, not only the spectral gap, in detail in the thermodynamic limit \cite{Prolhac13}, 
and also in the scale $L^{-3/2}$, the same scale as the spectral gap \cite{Prolhac14, Prolhac15a}.
Using these calculations, and assuming that (a) 
the eigenfunctions obtained in \cite{Prolhac13, Prolhac14, Prolhac15a} form a complete basis and (b) all the eigenstates of order $L^{-3/2}$ are generated from excitations at a finite distance from the stationary eigenstate, he  %computed the moment-generating function of the current asymptotically from which he obtained 
computed the limiting distributions for the current fluctuations in the relaxation time scale  \cite{Prolhac16}. 
The assumptions are not proved and the analysis of \cite{Prolhac16} are not rigorous. % yet. 
% missing and the results are non-rigorous. 
% and the result of  \cite{Prolhac16} is not yet rigorous. 
(The completeness is proved, however, for discrete-time TASEP  \cite{Povolotsky-Priezzhev07} and also for ASEP for generic asymmetric hopping rate $0<p<1/2$ \cite{Brattain-Norman-Saenz15}.)
%. generic periodic TASEP in discrete time setting with parallel update in \cite{Povolotsky-Priezzhev07}.) 
Prolhac obtained the results for flat, step, and stationary initial conditions when $L=2N$. 
In this paper, we consider the flat and step initial conditions for more general $L$ and $N$, 
and obtain rigorous limit theorems for the tagged particles and the currents. 
The stationary initial condition can also be studied by the method in this paper, and it is discussed in a separate paper \cite{Liu16}. 
Even though our paper also uses the Bethe ansatz method, the approach is different: Prolhac computed the eigenfunctions of the generator and diagonalize the generator while we compute the transition probabilities using coordinate Bethe ansatz and compute the finite-time one-point distribution explicitly. 
%instead of trying to diagonalize the generator. 
The formulas of the limiting distributions obtained in this paper and Prolhac's share many similar features
(compare~\eqref{eq:def_FF} and~\eqref{eq:def_FS} below with equation (10) of \cite{Prolhac16}), and the numerical plots show that the functions do agree. 
However, it is still yet to be checked that these functions are indeed the same. 
%{\Cr Does Prolhac's formula for the step case contain two parameters, $\tau$ and $\gamma$?} {\Cb Yes. He considered the current at the location $\gamma L$ (in his paper denoted by $xL$), which converges to $\FS(x;\tau,\gamma)$ (in his paper denoted by $F_\tau(u)$) with the parameter $x$ inside the kernel.}
We point out that our work was done independently from and at the same time as Prolhac's paper; we have obtained all the algebraic results and the asymptotic results for the step initial conditions by the time when Prolhac's paper appeared.

\bigskip

Before we present the main results, we discuss a heuristic argument about the relaxation time scale in terms of a periodic directed last passage percolation in Section~\ref{sec:dlpp}. 
However, the materials in Section~\ref{sec:dlpp} are not used in the rest of the paper. 
%The connection to the directed last passage percolation, however, is not used in the analysis in the rest of the paper. 
The main asymptotic results of this paper are presented in Section~\ref{sec:mainth} and the limiting distributions are described in Section~\ref{sec:limitdistr}. 
The finite-time formulas  are given in Section~\ref{sec:tran} for the transition probability and  in Section~\ref{sec:onepoint} for the one-point distribution function for general initial conditions, respectively.
The last formula is further simplified in Section~\ref{sec:disfsc} for the flat and step initial conditions. 
The formulas in Section~\ref{sec:disfsc} are analyzed asymptotically in Section~\ref{sec:asymptotic_analysis} giving the proofs for the main theorems for the tagged particles in Section~\ref{sec:mainth}. 
Some technical lemmas for the asymptotic analysis are proved in Section~\ref{sec:proof_of_lemmas}.
Finally, Theorem~\ref{thm:limit_current_step} for the current for the step initial condition is proved 
in Section~\ref{sec:proofofthm:limit_current_step}. 
%{\Cr Appendix?}

\subsection*{Acknowledgement}

We would like to thank Ivan Corwin, Percy Deift, Peter Miller, and Sylvain Prolhac for useful discussions and communications, and Jun Lai for his help with plotting the limiting distributions.
Jinho Baik was supported in part by NSF grant DMS1361782.
Part of this research were done during our visits at the Kavli Institute of Theoretical
Physics, and was also supported in part by NSF grant PHY11-25915.

\section{Periodic DLPP}\label{sec:dlpp}

There is a natural map between the TASEP and the directed last passage percolation (DLPP) model  (see, for example, \cite{Johansson00}).
We do not use this correspondence in the rest of the paper. 
%proving our theorems. %in our analysis of the TASEP on a ring. 
However, the DLPP model provides a heuristic way to understand the relaxation time scale $t=O(L^{3/2})$ and the fluctuations, and we discuss them in this section. 
Some of the following arguments, especially for the limit theorems for sub-relaxation time scale, can be proved rigorously. See \cite{Baik-Liu16b} for more details.% \cite{Baik-Liu16}. 

DLPP models are defined by the weights $w(p)$ on the lattice points $p\in \intZ^2$. 
We assume that the weights are independent exponential random variables of mean $1$. 
%We denote by $c\prec p$ if the point $c$ is to the left and below of point $p$. 
%We use the notation $|p-c|$ to denote the Euclidean distance from $p$ to $c$. 
For two points $c$ and $p$ in $\intZ^2$ where $c $ is to the left and below of $p$, %$c\prec p$, 
the point-to-point last passage energy from $c$ to $p$  is defined as 
$G_c(p):= \max_{\pi} E(\pi)$ where the maximum is taken over the weakly up/right paths $\pi$ from $c$ to $p$ 
and the energy of path $\pi$ is defined by $E(\pi):=\sum_{q\in \pi} w(q)$. 
Note that $G_c(p)$  has the same distribution as $G_0(p-c)$ by translation. 
We use the notation $p=(p_1, p_2)\to \infty$ to mean $p_1\to \infty$ and $p_2\to\infty$. 
A fundamental result for the point-to-point last passage energy for the exponential weights is that \cite{Johansson00}
\begin{equation}
	\frac{G_0(p)-d(p)}{s(p)} \Rightarrow \chi_{GUE} 
\end{equation}
in distribution as $p\to \infty$ where $\chi_{GUE}$ is a GUE Tracy-Widom random variable and the term $d(p)$ is given by 
\begin{equation}\label{eq:dlppweime}
	d(p)= (\sqrt{p_1}+\sqrt{p_2})^2, \qquad p=(p_1,p_2),
\end{equation}
which implies that $d(p)=O(|p|)$. 
The other term $s(p)$ satisfies $O(|p|^{1/3})$, implying that the ``shape fluctuations'' of $G_0(p)$ is of order $O(|p|^{1/3})$. 
Moreover, the maximal path $\pi$ for $G_0(p)$ is concentrated about the straight line from $0$ to $p$ within the order $O(|p|^{2/3})$ \cite{Johansson00a, Baik-Deift-McLaughlin-Miller-Zhou01, Basu-Sidoravicius-Sly16}. 
We call this deviation of the maximal path from the straight line the transversal fluctuations. 

\medskip
Now consider the periodic TASEP. 
In the map between the TASEP and the DLPP, The weight $w(p)$ for DLPP at $p=(p_1, p_2)$ represents the time 
for the particle $p_2$ to make a jump from site $p_1-p_2$ to $p_1-p_2+1$ once it becomes empty. 
Hence the DLPP corresponding to the periodic TASEP has the periodic structure:
\begin{equation}\label{eq:weightperi}
	\text{$w(p)=w(q) \quad $ if $\quad p-q=(L-N, -N)$.} %\qquad  w(i,j)= w(i+N, j-L+N).
\end{equation}
The initial condition for the periodic TASEP is mapped to a boundary condition for the periodic DLPP. 
Among the two initial conditions, we consider the step initial condition for the periodic TASEP in detail since it gives a richer structure. 
Assume that the initial condition of the periodic TASEP is of type $\cdots,\underbrace{1,1,1,0, 0}_{L}, \underbrace{1,1,1, 0,0}_{L},\underbrace{1,1,1,0,0}_{L}, \cdots$ in which $N$ consecutive particles are followed by $L-N$ empty sites. 
Then the boundary of the DLPP model is of staircase shape as shown in Figure~\ref{fig:periodic_lpp2}.
The weights are zero to the left of this boundary. % and to the right. 
The non-zero weights satisfy the periodic structure~\eqref{eq:weightperi}, or otherwise are independent exponential random variables of mean $1$.
See Figure~\ref{fig:periodic_lpp2}. % for the picture. 

\begin{figure}
\centering
\begin{minipage}{.4\textwidth}
\includegraphics[scale=0.3]{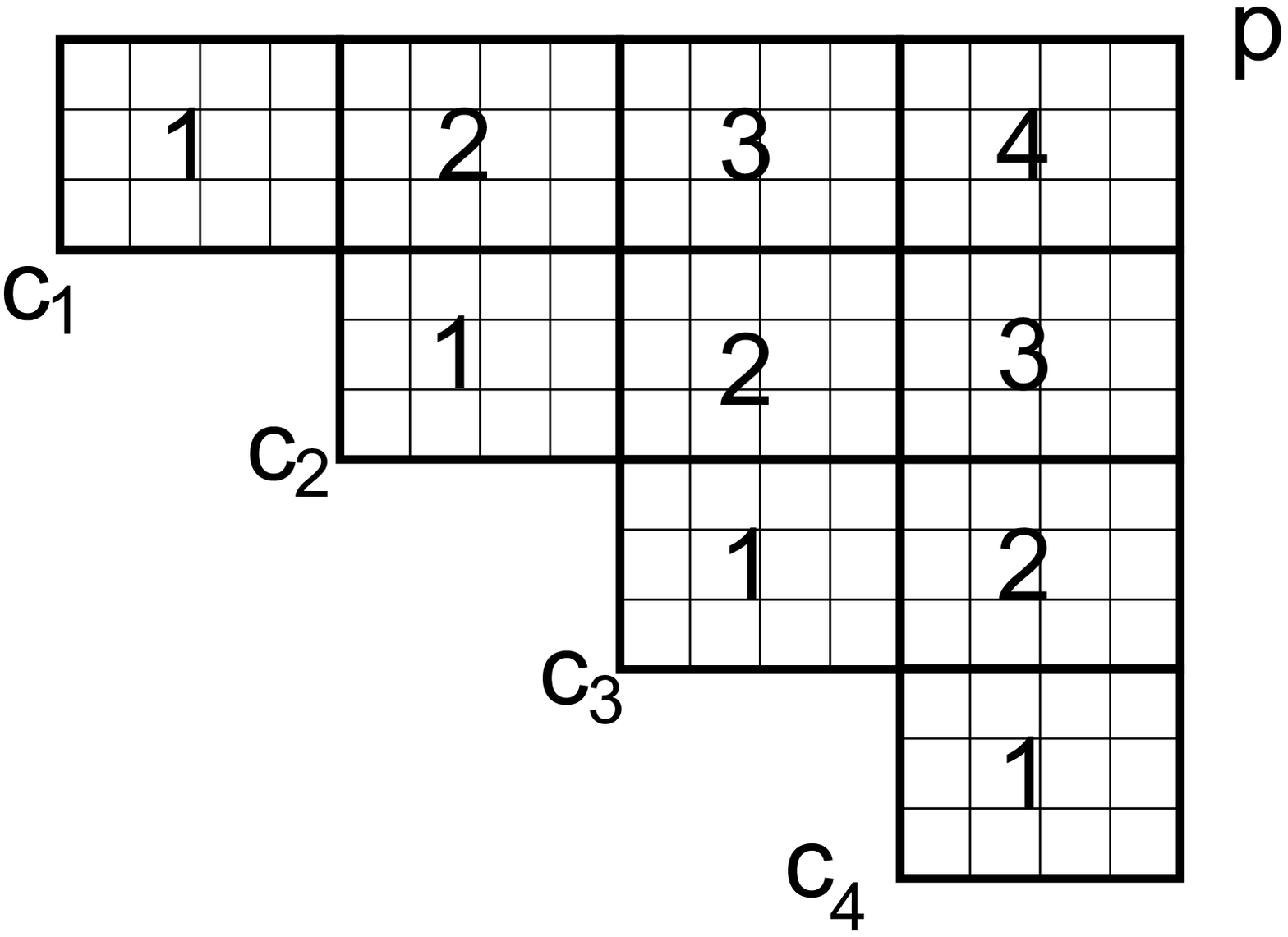}
\caption{A part of the periodic DLPP with step initial condition for $L=7$, $N=3$. 
The blocks with the same number are identical copies of each other. On the other hand, the blocks with different numbers are independent.}
\label{fig:periodic_lpp2}
\end{minipage}
\qquad
\begin{minipage}{.4\textwidth}
\includegraphics[scale=0.3]{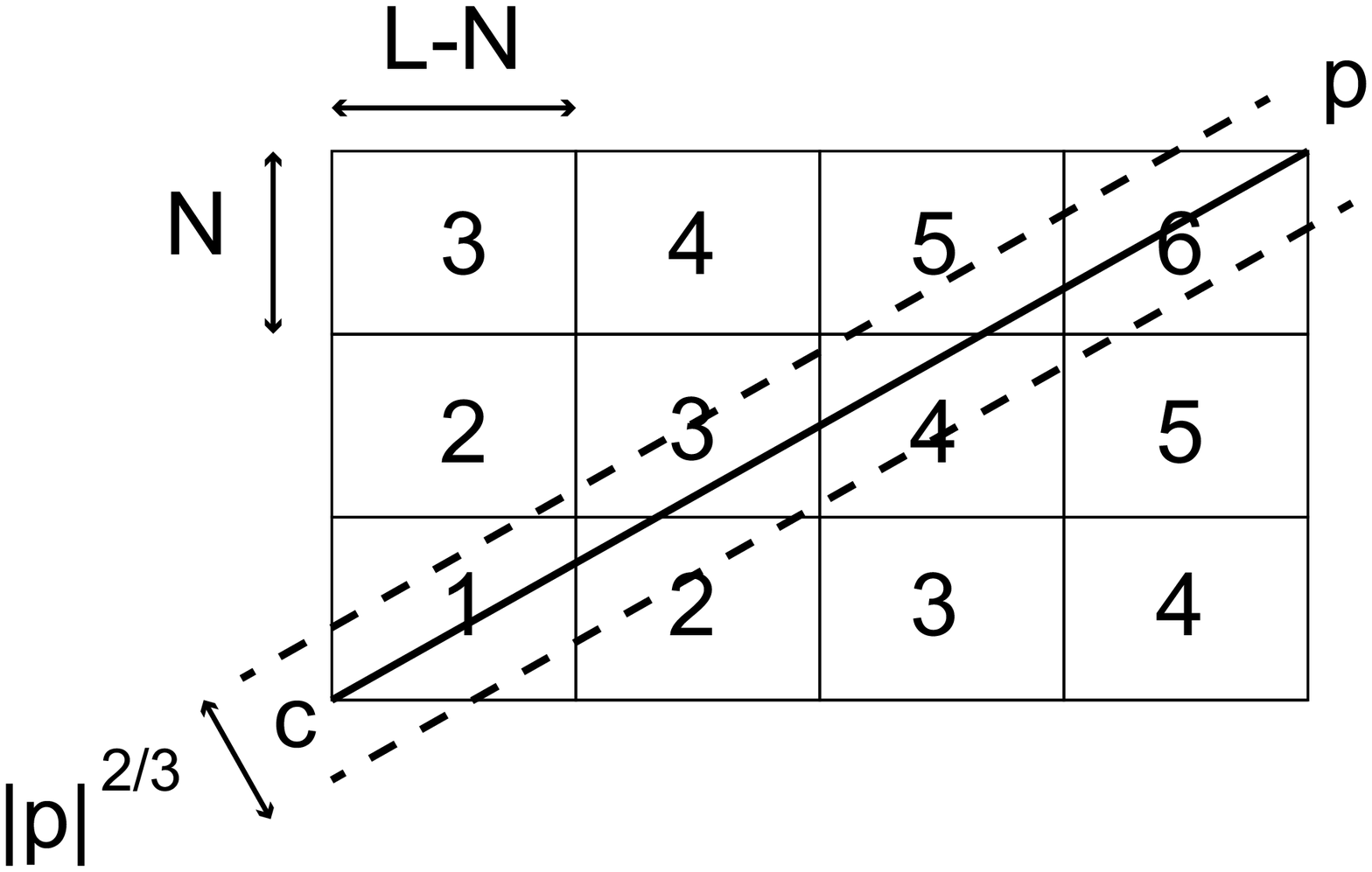}
\caption{The maximal path stays within the dashed lines with high probability. 
Note that the parts of the two blocks with number 2 within the dashed lines do not overlap if we translate one block to another. Therefore the weights in the dashed lines are independent.}
\label{fig:periodic_lpp3}
\end{minipage}
\end{figure}

Let $H(p)$ denote the last passage energy to the lattice point $p$ (from any point on the boundary). 
Then $H(p)$ is related to the current of the periodic TASEP: If we set the corner $c_1$ in Figure~\ref{fig:periodic_lpp2} as the point $(1,1)$, then the $x_i(t)-x_i(0)> j$ if and only if $H(j,N-i+1)< t$ for $1\le i\le N$.
%We also use the notation $H_c(p)$ to denote the point-to-point last passage energy from $c$ to $p$ in the periodic DLPP. 
We now assume that  $N, L\to \infty$ proportionally and consider $H(p)$ as $p\to \infty$. 
The limit $p\to \infty$ is closely related to the limit as time $t\to \infty$ for the periodic TASEP.

\medskip
Due to the boundary shape, we see that 
$H(p)=\max_{c} H_c(p)$ where $H_c(p)$ denotes the point-to-point last passage energy from $c$ to $p$ in the periodic DLPP and the maximum is taken over all bottom-left corners $c$ of the boundary staircase.
See Figure~\ref{fig:periodic_lpp2}.
Consider a corner $c$. Note that $|p-c|=O(|p|)$. 
Due to the periodicity of the weights, $H_c(p)$ is different from $G_c(p)$ for which all  weights are independent. 
However,  the transversal fluctuation for $G_c(p)$ has order $O(|p-c|^{2/3})=O(|p|^{2/3})$, and 
if $L\gg |p|^{2/3}$, then the weights in the $O(|p|^{2/3})$-neighborhood of the straight line from $c$ to $p$ for the periodic DLPP are independent.
See Figure~\ref{fig:periodic_lpp3}. 
This suggests that $H_c(p) \approx G_c(p)$ if $L\gg |p|^{2/3}$, and hence
 $H(p) \approx \max_{c} G_c(p)$.
From~\eqref{eq:dlppweime}, it is direct to check that  the set $x=(x_1, x_2)\in \realR_+^2$ satisfying $d(x)\le r$ is a strict convex set for every $r>0$. 
This implies that, due to the geometry of the staircase boundary, $\max_c d(p-c)$ is attained either at a single corner or at two corners. 
See Figure~\ref{fig:periodic_lpp4}.
The thick diagonal curves in Figure~\ref{fig:periodic_lpp4} are the set of points $p$ at which $\max_c d(p-c)$ is attained at two corners.
Explicitly, they are the curves given by $(\sqrt{x-c_1}+\sqrt{y-c_2})^2= (\sqrt{x-c_1'}+\sqrt{y-c_2'})^2$ where $c=(c_1, c_2)$ and $c'=(c_1', c_2')$ are neighboring corners. 
These curves are asymptotically straight lines of slope $(\rho/(1-\rho))^2$. 
In the periodic TASEP, these curves corresponds to the trajectory of the shocks in the space-time coordinate system. We call these curves the shock curves for the periodic DLPP.

If $\max_c d(p-c)$ is attained at a single corner $c_0$, then $H(p)\approx G_{c_0}(p)$. 
Moreover it is easy to check from~\eqref{eq:dlppweime} that for a neighboring corner $c$, 
$d(p-c_0)-d(p-c)=O(L^2/|p|)$ which is greater than the order $|p|^{1/3}$ of the shape fluctuations of $G_{c_0}(p)$ if $L\gg |p|^{2/3}$. 
Hence our heuristic argument implies that $H(p) \approx d(p-c_0)+s(p-c_0)\chi_{GUE}$ when $L\gg |p|^{2/3}$. 
%, which corresponds to the condition $L\gg t^{2/3}$ in terms of the periodic TASEP. 

On the other hand, if $\max_c d(p-c)$ is attained at two corners, then $H(p)$ is the maximum of two essentially independent random variables and hence we find $H(p) \approx d(p-c_0)+s(p-c_0) \chi_{GUE^2}$ where $\chi_{GUE^2}$ is the maximum of two independent GUE Tracy-Widom random variables with different variances. %. {\Cb (with different parameters)}
%The points $p$ such that $\min_c |p-c|$ is attained at two corners are indicated in Figure. 

The above heuristic argument is made under the assumption that $|p|^{2/3}\ll L$, 
which corresponds to the condition $L\gg t^{2/3}$ in terms of the periodic TASEP. 
%. In terms of the TASEP on a ring, the condition $|p|^{2/3}\ll L$ is translated to the sub-relaxation time scale $t^{2/3}\ll L$.
It is possible to make the above argument rigorous. %when $|p|^{2/3}\ll L$ 
See \cite{Baik-Liu16b} for more details.
%This will appear in an upcoming paper. % \cite{Baik-Liu16}. 
%The situation when $\min_c |p-c|$ is attained at two corners corresponds to the case when the shock occurs for the Burgers equation for the density profile. 

\medskip

Now for $|p|^{2/3}=O(L)$, it is no longer true that $H_c(p)\approx G_c(p)$ for each $c$
since the maximal path is not necessarily concentrated in a domain where the weights are independent. 
%a neighborhood of the straight line from $c$ to $p$ in which the weights are independent. 
This also implies that $H_c(p)$ and $H_{c'}(p)$ for neighboring corners $c, c'$ are not essentially independent. 
Moreover, even if $\max_c d(p-c)$ is attained at a single corner $c_0$, we have $d(p-c_0)-d(p-c)=O(L^2/|p|)=O(|p|^{1/3})$ for a neighboring corner $c$ and this is the same order of the shape fluctuations of $G_{c_0}(p)$.
Hence we expect that $H(p)=\max_c H_c(p)$ results from the contribution from $O(1)$-number of the corners $c$ near $c_0$.
Furthermore, the fluctuation of $H(p)$ depends on the relative location of $p$ from the shock curves: 
see Figure~\ref{fig:periodic_lpp5}. 
Indeed the main result of this paper  in the next section (written for the periodic TASEP) shows that the fluctuations of $H(p)$ depend on two parameters in the limit $p=O(L^{3/2})\to \infty$. 
The first one is $p/L^{3/2}$, which corresponds to $\tau$ in the main theorems and 
measures the location of $p$ in the $(1,1)$-direction: If this parameter is larger, then the maximal path deviates more and the correlation between $H_c(p)$ and $H_{c'}(p)$ is stronger. 
The second parameter is the relative distance of $p$ to the shock curves, which corresponds to the parameter $\gamma-1/2$ in the main theorem and  measures the location of $p$ in the $(1,-1)$-direction. From Figure~\ref{fig:periodic_lpp5}, the distribution should be symmetric under $\gamma\to -\gamma$ and $\gamma\to \gamma+1$. 

\begin{figure}
\centering
\begin{minipage}{.4\textwidth}
\includegraphics[scale=0.2]{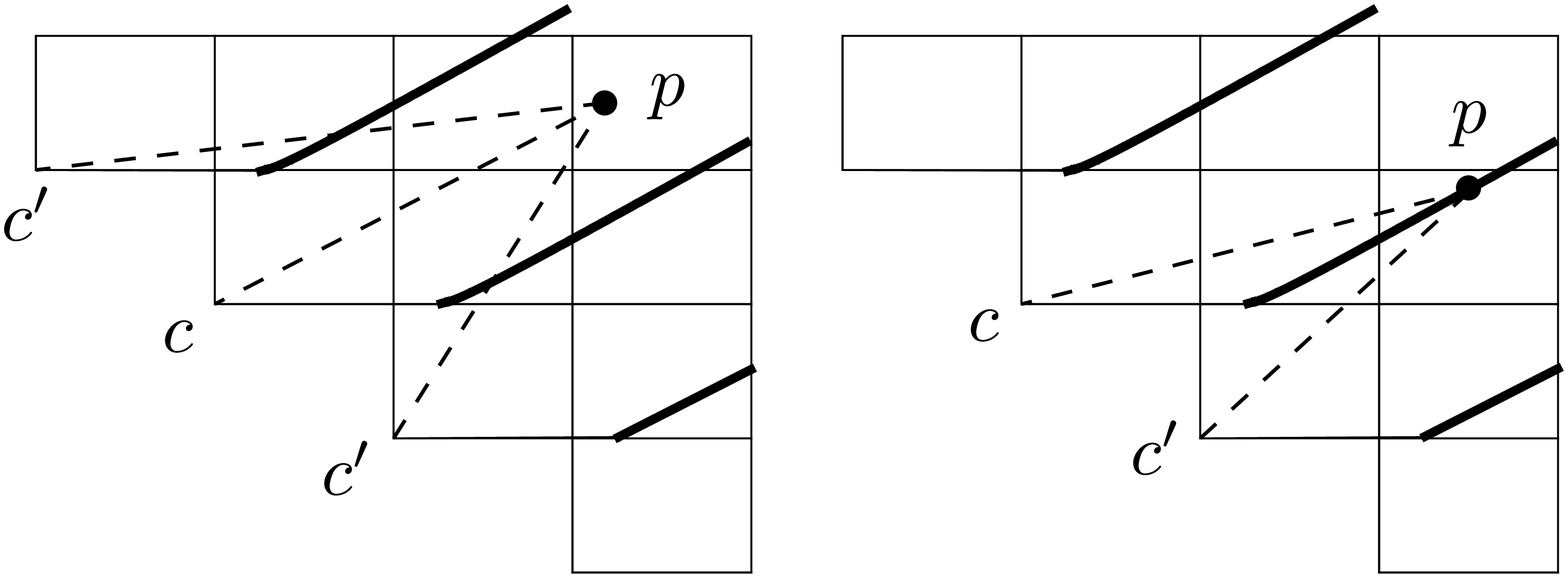}
\caption{The thick diagonal curves are the shock curves. 
They are only asymptotically straight lines. %not straight lines. %For $p$ in each of the open region indicated by thick curves (the shock curves), $\max_c d(p-c)$ is attained at the single corner on the boundary of the region. For example, i
In the left picture, $p$ is not on the shock curve and $d(p-c)>d(p-c')$. 
%On the other hand, if $p$ is on a shock curve, then $\max_c d(d-x)$ is attained at two corners. For example, i
In the right picture, $p$ is on the shock curve and $d(p-c)=d(p-c')$.}
\label{fig:periodic_lpp4}
\end{minipage}
\qquad
\begin{minipage}{.4\textwidth}
\includegraphics[scale=0.11]{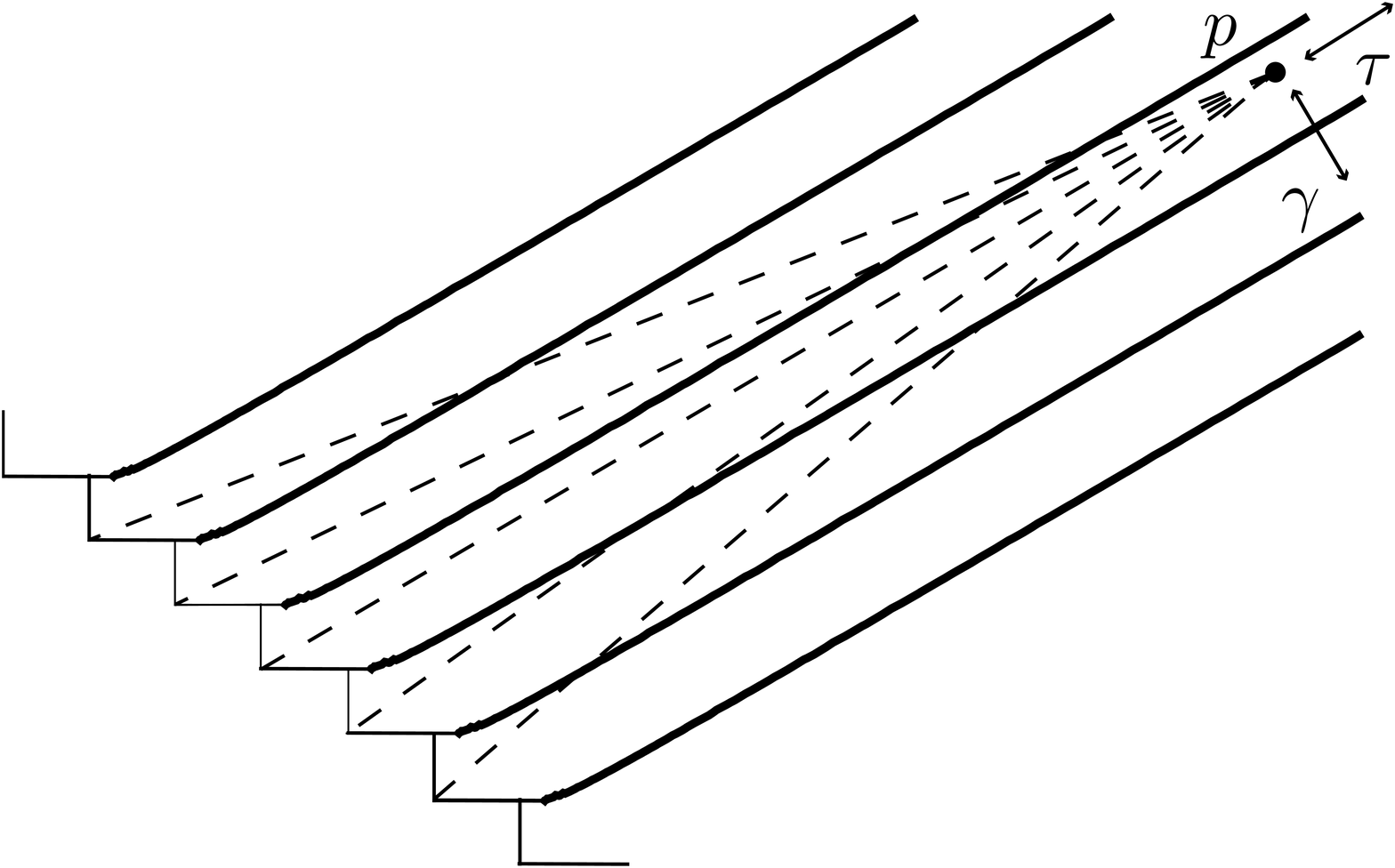}
\caption{For $|p|=O(L^{3/2})$, $H(p)=\max_c H_c(p)$ is a maximum of correlated random variables $H_c(p)$. The fluctuations depend on the relative distance of $p$ to the shock curves.}
\label{fig:periodic_lpp5}
\end{minipage}
\end{figure}

\medskip
For the flat initial condition, there are no shock curves and there is no dependence on $\gamma$. 
The fluctuation depends only on $\tau$.

%%%%%%

\section{Limit theorems}\label{sec:mainth}

Now we present our main asymptotic results. % in this paper.

\subsection{Tagged particle for flat case}

We consider the TASEP on a ring of size $L$ with $N$ particles. The particles jump to the right. 
%on $\conf_N(L)$ which is defined in~\eqref{eq:def_conf} with the following flat initial condition.
Fix $d\in\intZ_{\ge 2}$. 
Let $L$ and $N$ be integers and satisfy $dN=L$.
Assume that the particles are located initially at 
\begin{equation}
\label{eq:flat_initial_condition}
	x_j(0) = j\gap, \qquad j=1,2,\cdots, N.
\end{equation}
Hence $d$ denotes the initial distance between two neighboring particles. 
We extend the TASEP on a ring to the periodic TASEP on $\intZ$ by setting 
\begin{equation}
	x_j(t):=x_{j+N}(t)+L, \qquad j\in \intZ. 
\end{equation}
%\begin{equation}
%\label{eq:}
%X(t) = (x_1(t),x_2(t),\cdots,x_N(t)) \in \conf_N(L).
%\end{equation}
We also denote by 
\begin{equation}
\label{eq:def_rho_flat}
\rho = \frac{N}{L}	=	\frac1{d} %d^{-1}
\end{equation}
the average density of particles. Then we have the following limit theorem %for the location of $x_k(t)$ 
in the relaxation scale. % $t=O( L^{3/2})$.

%{\Cr Change $N$ to $L$ as the large parameter.} {\Cb Done except in the proof.} {\Cr Make a comment in the proof.}

\begin{thm}
\label{thm:limit_one_point_distribution_flat}
Fix $d\in\intZ_{\ge 2}$ and set $\rho = \frac1d$. 
Consider $L\in d\intZ$ and set $N=\rho L = L/d$.
Set
\begin{equation}
	t=\frac{\tau}{\sqrt{\rho(1-\rho)}} L^{3/2}
\end{equation}
where $\tau\in \realR_{>0}$ is a fixed constant denoting the rescaled time. 
Then the periodic TASEP associated to the TASEP on a ring of size $L$ with the flat initial condition~\eqref{eq:flat_initial_condition} satisfies,   
for an arbitrary sequence  $k= k_L$ satisfying $1 \le k_L \le N$,
\begin{equation}\label{eq:flattagreu}
	\lim_{L\to\infty}	\prob \left(		
						  \frac{(x_{k}(t)-x_k(0))-(1-\rho)t}{\rho^{-1/3}(1-\rho)^{2/3}t^{1/3}} \ge -x
						  \right) = \FF(\tau^{1/3}x;\tau), 
						  \qquad x\in \realR.
\end{equation}
%for every fixed $x\in\realR$.
Here $\FF(x;\tau)$ is the distribution function defined in~\eqref{eq:def_FF} below.
\end{thm}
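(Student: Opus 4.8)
The plan is to start from the finite-time one-point distribution formula for the flat initial condition, which (per the outline in Section~\ref{sec:disfsc}) is an integral over a parameter $z$ of a Fredholm determinant $\det(1-\mathcal{K})$ acting on a finite discrete set $\roots$ of Bethe roots, the set being (a subset of) solutions of an equation of the form $w^N = (\text{something})^L$, i.e.\ roots living on a union of small contours whose union is governed by a polynomial equation. The first step is to insert the scaling $t = \tau (\rho(1-\rho))^{-1/2} L^{3/2}$, $N = \rho L$, and the centering/normalization in~\eqref{eq:flattagreu} into that formula, and to identify the correct local scaling of the spectral parameter near the critical point of the exponent. As in the usual TASEP steepest-descent analysis, the exponent $t\cdot(\text{something})(w) + (\text{position})\cdot\log w$ will have a double critical point at a specific value of $w$ (on the unit circle, up to the density factor), and near that point one rescales $w = w_c(1 + \zeta L^{-1/2}(\cdots))$; the $L^{3/2}$ time scale is exactly what makes the cubic term in the Taylor expansion order one, producing an Airy-type kernel but with the discrete sum \emph{surviving} in the limit because the spacing of the Bethe roots is also order $L^{-1/2}$ near $w_c$.

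Second, I would show that the discrete set $\roots$, after rescaling, converges to the discrete set appearing in the definition of $\FF$ in~\eqref{eq:def_FF} — i.e.\ the roots of whatever transcendental equation (a scaled version of $e^{-\zeta^2/2} = $ const, or $\mathrm{Li}$-type relation) defines the limiting node set; this is where the ring geometry leaves its fingerprint, and it is the reason the limit is not simply $\FGOE$. Third, establish uniform exponential decay of the kernel entries away from the critical window (standard: the real part of the exponent difference is bounded below, away from $w_c$, by a positive multiple of $L^{1/2}$ times the distance), so that the Fredholm determinant over the full discrete set is, up to $o(1)$, the Fredholm determinant over the rescaled local window; then invoke a Hadamard-bound/dominated-convergence argument for Fredholm determinants on discrete sets to pass to the limit term by term, obtaining $\det(1 - \mathcal K_\infty)$ on the limiting node set. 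Finally, carry out the same limit for the outer $z$-integral (controlling the contour and the integrand uniformly) to land on exactly $\FF(\tau^{1/3}x;\tau)$; the extra factor $\tau^{1/3}$ in the argument is bookkeeping from matching my choice of local scaling variable to the normalization chosen in~\eqref{eq:def_FF}. The fact that the statement holds for an \emph{arbitrary} sequence $k = k_L$ with $1 \le k_L \le N$ should drop out because, for the flat initial condition, the dependence on $k$ enters the finite-time formula only through a factor that is periodic/translation-covariant and disappears after the centering by $x_k(0) = kd$ — so the limit is genuinely $k$-independent, and I would isolate that cancellation explicitly early on.

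The main obstacle I expect is the joint control of (i) the discrete-to-discrete convergence of the Fredholm determinant — one must show the limiting node set is correct \emph{and} that no roots escape to or accumulate at bad places, which requires fairly precise information on the location of the Bethe roots of $w^N = r^L$-type equations near the critical point, not just their count — and (ii) the uniformity in $k_L$, $x$, and in the outer integration variable simultaneously. A secondary technical point is justifying the interchange of the $z$-integral with the $L\to\infty$ limit when the Fredholm determinant is only known to converge pointwise in $z$; I would handle this by a deformation of the $z$-contour to a region where the integrand has uniform exponential decay in $L$ (so that dominated convergence applies), which is presumably the content of one of the technical lemmas deferred to Section~\ref{sec:proof_of_lemmas}. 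Routine Taylor expansions of the exponent and the standard residue/saddle manipulations I would cite or relegate to those lemmas rather than reproduce here.
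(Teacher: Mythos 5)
Your overall architecture does track the paper's: start from the finite-time formula of Theorem~\ref{prop:one_point_distribution_Fredholm2}, rescale the Bethe roots near the double critical point $w=-\rho$ on the $N^{-1/2}$ scale, show the rescaled root set converges to a discrete node set, establish exponential decay of the kernel off the critical window, and pass to the limit of the Fredholm determinant by trace convergence plus a Hadamard bound (this is the content of Lemmas~\ref{lm:asymptotics_nodes}, \ref{lm:asymptotics_notations}, \ref{lm:asymptotics_of_h_1}, \ref{claim:trace_convergence}, \ref{claim:uniform_bound_determinant}). But there is a genuine gap in your treatment of the outer integral, and it is not a secondary technical point: it is the central new ingredient. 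In the finite-$L$ formula~\eqref{eq:one_point_distribution_Fredholm2} the variable $\zz$ runs over a contour inside $0<|\zz|<r_0=\rho^\rho(1-\rho)^{1-\rho}$, while in the limit~\eqref{eq:def_FF} the variable $z$ runs inside $|z|<1$ and the node set is $\inodes_{z,\LL}=\{\xi:e^{-\xi^2/2}=z,\ \Re(\xi)<0\}$. These are not the same variable: one must let the radius of the $\zz$-contour approach $r_0$ at the precise rate $\zz^L=(-1)^N\rr_0^Lz$ with $z$ fixed, $0<|z|<1$, as in~\eqref{eq:aux_2016_04_14_03}. Only under this scaling do the two components of the root curve pinch at $-\rho$, so that the roots in the $O(N^{-1/2})$ window solve $e^{-\xi^2/2}\approx(-1)^N\zz^L/\rr_0^L=z$ and converge to $\inodes_{z,\LL}$, and only then do \emph{both} $\constf(\zz)$ and $\det\big(I+K^{(1)}_\zz\big)$ have nondegenerate limits. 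If instead you keep a contour with $|\zz|$ fixed strictly below $r_0$, which is what your plan implicitly does, then $\Sigma_\LL$ and $\Sigma_\RR$ stay a fixed distance apart, no Bethe roots enter the critical window at all, and the integrand does not converge to the integrand of~\eqref{eq:def_FF}; your "scaled version of $e^{-\zeta^2/2}=\mathrm{const}$" never materializes because the constant $(-1)^N\zz^L/\rr_0^L$ tends to $0$ exponentially fast.

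Relatedly, your proposed way of interchanging the $z$-integral with $L\to\infty$ --- deform the contour to a region where the integrand has uniform exponential decay in $L$ and apply dominated convergence --- cannot work: the limit $\FF(\tau^{1/3}x;\tau)$ is a nontrivial contour integral, so the integrand cannot be uniformly exponentially small in $L$ along any admissible contour. The paper's route is different: (i) exploit the invariance of the integrand under $\zz\mapsto\zz e^{2\pi\ii/L}$ to rewrite the $\zz$-integral as an integral in the new variable $z$ over a \emph{fixed} contour in $0<|z|<1$ (this periodicity step is needed because one loop in $z$ corresponds to only a $1/L$-th of a loop in $\zz$), and then (ii) prove asymptotics of $\constf(\zz)$ and of the Fredholm determinant with error terms depending only on $|z|$, hence uniform on the fixed circle, so the limit passes through the integral directly. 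Your remaining ingredients (node convergence, off-window decay, trace and Hadamard bounds, and the observation that the $k$-dependence cancels after centering) are consistent with the paper, but without the $L$-dependent scaling of the $\zz$-contour and the reparametrization to $z$ the argument as proposed does not produce $\FF$.
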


\bigskip

In terms of the TASEP on a ring, $x_k(t)-x_k(0)$ in~\eqref{eq:flattagreu} represents the number of jumps the particle with index $k$ made through time $t$.

The scaling in~\eqref{eq:flattagreu} is same as the sub-relaxation time scale and also as the TASEP on $\intZ$. 
See, for example, (1.3) of \cite{Borodin-Ferrari-Prahofer-Sasamoto07} for $\rho=1/2$ case (there is a small typo in this formula: the inequality should be reversed.) 

%We only considered the initial condition of type $\underbrace{100001000010000}_{L}$ where $1$ denotes a particle and $0$ denotes an empty site. More general flat initial conditions such as $111001110011100$

\begin{rmk}
Theorem~\ref{thm:limit_one_point_distribution_flat} holds for any fixed $\rho\in\{d^{-1}:d=2,3,\cdots\}$. However, by applying the duality of particles and empty sites in the periodic TASEP, it is easy to check that the theorem also holds for $\rho\in \{1-d^{-1}:d=2,3,\cdots\}$.
\end{rmk}

\subsection{Current for flat case}

%The previous theorem is about the number of jumps a tagged particle made through time $t$. 
%We now consider the number of particles that had passed a fixed location. 
Let $J_i(t)$ denote the number of particles that had passed the interval $(i,i+1)$, or the (time-integrated) current at site $i$. 
Due to the flat initial condition, it is enough to consider the current at one site, say at $i=0$. %, where $i\in\intZ_L$.

\begin{thm}
\label{thm:limit_current_flat}
Fix $d\in\intZ_{\ge 2}$ and let $\rho$ be either $\frac1d$ or $1-\frac1d$. 
Consider $L\in d\intZ$ and  set $N=\rho L$.
Set 
\begin{equation}
	t = \frac{\tau}{\sqrt{\rho(1-\rho)}} L^{3/2},
\end{equation}
where $\tau$ is a fixed positive number. 
Then for the TASEP on the ring of size $L$ with flat initial condition of average density $\rho$,  
\begin{equation}
\lim_{L\to\infty}	\prob \left(		
						  \frac{J_0(t)-\rho(1-\rho)t}{\rho^{2/3}(1-\rho)^{2/3}t^{1/3}} \ge -x
						  \right) = \FF(\tau^{1/3}x;\tau), \qquad x\in \realR.
\end{equation}
%for every fixed $x\in\realR$,

\end{thm}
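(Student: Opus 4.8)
The plan is to deduce Theorem~\ref{thm:limit_current_flat} from Theorem~\ref{thm:limit_one_point_distribution_flat} via the standard particle-current duality, so that essentially no new asymptotic analysis is required. The key identity is that for the TASEP on a ring (equivalently the periodic TASEP), the event that the time-integrated current at site $0$ is at least $m$ is the same as the event that some tagged particle has advanced far enough; concretely, $\{J_0(t)\ge m\}$ coincides, up to a relabeling of indices, with an event of the form $\{x_{k}(t)-x_k(0)\ge j\}$ for suitable $k=k(m)$ and $j=j(m)$ determined by $m$, $N$, $L$, and the initial configuration~\eqref{eq:flat_initial_condition}. First I would write down this combinatorial dictionary carefully: because the initial data is exactly periodic with spacing $d$, the particle that is about to cross site $0$ at the moment the current reaches level $m$ is explicitly identifiable, and the number of its jumps needed is an affine function of $m$.

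Next I would translate the centering and scaling. In Theorem~\ref{thm:limit_one_point_distribution_flat} the relevant quantities are centered by $(1-\rho)t$ and scaled by $\rho^{-1/3}(1-\rho)^{2/3}t^{1/3}$, whereas the current is centered by $\rho(1-\rho)t$ and scaled by $\rho^{2/3}(1-\rho)^{2/3}t^{1/3}$; the ratio of the two scalings is exactly $\rho$, which reflects that a current increment of $1$ at a site corresponds on average to a particle displacement of $1/\rho$. So under the duality the fluctuation variable for $J_0(t)$ is, to leading order, $\rho$ times the fluctuation variable for $x_k(t)-x_k(0)$, and the affine index shift contributes only lower-order ($o(t^{1/3})$) corrections. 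Plugging this into~\eqref{eq:flattagreu} produces precisely $\FF(\tau^{1/3}x;\tau)$ with the stated current normalization. The case $\rho=1-\frac1d$ is handled, as in the preceding remark, by the duality between particles and holes in the periodic TASEP, under which the current at a site is (up to sign and a deterministic shift) the hole current, reducing it to the $\rho=\frac1d$ statement already established.

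The main obstacle, modest as it is, is bookkeeping: one must check that the index $k=k_L$ arising from the current-to-particle translation indeed lies in the admissible range $1\le k_L\le N$ for all large $L$ (so that Theorem~\ref{thm:limit_one_point_distribution_flat} applies with its ``arbitrary sequence $k_L$'' clause), and that the rounding of the affine relation $j=j(m)$ to integers, together with the discrepancy between $m$ and its nearest lattice point in the scaling window, does not shift the limiting distribution. Since the limit $\FF(\,\cdot\,;\tau)$ is continuous in its argument, an $o(t^{1/3})$ error in the centering is absorbed harmlessly, so this step is routine. I would also note explicitly that no independent Bethe-ansatz computation is needed here, in contrast with Theorem~\ref{thm:limit_current_step}, precisely because the flat initial condition is translation-invariant along the ring and the current at $i=0$ is representative of the current at every site.
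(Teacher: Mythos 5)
Your proposal is correct and follows essentially the same route as the paper: the paper deduces the current result from Theorem~\ref{thm:limit_one_point_distribution_flat} via the identity $\prob(x_k(t)\ge \ell L+1)=\prob(J_0(t)\ge \ell N-k+1)$ for $1\le k\le N$, $\ell\ge 0$, which is exactly your particle--current dictionary with the factor $\rho$ relating the two centerings and scalings, and handles $\rho=1-\frac1d$ by the same particle--hole symmetry of $J_0(t)$ that you invoke. The bookkeeping you flag (choosing $k_L\in\{1,\dots,N\}$ and absorbing $O(1)$ rounding errors by continuity of $\FF$) is precisely what makes the paper's two-line argument rigorous, so there is no gap.
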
 

Here the  flat initial condition means~\eqref{eq:flat_initial_condition} when $\rho\le 1/2$.
For $\rho> 1/2$, it means that initially the sites $jd$, $j=1, \cdots, N$, are empty and the other $L-N$ sites are occupied by particles. 

For  $\rho \in \{d^{-1}: d=2,3,\cdots\}$, the above result follows immediately from Theorem~\ref{thm:limit_one_point_distribution_flat} by using the simple relation %by the well-known relation  
\begin{equation}
\prob (x_k(t) \ge \ell L+1)  =  \prob (J_0(t) \ge \ell N -k+1)
\end{equation}
for all $1\le k\le N$ and $\ell=0,1,2,\cdots.$ 
The result for $\rho \in \{1- d^{-1}: d=2,3,\cdots\}$ follows by noting that $J_0(t)$ is symmetric under the change of the particles to the empty sites. 
%on the particles and empty sites.

\subsection{Tagged particle for step case}

Consider the TASEP on a ring of size $L$ with the step initial condition %initially located at
\begin{equation}
\label{eq:step_initial_condition}
	x_j(0)  = -N+j, \qquad j=1,2,\cdots, N.
\end{equation}
Here we represent the ring as $\{-N+1, -N+2, \cdots, L-N\}$. 
We define the periodic TASEP by setting $x_j(t)=x_{j+N}(t)+L$, $j\in \intZ$, as before. 

The notation $[y]$ denotes the largest integer which is less than or equal to $y$.

%We have the following limit theorem for the location of $x_k(t)$ at the relaxation scale $t\sim N^{3/2}$, where $x_k(t)$ is similarly the location of the $k$-th particle at time $t$.

\begin{thm}
\label{thm:limit_one_point_distribution_step}
Fix two constants $c_1$ and $c_2$ satisfying $0<c_1<c_2<1$ and set 
\begin{equation}
 \label{eq:def_band}
 \band(c_1,c_2):=\{ (N, L) \in \intZ_{\ge 1}^2: c_1L \le N \le c_2L\}
\end{equation}
Let $(N_n,L_n)$ be an increasing sequence of points in $B(c_1, c_2)$ which tends to infinity, i.e. $N_n\to \infty$, $L_n\to \infty$ as $n\to \infty$.
Set 
\begin{equation}
\begin{split}
	\rho_{n}	& :=N_n/L_n
\end{split}
\end{equation}
which satisfies $\rho_n \in [c_1, c_2]$ by the definition of $\band(c_1,c_2)$.
Fix $\gamma \in \realR$ and let $\gamma_{n}$ be a sequence of real numbers satisfying 
\begin{equation}
	\gamma_n := \gamma + O(L_n^{-1/2}).
\end{equation}
Set %{\Cr JB: Changed formulas from $N_n$ to $L_n$}
\begin{equation}
\label{eq:aux_2016_04_22_02}
\begin{split}
	t_{n}	& =\frac{L_n}{\rho_{n}}
				\left[\frac{\tau \sqrt{\rho_n}}{\sqrt{1-\rho_{n}}}L_n^{1/2}\right]
 			    +\frac{L_n}{\rho_{n}}\gamma_{n}  +\frac{L_n}{\rho_{n}} \left(1- \frac{k_{n}}{N_n} \right)
	%t_{n}		& =\frac{N_n}{\rho_{n}^2}
	%			\left[\frac{\tau}{\sqrt{1-\rho_{n}}}N_n^{1/2}\right]
 	%		    +\frac{1}{\rho_{n}^2}\gamma_{n} N_n +\frac{1}{\rho_{n}^2}(N_n-k_{n})
\end{split}
\end{equation}
where $\tau\in\realR_{>0}$ is a fixed constant. 
Then the periodic TASEP associated to the TASEP on a ring of size $L_n$ with the step initial condition~\eqref{eq:step_initial_condition} satisfies, 
for an arbitrary sequence of integers $k_n$ satisfying $1\le k_{n} \le N_n$,
\begin{equation}
\label{eq:aux_2016_04_22_01}
	\lim_{n\to\infty}	\prob \left(
		\frac{(x_{k_{n}}(t_{n})-x_{k_{n}}(0))-(1-\rho_{n})t_{n}+(1-\rho_{n})L_n(1-k_{n}/N_n)}{\rho_{n}^{-1/3}(1-\rho_{n})^{2/3}t_{n}^{1/3}} \ge -x
						  \right)
						  	= \FS(\tau^{1/3}x;\tau,\gamma),							
	%\lim_{n\to\infty}	\prob \left(
	%	\frac{(x_{k_{n}}(t_{n})-x_{k_{n}}(0))-(1-\rho_{n})t_{n}+\rho_{n}^{-1}(1-\rho_{n})(N_n-k_{n})}{\rho_{n}^{-1/3}(1-\rho_{n})^{2/3}t_{n}^{1/3}} \ge -x					  \right)
	%					  	= \FS(\tau^{1/3}x;\tau,\gamma),
\end{equation}
for every fixed $x\in \realR$. 
Here $\FS(x; \tau, \gamma)$ is the distribution function defined in~\eqref{eq:def_FS} below: 
It satisfies $\FS(x; \tau, \gamma)=\FS(x; \tau, \gamma+1)$ and $\FS(x; \tau, -\gamma)=\FS(x;\tau, \gamma)$. 
\end{thm}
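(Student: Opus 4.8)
\medskip
\noindent\emph{Proof strategy.}\ The plan is to start from the exact finite-time one-point distribution formula for the step initial condition obtained in Section~\ref{sec:disfsc} and to carry out a steepest-descent asymptotic analysis under the scaling~\eqref{eq:aux_2016_04_22_02}; this is done in Section~\ref{sec:asymptotic_analysis}, with the more technical estimates deferred to Section~\ref{sec:proof_of_lemmas}. Recall that the formula of Section~\ref{sec:disfsc} expresses $\prob\bigl(x_{k}(t)-x_k(0)\ge a\bigr)$ as a contour integral in an auxiliary variable $z$ of an explicit prefactor times a Fredholm determinant $\det\bigl(I - z\,\Ks\Kf\bigr)$, where the kernels $\Kf,\Ks$ act on $\ell^2$ of the finite sets of Bethe roots associated with $z$ (the roots of a polynomial equation parametrized by $z$) lying inside and outside a distinguished circle. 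The first step is purely algebraic: rewrite the event in~\eqref{eq:aux_2016_04_22_01} as $\{x_{k_n}(t_n)-x_{k_n}(0)\ge a_n\}$ with $a_n$ read off from $x$, and insert~\eqref{eq:aux_2016_04_22_02}. The three pieces of $t_n$ must be tracked separately: the leading $O(L_n^{3/2})$ term fixes the overall scale and produces the dependence on $\tau$; the $\gamma_n$-term, of order $O(L_n)$, produces the dependence on $\gamma$; and the $(1-k_n/N_n)$-term, also of order $O(L_n)$, is exactly matched by the shift $(1-\rho_n)L_n(1-k_n/N_n)$ in the numerator of~\eqref{eq:aux_2016_04_22_01}, which is why the limit is the same for every choice of tagged particle $k_n$ and is stated uniformly in $k_n$.

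Next I would analyze the $z$-integrand by the saddle-point method. Writing the relevant exponential factors as $e^{t_n\Phi(w)+\cdots}$ for an appropriate action $\Phi$ built from the TASEP dispersion term and a $\log w$ term, the relaxation scaling $t_n\asymp \tau L_n^{3/2}/\sqrt{\rho_n(1-\rho_n)}$ is exactly the regime in which $\Phi$ has a degenerate (Airy-type) critical point $w_c$ on the distinguished circle, $\Phi'(w_c)=\Phi''(w_c)=0\ne\Phi'''(w_c)$, and in which the $O(L_n^{1/2})$ Bethe roots in the $O(L_n^{-1/2})$-neighborhood of $w_c$ all contribute in the limit. Zooming in near $w_c$ at this scale, the rescaled roots fill out a continuous contour, and the Fredholm determinant $\det\bigl(I-z\,\Ks\Kf\bigr)$ should converge, uniformly for $z$ on a suitably deformed steepest-descent contour and for $\rho_n\in[c_1,c_2]$, to the limiting Fredholm determinant appearing in the definition~\eqref{eq:def_FS} of $\FS$. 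The explicit prefactor, which is an exponential of sums over \emph{all} the Bethe roots (not only those near $w_c$), localizes so that these sums converge to the polylogarithm functions $\polylog_{1/2},\polylog_{3/2},\polylog_{5/2}$ that enter~\eqref{eq:def_FS}. Combining these, the $z$-integrand converges pointwise to the integrand of~\eqref{eq:def_FS}, and one upgrades this to convergence of the integral using trace-norm bounds on $\Ks,\Kf$ together with exponential tail estimates away from $w_c$ (the lemmas of Section~\ref{sec:proof_of_lemmas}), which dominate both the Fredholm expansion and the $z$-integral.

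The two symmetries are best obtained from the finite-$L$ formula before passing to the limit. Replacing $\gamma_n$ by $\gamma_n+1$ amounts to shifting $t_n$ by $L_n/\rho_n$, i.e.\ by one full revolution around the ring; since this changes the tagged-particle displacement by exactly $N_n$ jumps, it leaves the rescaled formula invariant in the limit, giving $\FS(x;\tau,\gamma)=\FS(x;\tau,\gamma+1)$. The reflection $\gamma\to-\gamma$ corresponds to the particle-hole involution of the periodic TASEP, which on the level of the finite-time formula interchanges the inside and outside Bethe-root sets and hence the roles of $\Kf$ and $\Ks$; this yields $\FS(x;\tau,-\gamma)=\FS(x;\tau,\gamma)$. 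One must also check that the $O(L_n^{-1/2})$ latitude allowed in $\gamma_n$ --- the same order as the zoomed window around $w_c$ --- does not affect the limit, which follows from the continuity of $\FS$ in $\gamma$ together with the error bookkeeping already in place.

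The main obstacle I expect is the steepest-descent analysis at the degenerate critical point carried out \emph{uniformly} over the two-parameter band $\band(c_1,c_2)$: the standard Gaussian (quadratic) saddle approximation is not available, the contour deformation must be chosen so that the correct $O(L_n^{1/2})$ window of discrete roots is captured while all other roots are exponentially suppressed, and the prefactor converges only to transcendental (polylogarithm) functions rather than to an elementary limit, so its localization must be controlled to the same precision as the kernel. Establishing the requisite uniform exponential decay of the kernel entries and of the prefactor away from $w_c$ --- so that the limit may legitimately be taken inside both the Fredholm expansion and the $z$-integral --- is the technical heart of the proof, and is exactly what Sections~\ref{sec:asymptotic_analysis} and~\ref{sec:proof_of_lemmas} are devoted to.
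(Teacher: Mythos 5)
Your overall skeleton (start from the finite-time step formula of Theorem~\ref{prop:one_point_distribution_Fredholm}, take asymptotics of the prefactor and of the Fredholm determinant, then pass the limit through the $z$-integral by dominated convergence, with the $(1-k_n/N_n)$ shift absorbing the $k_n$-dependence) does match the paper. But the core mechanism by which the relaxation-scale limit is extracted is misidentified, in two related ways. First, the decisive step in the paper is the choice of the integration contour: one sets $\zz^L=(-1)^{N}\rr_0^{L}z$ with $z$ \emph{fixed}, $0<|z|<1$, so that $|\zz|\to\rr_0=\rho^\rho(1-\rho)^{1-\rho}$ at a precise $L$-dependent rate, and uses the invariance of the integrand under $\zz\mapsto \zz e^{2\pi\ii/L}$ to turn the $\zz$-integral into a $z$-integral over a fixed contour; only at this rate do \emph{both} $\consts(\zz)$ and $\det(I+K^{(2)}_{\zz})$ have finite nontrivial limits. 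Your proposal never pins down this scaling; ``a suitably deformed steepest-descent contour'' does not capture it, and there is in fact no saddle-point deformation through an Airy-type degenerate critical point $\Phi'(w_c)=\Phi''(w_c)=0\neq\Phi'''(w_c)$ of a $t_n\Phi(w)$ action anywhere in the argument --- the cubic $-\tfrac13\tau\xi^3$ appears only through the Taylor expansion of the kernel factor near $w=-\rho$ at scale $N^{-1/2}$ (Lemmas~\ref{lm:step_constant} and~\ref{lm:asymptotics_of_h_2}).

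Second, and more seriously, your claim that after zooming in ``the rescaled roots fill out a continuous contour'' and that this yields the Fredholm determinant defining $\FS$ is inconsistent with the target: $\FS$ in~\eqref{eq:def_FS} is built from a Fredholm determinant on the \emph{discrete} set $\inodes_{z,\LL}=\{\xi:e^{-\xi^2/2}=z,\ \Re\xi<0\}$. At the relaxation scale with $\zz^L=(-1)^N\rr_0^L z$, the Bethe roots of $\roots_{\zz,\LL}$ within an $O(N^{-1/2})$ window of $-\rho$ have spacings of the \emph{same} order $O(N^{-1/2})$, so after rescaling they converge to the fixed discrete set $\inodes_{z,\LL}$ (Lemma~\ref{lm:asymptotics_nodes}); the continuum-contour picture is exactly the sub-relaxation/$L=\infty$ mechanism that reproduces the infinite-line GUE formulas, not $\FS$. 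A proof along your lines would therefore land on the wrong limit unless this point is corrected; the remaining ingredients (pointwise kernel asymptotics, exponential suppression of roots away from $-\rho$, trace convergence plus Hadamard bounds, then dominated convergence in $z$) are as you describe. As a minor remark, the paper obtains the $\gamma\to\gamma+1$ and $\gamma\to-\gamma$ symmetries directly from the limiting formula (using $e^{-\xi^2/2}=z$ on $\inodes_{z,\LL}$ and $\det(I-AB)=\det(I-BA)$), which is simpler than your finite-$L$ route, though the latter is plausible.
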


\medskip

The term $(1-\rho_{n})L_n(1-k_{n}/N_n)=\rho_{n}^{-1}(1-\rho_{n})(N_n-k_{n})$ in~\eqref{eq:aux_2016_04_22_01} is due to the delay of the start time of the particle indexed by $k_n$: the particle with smaller index starts to jump (due to the initial traffic jam) at a later time, and hence the number of jumps, $x_{k_{n}}(t_{n})-x_{k_{n}}(0)$, through the same time $t_n$ is smaller as we see in~\eqref{eq:aux_2016_04_22_01}.

The scaling~\eqref{eq:aux_2016_04_22_02} of time has the following  interpretation. 
The shocks for the periodic TASEP with step initial condition are generated at a certain time, $\max\left\{\frac{L_n}{4\rho_n}, \frac{L_n}{4(1-\rho_n)}\right\}$ on average, and then move with speed $1-2\rho_n$ on average to the right. This can be seen either from the periodic DLPP in Section~\ref{sec:dlpp} or by solving the Burgers equation. 
On the other hand, the particles move to the right with speed which is asymptotically equal to $1-\rho_n$ on average, 
which can be seen, for example, from the leading term $(1-\rho_n)t_n$ in~\eqref{eq:aux_2016_04_22_01}. 
Since the relative speed of a particle to the shocks is $\rho_n$,  a particle have encountered the shocks  $\frac{\rho_n t_n}{L_n}$ %= \frac{\rho_n^2 t}{N_n}$ 
times on average after time $t_n$. 
Let us write~\eqref{eq:aux_2016_04_22_02} as % as (recall $N_n=\rho_n L_n$)
\begin{equation}\label{eq:Lntnstepgkn}
\begin{split}
	\frac{\rho_n t_{n}}{L_n} & =
				\left[\frac{\tau \sqrt{\rho_n}}{\sqrt{1-\rho_{n}}}L_n^{1/2}\right]
 			    + \gamma_{n} + 1- \frac{k_n}{N_n}, 
\end{split}
\end{equation}
and consider its integer part and the fractional part. 
The above theorem shows that the limit is the same if $\gamma_n$ is shifted by integers. 
Hence we may take $\gamma_n$ so that $\gamma_{n} + 1- \frac{k_n}{N_n}\in (-1/2, 1/2]$.
Then $\left[\frac{\tau \sqrt{\rho_n}}{\sqrt{1-\rho_{n}}}L_n^{1/2}\right]$ is the integer part of~\eqref{eq:Lntnstepgkn}, and it ùrepresents the number of encounters with shocks by time $t_n$. 
This depends on $\tau$, but not on $\gamma_n$.  
The fractional part $\gamma_{n} + 1- \frac{k_n}{N_n}$, on the other hand, represents the relative time remaining until the next encounter with a shock. 
Here the term $1- \frac{k_n}{N_n}$ is again due to the time delay by the particle indexed by $k_n$. 
%The parameter $\gamma_n$ is the main parameter.

\subsection{Current for step case}

%We also have a parallel limiting theorem about the time-integrated current $J_i(t)$ for the corresponding TASEP on the ring with step initial condition.

%\begin{thm}
%\label{thm:limit_current_step}
%Suppose $\rho\in (0,1)$, $\mu\in[-\rho, 1-\rho]$, $\tau\in\realR_{>0}$, $\gamma\in\realR$, and $x\in\realR$ are all fixed constants. And suppose
%\begin{equation}
%\begin{split}
%i	&=[\mu L],\\
%t	&=\begin{dcases}
%					\frac{L}{|1-2\rho|}\left [\frac{|1-2\rho|\tau} 
%										{\sqrt{\rho(1-\rho)}}L^{1/2}\right ]	
%					+\frac{\gamma L}{2\rho -1},
%%					& \rho\ne 1/2,\\
%					2\tau L^{3/2},
%					& \rho= 1/2.
%			\end{dcases}
%\end{split}
%\end{equation}
%For the TASEP on the ring with the step initial condition, such that initially there are $[\rho L]$ particles located at $-[\rho L]+1, -[\rho L]+2,\cdots, 0$, we have
%\begin{equation}
%\lim_{L\to\infty}	\prob \left(		
%						  \frac{J_i(t)-\rho(1-\rho)t-h(\mu;\rho) L}{\rho^{2/3}(1-\rho)^{2/3}t^{1/3}} \ge -x
%						  \right) = 
%						  \begin{dcases}
%						  \FS(\tau^{1/3}x;\tau,\gamma+\mu),	& \rho\ne 1/2,\\
%						  \FS(\tau^{1/3}x;\tau,\mu),	& \rho= 1/2,
%						  \end{dcases}
%\end{equation}
%where 
%\begin{equation}
%h(\mu;\rho) = \begin{dcases}
%			   (1-\rho)\mu,	& \mu\in [-\rho,0],\\
%			   -\rho\mu,	& \mu\in [0,1-\rho].	
%              \end{dcases}
%\end{equation}
 
%\end{thm}

\begin{thm}
\label{thm:limit_current_step}
Fix two constants $c_1$ and $c_2$ satisfying $0<c_1<c_2<1$, and let $(N_n,L_n)$ be an increasing  sequence of points in $\band(c_1,c_2)$ which tends to infinity. 
Set $\rho_{n}:=N_n/L_n$. 
Then the TASEP on a ring of size $L_n$ with step initial condition~\eqref{eq:step_initial_condition} satisfies 
the following results where 
%And suppose initially there are $N_n$ particles located at $-N_n+1, -N_n+2,\cdots, 0 \in \intZ_{L_n}$.
$\tau\in\realR_{>0}, \gamma\in\realR,$ and $x\in\realR$ are fixed constants. % in the below. 

\begin{enumerate}[(a)]
%\item Suppose $\rho_n=1/2$ for all $n$. 
%Let %And suppose
%\begin{equation}
%\label{eq:aux_2016_04_22_03}
%\begin{split}
%m_n	&=[\gamma L_n], \qquad \gamma\in [-1/2, 1/2], %m_n	&=[\mu L_n],
%\end{split}
%\end{equation}
%and 
%\begin{equation}
%\label{eq:aux_2016_04_22_03}
%\begin{split}
%	t_n	&=2\tau L_n^{3/2}.
%\end{split}
%\end{equation}
%where $\mu\in [-1/2, 1/2]$ is a fixed constant. Then  we have
%Then 
%\begin{equation}
%\lim_{n\to\infty}	\prob \left(		
%						  \frac{J_{m_n}(t_n)-t_n/4+ |m_n|/2}{2^{-4/3}t_n^{1/3}} \ge -x
%						  \right) = \FS(\tau^{1/3}x;\tau,\gamma).
%\end{equation}
\item Suppose $\rho_n= 1/2 + O(L_n^{-1})$.  
Let 
\begin{equation}
m_n= \left[\gamma L_n\right], \qquad \gamma\in (-1/2, 1/2]
\end{equation}
%where the notation $\{y\}:=y-[y]$,
and set
\begin{equation}
	t_n= 2\tau L_n^{3/2}.
\end{equation}
Then we have
\begin{equation}
	\lim_{n\to\infty}	\prob \left(		
					  \frac{J_{m_n}(t_n)-t_n/4  +|m_n|/2}{\rho_n^{2/3}(1-\rho_n)^{2/3}t_n^{1/3}} \ge -x
						  \right) = \FS(\tau^{1/3}x;\tau,\gamma).
\end{equation}

\item Suppose $|\rho_n -1/2|=|N_n/L_n-1/2| \ge c$ for a constant $c>0$ for all $n$. 
Let $m_n$ be an arbitrary sequence of integers satisfying $-N_n+1 \le m_n \le L_n-N_n$, and
set 
\begin{equation}
\label{eq:aux_2016_04_11_01}
	t_n	=\frac{L_n}{|1-2\rho_n|} \left[\frac{|1-2\rho_n|\tau}{\sqrt{\rho_n(1-\rho_n)}}L_n^{1/2} \right]
		-\frac{\gamma L_n}{1 -2\rho_n} +\frac{m_n}{1 -2\rho_n}
\end{equation}
for $\gamma\in \realR$. 
Then we have
\begin{equation}
\lim_{n\to\infty}	\prob \left(		
						  \frac{J_{m_n}(t_n)-\rho_n(1-\rho_n)t_n  +|m_n|/2 - (1-2\rho_n)m_n/2}{\rho_n^{2/3}(1-\rho_n)^{2/3}t_n^{1/3}} \ge -x
						  \right) = \FS(\tau^{1/3}x;\tau,\gamma).
\end{equation}

\end{enumerate}
\end{thm}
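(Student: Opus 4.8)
The plan is to deduce Theorem~\ref{thm:limit_current_step} from the tagged-particle result, Theorem~\ref{thm:limit_one_point_distribution_step}, via the exact current-to-position duality of the periodic TASEP. Label the particles by $\intZ$ so that $x_j(s)<x_{j+1}(s)$ and $x_{j+N}(s)=x_j(s)+L$, and for a bond $(m,m+1)$ let $\ell(m,s)$ be the label of the rightmost particle with $x_{\ell(m,s)}(s)\le m$. Each jump across $(m,m+1)$ decreases $\ell(m,\cdot)$ by exactly one and particles never move left, so $J_m(t)=\ell(m,0)-\ell(m,t)$, whence
\begin{equation}
 \prob\left( J_m(t)\ge n \right)=\prob\left( x_{\ell(m,0)-n+1}(t)\ge m+1 \right).
\end{equation}
For the step initial condition~\eqref{eq:step_initial_condition} one computes $\ell(m,0)$ explicitly: $\ell(m,0)=m+N$ for $-N+1\le m\le 0$, $\ell(m,0)=N$ for $1\le m\le L-N$, extended by $\ell(m+L,0)=\ell(m,0)+N$. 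This is the step analogue of the relation used after Theorem~\ref{thm:limit_current_flat} in the flat case.

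For part (b), assume $|\rho_n-1/2|\ge c$. Given $x$, choose $n_n$ so that $\{J_{m_n}(t_n)\ge n_n\}$ is exactly the event in the statement; by the duality this equals $\{x_{k_n}(t_n)\ge m_n+1\}$ with $k_n=\ell(m_n,0)-n_n+1$, and rewriting it as $\{x_{k_n}(t_n)-x_{k_n}(0)\ge m_n+1-x_{k_n}(0)\}$ using the explicit step initial position of particle $k_n$ (with the appropriate $L_n$-periodic lift so that the index lands in $\{1,\dots,N_n\}$) produces exactly the event in Theorem~\ref{thm:limit_one_point_distribution_step}. It then remains a bookkeeping computation to check: (i) the time $t_n$ of~\eqref{eq:aux_2016_04_11_01} has the form~\eqref{eq:aux_2016_04_22_02} with the same $L_n$ and $\rho_n$, the same rescaled time $\tau$, the same $\gamma$ modulo an integer, and with index parameter equal to this $k_n$; and (ii) the centering constant and the scaling $\rho_n^{-1/3}(1-\rho_n)^{2/3}t_n^{1/3}$ for $x_{k_n}$ convert into the centering and scaling $\rho_n^{2/3}(1-\rho_n)^{2/3}t_n^{1/3}$ for $J_{m_n}$ — the shift $|m_n|/2-(1-2\rho_n)m_n/2$ arising from the two formulas for $\ell(m_n,0)$ together with the $L_n$-periodic lift of $k_n$. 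Since $\FS(\cdot;\tau,\gamma)$ is invariant under $\gamma\mapsto\gamma+1$ and $\gamma\mapsto-\gamma$, integer ambiguities in the effective $\gamma_n$ and the change of sign of $1-2\rho_n$ between $\rho_n<1/2$ and $\rho_n>1/2$ are harmless; the case $\rho_n>1/2$ may alternatively be reduced to $\rho_n<1/2$ by the particle--hole symmetry of $J_{m_n}$.

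For part (a), $\rho_n=1/2+O(L_n^{-1})$, the factor $|1-2\rho_n|$ in~\eqref{eq:aux_2016_04_11_01} degenerates, so the argument is run directly: with $m_n=[\gamma L_n]$ and $t_n=2\tau L_n^{3/2}$ the same duality gives $\prob(J_{m_n}(t_n)\ge n_n)=\prob(x_{k_n}(t_n)\ge m_n+1)$ with $k_n=\ell(m_n,0)-n_n+1$, and one checks that this is covered by Theorem~\ref{thm:limit_one_point_distribution_step} at a density $\rho_n\to1/2$ lying in an admissible band $\band(c_1,c_2)$. Since $L_n/\rho_n=2L_n+O(1)$ and $\sqrt{\rho_n}/\sqrt{1-\rho_n}=1+O(L_n^{-1})$, the leading term $\tfrac{L_n}{\rho_n}\big[\tau\sqrt{\rho_n}/\sqrt{1-\rho_n}\,L_n^{1/2}\big]$ of~\eqref{eq:aux_2016_04_22_02} is $2\tau L_n^{3/2}+O(L_n)$, matching $t_n$ to leading order and forcing the effective rescaled time to be $\tau$; the residual $O(L_n)$ discrepancy, together with the $m_n$-dependence entering through $\ell(m_n,0)$ and through $x_{k_n}(0)$, is absorbed into the effective parameter $\gamma_n=\gamma+O(L_n^{-1/2})$ (so that $m_n=[\gamma L_n]$ feeds the $\gamma$-slot via $\tfrac{L_n}{\rho_n}\gamma_n\approx 2L_n\gamma_n$) and into the centering shift $|m_n|/2$, yielding $\FS(\tau^{1/3}x;\tau,\gamma)$ after using the same symmetries.

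I expect the main obstacle to be precisely this matching of the lower-order terms: converting the explicit step initial positions $x_{k_n}(0)$ (which in the initially packed region have spacing $1$, not $\rho_n^{-1}$) and the $L_n$-periodic index lift into exactly the centering constants $|m_n|/2$ and $(1-2\rho_n)m_n/2$, and confirming that the effective shock-distance parameter is the prescribed $\gamma$ rather than a shifted or reflected value — all while simultaneously tracking the two cases $m\le 0$ and $m\ge 1$ in $\ell(m,0)$, the sign of $1-2\rho_n$, and the $O(L_n^{-1})$ perturbation of $\rho_n$ around $1/2$. A minor additional point is that $n_n$, hence $k_n$, fluctuates at scale $t_n^{1/3}$, so one uses that the conclusion of Theorem~\ref{thm:limit_one_point_distribution_step} holds for an \emph{arbitrary} sequence $k_n\in\{1,\dots,N_n\}$ to see that a $\pm1$ ambiguity in $k_n$ does not affect the limit.
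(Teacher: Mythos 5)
Your proposal is correct and takes essentially the same route as the paper: the paper likewise deduces the result from Theorem~\ref{thm:limit_one_point_distribution_step} via the current--position equivalence (its relation \eqref{eq:aux_2016_04_12_02} is exactly your $\ell(m_n,0)$-duality with the periodic index lift), a decomposition of the given $t_n$ into the form \eqref{eq:aux_2016_04_22_02} with a suitable $k_n$ and effective $\gamma_n=\gamma+j+O(N_n^{-1/2})$, and the periodicity of $\FS$ in $\gamma$. The bookkeeping you defer is precisely the paper's direct check \eqref{eq:aux_2016_04_20_09} (with the $x$-dependent and fractional remainders absorbed into $\gamma_n$ and an $O(1)$ threshold shift), so the two arguments coincide in substance.
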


\medskip
The proof is given in Section~\ref{sec:proofofthm:limit_current_step}.

In the above, we have a different parametrization of time from~\eqref{eq:aux_2016_04_22_02}. 
This is because the site is fixed, and hence the shock, which travels with speed $1-2\rho_n$, arrives at the given site once every $\frac{L_n}{|1-2\rho_n|}$ units of time on average, if $\rho_n\neq 1/2$.
If $\rho_n=1/2$, the shock stays at the same site for all time on average. 
%When $|\rho_n-1/2|\ge c>0$, the shock moves at speed $O(L_n)$, thus within time $t_n=O(L_n^{3/2})$ it visits any site $O(L_n^{1/2})$ times. On the other hand, when $\rho_n=1/2+cL_n^{-1/2}$, the shock travels at a lower speed $2|c|L_n^{-1/2}$. Thus within time $t_n=2\tau L_n^{3/2}$ it visits roughly $4|c|\tau L_n$ sites.
%This is in contrast to the case for the tagged particles which also travel with speed $1-\rho_n$. 

%{\Cb Part (a) changed. The discussion modified.}

\section{The limiting distribution functions}
\label{sec:limitdistr}

In this section we describe the limiting distribution functions appeared in the main theorems in the previous section. 
% and discuss some of the basic properties. 
Throughout the paper, $\log$ denotes the usual logarithm function with branch cut $\realR_{\le 0}$.
Let $\polylog_s(z)$ be the polylogarithm function. 
It is defined by $\polylog_s(z):=\sum_{k=1}^\infty\frac{z^k}{k^s}$ for  $|z|<1$, $s\in \complexC$, and it has an analytical continuation 
%for $z\in \complexC\setminus \realR_{\ge 1}$ using the formula
\begin{equation}
	\polylog_s(z) =\frac{z}{\Gamma(s)} \int_0^\infty \frac{x^{s-1}}{e^x-z} \dd x, \qquad z\in \complexC\setminus \realR_{\ge 1},
\end{equation}	 
if $\Re(s)>0$. 

\subsection{The flat case}

The limiting distribution for the flat case is defined by, for $\tau>0$, 
\begin{equation}
\label{eq:def_FF}
\FF(x;\tau) = %\oint_{|z|=r} 
	\oint e^{xA_1(z)+\tau A_2(z) + A_3(z) + B(z)} \det (I-\Kf_z) \ddbar{z},\qquad x\in \realR.
\end{equation}
The contour of integration is any simple closed contour in $|z|<1$ which contains the origin inside. 
%$r$ is a constant satisfying $0<r<1$. {\Cr Contour: any simple closed curve?}
The terms involved in the formula are defined as follows.

Set 
\begin{equation}
\label{eq:def_constants_A}
A_1(z) := -\frac{1}{\sqrt{2\pi}} \polylog_{3/2}(z),
\qquad
A_2(z) := -\frac{1}{\sqrt{2\pi}} \polylog_{5/2}(z),
\qquad
A_3(z) := -\frac1{4}\log(1-z),
\end{equation}
and %$B(z)$ is given by
\begin{equation}
\label{eq:def_constant_B}
B(z) := \frac{1}{4\pi} \int_0^z \frac{(\polylog_{1/2}(y))^2}{y} \dd y.
\end{equation}
The integral for $B(z)$ is taken over any curve in $\complexC\setminus \realR_{\ge 1}$.  
These four functions are analytic in $z\in \complexC\setminus \realR_{\ge 1}$.

The operator $\Kf_z$ acts on $\ell^2(\inodes_{z,\LL})$ where $\inodes_{z,\LL}$ is the discrete set defined by 
\begin{equation}
\label{eq:def_inf_nodes_L}
	\inodes_{z,\LL} = \{\xi\in \complexC : e^{-\xi^2/2}=z, \Re(\xi)<0\}.
\end{equation}
See Figure~\ref{fig:limiting_nodes} for a picture. 
\begin{figure}
\centering
\includegraphics[scale=0.5]{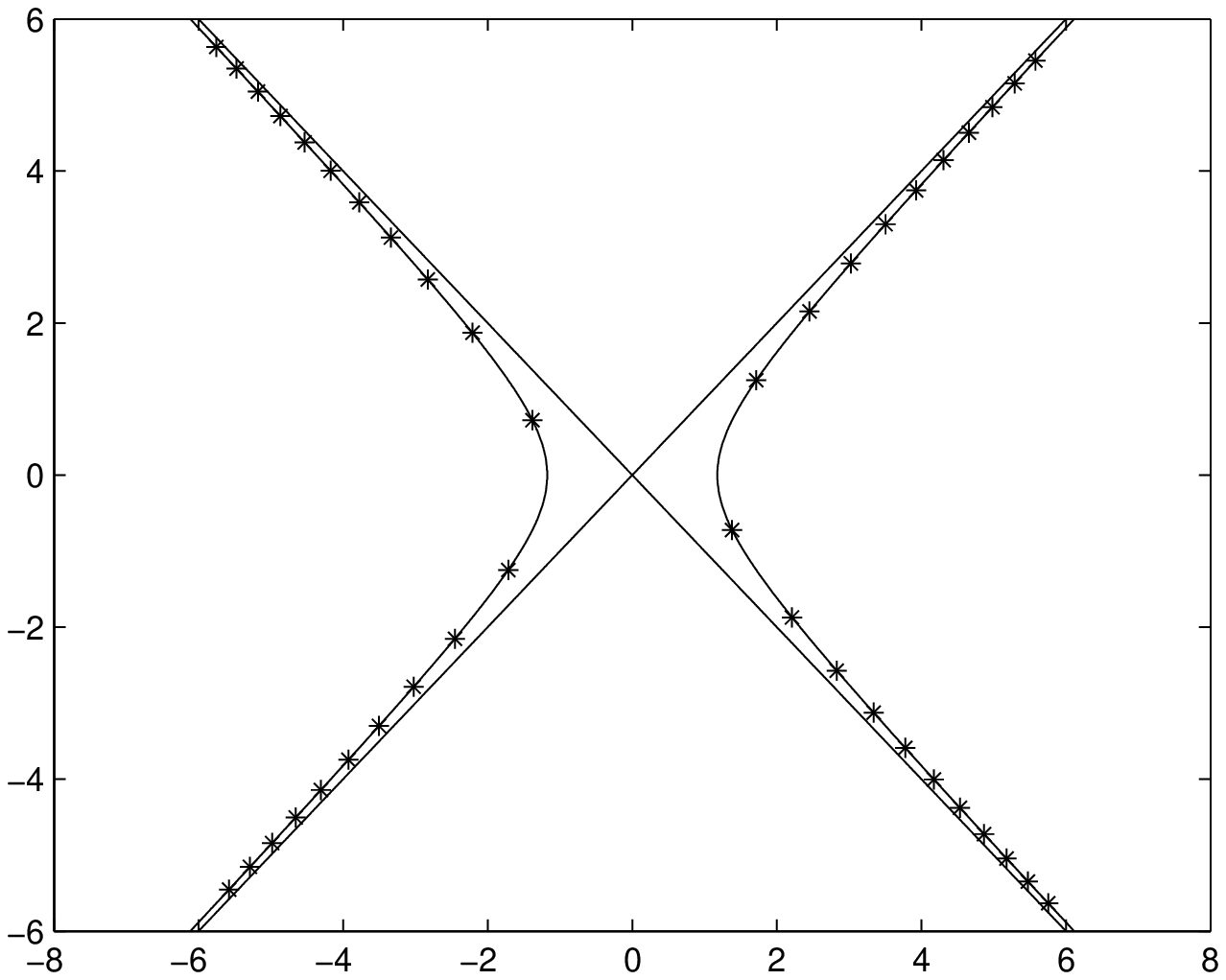}
\caption{Picture of $\inodes_{z,\LL}$ (points with negative real part) and $\inodes_{z,\RR}$ (points with positive real part) when $z=0.5e^{\ii}$.}
\label{fig:limiting_nodes}
\end{figure}
It is easy to check that $\inodes_{z,\LL}$ is contained in the sector $\arg(\xi)\in (3\pi/4, 5\pi/4)$ in the complex plane and it has the asymptotes $\arg(\xi)= \pm \ii 3\pi/4$. 
In order to define the kernel of the operator $\Kf_z$, we first introduce 
the function
\begin{equation}
\label{eq:def_Psi}
\Psi_z(\xi;x,\tau) := -\frac13\tau\xi^3	 +x\xi 	
    				-\frac{1}{\sqrt{2\pi}}\int_{-\infty}^{\xi} \polylog_{1/2}(e^{-\omega^2/2})\dd \omega,
				\qquad \arg(\xi)\in (3\pi/4, 5\pi/4).
				%\qquad \xi\in \inodes_{z,\LL}.
\end{equation}
Here $-\infty$ denotes $-\infty+\ii 0$ and the integration is taken along a contour from $-\infty$ to $\xi$ which lies in the sector $\arg(w)\in (3\pi/4, 5\pi/4)$. 
Note that $e^{-\omega^2/2}\notin \realR_{\ge 1}$ for $w$ in this sector, and hence $\polylog_{1/2}(e^{-\omega^2/2})$ is defined for such $w$. 
Also note that from the definition of the polylogarithm function, $\polylog_{1/2}(s)=O(s)$ as $s\to 0$.
Therefore, the integral in~\eqref{eq:def_Psi} is convergent. 
We finally define the kernel of $\Kf_z$ by 
\begin{equation}
\label{eq:def_inf_kernel_flat}
\Kf_z(\xi_1,\xi_2) = \Kf_z(\xi_1,\xi_2;x,\tau)
				   = \frac{e^{\Psi_z(\xi_1;x,\tau)  +\Psi_z(\xi_2;x,\tau)}}
				            {\xi_1(\xi_1+\xi_2)}, \qquad \xi_1, \xi_2\in \inodes_{z,\LL}.
\end{equation}

To show that $\Kf_z$ is a bounded operator and its Fredholm determinant is well-defined for $|z|<1$, it is enough check that $e^{\Psi_z(\xi;x,\tau)}\to 0$ exponentially fast as $|\xi| \to \infty$ on the set $\inodes_{z,\LL}$. 
Since the asymptotes of the set $\inodes_{z,\LL}$ are the lines $\arg(\xi)= \pm \ii 3\pi/4$, 
we see that $\Re(-\frac13\tau\xi^3	 +x\xi) \to -\infty$ like a cubic polynomial in the limit. 
On the other hand, the integral term in the formula of $\Psi_z(\xi;x,\tau)$ is of order $O(\xi^{-1})$ since
% it can be written as 
%\begin{equation}
%\label{eq:limit_hR_alternate}
%	-\frac{1}{\sqrt{2\pi}}\int_{-\infty}^{\xi} \polylog_{1/2}(e^{-\omega^2/2})\dd \omega
%= z\int_{\Re(\omega)=0} \frac{\log(\omega -\xi)\omega}{e^{-\omega^2/2}-z} \ddbarr{\omega},
%\end{equation}
%which can be further rewritten as 
\begin{equation}
\label{eq:aux_2016_04_23_02}
	-\frac{1}{\sqrt{2\pi}}\int_{-\infty}^{\xi} \polylog_{1/2}(e^{-\omega^2/2})\dd \omega
	= \int_{\Re(\omega)=0}\frac{\log \big(1-ze^{\omega^2/2}\big)}{\omega-\xi} \ddbarr{\omega},
	\qquad \xi\in \inodes_{z,\LL},
\end{equation}
here  and in the rest of this paper, the orientation for the  line $\Re(\omega)=0$ in the integral $\int_{\Re(\omega)=0}$ is from $0-\ii\infty$ to $0+\ii\infty$.
The above identity can be checked by using the power series expansions of the integrands and noting that 
$\frac1{\sqrt{2\pi}}\int_{-\infty}^u e^{-\omega^2/2} \dd \omega=  \int_{\Re(\omega)=0}\frac{ e^{ (-u^2+\omega^2)/2}}{\omega-u} \ddbarr{\omega}$ for all $u$ satisfying $\arg(u)\in (3\pi/4, 5\pi/4)$. 
%Therefore it decays as $O(\xi^{-1})$ when $\Re(\xi)\to -\infty$. On the other hand, $\inodes_{z,\LL}$ are discrete points along the contour $\Re(-\xi^2) =-2\log r$ which asymptotically goes to $e^{\pm \ii 3\pi/4 \infty}$. The gaps between neighboring points in the set are roughly $2\pi |\xi|^{-1}$ as $\xi\to\infty$ along the contour. See Figure~\ref{fig:limiting_nodes} for an illustration. By combing these facts, we see that $\Psi_z(\xi;x,\tau)$ decays exponentially as $\xi\to\infty$ along the contour, and 
It is also easy to see that the Fredholm determinant $\det \big(I-\Kf_z\big)$ is uniformly bounded for all $z$ in a compact subset of $|z|<1$, and it is analytic in $|z|<1$ since the set $\inodes_{z,\LL}$ depends on $z$ analytically. 
We therefore conclude that $\FF(x;\tau)$ in~\eqref{eq:def_FF} is well defined, and is independent of the choice of the contour. 
% $|z|=r$ as long as $r<1$. 

As mentioned in Section~\ref{sec:introduction}, the distribution function $\FF(x;\tau)$ agrees well with Prolhac's formula (10) in \cite{Prolhac16} if we evaluate the functions numerically. However, the rigorous proof that they are the same is still missing. 
%in graph for the tested parameters ($\tau=0.2,0.5$ and $1$), but we do not have a rigorous proof yet.

%{\Cr Need pictures of the distribution function}

\begin{figure}
\centering
\includegraphics[height=3.1cm]{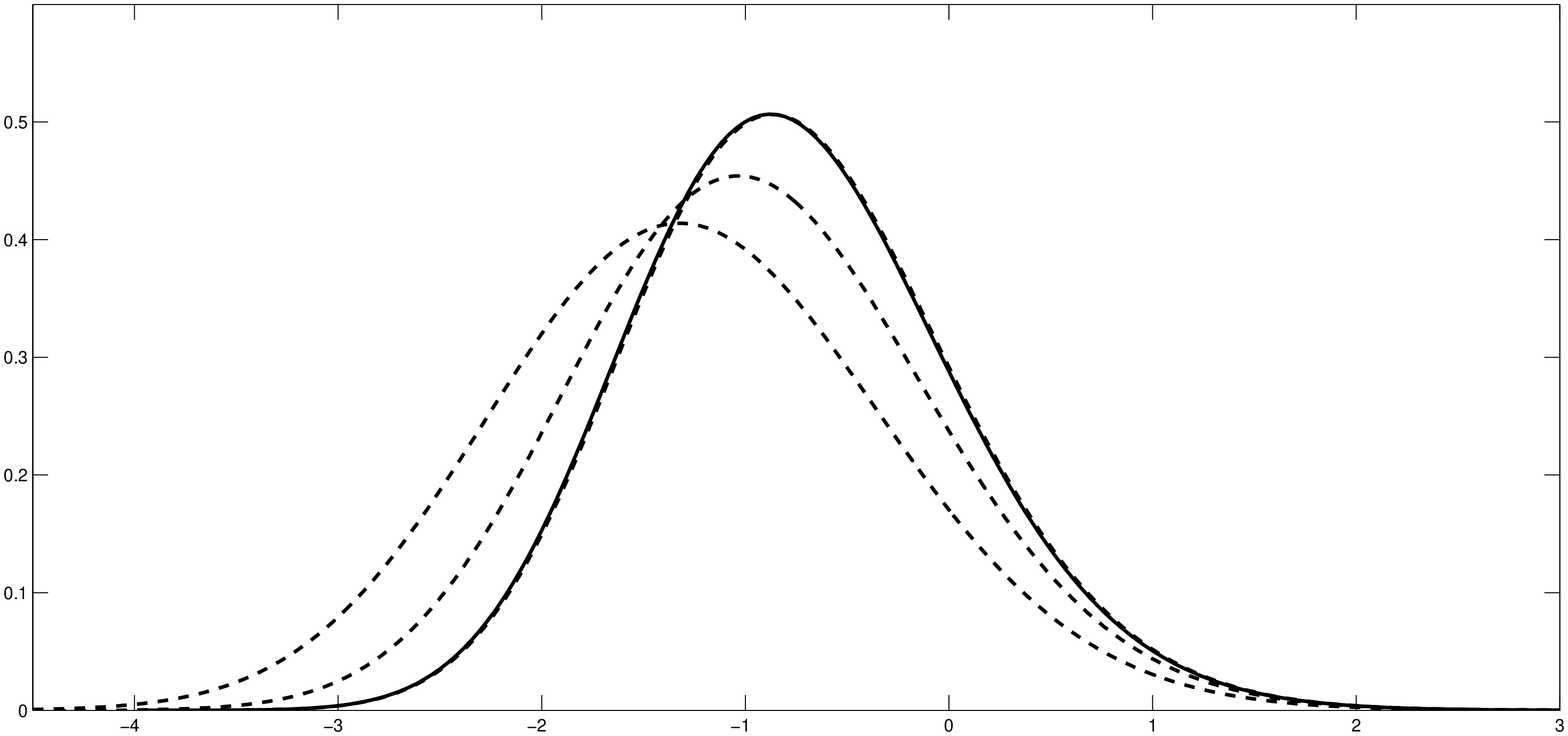}
\includegraphics[height=3.1cm]{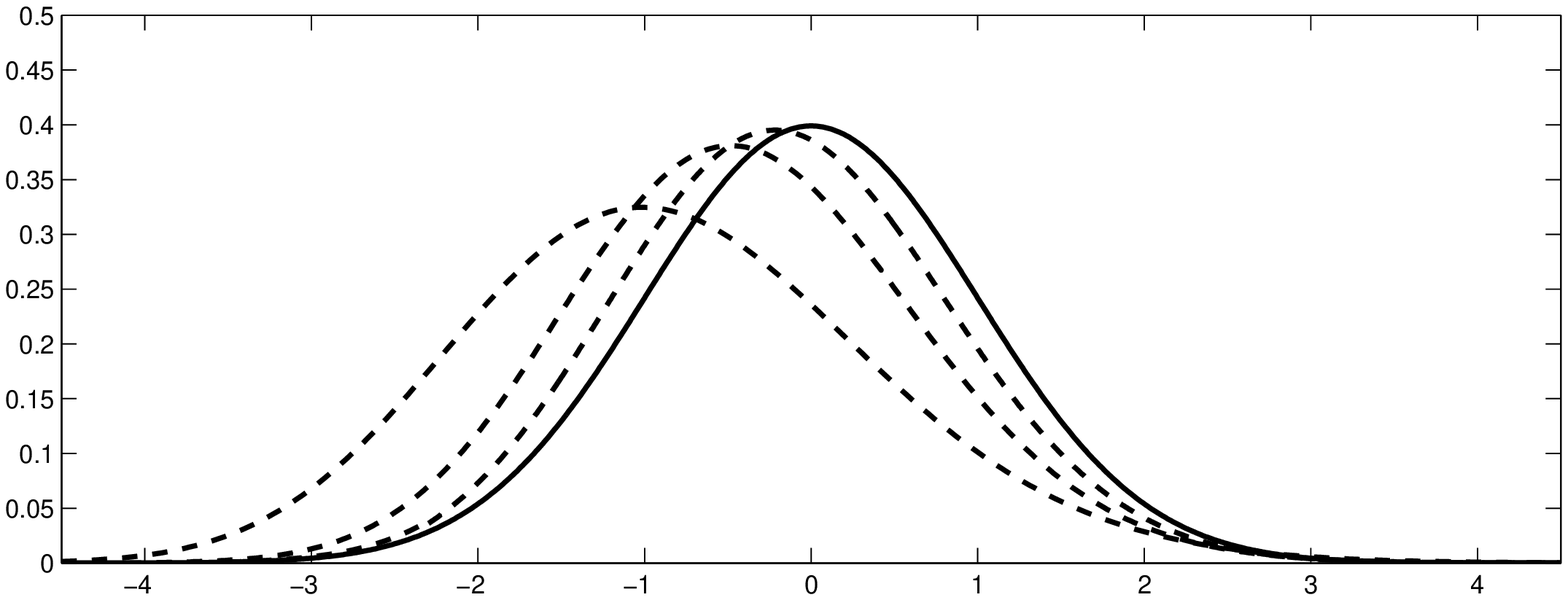}
\caption{Left: The three dashed curves are the density functions of  $\FF(\tau^{1/3}x;\tau)$ 
for $\tau=1, \, 0.5, \, 0.1$ from left to right. 
The solid curve is the density function of $\FGOE(2^{2/3}x)$.\\
Right: The dashed curves are the density functions of $\FF\big(-\tau+ \frac{\pi^{1/4}}{\sqrt{2}} \tau^{1/2}x; \tau\big)$  for $\tau=0.1, \, 0.5, \, 2.5$ from left to right. 
The solid curve is the standard Gaussian density function.}
\label{fig:F_1_Gaussian}
\end{figure}

\bigskip

%{\Cr The following is not rigorous yet. So, I changed the paragraph.}

The function $\FF(x;\tau)$ satisfies the following properties:
\begin{enumerate}[(a)]
\item For each $\tau>0$, $\FF(x;\tau)$ is a distribution function. It is also a continuous function of $\tau>0$. 
\end{enumerate}
The only non-trivial part is to show that $\FF(x; \tau)\to 0$ as $x\to -\infty$, and this can be proved by comparing the periodic TASEP with the TASEP on $\intZ$ and using the known properties for the later. A proof for the stationary initial condition is given in Appendix of \cite{Liu16}. The flat and step initial condition cases are similar.

In addition, a formal calculation using the explicit formula of $\FF$ indicates that the following is true.
\begin{enumerate}[(b)]
\item[(b)] For each $x\in \realR$, $\lim_{\tau \to 0} \FF(\tau^{1/3}x;\tau)= \FGOE(2^{2/3}x)$.
\item[(c)] For each $x\in \realR$, 
\begin{equation}
	\lim_{\tau\to \infty} \FF\left( -\tau+ \frac{\pi^{1/4}}{\sqrt{2}} \tau^{1/2}x; \tau \right) 
	= \frac1{\sqrt{2\pi}} \int_{-\infty}^x e^{-y^2/2} dy.
\end{equation}
\end{enumerate}
These are consistent with the cases of the sub-relaxation scale $t\ll L^{3/2}$ and the super-relaxation scale $t\gg L^{3/2}$. These properties will be discussed in a later paper \cite{Baik-Liu16c}. 
See Figure~\ref{fig:F_1_Gaussian}.

The one-parameter family of distribution functions, $\FF(\tau^{1/3}x;\tau)$, interpolates $\FGOE$ and the Gaussian distribution function.
There are other examples of such families of distribution functions in different contexts such as  the DLPP model with a symmetry \cite{Baik-Rains01a} and the spiked random matrix models \cite{Mo12, Bloemendal-Virag13}. 
However, the distribution functions $\FF(\tau^{1/3}x;\tau)$ %bove arise in a different context and 
seem to be new.

\subsection{The step case}\label{sec:limdisdefste}

The limiting distribution function for the step case is defined by, for $\tau>0$ and $\gamma\in \realR$, 
\begin{equation}
\label{eq:def_FS}
\FS(x;\tau,\gamma) = \oint e^{xA_1(z)+\tau A_2(z) +2B(z)} \det (I-\Ks_z) \ddbar{z}, 
	\qquad x\in \realR.
\end{equation}
The integral is over any simple closed contour in $|z|<1$ which contains the origin inside. 
%where $r$ is a constant satisfying $0<r<1$.
The functions $A_1(z)$, $A_2(z)$ and $B(z)$ are same as the flat case.
% and are defined in~\eqref{eq:def_constants_A} and~\eqref{eq:def_constant_B}. 
The operator $\Ks_z$ acts on the same space $\ell^2(\inodes_{z,\LL})$
%, $\inodes_{z,\LL} = \{\xi\in \complexC : e^{-\xi^2/2}=z, \Re(\xi)<0\}$,
as in the flat case,
and its kernel is given by 
\begin{equation}
\label{eq:def_inf_kernel_step}
\Ks_z(\xi_1,\xi_2) = \Ks_z(\xi_1,\xi_2;x,\tau,\gamma)
				   = \sum_{\eta \in \inodes_{z,\LL}}
				       \frac{e^{\Phi_z(\xi_1;x,\tau)  +\Phi_z(\eta;x,\tau)		 
				              +\frac{\gamma}{2}(\xi_1^2-\eta^2)}}
				            {\xi_1\eta(\xi_1+\eta)(\eta+\xi_2)},
				            \qquad \xi_1, \xi_2\in \inodes_{z,\LL}, 
\end{equation}
where
\begin{equation}
\label{eq:def_Phi}
\Phi_z(\xi;x,\tau) = -\frac13\tau\xi^3	 +x\xi 	
    				-\sqrt{\frac{2}{\pi}}\int_{-\infty}^{\xi} \polylog_{1/2}(e^{-\omega^2/2})\dd \omega, \qquad \xi\in \inodes_{z,\LL}.
\end{equation}
The function $\Phi_z(\xi;x,\tau)$ is same as the function $\Psi_z(\xi;x,\tau)$ for the flat case, except that the coefficient of the integral part is doubled. 
As before, the operator and its Fredholm determinant are well-defined. 
The function $\FS(x;\tau,\gamma)$ is well defined as well, and is independent of the contour.  

%{\Cr Need pictures of the distribution function}

\begin{figure}
\centering
\includegraphics[height=3.1cm]{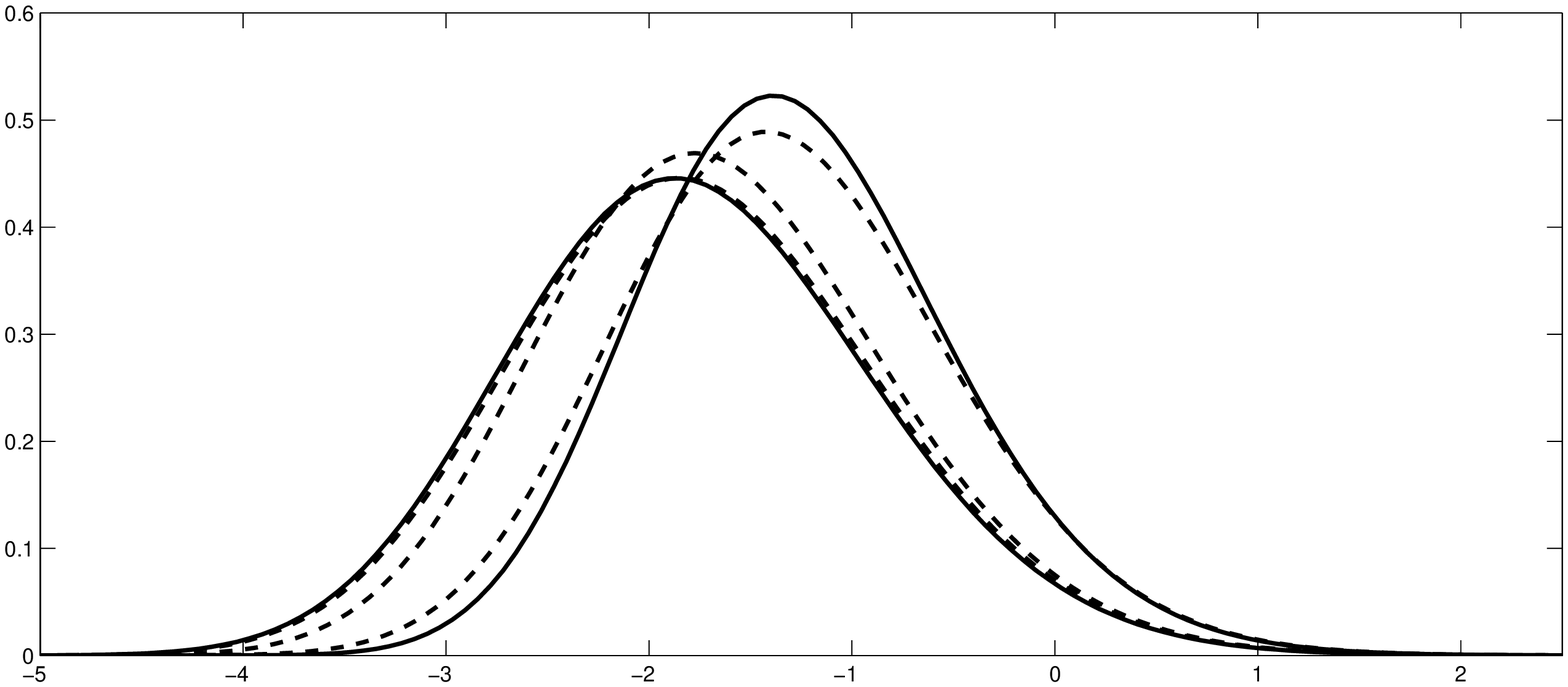}
\includegraphics[height=3.1cm]{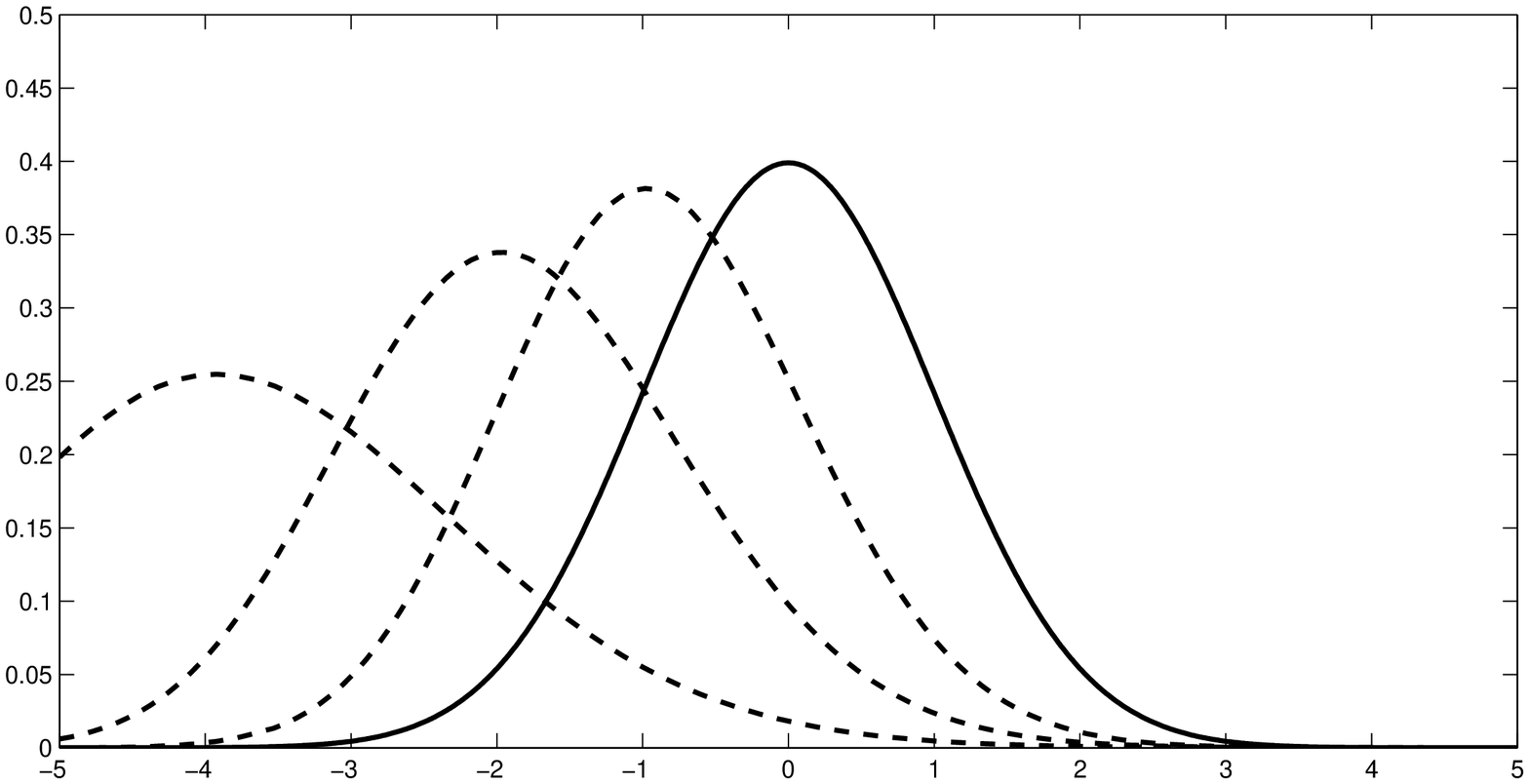}
\caption{Left: The dashed curves are the density functions of $\FS\big(\tau^{1/3}x - \frac{\gamma^2}{4\tau}; \tau, \gamma \big)$ for fixed $\tau=0.1$ and three different values of $\gamma=0.2,  \, 0.4, \, 0.5$ from left to right. There are two solid curves. They are the density functions of $\FGUE(x)$ (left) and $\FGUE(x)^2$ (right). \\ Right: The dashed curves are the density functions of $\FS\big(-\tau+ \frac{\pi^{1/4}}{\sqrt{2}} \tau^{1/2}x; \tau,\gamma)$ (dashed curves) for fixed $\gamma=0.2$ and three values of $\tau=0.05, \, 0.25, \, 1$ from left to right. 
The solid curve is the standard Gaussian density function. }
\label{fig:F_2_Gaussian}
\end{figure}

\bigskip

%{\Cr Changed the paragraph.}

The function $\FS(x;\tau,\gamma)$ satisfies the following properties:
\begin{enumerate}[(a)]
\item For fixed $\tau$ and $\gamma$, $\FS(x;\tau,\gamma)$ is a distribution function. 
It is a continuous function of $\tau>0$ and $\gamma\in \realR$. 

\item $\FS(x;\tau,\gamma)$ is periodic in $\gamma$:
$\FS(x;\tau,\gamma) = \FS(x;\tau,\gamma+1)$. 

\item $\FS(x;\tau,\gamma) = \FS(x; \tau, -\gamma)$.
\end{enumerate}
The property (a) is similar to the flat case. 
For the properties (b) and (c), note that $\gamma$ appears only as $e^{\frac{\gamma}{2}(\xi_1^2-\eta^2)}$ in $\Ks_z$ in the formula of $\FS(x;\tau,\gamma)$. 
Since $e^{-\xi_1^2/2}=z$ and $e^{-\eta^2/2}=z$ for $\xi_1\in \inodes_{z,\LL}$ and $\eta\in \inodes_{z,\RR}$, 
we obtain the property (b).
On the other hand, the property (c) follows by observing that $\Ks_z$ is the product of two operators and then using the identity $\det(I-AB)=\det(I-BA)$.

It is believed that the following additional properties hold. See Figure~\ref{fig:F_2_Gaussian}.
\begin{enumerate}
\item[(d)] For each fixed $x\in\realR$ and $ \gamma\in\realR$, 
\begin{equation}\label{eq:F2toF2and2}
	\lim_{\tau\to 0} \FS \left(\tau^{1/3}x - \frac{\gamma^2}{4\tau}; \tau, \gamma \right) 
	= \begin{dcases}
		\FGUE(x), 	\quad	& -1/2< \gamma <1/2 ,\\
		\left(\FGUE(x)\right)^2,	\quad 	& \gamma = 1/2,
	\end{dcases}
\end{equation}
where $\FGUE$ is the Tracy-Widom GUE distribution.

\item[(e)] For each fixed $x\in\realR$ and $ \gamma\in\realR$, 
\begin{equation}
	\lim_{\tau\to \infty} \FS\left( -\tau+ \frac{\pi^{1/4}}{\sqrt{2}} x\tau^{1/2}; \tau,\gamma \right) 
	= \frac1{\sqrt{2\pi}} \int_{-\infty}^x e^{-y^2/2} dy.
\end{equation}
\end{enumerate}
The limit $(\FGUE)^2$ in~\eqref{eq:F2toF2and2} when $\gamma=1/2+\intZ$ is due to the fact that in the large sub-relaxation time scale $L\ll t\ll L^{3/2}$, the limiting distribution is $(\FGUE)^2$ if the tagged particle and the shock are at the same location (see Section~\ref{sec:dlpp}). 

%\section{Formulas for finite time}
%\label{sec:formula}

\section{Transition probability for TASEP in $\conf_N(L)$} % for general initial condition}
\label{sec:tran}

%In this section, we focus on the formulas of TASEP on $\conf_N(L) = \{(x_1,x_2,\cdots,x_N)\in \intZ^N: x_1 <x_2 <\cdots <x_N <x_1+L  \}$. We obtain the formulas for the transition probability, see Theorem~\ref{thm:transition_probability}, and the one point distribution function, see Proposition~\ref{prop:one_point_distribution_Toeplitz}, and Fredholm representations of this one point distribution function for the flat and step initial conditions, see Propositions~\ref{prop:one_point_distribution_Fredholm2} and~\ref{prop:one_point_distribution_Fredholm}.

%\subsection{Bethe Ansatz and the transition probability}

%{\Cr Change Theorem to Proposition}

As mentioned in Introduction, if we only consider the particles in one period in the periodic TASEP, 
their dynamics are equivalent to the dynamics of the TASEP (with $N$ particles) in the configuration space 
\begin{equation}
	\conf_N(L) = \{(x_1,x_2,\cdots,x_N)\in \intZ^N: x_1 <x_2 <\cdots <x_N <x_1+L  \}.
\end{equation}
In this section we compute the transition probability for the TASEP in $\conf_N(L)$ explicitly for general initial condition. 
As in \cite{Schutz97, Rakos-Schutz05, Tracy-Widom08}, we solve the the Kolmogorov equation
by solving the free evolution equation with appropriate boundary conditions, which arise from the non-colliding conditions in the Kolmogorov equation, and an initial condition. 
The change of the TASEP on $\intZ$  to the TASEP in the configuration space $\conf_N(L)$ is an additional non-colliding condition $x_N <x_1+L$.
This results in an extra boundary condition for the free evolution equation, and introduces a new feature to the solution. 
%extra results in an  in \cite{Schutz97, Tracy-Widom08}

%{\Cr Different notation for $\prob_Y(X;t)$, say $P_Y(X;t)$, to distinguish ring and $\conf_N(L)$?}

%The only difference is an extra boundary condition \eqref{eq:bc_2} which comes from the restriction on the right-most particle. This difference leads to the discreteness structure of the formula \eqref{eq:transition_probability}.
Let $\prob_Y(X;t)$, for $X, Y\in \conf_N(L)$, denote the transition probability that the configuration at time $t$ is $X$ given that the initial configuration is  $Y$. 
We have the following result. 

\begin{prop}
\label{thm:transition_probability}
For $z\in \complexC$,  define the polynomial of degree $L$,
\begin{equation}
\label{eq:def_roots}
%\roots_z = \{w\in\complexC; \, q_z(w)=0\}, \qquad 
	q_z(w) = w^N(w+1)^{L-N}-z^L, %,\qquad z\in \complexC.
\end{equation}
and denote  the set of the roots by 
\begin{equation}\label{eq:rootsz}
	\roots_z = \{w\in\complexC : q_z(w)=0\}. 
\end{equation} 
Then, for  $X=(x_1, \cdots, x_N)\in \conf_N(L)$ and $Y=(y_1, \cdots, y_N) \in\conf_N(L)$, %we have
\begin{equation}
\label{eq:transition_probability}
	\prob_Y(X;t) = %\oint_{|z|=r} 
	\oint \det \left[ \frac1L \sum_{w\in \roots_z} 
 						\frac{w^{j-i+1}(w+1)^{-x_i+y_j+i-j}e^{tw}}{w+\rho}
					\right]_{i,j=1}^N \ddbar{z}.
\end{equation}
The integral is over any simple closed contour in $|z|>0$ which contains $0$ inside and $\rho := N/L$. %{\Cr Changed the contour and the discussion below. Also the proposition in the next section.}
%The integral is over any simple closed contour which contains $0$ inside and lies inside either $0<|z|<\rho^\rho(1-\rho)^{1-\rho}$ or $|z|>\rho^\rho(1-\rho)^{1-\rho}$, where $\rho := N/L$. 
\end{prop}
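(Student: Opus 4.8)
The plan is to follow the Bethe-ansatz scheme of Sch\"utz \cite{Schutz97} and Tracy--Widom \cite{Tracy-Widom08}, adapted to the configuration space $\conf_N(L)$. The transition probability $\prob_Y(X;t)$ is characterized as the unique solution of the Kolmogorov forward equation on $\conf_N(L)$ with initial data $\prob_Y(X;0)=\mathbbm{1}_{X=Y}$. The standard device is to replace the Kolmogorov equation, which is only valid when all inequalities $x_1<x_2<\cdots<x_N<x_1+L$ are strict, by the free evolution equation $\frac{\dd}{\dd t}u(X;t)=\sum_{i=1}^N\big(u(X-e_i;t)-u(X;t)\big)$ on all of $\intZ^N$, subject to boundary conditions that make the free equation agree with the true generator on the boundary of $\conf_N(L)$. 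First I would verify that the two boundary conditions
\begin{equation}
u(\ldots,x_i,x_{i+1},\ldots;t)\big|_{x_{i+1}=x_i+1}=u(\ldots,x_i,x_i,\ldots;t),\qquad 1\le i\le N-1,
\end{equation}
together with the extra \emph{periodic} boundary condition coming from the constraint $x_N<x_1+L$, namely
\begin{equation}
u(x_1,\ldots,x_N;t)\big|_{x_N=x_1+L}=u(x_1+L,x_1,\ldots,x_{N-1};t),
\end{equation}
force a solution of the free equation to satisfy the genuine Kolmogorov equation on $\conf_N(L)$; this is a direct term-by-term check of the generator, exactly as in \cite{Tracy-Widom08} for the first $N-1$ conditions, with the last one being the new ingredient special to the ring.

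Next I would write the Bethe ansatz: seek $u$ as a superposition over permutations $\sigma\in S_N$,
\begin{equation}
u(X;t)=\sum_{\sigma\in S_N} A_\sigma\,\prod_{i=1}^N w_{\sigma(i)}^{-x_i}\,e^{t\sum_i (w_i^{-1}-1)},
\end{equation}
where the $w_i$ are spectral parameters. Imposing the nearest-neighbor boundary conditions yields the usual two-body $S$-matrix and fixes $A_\sigma$ up to an overall factor proportional to $\prod_{i<j}(\text{something in }w_{\sigma(i)},w_{\sigma(j)})$ times $\mathrm{sgn}(\sigma)$; imposing the periodic boundary condition turns the continuum of allowed $w_i$ into the \emph{Bethe equations}, which after the substitution $w\mapsto w/(w+1)$ (or an equivalent M\"obius change of variable) become precisely $w^N(w+1)^{L-N}=z^L$ for a common modulus parameter $z$. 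This is why the roots $\roots_z$ of $q_z$ in \eqref{eq:def_roots} appear. One then sums over the Bethe roots and over $z$ on a small circle around the origin to restore the delta-function initial condition; the contour integral $\oint \cdots \ddbar z$ is the mechanism that selects, out of all Bethe states, the combination equal to $\mathbbm{1}_{X=Y}$ at $t=0$.

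To get the clean determinantal form \eqref{eq:transition_probability} I would then perform the sum over $S_N$ explicitly. The key identity is the Cauchy-type / Vandermonde summation that collapses $\sum_{\sigma}\mathrm{sgn}(\sigma)\prod_i(\ldots)$ into a single $N\times N$ determinant whose $(i,j)$ entry is the ``one-body'' propagator; this is the step carried out in \cite{Tracy-Widom08} and I would reproduce it, being careful that the extra factor $\frac{1}{w+\rho}$ and the powers $w^{j-i+1}(w+1)^{-x_i+y_j+i-j}$ in \eqref{eq:transition_probability} are exactly what emerge from the M\"obius change of variables together with the Jacobian $\frac{\dd w}{w+\rho}$ of the Bethe parametrization and the normalization $1/L$ counting the $L$ roots in $\roots_z$. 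Finally I would check the two defining properties directly from the formula: (i) at $t=0$ the $z$-integral of the determinant equals $\mathbbm{1}_{X=Y}$ — this uses that the entries reduce to $\frac1L\sum_{w\in\roots_z} w^{a}(w+1)^{b}/(w+\rho)$, which by a residue computation in $z$ picks out a single term — and (ii) the formula solves the free evolution equation and the three boundary conditions, which is immediate since each summand does.

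The main obstacle I anticipate is the algebra of the periodic boundary condition: on $\intZ$ there is no analogue of the constraint $x_N<x_1+L$, so the standard Tracy--Widom derivation must be genuinely extended, and one must check that the Bethe equations produced are uniformly $w^N(w+1)^{L-N}=z^L$ (independent of $i$) rather than a more complicated nested system, and that the resulting sum over the $L$ Bethe roots, together with the $z$-contour, reproduces exactly the correct initial data without over- or under-counting. Keeping track of the branch of $w\mapsto w/(w+1)$ and of the Jacobian factor $1/(w+\rho)$ through the symmetrization — so that the final determinant is manifestly the advertised one — is where the bookkeeping is heaviest, but it is bookkeeping rather than a conceptual gap. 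Verifying the $t=0$ reduction via a residue calculation in $z$ is the cleanest way to pin down all constants and I would treat it as the linchpin of the verification.
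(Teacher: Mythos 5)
Your skeleton is in fact the paper's: characterize $\prob_Y(X;t)$ as the solution of the free evolution equation on $\intZ^N$ with exclusion boundary conditions, one extra condition encoding $x_N<x_1+L$, and the delta initial condition, then establish the contour-integral determinant formula by verifying it satisfies all of these, with the $t=0$ residue computation as the decisive step (the Bethe-ansatz middle section is motivation the paper does not need, though your claim that the quantization decouples into $w^N(w+1)^{L-N}=z^L$ with a common $z$ is correct for TASEP). However, there are two concrete gaps. First, your extra boundary condition is not the right one. In the Kolmogorov equation the in-flow term for particle $1$, namely $u(x_1-1,x_2,\ldots,x_N;t)-u(X;t)$, is present only when $(x_1-1,x_2,\ldots,x_N)\in\conf_N(L)$, so it must be cancelled exactly on the hyperplane $x_N=x_1+L-1$; the condition to impose is
\begin{equation*}
u(x_1-1,x_2,\ldots,x_{N-1},x_1+L-1;t)=u(x_1,x_2,\ldots,x_{N-1},x_1+L-1;t),
\end{equation*}
which is \eqref{eq:bc_2} in the paper. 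Your cyclic-shift identity, imposed only at $x_N=x_1+L$ (not the hyperplane where the constraint becomes active) and with permuted arguments, does not cancel this term; if you prefer the cyclic route you would have to impose $u(x_1,\ldots,x_N)=u(x_2,\ldots,x_N,x_1+L)$ identically and combine it with the nearest-neighbour condition, which your plan does not spell out.

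Second, the assertion that the formula satisfies the boundary conditions ``immediately since each summand does'' is false: no single term $\prod_i w_{\sigma(i)}^{-x_i}$ and no single entry of the determinant satisfies either the exclusion condition or the ring condition. They hold only for the determinant as a whole: the exclusion condition follows by setting $x_i=x_{i-1}+1$ and adding the $(i-1)$-th row to the $i$-th row, and the ring condition follows by multiplying the $N$-th row by $z^L$ and adding it to the first row, using precisely the defining relation $z^L=w^N(w+1)^{L-N}$ for $w\in\roots_z$. This row manipulation is the genuinely new algebraic step compared with the TASEP on $\intZ$, and it is where the specific exponents $w^{j-i+1}(w+1)^{-x_i+y_j+i-j}$ matter; it cannot be waved through. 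With the boundary condition corrected and this verification supplied — together with the two-case residue analysis at $t=0$, which you rightly call the linchpin and which the paper performs by splitting each entry into integrals over $|w+1|=R$ and $|w+1|=\epsilon$ and sending the $z$-radius to $\infty$ or $0$ — your argument becomes the paper's proof.
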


%{\Cr It is strange to say that the integrand is analytic when we exclude $|z|=\rho^\rho(1-\rho)^{1-\rho}$}

\medskip

Let us check that the integral does not depend on the contour. 
Since $q_z(0)=q_z(-1)=-z^L\ne 0$, the points $0$ and $-1$ are not in the set $\roots_z$ for every $z\neq 0$. 
Moreover, for $|z|\neq \rho^\rho(1-\rho)^{1-\rho}$, we see that $-\rho \notin\roots_z$. 
Therefore, the entries in the determinant in \eqref{eq:transition_probability} are well defined
for $z\neq 0$ satisfying $|z|\neq \rho^\rho(1-\rho)^{1-\rho}$.  
Note that since $q'_z(w) = L(w+\rho)w^{N-1}(w+1)^{L-N-1}$, all roots of $q_z(w)$ are simple 
for $z\neq 0$ satisfying $|z|\neq \rho^\rho(1-\rho)^{1-\rho}$.
Hence $\roots_z$ consists of $L$ points for such $z$.
We now show that the  entries in the determinant have analytic continuations across $|z|=\rho^\rho(1-\rho)^{1-\rho}$. 
The entries are of form
\begin{equation}
	\frac{1}{L}\sum_{w\in\roots_z} \frac{w^{-N+1}(w+1)^{-L+N+1}F(w)}{w+\rho}
	= 	\sum_{w\in\roots_z} \frac{F(w)}{q_z'(w)}
\end{equation}
where 
\begin{equation}\label{eq:defsofFw}
	F(w) :=w^{j-i+N}(w+1)^{-x_i+y_j+i-j+L-N-1} e^{tw}.
\end{equation}
The function $F(w)$ is analytic for $w\in \complexC\setminus\{-1\}$ since $j-i+N\ge 0$ for all $1\le i,j\le N$.
Since $w=-1$ is not a zero of $q_z(w)$ for every $z\neq 0$, we obtain by the residue theorem, 
%Hence the entries can be written as 
\begin{equation}
\label{eq:aux_2015_11_22_01}
	%\frac1L \sum_{w\in \roots_z} \frac{w^{j-i+1}(w+1)^{-x_i+y_j+i-j}e^{tw}}{w+\rho}
	\sum_{w\in\roots_z} \frac{F(w)}{q_z'(w)}
	=\oint_{|w+1|=R} \frac{F(w)}{q_z(w)} \ddbarr{w}
	- \oint_{|w+1|=\epsilon} \frac{F(w)}{q_z(w)} \ddbarr{w}
\end{equation}
where $R$ is large and $\epsilon$ is small so that all roots of $q_z(w)$ lie inside the annulus $\epsilon<|w+1|<R$.
Since we may take $R$ arbitrarily large and $\epsilon$ arbitrarily small, we find that the right-hand side of~\eqref{eq:aux_2015_11_22_01} is analytic in $|z|>0$. 
Therefore the entries in the determinant in \eqref{eq:transition_probability} have analytic continuations in $|z|>0$, and the formula~\eqref{eq:transition_probability} does not depend on the contour.

\begin{rmk}
The formula~\eqref{eq:transition_probability} is reduced to the transition probability for the usual TASEP on $\intZ$ if $L\ge x_N-y_1+2$. 
In this case,  $-x_i+y_j+i-j+L-N-1\ge 0$ for all $i,j$, and hence $F(w)$ in~\eqref{eq:defsofFw} above is entire. 
Thus, %since $q_z(w)$ is analytic for $|w|\le \epsilon$, 
the integral over $|w+1|=\epsilon$ in~\eqref{eq:aux_2015_11_22_01} is zero. 
On the other hand, the first integral is, for a fixed $R$, analytic for $|z|\le r$ since as $z\to 0$, the roots of $q_z(w)$ converge to $0$ and $-1$. Here $r$ is any fixed positive constant such that $r^L\le \max_{|w+1|=R}|w^N(w+1)^{L-N}|^{L-N}$.
Thus by the residue theorem the integral over $z$ in~\eqref{eq:transition_probability} is same as the integrand evaluated at $z=0$, and hence~\eqref{eq:transition_probability} becomes 
\begin{equation}
\label{eq:aux_2015_11_22_03}
	\det\left[ \oint_{|w|=R} w^{j-i}(w+1)^{-x_i+y_j+i-j-1}e^{tw} \ddbarr{w} \right]_{i,j=1}^N. %+O(e^{-cM})\\
%=\oint_{|\xi|=\epsilon'}(1-\xi)^{j-i}\xi^{x_i-y_j}e^{t(\xi^{-1}-1)}\frac{d\xi}{2\pi i\xi}+O(e^{-cM})
\end{equation}
%Note that this does not depend on $z$. 
%Hence the integration in $z$ in \eqref{eq:transition_probability} becomes trivial, 
After the change of variables $w+1=1/\xi$, we find that \eqref{eq:transition_probability} becomes  
\begin{equation}
\label{eq:aux_2015_11_22_04}
	\det\left[\oint_{|\xi|=\epsilon'}(1-\xi)^{j-i}\xi^{x_i-y_j}e^{t(\xi^{-1}-1)}\ddbar{\xi}\right]_{i,j=1}^N.
\end{equation}
This is same as the formula for the transition probability of the TASEP on $\intZ$ obtained in \cite{Schutz97,Tracy-Widom08}.
\end{rmk}

\begin{proof}[Proof of Proposition \ref{thm:transition_probability}]
%We prove the theorem following the idea of Tracy and Widom for ASEP on $\intZ$ in \cite{Tracy-Widom08}. 
For an $N$-tuple $X=(x_1,x_2,\cdots,x_N)\in\intZ^N$, set %, we set 
\begin{equation*}
  	X_i=(x_1,x_2,\cdots,x_{i-1},x_i-1,x_{i+1},\cdots,x_N), \qquad 1\le i\le N.
\end{equation*}
The transition probability $\prob_Y(X;t)$  is the solution to the 
Kolmogorov equation 
\begin{equation}
\label{eq:master_equation_1}
  	\frac{\dd}{\dd t}\prob_Y(X;t)=\sum_{i=1}^N \left(\prob_Y(X_i;t)-\prob_Y(X;t)\right)
	\delta_{X_i\in \conf_N(L)}
\end{equation}
with the initial condition $\prob_Y(X;0)= \delta_Y(X)$.

Following the idea of Sch\"utz, and Tracy and Widom, we consider the function 
$u(X;t)$ on $\intZ^N\times\realR_{\ge 0}$ (instead of $\conf_N(L)\times \realR_{\ge 0}$) satisfying the new equations (called the free evolution equation)
\begin{equation}
\label{eq:master_equation_2}
  \frac{\dd}{\dd t}u(X;t)=\sum_{i=1}^N \left(u(X_i;t)-u(X;t)\right), \qquad X\in \intZ^N, 
\end{equation}
together with the boundary conditions
\begin{equation}
\label{eq:bc_1}
  u(x_1,\cdots,x_{i-1},x_{i-1}+1,x_{i+1},\cdots,x_N;t)=u(x_1,\cdots,x_{i-1},x_{i-1},x_{i+1},\cdots,x_N;t)
\end{equation}
for $i=2,\cdots,N$, and
\begin{equation}
  \label{eq:bc_2}
  u(x_1,x_2,\cdots,x_{N-1}, x_1+L-1;t)=u(x_1-1,x_2,\cdots,x_{N-1}, x_N=x_1+L-1;t),
\end{equation}
and the initial condition 
\begin{equation}
  \label{eq:initial_condition}
  u(X;0)=\delta_Y(X) \mbox{ when } X, Y\in\conf_N(L).
\end{equation}
Then $\prob_Y(X;t)=u(X;t)$ for $X, Y\in \conf_N(L)$.
The change from the TASEP on $\intZ$ to the TASEP in $\conf_N(L)$ is the extra boundary condition \eqref{eq:bc_2}.

We now show that the solution is given by 
\begin{equation}
\label{eq:aux_transition_probability}
	u(X;t):=\oint_{|z|=r}\det\left[\frac{1}{L}\sum_{w\in\roots_z}
	\frac{f_{ij}(x_i)}{w+\rho}\right]_{i,j=1}^N\ddbar{z}, 
\end{equation} 
where  
\begin{equation}
	 f_{ij}(x_i):= w^{j-i+1}(w+1)^{-x_i+y_j+i-j}e^{tw}.
\end{equation}
Here we suppress the dependence on $w$, $t$ and $y_j$. 
We need to check that~\eqref{eq:aux_transition_probability} satisfies (a) the free evolution equation \eqref{eq:master_equation_2}, (b) the boundary conditions \eqref{eq:bc_1}, 
(c) the boundary condition \eqref{eq:bc_2}, and (d) the initial condition \eqref{eq:initial_condition}. 

\begin{enumerate}[(a)]
\item 
To show that \eqref{eq:aux_transition_probability} satisfies the free evolution equations \eqref{eq:master_equation_2}, 
it is enough to show that the determinant satisfies the same equation. 
The derivative of the determinant in $t$ is equal to the sum of $N$ determinants of the matrices, each of which is obtained by taking the derivative of one of the rows. 
But 
\begin{equation}
	\frac{\dd}{\dd t} f_{ij}( x_i) = w f_{ij}( x_i)= ((w+1)-1)f_{ij}( x_i)
	= f_{ij}( x_i-1)- f_{ij}( x_i).
\end{equation}
Hence we find that  \eqref{eq:master_equation_2} is satisfied.

\item    
To prove \eqref{eq:bc_1}, we replace $x_i$ by $x_{i-1}+1$ and add the $(i-1)$-th row to the $i$-th row. But
\begin{equation}
	f_{ij}( x_{i-1}+1)+ f_{i-1, j}( x_{i-1})
	= \frac{1}{w+1} f_{ij}( x_{i-1})+ \frac{w}{w+1} f_{ij}( x_{i-1})
	= f_{ij}( x_{i-1}).
\end{equation}   
This implies \eqref{eq:bc_1}.

\item
To prove \eqref{eq:bc_2}, we set $x_N=x_1+L-1$, and we multiply the $N$-th row by $z^L$  and add it to the first row. But since $z^L=w^N(w+1)^{L-N}$ for all $w\in\roots_z$,
  \begin{equation}
  	z^L f_{Nj}(x_1+L-1) + f_{1j}(x_1) 
	= w^N(w+1)^{L-N} f_{Nj}(x_1+L-1) + f_{1j}(x_1) 
	=  f_{1j}(x_1-1). 
  \end{equation}
 Hence  \eqref{eq:bc_2} is satisfied. 
    
\item
It remains to check the initial condition \eqref{eq:initial_condition}, i.e., for $X,Y\in\conf_N(L)$
        \begin{equation}
        \label{eq:ic_1}
          \oint_{|z|=r}\det\left[\frac{1}{L}\sum_{w\in\roots_z}
          	\frac{w^{j-i+1}(w+1)^{-x_i+y_j+i-j}}{w+\rho}\right]_{i,j=1}^N\ddbar{z}=\delta_Y(X).
        \end{equation}
        
    By \eqref{eq:aux_2015_11_22_01}, the entries of the determinant are 
    \begin{equation*}
	\oint_{|w+1|=R} \frac{F(w)}{q_z(w)} \ddbarr{w}
	- \oint_{|w+1|=\epsilon} \frac{F(w)}{q_z(w)} \ddbarr{w}
\end{equation*}
where
\begin{equation*}
	F(w)=w^{j-i+N}(w+1)^{-x_i+y_j+i-j+L-N-1}e^{tw}.
\end{equation*}
Here  $R$ is large and $\epsilon$ is small so that all roots of $q_z(w)=w^N(w+1)^{L-N}-z^L$ lie inside the annulus $\epsilon<|w+1|<R$.
Writing 
        \begin{equation*}
        \frac{w^N(w+1)^{L-N}}{q_z(w)}
        =\begin{dcases} 
        		1+z^L\frac{w^{-N}(w+1)^{-L+N}}{1-z^Lw^{-N}(w+1)^{-L+N}},
        		& \text{for $|w+1|=R$},\\
        	    -z^{-L}\frac{w^N(w+1)^{L-N}}{1-z^{-L}w^N(w+1)^{L-N}},
        	    & \text{for $|w+1|=\epsilon$},
        \end{dcases}
        \end{equation*}
the left hand side of \eqref{eq:ic_1} can be expressed as        
\begin{align}
        \label{eq:aux_2015_11_22_02}
        \oint_{|z|=r}
        \det	\left[	\oint_{|w+1|=R} w^{j-i}(w+1)^{-x_i+y_j+i-j-1} \ddbarr{w} 
        				+z^L E_1(i,j)	+z^{-L} E_2(i,j)
        		\right]_{i,j=1}^N \ddbar{z}
        \end{align}
        where 
        \begin{equation*}
        E_1(i,j)=\oint_{|w+1|=R}
        			 \frac{w^{j-i-N} (w+1)^{-x_i+y_j+i-j-L+N-1}}
        			 	  {1-z^L w^{-N} (w+1)^{-L+N}} \ddbarr{w},
        \end{equation*}
        and
        \begin{equation}
        \label{eq:aux_2015_11_22_07}
        E_2(i,j)=\oint_{|w+1|=\epsilon}
         			\frac{w^{j-i+N} (w+1)^{-x_i+y_j+i-j+L-N-1}}
         				 {1-z^{-L} w^{N} (w+1)^{L-N}} \ddbarr{w}.
        \end{equation}

        Note that $X,Y\in\conf_N(L)$ implies
        \begin{equation*}
          x_N-L+i\le x_i\le x_N-N+i,\qquad y_N-L+j\le y_j\le y_N-N+j,
        \end{equation*}
        for $i,j=1,2,\cdots,N$. Now we use these inequalities to simplify \eqref{eq:aux_2015_11_22_02}. We consider two cases separetely. %Consider the following two cases.
         
        Case 1. Assume that $x_N\ge y_N$. Then
        \begin{equation*}
          -x_i+y_j+i-j-L+N-1\le -x_N+y_N-1\le -1.
        \end{equation*}
Also note that $j-i-N\le -1$. These two inequalities imply that $E_1(i,j)=O(R^{-1})$ as $R\to \infty$. 
Since $E_1(i,j)$ is independent of $R$ for all $R>R_0$ for some $R_0=R_0(|z|)$, we find that $E_1(i,j)=0$ for all $1\le i,j\le N$ and all large enough $R$. 
Hence we find that  \eqref{eq:aux_2015_11_22_02} becomes
        \begin{align}
        \label{eq:aux_2015_11_22_06}
        &\oint_{|z|=r}
          \det\left[\oint_{|w|=R} w^{j-i}(w+1)^{-x_i+y_j+i-j-1} \ddbarr{w}
          			+z^{-L}E_2(i,j)\right]_{i,j=1}^N  \ddbar{z}\\
        \label{eq:aux_2015_11_22_05}
        &=\det\left[\oint_{|w|=R} w^{j-i}(w+1)^{-x_i+y_j+i-j-1} \ddbarr{w}	
        	  \right]_{i,j=1}^N
        \end{align}
where in the second equation we take $|z|=r\to +\infty$ and $z^{-L}E_2(i,j)\to 0$. 
We now note that  \eqref{eq:aux_2015_11_22_05} is exactly the initial condition for the TASEP on $\intZ$ (see \eqref{eq:aux_2015_11_22_03} and \eqref{eq:aux_2015_11_22_04} when $t=0$). 
Hence it is equal to $\delta_Y(X)$.         

        Case 2. Assume that $x_N<y_N$. Then 
        \begin{equation*}
         -x_i+y_j+i-j+L-N-1\ge y_N-x_N-1\ge 0.
        \end{equation*}
In this case the integrand of \eqref{eq:aux_2015_11_22_07} is analytic at $-1$, and hence $E_2(i,j)=0$. 
On the other hand, as $|z|=r\to 0$, $z^L E_1(i,j)\to 0$.  Hence \eqref{eq:aux_2015_11_22_02} becomes
                \begin{align}
                \label{eq:aux_2015_11_22_08}
                &\oint_{|z|=r}
                	\det\left[\oint_{|w|=R} w^{j-i}(w+1)^{-x_i+y_j+i-j-1} \ddbarr{w}
                				+z^{L}E_1(i,j) \right]_{i,j=1}^N \ddbar{z}\\
                \nonumber
                &=\det\left[\oint_{|w|=R} w^{j-i}(w+1)^{-x_i+y_j+i-j-1} \ddbarr{w}\right]_{i,j=1}^N.
                \end{align}
This equals to $\delta_Y(X)$ as discussed in the first case. 

Hence the initial condition  \eqref{eq:ic_1} is satisfied. 
\end{enumerate}
\end{proof}

\section{One-point distribution for TASEP in $\conf_N(L)$ with general initial condition}
\label{sec:onepoint}
%\subsection{Distribution of the $k$-th particle}

We now derive a formula for the distribution function of a tagged particle from the transition probability for an arbitrary initial condition.  
%{\Cr Is this similar to TW's ASEP paper?}

%It is very natural to obtain the distribution of any given particle from the transition probability \eqref{eq:transition_probability}. In this section we prove the following Proposition \ref{prop:one_point_distribution_Toeplitz} about the one point distribution. The proof is very direct. 

\begin{prop}
\label{prop:one_point_distribution_Toeplitz}
Let $Y=(y_1,y_2,\cdots,y_N)\in \conf_N(L)$ be the initial configuration of the TASEP in $\conf_N(L)$. 
Let $X(t)=(x_1(t), \cdots, x_k(t))$ be the configuration at time $t$. 
For every $1\le k\le N$, % the distribution of the $k$-th particle at time $t$ is given by
\begin{multline}
\label{eq:one_point_distribution_Toeplitz}
%	\prob_Y\left(x_k\ge a;t\right)\\
	\prob_Y\left(x_k(t)\ge a\right)\\
	= \frac{(-1)^{(k-1)(N+1)}}{2\pi \ii} 
	\oint  \det\left[ \frac{1}{L} \sum_{w\in\roots_z}
		 		\frac{w^{j-i+1-k}(w+1)^{y_j-j-a+k+1}e^{tw}}{w+\rho}\right]_{i,j=1}^N
	\frac{\dd z}{z^{1-(k-1)L}}
\end{multline}	
for every integer $a$.
The integral is over any simple closed contour in $|z|>0$ which contains $0$ inside.
The set $\roots_z$ is same as in Proposition~\ref{thm:transition_probability}, and $\rho=N/L$.
\end{prop}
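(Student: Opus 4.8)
The plan is to obtain the one-point distribution by summing the transition probability over all configurations in which the $k$-th particle is at least $a$. Concretely, I would start from
\[
  \prob_Y(x_k(t)\ge a) = \sum_{\substack{X\in\conf_N(L)\\ x_k\ge a}} \prob_Y(X;t),
\]
insert the formula \eqref{eq:transition_probability} from Proposition~\ref{thm:transition_probability}, and interchange the (finite $z$-contour) integral with the infinite sum over $X$. The determinant in \eqref{eq:transition_probability} has $(i,j)$-entry built from $(w+1)^{-x_i}$, so the sum over $X$ factorizes along the rows in the usual way once the ordering constraints $x_1<x_2<\cdots<x_N<x_1+L$ are handled. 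The standard device (as in Tracy--Widom) is to first sum the unconstrained geometric-type series in each $x_i$ and then correct for the ordering; here the extra wrinkle is the cyclic constraint $x_N<x_1+L$, which is exactly what the factor $z^{(k-1)L}$ and the shifted exponents $w^{j-i+1-k}(w+1)^{y_j-j-a+k+1}$ in \eqref{eq:one_point_distribution_Toeplitz} are bookkeeping.

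The key steps, in order, would be: (1) reduce the event $\{x_k\ge a\}$ to summing over $X$ with $x_k\ge a$ and all other coordinates free subject to the (cyclic) ordering; (2) perform the geometric summation in each variable $x_i$ — the ratio $(w+1)^{-x_i}$ summed over a half-line gives a factor like $\tfrac{1}{1-(w+1)^{\mp1}}$, i.e. $\tfrac{\pm(w+1)}{w}$ or $\tfrac{\mp1}{w}$, which is why powers of $w$ in the denominator get shifted; (3) resolve the ordering constraints by the antisymmetrization/telescoping argument that collapses the nested sums into a single determinant with shifted exponents (this is where the combinatorial identity $\det$-manipulation of Tracy--Widom type enters, using that summing adjacent rows telescopes); (4) track how the cyclic constraint $x_N<x_1+L$ produces the global factor $z^{(k-1)L}$ in the denominator together with the sign $(-1)^{(k-1)(N+1)}$ coming from cyclically permuting $k-1$ rows/columns of the determinant. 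Throughout, the analyticity of the entries in $|z|>0$ (already established after Proposition~\ref{thm:transition_probability}) justifies the contour deformations needed to make the geometric series converge, and then to deform back.

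I expect the main obstacle to be step (3) together with the precise bookkeeping in step (4): carrying out the nested summation over the constrained region $\{x_1<\cdots<x_N<x_1+L,\ x_k\ge a\}$ and showing it telescopes cleanly into a single determinant requires choosing the summation order carefully and verifying that all the boundary terms from the partial sums either vanish (by the analyticity/decay of $F(w)$ in a suitable annulus in $w$, controlled by choosing $|z|$ large or small as in Case~1 and Case~2 of the previous proof) or recombine into the claimed exponents. The cyclic term $x_N<x_1+L$ means one cannot simply sum $x_N$ to $+\infty$; instead one sums $x_N$ from $x_{N-1}+1$ up to $x_1+L-1$, producing a difference of two geometric tails, and the $(w+1)^{L}=z^L w^{-N}$ relation on $\roots_z$ is what converts the $x_1+L$ endpoint contribution into the power of $z^L$. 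Matching signs and the exponent shift $j-i+1-k$, $y_j-j-a+k+1$ exactly is routine but error-prone, so I would verify it against the known TASEP-on-$\intZ$ one-point formula in the regime $L\ge x_N-y_1+2$ (where, as in the Remark, the $z$-integral localizes at $z=0$) as a consistency check.
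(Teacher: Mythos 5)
Your plan follows essentially the same route as the paper: its Lemma~\ref{lm:sum_except_k} carries out precisely your steps (3)--(4) (nested summation over the cyclically constrained configurations by telescoping row additions, with the relation $w^N(w+1)^{L-N}=z^L$ on $\roots_z$ producing the factor $z^{(k-1)L}$ and the sign $(-1)^{(k-1)(N+1)}$ from row exchanges), and the remaining half-line sum over $x_k\ge a$ is handled exactly as you propose, by taking $|z|$ large so that (via Rouch\'e) every root satisfies $|w+1|>1$ and the geometric tail vanishes, then deforming the contour back by the analyticity in $|z|>0$. The only slip is cosmetic: the relation on $\roots_z$ reads $(w+1)^{L-N}=z^Lw^{-N}$, not $(w+1)^{L}=z^Lw^{-N}$.
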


We prove Proposition~\ref{prop:one_point_distribution_Toeplitz} using the following lemma whose proof is given after the proof of the proposition.

\begin{lm}
     \label{lm:sum_except_k}
Let  $w_j\in\roots_z$ for $j=1,\cdots,N$.
Then, for every integer $a$, 
 %      \begin{multline}
       \begin{equation}
       \label{eq:sum_except_k}    
       \begin{split}  
         &\sum_{\substack{X\in\conf_N(L) \\ x_k=a}}\det\left[w_j^{-i}(w_j+1)^{-x_i+i}\right]_{i,j=1}^N\\
         &=(-1)^{(k-1)(N+1)}z^{(k-1)L}\left(1-\prod_{j=1}^N(w_j+1)^{-1}\right)\prod_{j=1}^Nw_j^{-k}(w_j+1)^{-a+k+1}\det\left[w_j^{-i}\right]_{i,j=1}^N.
         \end{split}
         \end{equation}
%       \end{multline}
%Here the sum on the left hand side is over the $N$-tuples $X$ in $\conf_N(L)$ satisfying $x_k=a$. 
\end{lm}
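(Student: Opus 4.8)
The plan is to carry out the sum over $X\in\conf_N(L)$ with $x_k=a$ held fixed by exploiting the multilinear (determinantal) structure. First I would expand the determinant $\det[w_j^{-i}(w_j+1)^{-x_i+i}]_{i,j=1}^N$ along the Leibniz formula, so that it becomes $\sum_{\sigma\in S_N}\sign(\sigma)\prod_{i=1}^N w_{\sigma(i)}^{-i}(w_{\sigma(i)}+1)^{-x_i+i}$; the key point is that the dependence on each coordinate $x_i$ factors through a single geometric-type term $(w_{\sigma(i)}+1)^{-x_i}$. This means that once we fix $x_k=a$, the sum over the remaining coordinates $x_1<\cdots<x_{k-1}<x_k=a<x_{k+1}<\cdots<x_N<x_1+L$ is, term by term in $\sigma$, a product of nested geometric sums in the variables $x_i$, $i\ne k$. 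I would organize these as an ordered sum: the coordinates below $a$ range over $x_1<\cdots<x_{k-1}$ with $x_{k-1}<a$ and $x_1>a-L$ (from the wrap-around constraint $x_N<x_1+L$ together with $x_N$ being the largest), and those above $a$ range over $a<x_{k+1}<\cdots<x_N<x_1+L$.

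The main computational step is then to evaluate these nested geometric sums. A clean way is to recall the identity, valid when $|u_1\cdots u_m|<1$ (or as a formal manipulation that is then justified by the fact that $w_j\in\roots_z$ forces $\prod_j(w_j+1)$ to satisfy an algebraic relation making everything finite), that $\sum_{n_1<n_2<\cdots<n_m} u_1^{n_1}u_2^{n_2}\cdots u_m^{n_m}$ over an appropriate ordered range telescopes into a product of the form $\prod (1-\text{partial products})^{-1}$ times boundary contributions. Here, after summing over $x_i$ for $i\ne k$, each factor $(w_{\sigma(i)}+1)^{-x_i}$ contributes; the finite range (length $L$) is exactly what produces, via $\sum_{n=b}^{b+L-1}r^n = r^b\frac{1-r^L}{1-r}$ type formulas, the factor $\bigl(1-\prod_{j=1}^N(w_j+1)^{-1}\bigr)$ — this is because the product of all the ratios $(w_j+1)^{-1}$ over a full period of length $L$ is what appears when the wrap-around constraint is invoked. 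The boundary terms localize the surviving power of $(w_j+1)$ to the value dictated by $x_k=a$, giving the factor $\prod_j w_j^{-k}(w_j+1)^{-a+k+1}$, where the shift by $k$ and the $+1$ come from bookkeeping the exponents $-i$ and $+i$ in the original matrix entries and the position of the fixed index $k$.

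After the geometric sums are done, what remains of the $\sigma$-sum is $\sum_{\sigma\in S_N}\sign(\sigma)\prod_{i}w_{\sigma(i)}^{-i}$ times a $\sigma$-independent prefactor, i.e.\ $\det[w_j^{-i}]_{i,j=1}^N$ times that prefactor; I would need to check carefully that the geometric-sum evaluation is indeed $\sigma$-independent up to the common factors, which is the step where the structure of $\conf_N(L)$ (all coordinates sharing the same window of length $L$) is essential. The sign $(-1)^{(k-1)(N+1)}$ and the power $z^{(k-1)L}$ should emerge from two sources: re-indexing/cyclically rotating which block of coordinates sits "below" $x_k=a$ (a cyclic shift by $k-1$ positions in an $N\times N$ determinant contributes $(-1)^{(k-1)(N-1)}=(-1)^{(k-1)(N+1)}$), and using the defining relation $z^L = w_j^N(w_j+1)^{L-N}$ for $w_j\in\roots_z$ to convert $(k-1)$ wrap-arounds' worth of the product $\prod_j(w_j+1)^{L}$ into $z^{(k-1)L}\prod_j w_j^{-(k-1)N}$, the latter being absorbed into the $w_j^{-k}$ normalization.

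I expect the main obstacle to be the bookkeeping of the nested finite geometric sums over the cyclically-constrained region $\{x_1<\cdots<x_{k-1}<a<x_{k+1}<\cdots<x_N<x_1+L\}$ — in particular, correctly tracking which geometric series are "full period" (length exactly $L$, producing the $1-\prod(w_j+1)^{-1}$ factor) versus partial, and making sure the telescoping leaves precisely $\det[w_j^{-i}]$ with no residual $\sigma$-dependence. A useful sanity check along the way is the degenerate case $k=N$ or $k=1$, and comparison with the known TASEP-on-$\intZ$ computation in \cite{Tracy-Widom08} in the regime $L$ large, where the factor $1-\prod_j(w_j+1)^{-1}$ should reduce to $1$.
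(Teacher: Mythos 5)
Your overall skeleton (sum the coordinates one at a time as geometric series, let the wrap-around constraint produce extra powers that the relation $w_j^N(w_j+1)^{L-N}=z^L$ converts into $z^{(k-1)L}$, and account for a block reordering sign $(-1)^{(k-1)(N+1)}$) matches the paper's computation, but two specific mechanisms in your plan are wrong, and they are exactly the crux. First, the per-permutation factorization you hope for is false: if you expand by the Leibniz formula and fix $\sigma$, each nested sum $\sum_{x_i}(w_{\sigma(i)}+1)^{-x_i}$ between its two neighbors produces two boundary terms, and the ``bad'' boundary term (the one carrying the neighboring variable) does not vanish within that single $\sigma$; it only cancels against the term coming from the permutation in which $\sigma(i)$ and $\sigma(i\pm 1)$ are swapped. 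The paper keeps the determinant intact precisely for this reason: after summing over $x_{k+1}$, the new row $k+1$ equals a clean $x$-independent row minus a copy of row $k+2$, and adding row $k+2$ to row $k+1$ kills the unwanted piece inside the determinant; this is repeated for $x_{k+2},\dots,x_N$ and then $x_1,\dots,x_{k-1}$. So verifying ``$\sigma$-independence'' is not deferable bookkeeping — as stated that step fails, and the repair is to not expand in $\sigma$ at all.

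Second, the factor $1-\prod_{j=1}^N(w_j+1)^{-1}$ does not come from a full-period geometric sum: since $x_k=a$ is fixed and every other coordinate is summed between its two neighbors, no variable ranges over an interval of length $L$, so no $1-r^L$ ever appears. In the paper this factor emerges only at the very end, from the fixed $k$-th row itself, via $w_j^{-k}(w_j+1)^{-a+k}=w_j^{-k-1}(w_j+1)^{-a+k+1}-w_j^{-k-1}(w_j+1)^{-a+k}$: the answer becomes a difference of two determinants which factorize as $\prod_j w_j^{-k}(w_j+1)^{-a+k+1}\det[w_j^{-i}]$ and $\prod_j w_j^{-k}(w_j+1)^{-a+k}\det[w_j^{-i}]$, and their difference is exactly the stated factor (equivalently, the right-hand side has the telescoping form $f(a)-f(a+1)$, which is how the lemma is used afterwards). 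Relatedly, the wrap-around conversion should be $(w_j+1)^{L-N}=z^Lw_j^{-N}$ applied in each of the $k-1$ wrapped rows, not a factor $\prod_j(w_j+1)^{L}$ over all columns; the extra $w_j^{-N}$ in those rows is what makes the exponents merge into the contiguous family $w_j^{-i-k}$ after moving the first $k-1$ rows below the others, producing the sign $(-1)^{(k-1)(N-k+1)}=(-1)^{(k-1)(N+1)}$.
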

     
\begin{proof}[Proof of Proposition \ref{prop:one_point_distribution_Toeplitz}] 
Lemma \ref{lm:sum_except_k} implies that for $w_j\in\roots_z$, $j=1,\cdots,N$, 
\begin{equation}
 	\sum_{\substack{X\in\conf_N(L) \\ x_k=a}}\det\left[w_j^{-i}(w_j+1)^{-x_i+i}\right]_{i,j=1}^N
         = f(a) - f(a+1) %(-1)^{(k-1)(N+1)}s^{(k-1)M}\left(1-\prod_{j=1}^N(z_j+1)^{-1}\right)\prod_{j=1}^Nz_j^{-k}(z_j+1)^{-x+k+1}\det\left[z_j^{-i}\right]_{i,j=1}^N.
\end{equation}
where
\begin{equation}
 	f(a) =(-1)^{(k-1)(N+1)}z^{(k-1)L} \prod_{j=1}^Nw_j^{-k}(w_j+1)^{-a+k+1}\det\left[w_j^{-i}\right]_{i,j=1}^N.
\end{equation}
If $|w_j+1|>1$ for all $j=1, \cdots, N$, then $f(a)\to 0$ as $a\to +\infty$, and hence by telescoping series, we obtain 
\begin{equation}\label{eq:Pxkxt11}
\begin{split}
 	&\sum_{\substack{X\in\conf_N(L) \\ x_k\ge a}}\det\left[w_j^{-i}(w_j+1)^{-x_i+i}\right]_{i,j=1}^N = f(a). %\\
   %      &  
%         =  (-1)^{(k-1)(N+1)}z^{(k-1)L} \prod_{j=1}^Nw_j^{-k}(w_j+1)^{-a+k+1}\det\left[w_j^{-i}\right]_{i,j=1}^N.
\end{split}
\end{equation}

Now since $\prob_Y(x_k(t) \ge a)= \sum_{\substack{X\in\conf_N(L) \\ x_k\ge a}} \prob_Y(X;t)$, 
we find from~\eqref{eq:transition_probability} that % all the possible configurations $X$ satisfying $x_k\ge a$, %we have
\begin{equation}\label{eq:Pxkxt}
\begin{split}
	&\prob_Y(x_k(t) \ge a)= \sum_{\substack{X\in\conf_N(L) \\ x_k\ge a}}\oint \det\left[\frac{1}{L}\sum_{w\in\roots_z}
	\frac{w^{j-i+1}(w+1)^{-x_i+y_j+i-j}e^{tw}}{w+\rho}\right]_{i,j=1}^N \ddbar{z}\\
	&= \oint \sum_{w_1,\cdots,w_N\in\roots_z}
	\sum_{\substack{X\in \conf_N(L) \\ x_k\ge a}}\det\left[w_j^{-i}(w_j+1)^{-x_i+i}\right]_{i,j=1}^N\prod_{j=1}^N\frac{w_j^{j+1}(w_j+1)^{y_j-j}e^{tw_j}}{L(w_j+\rho)} \ddbar{z}.
\end{split}
\end{equation}
Note that by Rouch\'e's theorem, the equation $ w^N(w+1)^{L-N}-z^L=0$ has no zeros in $|w+1|\le 1$ if $|z|$ is large enough. 
Thus we can find a contour large enough so that $|w+1|>1$ for all $w\in R_z$ for $z$ on the contour, and  apply~\eqref{eq:Pxkxt11} to obtain 
\begin{equation}%\label{}
\begin{split}	 &\prob_Y(x_k\ge a;t)\\
	 &= \frac{(-1)^{(k-1)(N+1)}}{2\pi \ii}  \oint \sum_{w_1,\cdots,w_N\in\roots_z}
\prod_{j=1}^N\frac{w_j^{j+1-k}(w_j+1)^{y_j-j-a+k+1}e^{tw_j}}{L(w_j+\rho)}\det\left[w_j^{-i}\right]_{i,j=1}^N\frac{\dd z}{z^{1-(k-1)L}}.
\end{split}
\end{equation}
After simplifying the integrand by using the linearity of the determinant on columns, we find
\begin{equation}%\label{}
\begin{split}	
	&\prob_Y(x_k\ge a;t)\\
	&=\frac{(-1)^{(k-1)(N+1)}}{2\pi \ii} \oint \det\left[\frac{1}{L}\sum_{w\in\roots_z}\frac{w^{j-i+1-k}(w+1)^{y_j-j-a+k+1}e^{tw}}{w+\rho}\right]_{i,j=1}^N\frac{\dd z}{z^{1-(k-1)L}}.
\end{split}
\end{equation}
The last formula again does not depend on the contour again (cf.~\eqref{eq:aux_2015_11_22_01}), %Remark \ref{rmk:transition_prob_independence_r}), 
and we obtain the Proposition. 
%{\Cr The analyticity again. Check.}
\end{proof}

Now we prove Lemma \ref{lm:sum_except_k}.

\begin{proof}[Proof of Lemma \ref{lm:sum_except_k}]
Set %$A^{(0)}(i,j)$ the $(i,j)$-th entry
\begin{equation*}
	A^{(0)}(i,j)=\begin{cases}
     w_j^{-i}(w_j+1)^{-x_i+i},&\text{ for }i\neq k,\\
     w_j^{-k}(w_j+1)^{-a+k},&\text{ for }i=k.
     \end{cases}  
\end{equation*}
This is the $(i,j)$-th entry of the determinant on the left hand side of \eqref{eq:sum_except_k} after we set $x_k=a$.
We proceed by taking the sum of the determinant of $A^{(0)}$ in the following order: $x_{k+1}, x_{k+2}, \cdots, x_N, x_1, x_2, \cdots x_{k-1}$. 
The summation domain is 
\begin{equation}\label{eq:sumdom}
	a<x_{k+1}<x_{k+2}<\cdots< x_N<x_1+L<x_2+L<\cdots< x_{k-1}+L<a+L.
\end{equation}
First fix $x_1,\cdots,x_{k}$ and $x_{k+2},\cdots,x_N$, and take the sum over 
$x_{k+1}=a+1, \cdots, x_{k+2}-1$.       
Since $x_{k+1}$ is present only on the $(k+1)$-th row, it is enough to consider the $(k+1)$-th row: 
%     It is a direct to check that
     \begin{align*}
     \sum_{x_{k+1}=a+1}^{x_{k+2}-1} A^{(0)}(k+1,j)
     &= \sum_{x_{k+1}=a+1}^{x_{k+2}-1}  w_j^{-k-1}(w_j+1)^{-x_{k+1}+k+1} \\
     &=w_j^{-k-2}(w_j+1)^{-a+k+1}-w_j^{-k-2}(w_j+1)^{-x_{k+2}+k+2}\\
     &=w_j^{-k-2}(w_j+1)^{-a+k+1}-A^{(0)}(k+2,j)
     \end{align*}
     for  $j=1,2,\cdots,N$. 
By adding the $(k+2)$-th row to the $(k+1)$-th row, we find that 
\begin{equation*}
%              \sum_{x_{k+1}}
              \sum_{x_{k+1}=x_k+1}^{x_{k+2}-1} \det\left[A^{(0)}(i,j)\right]_{i,j=1}^N
              =\det\left[A^{(1)}(i,j)\right]_{i,j=1}^N
\end{equation*}
where
      \begin{equation*}
     A^{(1)}(i,j)=\begin{cases}
     w_j^{-k-2}(w_j+1)^{-a+k+1},&\text{ for }i=k+1,\\
     A^{(0)}(i,j),&\text{ for }i\neq k+1.
     \end{cases}
     \end{equation*}
Note that the entries of $A^{(1)}$ in row $k$ and $k+1$ contain $a$ while that in row $i\neq k,k+1$ contain only $x_i$. 
Similarly, summing over $x_{k+2}=a+2, \cdots, x_{k+3}-1$, 
we have 
     \begin{align*}
     \sum_{x_{k+2}=a+2}^{x_{k+3}-1}A^{(1)}(k+2,j)
     &=w_j^{-k-3}(w_j+1)^{-a+k+1}-A^{(1)}(k+3,j)
     \end{align*}
     for  $j=1,2,\cdots,N$, and we find that 
     \begin{equation*}
     \sum_{x_{k+2}=a+2}^{x_{k+3}-1} \sum_{x_{k+1}=a+1}^{x_{k+2}-1}\det \left[ A^{(0)}(i,j)\right]_{i,j=1}^N
     %\det\left[A^{(1)}(i,j)\right]_{i,j=1}^N
                   =\det\left[A^{(2)}(i,j)\right]_{i,j=1}^N
     \end{equation*}
     where
     \begin{align*}
     A^{(2)}(i,j)
     %&=\begin{cases}
       %   z_j^{-k-3}(z_j+1)^{-x_k+k+1},&\text{ if }i=k+2,\\
         % A^{(1)}(i,j),&\text{otherwise,}
        %  \end{cases}\\
          &=\begin{cases}
                    w_j^{-i-1}(w_j+1)^{-a+k+1},&\text{ for } i=k+1,k+2,\\
                    A^{(0)}(i,j),&\text{ for } i\neq k+1, k+2. 
                    \end{cases}
     \end{align*}
 We repeat this process to sum over $x_{k+1}, x_{k+2}, \cdots, x_N$, and then $x_1, \cdots, x_{k-1}$, and find that
 %. Note that $x_N=x_{k}+N-k, \cdots x_1+M-1$. 
 %We find 
     \begin{equation*}
	\sum_{\substack{X\in \conf_N(L) \\ x_k=a}}     \det\left[A^{(0)}(i,j)\right]_{i,j=1}^N 
	= \det\left[A^{(N-1)}(i,j)\right]_{i,j=1}^N
     \end{equation*}
     where
     \begin{equation*}
     A^{(N-1)}(i,j)=\begin{dcases}
     w_j^{-i-1}(w_j+1)^{-a+k+L-N+1},&\text{ if }1\le i\le k-1,\\
      A^{(0)}(i,j)= w_j^{-k}(w_j+1)^{-a+k} ,&\text{ if }i=k, \\
     w_j^{-i-1}(w_j+1)^{-a+k+1},&\text{ if }k+1\le i\le N.
     \end{dcases}
     \end{equation*} 
 Note the difference of the formula for $i\le k-1$ and for $i\ge k+1$. This is due to the summation condition
 \eqref{eq:sumdom}.

Now since the $k$th row satisfies 
\begin{equation*}
     A^{(N-1)}(k,j)=w^{-k-1}_j(w_j+1)^{-a+k+1}-w^{-k-1}_j(w_j+1)^{-a+k},
     \end{equation*}
we have
     \begin{equation}
     \label{eq:aux_2015_11_23_02}
     \det\left[A^{(N-1)}(i,j)\right]_{i,j=1}^N=\det\left[A^{(N)}(i,j)\right]_{i,j=1}^N-\det\left[\tilde A^{(N)}(i,j)\right]_{i,j=1}^N,
     \end{equation}
     where
     \begin{equation*}
     A^{(N)}(i,j)=\begin{cases}
          w_j^{-i-1}(w_j+1)^{-a+k+L-N+1},&\text{ if }1\le i\le k-1,\\
          w_j^{-i-1}(w_j+1)^{-a+k+1},&\text{ if }k\le i\le N,
          \end{cases}
     \end{equation*}
     and
          \begin{equation*}
          \tilde A^{(N)}(i,j)=\begin{cases}              
           w_j^{-i-1}(w_j+1)^{-a+k+L-N+1},&\text{ if }1\le i\le k-1,\\
               w_j^{-i-1}(w_j+1)^{-a+k},&\text{ if }i=k, \\
               w_j^{-i-1}(w_j+1)^{-a+k+1},&\text{ if }k+1\le i\le N. \\
               \end{cases}
          \end{equation*}
     
So far we did not use the condition that $w_j\in R_{z}$. 
We now use this condition 
to simplify~\eqref{eq:aux_2015_11_23_02}. % $\det\left[A^{(N)}(i,j)\right]_{i,j=1}^N$ and $\det\left[\tilde A^{(N)}(i,j)\right]_{i,j=1}^N$. 
For $w_j\in R_{z}$, we have 
$w_j^{N}(w_j+1)^{L-N}=z^L$, and hence 
\begin{equation*}
	w_j^{-i-1}(w_j+1)^{-a+k+L-N+1}=w_j^{-i-1-N}(w_j+1)^{-a+k+1}z^L, \qquad 1\le i\le k-1. 
\end{equation*} 
After row exchanges we obtain, setting $\#:=(k-1)(N-k+1)$, 
     \begin{equation}
     \label{eq:aux_2015_11_23_03}
     \det\left[A^{(N)}(i,j)\right]_{i,j=1}^N=(-1)^{\#}z^{(k-1)L}\det\left[w_j^{-i-k}(w_j+1)^{-a+k+1}\right]_{i,j=1}^N.
     \end{equation}
     %where the factor $(-1)^{(k-1)(N-k+1)}$ comes from the number of row exchanges in the simplification. 
     Similarly, 
     \begin{equation*}
     \det\left[\tilde A^{(N)}(i,j)\right]_{i,j=1}^N=(-1)^{\#}z^{(k-1)L}\det\left[w_j^{-i-k}(w_j+1)^{-a+k+1-\delta_i(1)}\right]_{i,j=1}^N
     \end{equation*}
     where $\delta$ is the delta function. 
     Noting the factorization of the matrix as      
     \begin{equation*}
     \left[w_j^{-i-k}(w_j+1)^{-a+k+1-\delta_i(k)}\right]_{i,j=1}^N=\left[\delta_i(j)+\delta_i(j+1)\right]_{i,j=1}^N\left[w_j^{-i-k}(w_j+1)^{-a+k}\right]_{i,j=1}^N,
     \end{equation*}
     we find that 
     \begin{equation}
     \label{eq:aux_2015_11_23_04}
          \det\left[\tilde A^{(N)}(i,j)\right]_{i,j=1}^N=(-1)^{\#}z^{(k-1)L}\det\left[w_j^{-i-k}(w_j+1)^{-a+k}\right]_{i,j=1}^N.
          \end{equation}
It is direct to evaluate the determinants in \eqref{eq:aux_2015_11_23_03} and  \eqref{eq:aux_2015_11_23_04}, and we obtain \eqref{eq:sum_except_k}.
\end{proof}
              
\begin{rmk} %{\Cr Do we need to keep this remark? Remove? Simplify?}
If we take $L$ large enough in the formula~\eqref{eq:one_point_distribution_Toeplitz}, we recover a formula for TASEP on $\intZ$. 
More precisely, assume that $L\ge \max\{a- y_k, (a-y_1+1-k)/2\}+N$ if $k \ge 2$, and if $L\ge a- y_1+N$ if $k=1$. 
Then it is possible to show that the formula~\eqref{eq:one_point_distribution_Toeplitz} becomes
\begin{equation}
\label{eq:one_point_distribution_TASEP_line}
\det \left[ \oint_{|w|=R} w^{j-i-1} (w+1)^{y_j-j-a+k} e^{tw} \ddbarr{w} 	\right]_{i,j=k}^N.
\end{equation}
This is the probability that the $k$-th particle is located on the right of or exactly at site $a$ for the TASEP on $\intZ$ with initial configuration $Y =(y_1, y_2,\cdots, y_N)$. See, for example, \cite{Rakos-Schutz05} for the formula when $Y$ is the step initial condition. 
\end{rmk}

\section{Simplification of the one-point distribution for flat and step initial conditions}
\label{sec:disfsc}

The formula \eqref{eq:one_point_distribution_Toeplitz} we obtained in Proposition \ref{prop:one_point_distribution_Toeplitz} is not easy to analyze asymptotically. 
In this section, we simplify the formula for the flat and step initial conditions, which are well-suited for asymptotic analysis. 

%, even in the step initial condition. In this condition, the integrand of \eqref{eq:one_point_distribution_Toeplitz} becomes a discrete Toeplitz determinant
%\begin{equation}
%\label{eq:discrete_Toeplitz_determinant}
%\det\left[\frac{1}{L}\sum_{w\in\roots_z}\frac{w^{j-i+1-k}(w+1)^{-N-a+k+1}e^{tw}}{w+\rho}\right]_{i,j=1}^N.
%\end{equation} The typical method to analyze such a determinant is to use discrete orthogonal polynomials \cite{Baik-Kriecherbauer-McLaughlin-Miller07}. Recently a new approach by transforming this discrete Toeplitz determinant into the product of a Fredholm determinant and its continuous counterpart was developed \cite{Baik-Liu14,Liu14}. However, both approaches have limitations for the structure of the nodes and the weight function. In our case, the nodes  set $\roots_z$ is neither on a circle nor a line, and the weight function contains poles $0$, $-1$ and $\infty$. These two facts bring extra obstacles to analyze \eqref{eq:discrete_Toeplitz_determinant} by these two known methods.

%Fortunately one can express \eqref{eq:one_point_distribution_Toeplitz} as an integral of a Fredholm determinant for both flat and step initial conditions. This turns out to be analyzable. In this expression the special structure of $\roots_z$ are utilized but the ideas might still be able to handle other nodes structures.

Before we state the results, we first discuss the set 
\begin{equation}
	\roots_z = \{w\in\complexC : q_z(w)=0\}, \qquad 
	\text{where $q_z(w) = w^N(w+1)^{L-N}-z^L$,}
\end{equation} 
introduced in~\eqref{eq:rootsz}.
This is a discrete subset of %the set 
\begin{equation}
	\Sigma:=\{w\in \complexC : |w|^\rho|w+1|^{1-\rho}=|z| \}.
\end{equation}
It is straightforward to check that for $z$ satisfying
\begin{equation}
\label{eq:r_bounds}
	0<|z|<r_0:=\rho^\rho(1-\rho)^{1-\rho},
\end{equation}
the set $\Sigma$ %$\{w\in \complexC : |w|^\rho|w+1|^{1-\rho}=|z| \}$ 
consists of two non-intersecting simple closed contours, which enclose $-1$ and $0$, respectively. 
Indeed, if we fix $\theta$ and write $w=r'e^{\ii\theta}$, it is easy to check that $|w|^\rho|w+1|^{1-\rho}$ is increasing as a function of $r'$ in the interval $r'\in[0,\rho]$\footnote{For all $r'<\rho$, we have\begin{equation*}
\frac{\dd}{\dd r'}
	\left(
		\rho\log r'
		+\frac{1-\rho}{2}\log\left((1+r'\cos\theta)^2+r'^2\sin^2\theta\right)
	\right)
=\frac{(\rho-r')(1-r')+2r'(1+\rho)\cos^2\frac{\theta}{2}}
{r'(1+r'^2+2r'\cos\theta)}>0.
\end{equation*}
}  and is equal to $0$ when $r'=0$. Similarly, if we write $w+1=r''e^{\ii\theta}$, $|w|^\rho|w+1|^{1-\rho}$ is increasing as a function of $r''$ in the interval $r''\in[0,1-\rho]$ and is equal to $0$ when $r''=0$. 
From this we find that when $|z|$ is small enough, %$\{w:|w|^\rho|w+1|^{1-\rho}=|z|\}$ 
$\Sigma$ consists of two contours, one contains $0$ inside and the other $-1$ inside. 
As $|z|\to 0$, these contours shrink to the points $-1$ and $0$, respectively. 
As $|z|$ increases from $0$ to $r_0$, these two contours become larger but they do not intersect, and when $|z|=r_0$, they intersect at $w=-\rho$. 
We can also check that for $|z|<r_0$, the two disjoint contours are in the half planes $\Re(w)<-\rho$ and $\Re(w)>-\rho$, respectively. 
Let us denote them by $\Sigma_{\LL}$ and $\Sigma_{\RR}$ respectively. 
Thus, for $0<|z|<r_0$, the set $\roots_z$ is the union of two disjoint sets, $\roots_z=\roots_{z,\LL}\cup \roots_{z,\RR}$, where 
\begin{equation}\label{eq:rootszlrt}
	\roots_{z,\LL}:=\roots_z\cap\{w\in \complexC : \Re(w)<-\rho\}, 
	\qquad 
	\roots_{z,\RR}:=\roots_z\cap\{w\in \complexC : \Re(w)>-\rho\}.	
\end{equation}
Note that when $z=0$, $L-N$ roots of of $q_z(w)$ are at $w=-1$ and $N$ roots are at $w=0$. 
Since the roots, after appropriate labelling, are continuous functions of $z$, 
the set $\roots_{z,\LL}$ consists of $L-N$ points and the set $\roots_{z,\RR}$ consists of $N$ points. 
See Figure~\ref{fig:Sigma_Roots}.

\begin{figure}
\centering
\begin{minipage}{.4\textwidth}
\includegraphics[scale=0.22]{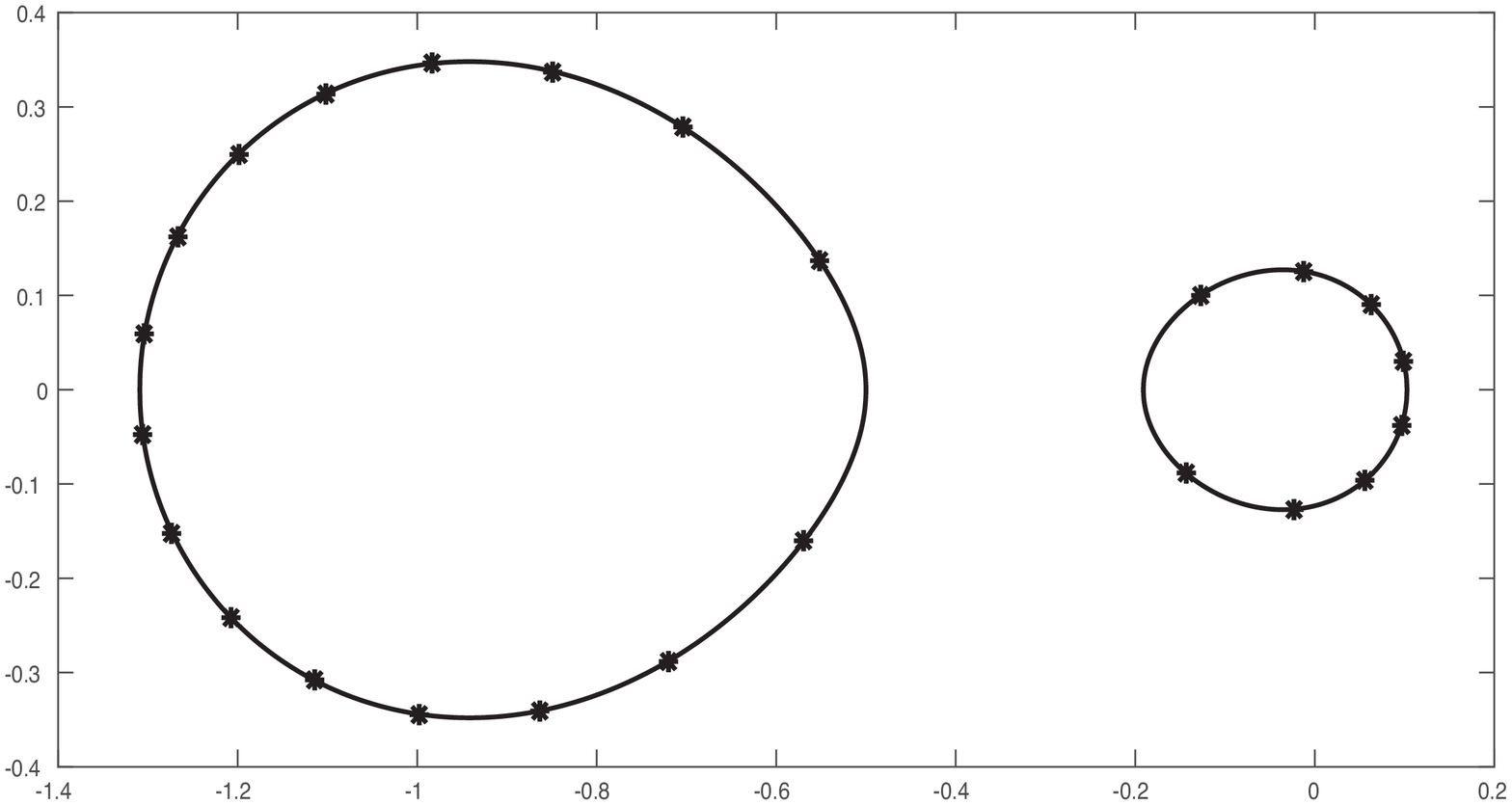}
\caption{Illustration of $\Sigma_{\LL}$ (the left contour), $\Sigma_{\RR}$ (the right contour), $\roots_{z,\LL}$ (the points on the left contour), and $\roots_{z,\RR}$ (the points on the right contour) with $L=24$, $N=8$ and $z=0.5e^{\pi\ii/27}$. Note that $\rho=8/24=1/3$, and  the two contours are on the either side of the line $\Re(w)=-\rho=-1/3$.  }
\label{fig:Sigma_Roots}
\end{minipage}
\qquad
\begin{minipage}{.4\textwidth}
\includegraphics[scale=0.22]{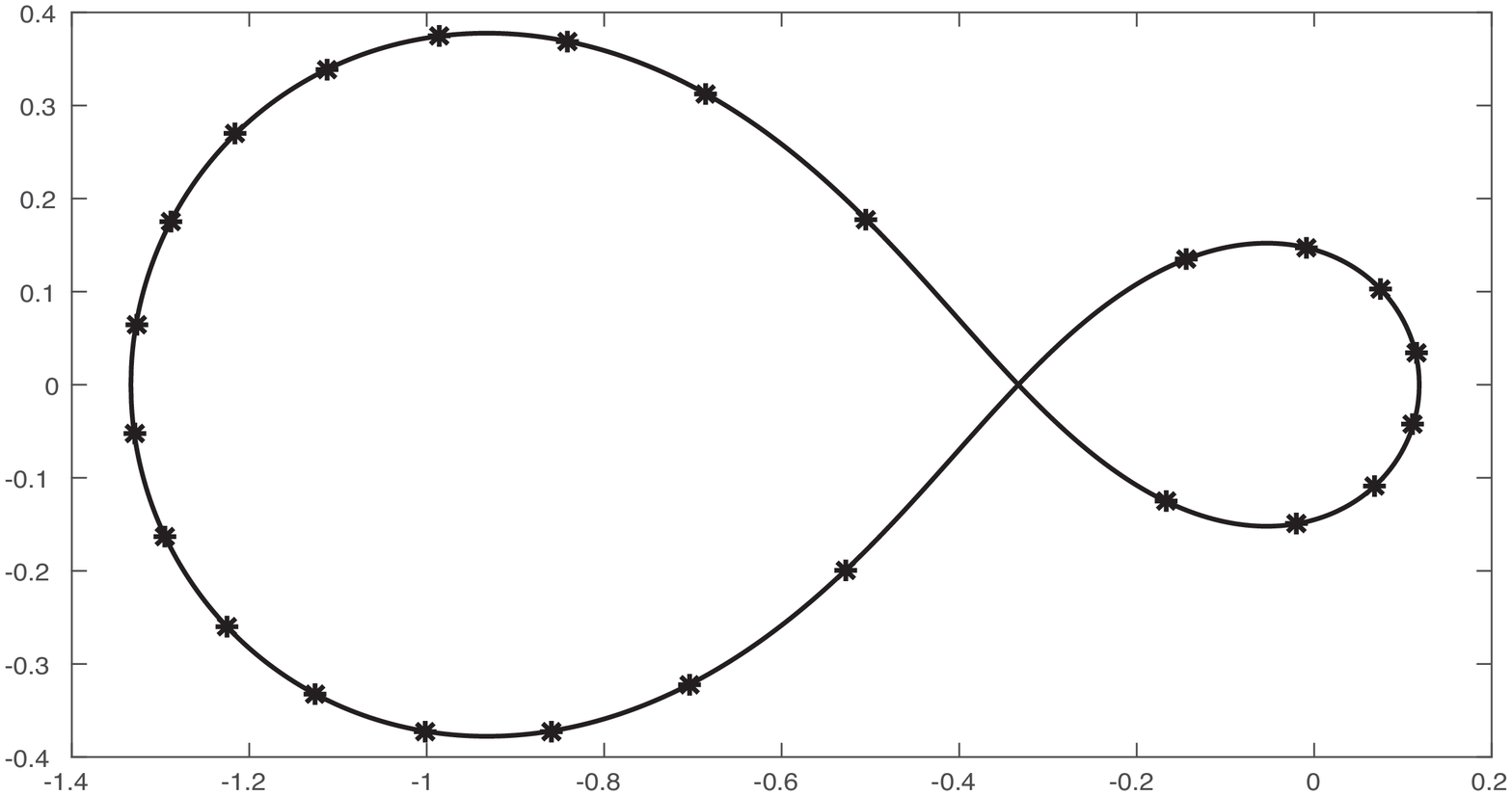}
\caption{Illustration of $\Sigma_{\LL}$, $\Sigma_{\RR}$, $\roots_{z,\LL}$, and $\roots_{z,\RR}$  with $L=24$, $N=8$ and $z=r_0e^{\pi\ii/27}=\frac{2^{2/3}}{3}e^{\pi\ii/27}$. Note that the contours intersect at $w=-\rho=-1/3$. }
\label{fig:Sigma_Roots2}
\end{minipage}
\end{figure}

We also define two monic polynomials of degree $L-N$ and $N$,
\begin{equation}
\label{eq:def_original_hL_hR}
q_{z,\LL}(w):=\prod_{u\in\roots_{z,\LL}}(w-u), \qquad q_{z,\RR}(w):=\prod_{v\in\roots_{z,\RR}}(w-v).
\end{equation}
%which are two monic polynomials with degree $L-N$, $N$ respectively. 
Note that $q_{z,\LL}(w)q_{z,\RR}(w)=q_z(w)$.%$q_{z,\LL}$ and $q_{z,\RR}$ are related by the following equation
%\begin{equation*}
%q_{z,\LL}(w)q_{z,\RR}(w)=q_z(w).
%\end{equation*}

%We are now ready to state the results of this section.
%Fredholm determinant representations of the one point distributions for the flat and step initial conditions.

\subsection{Flat initial condition}

Fix an integer $\gap \ge 2$. 
We consider the TASEP in $\conf_N(L)$ with the flat initial condition where %where $L=N\gap$ 
\begin{equation}
	L =  N\gap.
\end{equation}
The average density of the particles is denoted by $\rho=N/L=1/\gap$. 

% . Recall that for this case we assume
%\begin{equation}
%L =  N\gap
%\end{equation}
%for some integer $\gap \ge 2$. 

%One can also define the flat initial condition when $L= \gap (L-N)$ by using the duality of particles and empty sites, and obtain a similar Fredholm representation of one point distribution. But in this section we focus on the case $L=\gap N$.

%Note that $\rho = N/L$. We have the following relation between $\gap$ and $\rho$
%\begin{equation}
%\rho = \frac1{\gap}.
%\end{equation}
We assume that $0<|z|<r_0$ as in~\eqref{eq:r_bounds}. 
The set $\roots_{z,\LL}$ consists of $L-N=(\gap-1)N$ points and the set $\roots_{z,\RR}$ consists of $N$ points. 
Note that the union $\roots_z=\roots_{z,\LL}\cup \roots_{z,\RR}$ is the set of the roots 
of the polynomial $q_z(w)=w^N(w+1)^{L-N}-z^{L}= (w(w+1)^{\gap-1})^N-z^{\gap N}$, 
which is the same as the union of $N$ sets $D_k=\{w\in \complexC : w(w+1)^{\gap-1}= z^\gap e^{2\pi \ii k/N}\}$, $k=0, 1, \cdots, N-1$, of $d$ elements. 
For each $k$, there are $d$ roots of the equation $w(w+1)^{\gap-1}= z^\gap e^{2\pi \ii k/N}$ and it is easy to check that one of them satisfies $\Re(w)>-\rho$ and hence is in $\roots_{z,\RR}$, and the rest $\gap-1$ roots satisfy $\Re(w)<-\rho$ and hence are in $\roots_{z, \LL}$. 
This defines a  $(d-1)$-to-1 map from $\roots_{z, \LL}$ to $\roots_{z, \RR}$. 
%Therefore, for each $v\in\roots_{z,\RR}$, the set 
%\begin{equation}\label{eq:defofUv}
%	U(v):= \{u\in\roots_{z,\LL} : u(u+1)^{\gap-1}=v(v+1)^{\gap-1}\}
%\end{equation}
%consists of $\gap-1$ points. 
%Similarly, for each $u\in\roots_{z,\LL}$, the set 
Therefore, for any $u\in\roots_{z,\LL}$, the following set
\begin{equation}\label{eq:defofVus}
	V(u) :=\{v\in\roots_{z,\RR} : v(v+1)^{\gap-1}=u(u+1)^{\gap-1}\}
\end{equation}
consists of $1$ point. 

%We also denote the elements of $\roots_{z, \LL}$ and $\roots_{z, \RR}$ as $\roots_{z,\LL}=\{u_1,\cdots,u_{L-N}\}$ and $\roots_{z,\RR}=\{v_1,\cdots,v_N\}$.

%We observe that in this setting the roots set $\roots_{z}$ of the polynomial $q_z(w)=w^N(w+1)^{L-N}-z^{L}$ can be separated into two parts $\roots_{z,\LL}$ and $\roots_{z,\RR}$ which consist of $L-N$ and $N$ points respectively, where $z$ is any complex number satisfying $|z|=r \in (0, r_0)$ as  in~\eqref{eq:r_bounds}. Suppose $\roots_{z,\LL}=\{u_1,\cdots,u_{L-N}\}$ and $\roots_{z,\RR}=\{v_1,\cdots,v_N\}$. Furthermore, for any $v\in\roots_{z,\RR}$, we define $U(v)$ be the set $\{u\in\roots_{z,\LL};u(u+1)^{\gap-1}=v(v+1)^{\gap-1}\}$. It is easy to see that $|U(v)|=\gap-1$ for any $v\in\roots_{z,\RR}$, and $\cup_{v\in\roots_{z,\RR}}U(v)=\roots_{z,\LL}$. Similarly, for any $u\in\roots_{z,\LL}$ we define $V(u)=\{v\in\roots_{z,\RR};v(v+1)^{\gap-1}=w(w+1)^{\gap-1}\}$. Here we note that $V(u)$ only contains one element.

For $z$ satisfying~\eqref{eq:r_bounds}, define the function 
%{\Cr Later  (see~\eqref{eq:CN12rewri}) we rewrite a part of it, $\constc^{(1)}_{N,2}(\zz)$, even simpler form. Why not use that form here?Then we do not need to introduce $U(v)$ here; it is only needed in the proof.} 
%where $r$ is a constant satisfying \eqref{eq:r_bounds} and 
%$\constf(z)$ is $(-1)^{(k-1)(N+1)+N(N-1)/2}$ times 
%\begin{equation}
%	\prod_{i=1}^N
%\frac{v_i^{-N-k+2}(v_i+1)^{-a+k+\rho^{-1}}e^{tv_i}}{L(v_i+\rho)}
%\prod_{1\le i<j\le N}(v_i-v_j)(v_i(v_i+1)^{\gap-1}-v_j(v_j+1)^{\gap-1}). 
%\end{equation}
%\begin{equation}
%\label{eq:def_const_flat}
%\constf(z) := \frac{\prod_{v\in\roots_{z,\RR}} (v+1)^{-a+kd+ L- N}e^{tv}}{\prod_{v\in \roots_{z,\RR}}\left( \prod_{u \in \roots_{z,\LL}}\sqrt{v-u} \cdot \prod_{u\in U(v)} \sqrt{v-u}\right)},
%\end{equation}
\begin{equation}
\label{eq:def_const_flat}
\constf(z) := \frac{\prod_{v\in\roots_{z,\RR}} (v+1)^{-a+kd+ L-N-d/2+1}(d(v+\rho))^{-1/2}e^{tv}}{\prod_{v\in \roots_{z,\RR}} \prod_{u \in \roots_{z,\LL}}\sqrt{v-u}},
\end{equation}
where $\sqrt{w}$ is the usual square root function with branch cut $\realR_{<0}$, and $w^{n/2}= (\sqrt{w})^n$ for all $n\in\intZ$.
%\begin{equation}
%\label{eq:def_const_flat}
%\constf(z) := \frac{\prod_{v\in\roots_{z,\RR}} (\sqrt{v+1})^{-2a+2kd+ 2L-2 N-d+2}(\sqrt{d(v+\rho)})^{-1}e^{tv}}{\prod_{v\in \roots_{z,\RR}} \prod_{u \in \roots_{z,\LL}}\sqrt{v-u}},
%\end{equation}
%{\Cb $\constf(z)$ was rewritten and $U(v)$ was removed here.}
%here $\sqrt{w} = e^{\frac{1}{2}\log w}$ is the square root function with branch cut $\realR_{\le 0}$.
Also define the operator $K_z^{(1)}$ acting on $\ell^2(\roots_{z,\LL})$ by the kernel  %on the set $\roots_{z,\LL}$ with kernel
\begin{equation}
\label{eq:def_Kz_1}
	K_z^{(1)}(u,u')=\frac{f_1(u)}{(u-v')  f_1(v')},\qquad u, u' \in \roots_{z,\LL},
%	\qquad v\in V(u_j), 
\end{equation}
where $v'$ is the unique element in $V(u')$, %{\Cb $u_i,u_j$ are not defined so replaced by $u,u'$ here}
and the function $f_1:\roots_{z}\to \complexC$ is defined by 
\begin{equation}
\label{eq:def_f_1}
	f_1(w):=\begin{dcases}
				\frac{q_{z,\RR}(w)w^{-N-k+2}(w+1)^{-a+k+\rho^{-1}}e^{tw}}{w+\rho}, 
				\qquad &w\in\roots_{z,\LL},\\
				\frac{q_{z,\RR}'(w)w^{-N-k+2}(w+1)^{-a+k+\rho^{-1}}e^{tw}}{w+\rho}, \qquad &w\in\roots_{z,\RR}. 
\end{dcases}
\end{equation}

%Recall the flat initial condition is defined by

\begin{thm}
\label{prop:one_point_distribution_Fredholm2}
Fix an integer $\gap \ge 2$. 
Consider the TASEP in $\conf_N(L)$ where $L= N\gap$  with the flat initial condition
\begin{equation}
	(x_1(0), x_2(0), \cdots, x_N(0))
	= (d, 2d, \cdots, Nd)\in\conf_N(L). %, \qquad x_j(0) = j\gap, \quad j=1,2,\cdots, N.
\end{equation}
%\begin{equation}
%	Y = (y_1, y_2, \cdots, y_N)\in\conf_N(L), \qquad y_j = j\gap, \quad j=1,2,\cdots, N.
%\end{equation}
%and for $k =1, 2, \cdots, N$ and $t>0$, we have following Fredholm representation of the one point distribution for the $k$-th particle at time $t$
Then for every $k \in\{1, 2, \cdots, N\}$, $t>0$, and integer $a$, 
%\begin{equation}
%\label{eq:one_point_distribution_Fredholm2}
%	\prob\left(x_k\ge a;t\right)
%	= \frac1{2\pi \ii} \oint_{|z|=r} \constf(z) \cdot \det\left(I+ K_z^{(1)}\right)\big|_{l^2(\roots_{s,L})}\frac{\dd z}{z^{1-(k-1)L}},
%\end{equation}
\begin{equation}
\label{eq:one_point_distribution_Fredholm2}
	\prob\left(x_k(t)\ge a\right)
	= \oint \constf(z) \cdot \det\left(I+ K_z^{(1)}\right) \ddbar{z},
\end{equation}
where the contour is over any simple closed contour which contains $0$ and lies in the annulus $0<|z|<r_0:=\rho^\rho(1-\rho)^{1-\rho}$. 
%Then for $k=1,\cdots, N$ and $t>0$, 

% $\tilde{Z}_N(s)$ is a constant depending on $s$
%\begin{multline*}
%	\tilde Z_N(s) :=(-1)^{(k-1)(N+1)+N(N-1)/2}\prod_{i=1}^N
%\frac{w_i^{-N-k+2}(w_i+1)^{-a+k+2}e^{tw_i}}{N(2w_i+1)}\\
%\prod_{1\le i<j\le N}(w_i-w_j)^2(w_i+w_j+1). 
%\end{multline*}
\end{thm}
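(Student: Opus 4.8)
The plan is to start from the general one-point formula of Proposition~\ref{prop:one_point_distribution_Toeplitz}, specialize it to the flat configuration $y_j=j\gap$, and then reorganize the $N\times N$ determinant into a Fredholm determinant on the smaller set $\roots_{z,\LL}$. Throughout, I take the $z$-contour to be a circle $|z|=r$ with $0<r<r_0$: since the integrand in~\eqref{eq:one_point_distribution_Toeplitz} is analytic in $|z|>0$ and the circle encloses $0$, this is a legitimate choice, and on it $\roots_z=\roots_{z,\LL}\sqcup\roots_{z,\RR}$ with the $(\gap-1)$-to-$1$ correspondence $u\mapsto V(u)$ of~\eqref{eq:defofVus}, as discussed before~\eqref{eq:rootszlrt}. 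Substituting $y_j=j\gap$ into~\eqref{eq:one_point_distribution_Toeplitz}, the $(i,j)$ entry of the inner determinant becomes $\frac1L\sum_{w\in\roots_z}w^{-i}\zeta_w^{\,j}h(w)$, where $\zeta_w:=w(w+1)^{\gap-1}$ and $h(w):=w^{1-k}(w+1)^{-a+k+1}e^{tw}/(w+\rho)$ gathers the factors independent of $i$ and $j$. Since $\roots_z$ is the disjoint union of the $N$ fibers $D_\ell=\{w:\zeta_w=z^{\gap}e^{2\pi\ii\ell/N}\}$, each of cardinality $\gap$, the matrix factors as $M=CV$ with $V_{\ell j}=(z^{\gap}e^{2\pi\ii\ell/N})^{j}$ and $C_{i\ell}=\frac1L\sum_{w\in D_\ell}w^{-i}h(w)$; hence $\det M=\det V\cdot\det C$, with $\det V$ an explicit elementary function of $z$ (a power of $z$ times the Vandermonde of the $N$-th roots of unity, up to sign).

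In each fiber $D_\ell$ exactly one root lies in $\roots_{z,\RR}$ -- call it $v_\ell$ -- and the remaining $\gap-1$ lie in $\roots_{z,\LL}$. Split $C=C_{\RR}+C_{\LL}$ accordingly, where $(C_{\RR})_{i\ell}=\frac1L v_\ell^{-i}h(v_\ell)$ is a Vandermonde in $\{v_\ell^{-1}\}$ times a nonvanishing diagonal, hence invertible for $z\neq 0$, so that $\det C=\det C_{\RR}\cdot\det(I+C_{\RR}^{-1}C_{\LL})$. Next, factor $C_{\LL}=EF$ through $\ell^2(\roots_{z,\LL})$, taking $E_{iu}=\frac1L u^{-i}$ and $F_{u\ell}=h(u)\mathbf 1_{\{u\in D_\ell\}}$ for $u\in\roots_{z,\LL}$, and use $\det(I+C_{\RR}^{-1}EF)=\det(I+FC_{\RR}^{-1}E)$ to move the determinant onto $\ell^2(\roots_{z,\LL})$, which has the required cardinality $(\gap-1)N$.

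It then remains to evaluate $FC_{\RR}^{-1}E$ explicitly. Since $C_{\RR}^{-1}$ is built from Lagrange interpolation at the nodes $\roots_{z,\RR}$, its action is computed via $\prod_{v\in\roots_{z,\RR}}(w-v)=q_{z,\RR}(w)$ and the corresponding partial-fraction/residue identity; combining this with $q_z=q_{z,\LL}q_{z,\RR}$ and $q_z'(w)=L(w+\rho)w^{N-1}(w+1)^{L-N-1}$ (which lets one trade $q_{z,\RR}$ on $\roots_{z,\LL}$ against $q_{z,\RR}'$ on $\roots_{z,\RR}$), the operator $FC_{\RR}^{-1}E$ reduces -- after transposing and conjugating by a diagonal operator, neither of which affects the Fredholm determinant -- to the operator $K_z^{(1)}$ of~\eqref{eq:def_Kz_1} and~\eqref{eq:def_f_1} with kernel $f_1(u)/\bigl((u-v')f_1(v')\bigr)$, $v'=V(u')$. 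Finally one must check that the remaining scalar factors -- $\det V$, $\det C_{\RR}$, the sign $(-1)^{(k-1)(N+1)}$ and the power $z^{(k-1)L}$ carried over from~\eqref{eq:one_point_distribution_Toeplitz}, and the normalizations extracted from $E$ and $F$ -- multiply to exactly $\constf(z)$ of~\eqref{eq:def_const_flat}. In particular the double product $\prod_{v\in\roots_{z,\RR}}\prod_{u\in\roots_{z,\LL}}\sqrt{v-u}$ in the denominator and the half-integer powers in $\constf(z)$ arise because $\det C_{\RR}$ produces $\prod_{\ell<\ell'}(v_{\ell'}^{-1}-v_\ell^{-1})$, whose square is $\pm\prod_\ell q_{z,\RR}'(v_\ell)$, together with the identity $q_{z,\RR}'(v)=L(v+\rho)v^{N-1}(v+1)^{L-N-1}/q_{z,\LL}(v)$ and $q_{z,\LL}(v)=\prod_{u\in\roots_{z,\LL}}(v-u)$.

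The conceptual steps -- the factorization $M=CV$, the fiber split $C=C_{\RR}+C_{\LL}$, and the size reduction via $\det(I+AB)=\det(I+BA)$ -- are routine. The real work, and the only genuinely delicate point, is the bookkeeping in the last step: matching every power of $w$ and $w+1$, the exponential, and the $(w+\rho)$ factor, and in particular getting the half-integer powers (the square roots in $\constf(z)$) to come out with the correct branch when $\prod_{\ell<\ell'}(v_{\ell'}^{-1}-v_\ell^{-1})$ is converted into $\prod_{v}\prod_{u}(v-u)^{-1/2}$ times a square root of the discriminant of $x^N-z^{\gap N}$. One should also note, as in the discussion of $\Sigma_{\LL}$ and $\Sigma_{\RR}$ earlier in this section, that $0,-1,-\rho\notin\roots_z$ on the contour $0<|z|<r_0$, so that all objects and branch choices there are well defined.
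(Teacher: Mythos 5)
Your route is correct, and it is genuinely different from the paper's. The paper expands the $N\times N$ determinant by the Cauchy--Binet/Andreief formula into a symmetrized sum over $\roots_z^N$, uses the vanishing of the factor $w_i(w_i+1)^{\gap-1}-w_j(w_j+1)^{\gap-1}$ to restrict to one point per fiber, reorganizes by the subset of left roots, and resums via the Cauchy identity~\eqref{eq:Cauchy_identity} into the Fredholm series of $K_z^{(1)}$; you instead factor the matrix as $M=CV$, split $C=C_{\RR}+C_{\LL}$, and push the determinant onto $\ell^2(\roots_{z,\LL})$ with $\det(I+AB)=\det(I+BA)$, identifying the kernel by Lagrange interpolation at the nodes $\roots_{z,\RR}$. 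I checked the one step you only assert: writing $C_{\RR}=\tfrac1L\hat V D$ with $\hat V_{i\ell}=v_\ell^{-i}$, $D=\mathrm{diag}(h(v_\ell))$, one gets $(FC_{\RR}^{-1}E)(u,u')=\frac{h(u)}{h(v)}\bigl(\frac{v}{u'}\bigr)^{N}\frac{q_{z,\RR}(u')}{(u'-v)\,q_{z,\RR}'(v)}$ with $v=V(u)$, and after transposing, the diagonal conjugation needed to reach the kernel of~\eqref{eq:def_Kz_1} exists precisely because $u(u+1)^{\gap-1}=V(u)(V(u)+1)^{\gap-1}$ makes the two kernels' diagonal products agree pointwise; so $\det(I+FC_{\RR}^{-1}E)=\det(I+K_z^{(1)})$ as you claim. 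Your approach buys a cleaner, purely linear-algebraic reduction (no symmetrization or subset resummation), at the cost of inverting $C_{\RR}$, which is where the $\gap$-to-one fiber structure enters. The one thing you have not done is the scalar matching: your prefactor $(-1)^{(k-1)(N+1)}z^{(k-1)L}\det V\cdot\det C_{\RR}$ simplifies (using $h(v)v^{-N}\zeta_v=\tilde f_1(v)$ and the two Vandermondes) to exactly the right-hand side of the paper's identity~\eqref{eq:aux_2016_04_08_03}, so what remains is verbatim the second half of the paper's proof, including the branch determination for the square roots in $\constf(z)$ of~\eqref{eq:def_const_flat}, which the paper settles not by discriminant bookkeeping but by noting that both sides of~\eqref{eq:aux_2016_04_08_05} and~\eqref{eq:aux_2016_05_05_01} are analytic in $|z|<r_0$ and tend to $1$ as $z\to0$ while their squares agree; you flag this as the delicate point but should carry out (or import) that argument to complete the proof.
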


\begin{proof}%[Proof of Proposition \ref{prop:one_point_distribution_Fredholm2}]
When $y_j=x_j(0)=j\gap$, $j=1, \cdots, N$,  \eqref{eq:one_point_distribution_Toeplitz} becomes, 
after reordering the rows $i\mapsto N-i+1$, 
\begin{equation*}
	\prob\left(x_k(t) \ge a\right)
	=\frac{C_1}{2\pi \ii}  \oint \det\left[\sum_{w\in\roots_z}w^{i-1}(w(w+1)^{\gap-1})^{j-1}\tilde f_1(w)\right]_{i,j=1}^N\frac{\dd z}{z^{1-(k-1)L}},
\end{equation*}
where 
\begin{equation*}
	C_1= \frac{(-1)^{(k-1)(N+1)+N(N-1)/2}}{L^{N}}
\end{equation*}
and
\begin{equation*}
	\tilde f_1(w):=\frac{w^{-N-k+2}(w+1)^{-a+k+\gap}e^{tw}}{w+\rho}, \qquad w\in \roots_z.
\end{equation*}
By the Cauchy-Binet/Andreief formula,  %we find that % and then Vandermonde determinant and obtain
\begin{equation}\label{eq:aux_2015_11_26_03}
\begin{split}
	&\prob\left(x_k(t) \ge a \right)\\
%=&\frac{\tilde C_1}{2\pi i N!}\oint_{|s|=r}\sum_{z_1,\cdots,z_N\in\roots_s}
%\det\left[z_i^{j-1}\right]_{i,j=1}^N\det\left[\left(z_i(z_i+1)\right)^{j-1}\right]_{i,j=1}^N\prod_{j=1}^N\tilde f_1(z_j)\frac{ds}{s^{1-2(k-1)N}}\\
&=\frac{C_1}{2\pi \ii N!}\oint \sum_{w_1,\cdots,w_N\in\roots_z}
\prod_{1\le i<j\le N}(w_i-w_j) \left( w_i(w_i+1)^{\gap-1}-w_j(w_j+1)^{\gap-1} \right)
\prod_{j=1}^N\tilde f_1(w_j)\frac{\dd z}{z^{1-(k-1)L}}.
\end{split}
\end{equation}
Since the summand contains the factors $w_i(w_i+1)^{\gap-1}-w_j(w_j+1)^{\gap-1}$, we only need to consider the cases when $w_j(w_j+1)^{\gap-1}$ are all distinct. 
We now take the contour to be in $0<|z|< \rho^{\rho}(1-\rho)^{1-\rho}$. 
%Recall our notations $\roots_{z,\RR}=\{v_1,v_2,\cdots,v_N\}$ and $U(v_i)=\{u \in \roots_{z, \LL} : u(u+1)^{\gap-1}=v_i(v_i+1)^{\gap-1}\}$ for $v_i\in \roots_{z, \RR}$. 
Let us denote the elements of $\roots_{z, \RR}$ as  $\roots_{z,\RR}=\{v_1,\cdots,v_N\}$, 
%and define $U(v_j)$ in~\eqref{eq:defofUv}. 
and define 
\begin{equation}\label{eq:defofUv}
	U(v_j)= \{u\in\roots_{z,\LL} : u(u+1)^{\gap-1}=v_j(v_j+1)^{\gap-1}\}
\end{equation}
for each $v_j\in \roots_{z,\RR}$, $j=1,\cdots,N$. 
The set $U(v_j)$ consists of $d-1$ elements (see the discussion above~\eqref{eq:defofVus})
%{\Cb We define $U(v_j)$ in the proof.}
By the symmetry of the integrand in~\eqref{eq:aux_2015_11_26_03}, we may replace the sum by $N!$ times the sum of the terms with  
\begin{equation}
\label{eq:aux_2015_11_26_02}
w_j\in \{v_j\}\cup  U(v_j), \qquad j=1, \cdots, N. 
\end{equation}
%for all $j$, where $u_j,w_j$ are fixed roots defined in \eqref{eq:def_w_i}. 
Hence, %by y\eqref{eq:aux_2015_11_26_03} becomes
\begin{multline}
\label{eq:aux_2015_11_26_04}
	\prob\left(x_k\ge a;t\right)
= \frac{C_1}{2\pi \ii} \oint_{|z|=r}\prod_{1\le i<j\le N} \left(v_i(v_i+1)^{\gap-1}-v_j(v_j+1)^{\gap-1}\right)\\
\sum_{\substack{w_j\in \{v_j\}\cup U(v_j) \\ j=1, \cdots, N}} \prod_{1\le i<j\le N}(w_i-w_j)\prod_{j=1}^N\tilde f_1(w_j)\frac{\dd z}{z^{1-(k-1)L}}.
\end{multline}

Now we re-assemble the terms with respect to the number of roots chosen from $\roots_{z,\LL}$. Suppose $I$ is the index set such that $w_i\in U(v_i)$ for  $i\in I$ and $w_j=v_j$ for  $j\in J=\{1,2,\cdots,N\}\setminus I$. For the notational convenience we write $w_i=u_i$ for $i\in I$; hence $u_i\in U(v_i)$.
Then %We write
\begin{equation}\label{eq:aux_2015_11_26_05}
	\prod_{1\le i<j\le N}(w_i-w_j)  % \prod_{j=1}^N\tilde f_1(z_j)\\
=(-1)^{n(I,J)}\prod_{i< i'\atop i,i'\in I}(u_{i}-u_{i'})\prod_{j< j'\atop j,j'\in J}(v_{j}-v_{j'})\prod_{i\in I,j\in J}(u_{i}-v_{j}) %\prod_{i\in I}\tilde f_1(u_i)\prod_{j\in J}\tilde f_1(w_j)
\end{equation}
where $n(I,J)$ is the number of pairs $(i,j)\in I\times J$ such that $i>j$. 
%It is direct to check that 
%Now we want to 
We now express \eqref{eq:aux_2015_11_26_05} in such a way that the indices of the products are only chosen from $I$. Recall  the function $q_{z,\RR}(w)=\prod_{v\in\roots_{z,\RR}}(w-v)$ defined in \eqref{eq:def_original_hL_hR}. 
It is direct to check that

%We write
\begin{align}
\label{eq:aux_2015_11_26_06}
	\prod_{i\in I,j\in J}(u_i-v_j)&=\frac{\prod_{i\in I}q_{z,\RR}(u_i)}{\prod_{i\in I}\prod_{i'\in I}(u_{i}-v_{i'})}
\end{align}
and
\begin{equation}\label{eq:aux_2015_11_26_07}
\begin{split}
	\prod_{j< j'\atop j,j'\in J}(v_{j}-v_{j'})
	&=\frac{(-1)^{n(I,J)}\prod_{1\le i<j\le N}(v_i-v_j)}{\prod_{i<i'\atop i,i'\in I}(v_{i}-v_{i'})\prod_{i\in I,j\in J}(v_i-v_j)}\\
	&=\frac{(-1)^{n(I,J)+|I|(|I|-1)/2}\prod_{1\le i<j\le N}(v_i-v_j)\prod_{i<i'\atop i,i'\in I}(v_{i}-v_{i'})}{\prod_{i\in I}q'_{z,\RR}(v_i)}.
\end{split}
\end{equation}
Combining~\eqref{eq:aux_2015_11_26_05},~\eqref{eq:aux_2015_11_26_06} and~\eqref{eq:aux_2015_11_26_07},
and using $\prod_{j=1}^N\tilde f_1(w_j) = \prod_{i\in I} \tilde f_1(u_i) \prod_{j\in J}\tilde f_1(v_j)$,  we obtain
\begin{align*}
	&\prod_{1\le i<j\le N}(w_i-w_j)\prod_{j=1}^N\tilde f_1(w_j)\\
	&=(-1)^{|I|(|I|-1)/2}\prod_{j=1}^N\tilde f_1(v_j)\prod_{1\le i<j\le N}(v_i-v_j)	
		\prod_{i\in I}\frac{\tilde f_1(u_i)q_{z,\RR}(u_i)}{\tilde f_1(v_i)q'_{z,\RR}(v_i)}
		\frac{\prod_{i< i'\atop i,i'\in I}(u_{i}-u_{i'})(v_{i}-v_{i'})}
			{\prod_{i\in I}\prod_{i'\in I}(u_{i}-v_{i'   })}\\
	&=\prod_{j=1}^N\tilde f_1(v_j)\prod_{1\le i<j\le N}(v_i-v_j) 
	\prod_{i\in I}\frac{\tilde f_1(u_i)q_{z,\RR}(u_i)}{\tilde f_1(v_i)q'_{z,\RR}(v_i)}
	\det\left[\frac{1}{u_{i}-v_{i'}}\right]_{i,i'\in I}
\end{align*}
where in the last equation we applied the Cauchy identity
\begin{equation}
\label{eq:Cauchy_identity}
	\det\left[\frac{1}{x_i+y_j}\right]_{i,j=1}^l=\frac{\prod_{1\le i<j\le l}(x_i-x_j)(y_i-y_j)}{\prod_{1\le i,j\le l}(x_i+y_j)}.
\end{equation} 

Note that $f_1(w) = \tilde f_1(w) q_{z,\RR}(w)$  for $w\in \roots_{z,\LL}$ and $f_1(w) =\tilde f_1(w) q'_{z,\RR}(w)$ for $w\in \roots_{z,\RR}$. Therefore we obtain
\begin{equation}
\prod_{1\le i<j\le N}(w_i-w_j)\prod_{j=1}^N\tilde f_1(w_j) = \prod_{j=1}^N\tilde f_1(v_j) \prod_{1\le i<j\le N}(v_i-v_j) \det\left[ K_z^{(1)} (u_i, u_{i'})\right]_{i,i'\in I}.
\end{equation}
Taking the sum over all subset $I$ of $\{1,2,\cdots, N\}$ we find
\begin{equation}
\sum_{\substack{w_j\in \{v_j\}\cup U(v_j) \\ j=1, \cdots, N}}
 \prod_{1\le i<j\le N}(w_i-w_j)\prod_{j=1}^N\tilde f_1(w_j) 
= \prod_{j=1}^N\tilde f_1(v_j) \prod_{1\le i<j\le N}(v_i-v_j)
\det(I + K_z^{(1)}).
\end{equation}
Plugging this in~\eqref{eq:aux_2015_11_26_04} and comparing with~\eqref{eq:one_point_distribution_Fredholm2}, it remains to show that
\begin{equation}
\label{eq:aux_2016_04_08_03}
\constf(z) = C_1 z^{(k-1)L} \prod_{j=1}^N\tilde f_1(v_j) 
\prod_{1\le i<j\le N} (v_i-v_j)\left(v_i(v_i+1)^{\gap-1}-v_j(v_j+1)^{\gap-1}\right).
\end{equation}

From the formula of $C_1$ and $\tilde f_1$, the right hand side of~\eqref{eq:aux_2016_04_08_03} is
\begin{equation}
\label{eq:aux_2015_12_15_02}
(-1)^{(k-1)(N+1)}z^{(k-1)L}
\prod_{j=1}^N  \frac{q'_{z,\RR}(v_j)v_j^{-N-k+2}(v_j+1)^{-a+k+\gap}e^{tv_j}}
				{L(v_j+\rho)}
\prod_{1\le i<j\le N}	\frac{v_i(v_i+1)^{\gap-1}-v_j(v_j+1)^{\gap-1}}	{v_i-v_j}.
\end{equation}
Now we simplify this expression. % by using a few simple identities.

First, noting that $q_{z,\RR}(w)q_{z,\LL}(w)=q_z(w)=w^N(w+1)^{L-N}-z^L$, we find 
  \begin{equation}
  \label{eq:aux_2015_12_15_01}
  	q'_{z,\RR}(v)=\frac{q'_z(v)}{q_{z,\LL}(v)}= \frac{Lv^{N-1}(v+1)^{L-N-1}(v+\rho)}{q_{z,\LL}(v)}, 
	\qquad \text{for $v\in\roots_{z,\RR}$,}
  \end{equation}
because $q_{z,\RR}(v)=0$ for such $v$.

Second, using the fact that $v_j(v_j+1)^{\gap-1}$ are the roots of the equation $w^N-z^{L}=0$ we obtain 
% have
\begin{equation}
\label{eq:aux_2016_04_08_04}
\prod_{j=1}^N v_j(v_j+1)^{\gap-1}=(-1)^{N+1}z^{L}.
\end{equation}

And third, we have, for $i\neq j$, % that
\begin{equation}
\label{eq:aux_2016_04_08_07}
\frac{v_i(v_i+1)^{\gap -1} - v_j(v_j+1)^{\gap -1}}{v_i-v_j} = \prod_{u \in U(v_j)}(v_i -u)
\end{equation}
since both sides are monic polynomials of $v_i$ of degree $d-1$ whose roots are the elements of $U(v_j)$. Since the left hand side of~\eqref{eq:aux_2015_12_15_01} is symmetric in $i$ and $j$, we obtain an identity
\begin{equation}
\label{eq:aux_2016_04_08_06}
\prod_{u \in U(v_j)} (v_i -u) =\prod_{u \in U(v_i)} (v_j -u).
\end{equation}
We claim that %By using this identity we can prove
\begin{equation}
\label{eq:aux_2016_04_08_05}
\prod_{i< j}\prod_{ u\in U(v_j)} \sqrt{v_i-u}	=\prod_{i< j}\prod_{ u\in U(v_i)} \sqrt{v_j-u}.
\end{equation}
Indeed, noting that $v_j\in \roots_{z, \LL}$ and hence $\Re(v_j)>-\rho$ while any $u\in U(v_i)\subset \roots_{z, \LL}$ satisfies $\Re(u)<-\rho$, we find that 
both sides of~\eqref{eq:aux_2016_04_08_05} are analytic in $z$ for $|z|< r_0 =\rho^{\rho}(1-\rho)^{(1-\rho)}$.
In addition, both sides converge to $1$ as $z\to 0$ since when $z=0$, $v_j=0$ and $U(v_j)=\{-1\}$. 
% as we assumed in~\eqref{eq:r_bounds}. When $|z|\to 0$, both sides converge to $1$. 
Since the square of the two sides are the same due to~\eqref{eq:aux_2016_04_08_06}, we obtain~\eqref{eq:aux_2016_04_08_05}.
Now from~\eqref{eq:aux_2016_04_08_07} and~\eqref{eq:aux_2016_04_08_05}, we find
\begin{equation}
\label{eq:aux_2016_04_08_08}
\prod_{1\le i<j\le N}	\frac{v_i(v_i+1)^{\gap-1}-v_j(v_j+1)^{\gap-1}}	{v_i-v_j}
= \prod_{v \in \roots_{z,\RR}} \prod_{u \in \roots_{z,\RR}\setminus U(v)} \sqrt{v - u}.
\end{equation}

Finally we have, for all $v\in\roots_{z,\RR}$,
\begin{equation}
\prod_{u\in U(v)} (v-u) =d(v+\rho)(v+1)^{d-2}
\end{equation}
since $U(v)\cup\{v\}$ is the set of all the roots of $w(w+1)^{d-1}=v(v+1)^{d-1}$. By using an argument similar to~\eqref{eq:aux_2016_04_08_05}, we have
\begin{equation}
\label{eq:aux_2016_05_05_01}
\prod_{u\in U(v)} \sqrt{v-u} = \sqrt{d(v+\rho)} (\sqrt{v+1})^{d-2}.
\end{equation}
By using~\eqref{eq:aux_2016_04_08_04},~\eqref{eq:aux_2015_12_15_01},~\eqref{eq:aux_2016_04_08_08}, and~\eqref{eq:aux_2016_05_05_01} we find that~\eqref{eq:aux_2015_12_15_02} is equal to 
\begin{equation*}
\prod_{j=1}^N	\frac{(v_j+1)^{-a+k\gap+L-N}e^{tv_i}}{q_{z,\LL}(v_j)}
 \frac{\prod_{v \in \roots_{z,\RR}} \prod_{u \in \roots_{z,\RR}} \sqrt{v - u}}{\prod_{v\in\roots_{z,\RR}} \sqrt{d(v+\rho)}(\sqrt{v+1})^{d-2}}.
\end{equation*}
Since $q_{z,\LL}(v_j) = \prod_{u\in \roots_{z, \LL}} (v_j-u)$ by definition, we find that the above expression is same as $\constf(z)$.
\end{proof}

\begin{rmk}
If we fix all other parameters and take $L=dN\to \infty$, 
the TASEP on a ring of size $L$ with flat initial condition converges to the TASEP on $\intZ$ with flat periodic initial condition $\cdots, \underbrace{1,0,0,0}_{d}, \underbrace{1,0,0,0}_{d}, \cdots$.
It is possible to show that in this limit, the formula~\eqref{eq:one_point_distribution_Fredholm2} becomes %to 
\begin{equation}
\label{eq:aux_2016_05_04_01}
	\prob_{\intZ, flat}\left(x_k(t) \ge a\right) = 
	\det\big(I + K^{(1)} \big)
\end{equation}
where $K^{(1)}$ is the operator defined on the space $L^2(\Sigma_\LL ,\dd u/2\pi \ii)$ with kernel
\begin{equation}
	K^{(1)}(u,u') = \frac{(u+1)^{-a+kd}e^{tu}}{(v'+1)^{-a+kd}e^{tv'}}\frac{1}{u-v'}.
\end{equation}
Here $v'=v'(u') \in \Sigma_\RR$ is uniquely determined, for given $u'\in \Sigma_{\LL}$,  from the equation
$u'(u'+1)^{d-1} = v'(v'+1)^{d-1}$.
%\begin{equation}
%	u'(u'+1)^{d-1} = v'(v'+1)^{d-1}.
%\end{equation}
The contours $\Sigma_\LL$ and $\Sigma_\RR$ are the parts of the contour $|u(u+1)^{d-1}| =r$ for any fixed $0<r<\rho(1-\rho)^{d-1}$ satisfying $\Re(u)<-\rho$ and $\Re(u)>\rho$, respectively, where $\rho=1/d$. 
The contour $\Sigma_{\LL}$ is oriented counterclockwise. 
%The contour $\Sigma_\LL$ ($\Sigma_\RR$, resp.) is the left (right, resp.) part of the  contour $|u(u+1)^{d-1}| =\mbox{ const }<\rho(1-\rho)^{d-1}$. 
By the analyticity, we may deform $\Sigma_{\LL}$ to any small simple closed counterclockwise contour containing 
the point $-1$ inside. 
Writing $\frac{1}{u-v'}=-\sum_{x=0}^\infty \frac{(u+1)^x}{(v'+1)^{x+1}}$, we see that $K^{(1)}(u,u') =- \sum_{y\in \intZ_{<a}} A(u,y)B(y,u')$ where $A(u,y)=(u+1)^{-y+kd-1}e^{tu}$
and $B(y,u')= (v'+1)^{y-kd}e^{-tv'}$. 
Using the identity $\det(I-AB)=\det(I-BA)$, we find that 
$\prob_{\intZ, flat}\left(x_k(t) \ge a\right) = \det\big(I - L^{(1)} \big)$ for the kernel  
\begin{equation}\label{eq:L1ford}
	L^{(1)}(x,y) = \oint \frac{(v+1)^{x-kd}}{(u+1)^{y-kd+1}}e^{t(u-v)} \frac{\dd u}{2\pi \ii}, 
	\qquad x,y\in \intZ_{<a},
\end{equation}
where the contour is any small enough simple closed counterclockwise contour containing the point $-1$ inside and 
$v=v(u)$ is, for given $u$ on the contour, is the unique point $v$ in $\Re(v)>-\rho$ 
satisfying the equation $v(v+1)^{d-1}=u(u+1)^{d-1}$. 
When $d=2$, $v=-u-1$ and~\eqref{eq:L1ford} becomes
\begin{equation}
	L^{(1)}(x,y) = \oint \frac{(-u)^{x-2k}}{(u+1)^{y-2k+1}}e^{t(2u+1)} \frac{\dd u}{2\pi \ii}, 
	\qquad x,y\in \intZ_{<a}.
\end{equation}
This is the same kernel as the one in Theorem 2.2 (with $n_1=n_2=-k$ and the change of variables $u=-v-1$) of \cite{Borodin-Ferrari-Prahofer-Sasamoto07} for the TASEP on $\intZ$ with period $d=2$ flat initial condition. 
For period $d$ flat initial condition, a kernel similar to~\eqref{eq:L1ford} is obtained for the discrete-time TASEP on $\intZ$ in \cite{Borodin-Ferrari-Prahofer07}.
%{\Cr General $d$ case is known for discrete time TASEP. Continuous-time case should be similar, and  Borodin-Ferrari-...  simply did not write down anywhere.} 
\end{rmk}

\begin{rmk} 
We assumed that $L=dN$ for the flat initial condition such as $\underbrace{1,0,0,1,0,0,1,0,0}_{L}$.
More general flat conditions may have $L=\ell m$ and $N=n m$ for integers $\ell>n$ and $m$.
For example, $\underbrace{1,1,1,0,0,1,1,1,0,0}_{L}$ corresponds to the case with $\ell=5$, $n=3$, and $m=2$. 
In this general case, the $(d-1)$-to-$1$ map from $\roots_{z, \LL}$ to $\roots_{z, \RR}$ described above the equation~\eqref{eq:defofVus} becomes an $(\ell-n)$-to-$n$ map.
This makes the computation complicated and we do not have a result for the general flat initial conditions.
The situation is same for the TASEP on $\intZ$: $L=dN$ case was computed in \cite{Borodin-Ferrari-Prahofer-Sasamoto07, Borodin-Ferrari-Prahofer07}, but the general case is not yet obtained. 
%{\Cr New remark}
\end{rmk}

\subsection{Step initial condition}

We now consider the step initial condition. In this case $L>N$ are arbitrary positive integers and we do not assume that $L/N$ is an integer. 
%Hence we do not have the $(d-1)$-to-$1$ map from $\roots_{z,\LL}$ and $\roots_{z,\RR}$. The proof is slightly different from the flat case. 

%, which means we do not have the functions $U$ and $V$ defined before Proposition~\ref{prop:one_point_distribution_Fredholm2}. 
             
%\marginpar{Remove $M$}
For $z$ satisfying~\eqref{eq:r_bounds}, $0<|z|<r_0$, define the function 
\begin{equation}
\label{eq:aux_2016_04_09_05}
\consts(z) = \frac{\prod_{u\in\roots_{z,\LL}} (-u)^{k-1}
				   \prod_{v\in\roots_{z,\RR}} {(v+1)^{-a+L-2N+k}} e^{tv}}
				  {\prod_{u\in\roots_{z,\LL}}\prod_{v\in\roots_{z,\RR}}(v-u)}
\end{equation}
and the operator $K_z^{(2)}$ acting on $\ell^2(\roots_{z,\LL})$ with kernel
\begin{equation}
	\label{eq:KL111}
	K_z^{(2)}(u,u') =f_2(u)\sum_{v\in\roots_{z,\RR}}\frac{1}{(u-v)(u'-v)f_2(v)},\qquad u,u'\in\roots_{z,\LL},
\end{equation}
where $f_2:\roots_z\to \complexC$ is defined by % is a function on $\roots_s$ defined by
\begin{equation}
\label{eq:def_f}
	f_2(w):=\begin{dcases}
	\frac{(q_{z,\RR}(w))^2 w^{-N-k+2} (w+1)^{-a-N+k+1}e^{tw}}{w+\rho},
	&\quad w\in\roots_{z,\LL},\\
	\frac{(q'_{z,\RR}(w))^2 w^{-N-k+2} (w+1)^{-a-N+k+1}e^{tw}}{w+\rho},
	&\quad w\in\roots_{z,\RR}.
\end{dcases}
\end{equation}

\begin{thm}
\label{prop:one_point_distribution_Fredholm}
Consider the TASEP in $\conf_N(L)$ with the step initial condition
\begin{equation}
	(x_1(0), x_2(0),\cdots, x_N(0)) = (-N+1, -N+2, \cdots, 0)  \in \conf_N(L). %,\qquad  y_j = -N+j, \quad j=1,2,\cdots, N.
\end{equation}
%\begin{equation}
%	Y = (y_1,y_2,\cdots, y_N) \in \conf_N(L),\qquad  y_j = -N+j, \quad j=1,2,\cdots, N.
%\end{equation}
Then for every $k\in \{1,2,\cdots, N\}$, $t>0$, and integer $a$, 
%we have the following Fredholm representation of the one point distribution for the $k$-th particle at time $t$
\begin{equation}
\label{eq:one_point_distribution_Fredholm_step}
	\prob\left(x_k(t)\ge a\right) = \oint \consts(z)\cdot \det\left(I + K_z^{(2)}\right) \ddbar{z},
\end{equation}
where the contour is over any simple closed contour which contains $0$ and lies in the annulus $0<|z|<r_0:=\rho^\rho(1-\rho)^{1-\rho}$. 
%where $r$ is a constant satisfying~\eqref{eq:r_bounds} and

%Then for $k=1,2\cdots,N$ and $t\ge 0$,
%\begin{equation}
%\label{eq:one_point_distribution_Fredholm}
%	\prob\left(x_k\ge a;t\right)= \frac1{2\pi i} \oint_{|s|=r}\const(s) \cdot \det\left(1+K_L\right)\big|_{l^2(\roots_{s,L})}\frac{ds}{s^{1-(k-1)M}},
%\end{equation}
%where $r$ is a constant satisfying \eqref{eq:r_bounds} and %$\const(s)$ is the constant depending on $s$
%\begin{equation}
%\label{eq:def_Z_N}
%	\const(s):=(-1)^{(k-1)(N+1)}\prod_{w\in\roots_{s,R}}\frac{h'_{s,R}(w)w^{-N-k+2}(w+1)^{-a-N+k+1}e^{tw}}{M(w+\rho)}.
%\end{equation}
%One can also write 
%\begin{equation}
%	\label{eq:one_point_distribution_Fredholm_02}
%	\prob\left(x_k\ge a;t\right)=\int_{|s|=r}\const(s)\det\left.\left(1+K_R\right)\right|_{l^2(\roots_{s,R})}\frac{ds}{2\pi is^{1-(k-1)M}},
%\end{equation}
%where 
 %$\det\left.\left(1+K_R\right)\right|_{l^2(\roots_{s,R})}$ is the Fredholm determinant on $l^2(\roots_{s,R})$ with kernel
% \begin{equation*}
 %	K_R(w_1,w_2)=\frac{1}{f(w_1)}\sum_{u\in\roots_{s,L}}\frac{f(u)}{M^2(w_1-u)(w_2-u)},\quad w_1,w_2\in\roots_{s,R}.
 %\end{equation*}
\end{thm}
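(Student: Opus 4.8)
The plan is to run the same argument as in the proof of Theorem~\ref{prop:one_point_distribution_Fredholm2}, which is in fact somewhat cleaner here because no finite-to-one map between $\roots_{z,\LL}$ and $\roots_{z,\RR}$ is needed. I start from the general one-point formula~\eqref{eq:one_point_distribution_Toeplitz} of Proposition~\ref{prop:one_point_distribution_Toeplitz} with $y_j=-N+j$. Then $y_j-j=-N$ for every $j$, so, writing $g(w):=(w+1)^{-N-a+k+1}e^{tw}/(L(w+\rho))$, the $(i,j)$ entry of the determinant is $\sum_{w\in\roots_z}w^{-i}\cdot w^{j}\cdot w^{1-k}g(w)$. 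Applying the Andreief/Cauchy--Binet identity as in~\eqref{eq:aux_2015_11_26_03} (with $\phi_i(w)=w^{-i}$, $\psi_j(w)=w^{j}$ and weight $w^{1-k}g(w)$) and merging the two resulting Vandermonde determinants, the determinant in the integrand equals
\begin{equation*}
\frac{(-1)^{N(N-1)/2}}{N!}\sum_{w_1,\dots,w_N\in\roots_z}\;\prod_{1\le i<j\le N}(w_i-w_j)^2\;\prod_{\ell=1}^{N}w_\ell^{\,2-N-k}g(w_\ell).
\end{equation*}
Next I restrict the contour to the annulus $0<|z|<r_0$, so that $\roots_z=\roots_{z,\LL}\sqcup\roots_{z,\RR}$ with $|\roots_{z,\RR}|=N$. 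The squared Vandermonde kills repeated arguments, so I split the $N$-fold sum according to the subset $S\subseteq\roots_{z,\LL}$ of arguments taken from the left (say $|S|=m$); the remaining $N-m$ arguments then exhaust $\roots_{z,\RR}$ except for a set $T\subseteq\roots_{z,\RR}$ with $|T|=m$, and the $1/N!$ is cancelled by the count of orderings.

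For each $(S,T)$ I would factor $\prod_{i<j}(w_i-w_j)^2$ into left--left, right--right and cross parts and simplify: the cross part becomes $\prod_{u\in S}q_{z,\RR}(u)^2\big/\prod_{u\in S,\,v\in T}(u-v)^2$ via $\prod_{v\in\roots_{z,\RR}}(u-v)=q_{z,\RR}(u)$, and the right--right part becomes $\prod_{\{v,v'\}\subseteq T}(v-v')^2\cdot\prod_{\{v,v'\}\subseteq\roots_{z,\RR}}(v-v')^2\big/\prod_{v\in T}q'_{z,\RR}(v)^2$ via the ``complement'' identity that follows from $q'_{z,\RR}(v)=\prod_{v'\in\roots_{z,\RR}\setminus\{v\}}(v-v')$. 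The two branches of $f_2$ in~\eqref{eq:def_f} are designed precisely so that $q_{z,\RR}(u)^2u^{2-N-k}g(u)=L^{-1}f_2(u)$ for $u\in\roots_{z,\LL}$ and $q'_{z,\RR}(v)^2v^{2-N-k}g(v)=L^{-1}f_2(v)$ for $v\in\roots_{z,\RR}$; after substituting these, and recognizing $\prod_{\{u,u'\}\subseteq S}(u-u')^2\prod_{\{v,v'\}\subseteq T}(v-v')^2\big/\prod_{u\in S,v\in T}(u-v)^2=\big(\det[(u-v)^{-1}]_{u\in S,\,v\in T}\big)^2$ from the Cauchy identity~\eqref{eq:Cauchy_identity}, each $(S,T)$ term equals one fixed $z$-dependent prefactor times $\prod_{u\in S}f_2(u)\cdot\big(\det[(u-v)^{-1}]_{u\in S,\,v\in T}\big)^2\cdot\prod_{v\in T}f_2(v)^{-1}$, with all powers of $L$ cancelling. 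Summing over $T$ of size $m$ by Cauchy--Binet reconstructs $\det[K_z^{(2)}(u,u')]_{u,u'\in S}$ with $K_z^{(2)}$ as in~\eqref{eq:KL111}, and summing over all $S\subseteq\roots_{z,\LL}$ produces $\det(I+K_z^{(2)})$.

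What remains, and what I expect to be the genuine work, is to check that the common prefactor, together with the $(-1)^{(k-1)(N+1)}$ and the $z^{(k-1)L}$ coming from~\eqref{eq:one_point_distribution_Toeplitz}, collapses to $\consts(z)$ of~\eqref{eq:aux_2016_04_09_05}. For this I would use, much as in~\eqref{eq:aux_2015_12_15_01}, that $q_z=q_{z,\LL}q_{z,\RR}$ and $q'_z(w)=L(w+\rho)w^{N-1}(w+1)^{L-N-1}$ give $q'_{z,\RR}(v)=Lv^{N-1}(v+1)^{L-N-1}(v+\rho)/q_{z,\LL}(v)$ on $\roots_{z,\RR}$; that $\prod_{v\in\roots_{z,\RR}}q'_{z,\RR}(v)=(-1)^{N(N-1)/2}\prod_{\{v,v'\}\subseteq\roots_{z,\RR}}(v-v')^2$ (the discriminant of the monic $q_{z,\RR}$); that $\prod_{v\in\roots_{z,\RR}}q_{z,\LL}(v)=\prod_{u\in\roots_{z,\LL}}\prod_{v\in\roots_{z,\RR}}(v-u)$, which is the denominator of $\consts(z)$; and finally the identities $v^N(v+1)^{L-N}=z^L$ on $\roots_{z,\RR}$ and $q_z(0)=q_z(-1)=-z^L$, evaluated through $q_{z,\RR}(0),q_{z,\RR}(-1),q_{z,\LL}(0)$, to convert $\prod_{v\in\roots_{z,\RR}}v^{1-k}$ into $\prod_{u\in\roots_{z,\LL}}(-u)^{k-1}$ up to a power of $z$ that exactly absorbs $z^{(k-1)L}$. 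Tracking all the signs — $(-1)^{(k-1)(N+1)}$, two copies of $(-1)^{N(N-1)/2}$, and the parity from separating the root products — one sees they multiply to $+1$, which gives~\eqref{eq:one_point_distribution_Fredholm_step}. Note that, in contrast with the flat case, no analyticity/branch argument of the kind used around~\eqref{eq:aux_2016_04_08_05} is required, since the squared Vandermonde already has even parity and every intervening quantity is rational in the roots; one only needs the standard check that the integrand is analytic in $0<|z|<r_0$ (where $v+\rho$, $u-v$ and $f_2(v)$ are all nonzero because the roots are simple and $\roots_{z,\LL}\cap\roots_{z,\RR}=\emptyset$), so that the contour may be taken to be any simple closed curve around $0$ inside that annulus.
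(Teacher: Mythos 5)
Your proposal is correct and follows essentially the same route as the paper's proof: Andreief/Cauchy--Binet to get the squared Vandermonde, the particle--hole decomposition over $\roots_{z,\LL}$ and the complementary ``hole'' set in $\roots_{z,\RR}$, the Cauchy identity~\eqref{eq:Cauchy_identity} to build $K_z^{(2)}$, and the identities $q'_{z,\RR}(v)=Lv^{N-1}(v+1)^{L-N-1}(v+\rho)/q_{z,\LL}(v)$ and $\prod_{u}u\prod_{v}v=(-1)^{L+1}z^L$ to reduce the prefactor to $\consts(z)$. The sign bookkeeping you defer does close exactly as you assert (the paper reduces it to $(-1)^{(k-1)(L+1)}z^{(k-1)L}=\prod_{u\in\roots_{z,\LL}}u^{k-1}\prod_{v\in\roots_{z,\RR}}v^{k-1}$), and your observation that no square-root/branch argument is needed in the step case is also in line with the paper.
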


\begin{proof}
The proof is similar to that of Theorem~\ref{prop:one_point_distribution_Fredholm2}, 
but there are two main differences. % in this case. 
One is the structure of Vandermonde determinant in the formula due to a different initial condition. 
The the other is that we do not have the $d-1$ to $1$ correspondence between the $\roots_{z,\LL}$ and $\roots_{z,\RR}$ since $L/N$ is not necessarily an integer. 
We use a duality between ``particles'' and ``holes'' for the step case. 

%due to a more general assumption on $N$ and $L$. So we need to modify the proof accordingly. 

Setting $y_j=x_j(0)=-N+j$, the equation~\eqref{eq:one_point_distribution_Toeplitz} becomes
\begin{equation}
\label{eq:aux_05}
	\prob\left(x_k(t) \ge a\right)
=\frac{C_2}{2\pi \ii} \oint \det\left[\sum_{w\in\roots_z}w^{i+j-2}\tilde f_2(w)\right]_{i,j=1}^N\frac{\dd z}{z^{1-(k-1)L}},
\end{equation}
where 
\begin{equation}
\label{eq:aux_2016_04_09_03}
	C_2 = \frac{(-1)^{(k-1)(N+1)+N(N-1)/2}}{L^N} %M^{-N},
\end{equation}
and
\begin{equation}
\label{eq:aux_2016_04_09_04}
	\tilde f_2(w):=\frac{w^{-N-k+2}(w+1)^{-a-N+k+1}e^{tw}}{w+\rho}, \qquad w\in \roots_z.
\end{equation}
%for all $z\in\roots_s$. 
From the Caucy-Binnet/Andreief formula, we have %\eqref{eq:aux_05} yields 
%We apply the Cauchy-Binet formula in \eqref{eq:aux_05} and then the Vandermonde determinant identity twice
\begin{equation}\label{eq:aux_06}
	 \det\left[\sum_{w\in\roots_z}w^{i+j-2}\tilde f_2(w)\right]_{i,j=1}^N
	=\frac{1}{N!}\sum_{w_1,\cdots,w_N\in\roots_z}\prod_{1\le i<j\le N}(w_i-w_j)^2\prod_{j=1}^N\tilde f_2(w_j).
\end{equation}
By considering how many of the points $w_1, \cdots, w_N$ are in $\roots_{z,\LL}$ or $\roots_{z, \RR}$, we find that~\eqref{eq:aux_06} is the same as 
%by the number of roots which are chosen from $\roots_{s,L}$, the integrand becomes
\begin{equation}
\label{eq:aux_23}
	\sum_{l=0}^{N}\frac{1}{l!(N-l)!}
	\sum_{\substack{w_1,\cdots,w_l\in\roots_{z,\LL} \\ w_{l+1},  \cdots,w_N\in\roots_{z,\RR}}}
%	\sum_{z_1,\cdots,z_l\in\roots_{s,L},\text{ distinct}\atop z_{l+1},\cdots,z_N\in\roots_{s,R},\text{ distinct}}
\prod_{1\le i<j\le N}(w_i-w_j)^2\prod_{j=1}^N \tilde f_2(w_j).
\end{equation}
Note that we may assume in the sum that $w_1, \cdots, w_N$ are all distinct since the summand is zero otherwise. 
Consider a term in the above sum. 
Fix $l$ distinct points $w_1,\cdots,w_l$ in $\roots_{z,\LL}$ and $N-l$ distinct points $w_{l+1},\cdots,w_N$ in $\roots_{z,\RR}$.
Observe that since $|\roots_{z,\RR}|=N$, there are $l$ points $v_1,\cdots,v_l$ in $\roots_{z,\RR}$ so that 
the union of $\{v_1, \cdots, v_l\}\cup \{w_{l+1},\cdots,w_N\}=\roots_{z,\RR}$.
We may think $w_{l+1},\cdots,w_N$  as ``particles'' and  $v_1, \cdots, v_l$ as ``holes'' on the nodes $\roots_{z,\RR}$. 
We now express the sum in~\eqref{eq:aux_23} in terms of $l$ points $w_1, \cdots, w_l$ in $\roots_{z,\LL}$ and $l$ points $v_1, \cdots, v_l$ in $\roots_{z,\RR}$. Note that for $(N-l)!$ permutations of $w_{l+1}, \cdots, w_N$ give rise to the same set of holes, and the $l$ holes can be labeled in $l!$ ways. 
Hence~\eqref{eq:aux_23} becomes
\begin{equation}
\label{eq:aux_23_11}
    \sum_{l=0}^{N}\frac{1}{(l!)^2}
	\sum_{\substack{w_1,\cdots,w_l\in\roots_{z,\LL} \\ v_{1},  \cdots,v_l\in\roots_{z,\RR}}}
	\prod_{1\le i<j\le N}(w_i-w_j)^2\prod_{j=1}^N \tilde f_2(w_j)
%	C_1 \prod_{v\in R_{s,R}} f_1(v) \cdot \sum_{l=0}^{N}\frac{1}{(l!)^2}
%	\sum_{\substack{z_1,\cdots,z_l\in\roots_{s,L} \\ w_{1},  \cdots,w_l\in\roots_{s,R}}}
%	\sum_{z_1,\cdots,z_l\in\roots_{s,L},\text{ distinct}\atop z_{l+1},\cdots,z_N\in\roots_{s,R},\text{ distinct}}
%\prod_{1\le i<j\le N}(z_i-z_j)^2\prod_{j=1}^l \frac{f_1(z_j)}{f_l(w_j)} 
\end{equation}
where we assume that $v_1, \cdots, v_l$ are distinct, and $w_{l+1}, \cdots, w_N$ are any points such that 
$\{v_1, \cdots, v_l\}\cup \{w_{l+1},\cdots,w_N\}=\roots_{z,\RR}$.
Note that 
\begin{align}
\label{eq:aux_2015_12_16_05}
	\prod_{1\le i<j\le N}(w_i-w_j)^2
=\prod_{1\le i<j\le l}(w_i-w_j)^2\prod_{l+1\le i<j\le N}(w_i-w_j)^2\prod_{i=1}^l\prod_{j=l+1}^{N}(w_i-w_j)^2.
\end{align}

%We have 
%\begin{equation}
%\begin{split}
%	\prod_{i=l+1}^N h_{s,R}'(z_i) \prod_{1\le i<j\le l}(w_i-w_j)^2 
%	&= (-1)^{N(N-1)/2} \prod_{i=1}^{l} h'_{s,R}(w_i) \prod_{l+1\le i<j\le N}(z_i-z_j)^2 .
%\end{split}
%\end{equation}
%It is easy to check that %This implies that
Similarly to~\eqref{eq:aux_2015_11_26_06} and~\eqref{eq:aux_2015_11_26_07}, we also have
\begin{equation*}
	\prod_{l+1\le i<j\le N}(w_i-w_j)^2=(-1)^{N(N-1)/2}\frac{\prod_{v\in\roots_{z,\RR}} q'_{z,\RR}(v)}{\prod_{i=1}^{l}(q'_{z,\RR}(v_i))^2}\prod_{1\le i<j\le l}(v_i-v_j)^2
\end{equation*}
and %We also have 
\begin{equation}
\label{eq:aux_2015_12_16_01}
 	\prod_{i=1}^l\prod_{j=l+1}^{N}(w_i-w_j)^2=\frac{\prod_{i=1}^l (q_{z,\RR}(w_i))^2}{\prod_{i=1}^l\prod_{j=1}^l(w_i-v_j)^2}.
\end{equation}
Therefore, \eqref{eq:aux_2015_12_16_05} is equal to %can be expressed as
\begin{equation}		
%	\prod_{1\le i<j\le N}(z_i-z_j)^2	= 
	(-1)^{N(N-1)/2}
	\frac{\prod_{1\le i<j\le l}(w_i-w_j)^2 (v_i-v_j)^2}{\prod_{i=1}^l\prod_{j=1}^l(w_i-v_j)^2}
	%\frac{\prod_{i=l+1}^N h'_{s,R}(z_i)}{\prod_{i=1}^{l}h'_{s,R}(w_i)} 
%	\prod_{i=l+1}^N h'_{s,R}(z_i)
	\prod_{i=1}^{l} \frac{q_{z,\RR}^2(w_i)}{(q'_{z,\RR}(v_i))^2} \prod_{v\in \roots_{z,\RR}} q'_{z,\RR}(v). 
\end{equation}

Note that $f_2(w)$ defined in~\eqref{eq:def_f} is given by $f_2(w)=q^2_{z,\RR}(w)\tilde f_2(w)$ for $w\in \roots_{\zz,\LL}$ and $f_2(w)=q'_{z,\RR}(w)^2\tilde f_2(w)$ for $w\in \roots_{z,\RR}$. We have
\begin{multline}
\label{eq:qux_2016_04_09_01}
	\prod_{1\le i<j\le N}(w_i-w_j)^2\prod_{j=1}^N \tilde f_2(w_j)\\
	= (-1)^{N(N-1)/2} \frac{\prod_{1\le i<j\le l}(w_i-w_j)^2 (v_i-v_j)^2}{\prod_{i=1}^l\prod_{j=1}^l(w_i-v_j)^2}\prod_{i=1}^l\frac{f_2(w_i)}{f_2(v_i)} \prod_{v\in\roots_{z,\RR}}(\tilde f_2(v)q'_{z,\RR}(v)).
\end{multline}

We fix $w_1,\cdots,w_i$ and take the sum over all possible $v_1, \cdots, v_l$ in $\roots_{z,\RR}$. 
Using the Cauchy determinant identity~\eqref{eq:Cauchy_identity}, we find
\begin{equation}
\label{eq:aux_08}
\begin{split}
	&\sum_{v_{1},  \cdots,v_l\in\roots_{z,\RR}}
	\frac{\prod_{1\le i<j\le l}(w_i-w_j)^2 (v_i-v_j)^2}{\prod_{i=1}^l\prod_{j=1}^l(w_i-v_j)^2}
	\prod_{i=1}^{l} \frac{f_2(w_i)}{f_2(v_i)} \\
	&= \sum_{v_{1},  \cdots,v_l\in\roots_{z,\RR}}
	\det\left[ \frac{f_2(w_i)}{w_i-v_j} \right]_{i,j=1}^l
	\det\left[ \frac{f_2(v_i)^{-1}}{w_j-v_i} \right]_{i,j=1}^l \\
	&= l!\cdot 
	\det\left[ \sum_{v \in\roots_{z,\RR}} \frac{f_2(w_i) f_2(v)^{-1} }{(w_i-v)(w_j-v)} \right]_{i,j=1}^l
	= l!\cdot \det\left[K_z^{(2)}(w_i, w_j) \right]_{i,j=1}^l.
\end{split}
\end{equation}
Plugging~\eqref{eq:qux_2016_04_09_01} and~\eqref{eq:aux_08} in~\eqref{eq:aux_23_11} and then checking~\eqref{eq:aux_05}, we obtain
\begin{equation}
\prob\left(x_k\ge a;t\right) = \frac{C_2(-1)^{N(N-1)/2}}{2\pi \ii} \oint_{|z|=r}\prod_{v\in\roots_{z,\RR}}(\tilde f_2(v) q'_{z,\RR}(v))\cdot \det(I+K_z^{(2)})\frac{\dd z}{z^{1-(k-1)L}}.
\end{equation}
Comparing this and~\eqref{eq:one_point_distribution_Fredholm_step}, it remains to show
\begin{equation}
\label{eq:aux_2016_04_09_02}
C_2(-1)^{N(N-1)/2} z^{(k-1)L}\prod_{v\in\roots_{z,\RR}}(\tilde f_2(v) q'_{z,\RR}(v)) =\consts(z).
\end{equation}
This equation, after inserting~\eqref{eq:aux_2016_04_09_03},~\eqref{eq:aux_2016_04_09_04}, and~\eqref{eq:aux_2016_04_09_05}, is equivalent to
\begin{equation}
\begin{split}
	&(-1)^{(k-1)(L+1)}z^{(k-1)L} \prod_{v\in\roots_{z,\RR}}\left(q'_{z,\RR}(v)\prod_{u\in \roots_{z,\LL}}(v-u)\right) \\
	&=\prod_{u\in\roots_{z,\LL} } u^{k-1} \prod_{v\in\roots_{z,\RR} }L(v+\rho)v^{N+k-2}(v+1)^{L-N-1}.
\end{split}
\end{equation}
Using~\eqref{eq:aux_2015_12_15_01}, this equation is further reduced to
\begin{equation}
(-1)^{(k-1)(L+1)}z^{(k-1)L} =\prod_{u\in\roots_{z,\LL} } u^{k-1} \prod_{v\in\roots_{z,\RR} } v^{k-1},
\end{equation}
which follows easily by noting that $\roots_{z,\LL}\cup\roots_{z,\RR}$ is the set of the roots of $w^N(w+1)^{L-N}-z^L=0$, and hence $w^N(w+1)^{L-N}-z^L=\prod_{u\in\roots_{z,\LL} } (w-u) \prod_{v\in\roots_{z,\RR} } (u-v)$.

\end{proof}

\begin{rmk}
If we replace $k$ by $N-k$ and let $L$ and $N$ go to infinity (proportionally, for a technical reason), 
%{\Cr (Do we need to take $N\to \infty$? From the model, it should be enough to take $L\to \infty$ and set $k=N-k$ where $N$ is arbitrary.)}
the formula~\eqref{eq:one_point_distribution_Fredholm_step} becomes the one point distribution for TASEP on $\intZ$ with step initial condition. 
Denoting by $\tilde x_k$ the $k$-th particle from the right, we find 
\begin{equation}
\label{eq:aux_2016_05_04_02}
	\prob_{\intZ, step}\left(\tilde x_k(t) \ge a\right) = \det\big(I + K^{(2)}\big)
\end{equation}
where
$K^{(2)}$ is an operator on $L^2(\Gamma_{-1} ,\dd u/2\pi \ii)$ with kernel
\begin{equation}\label{eq:K2limstq}
	K^{(2)}(u,u') =\oint_{\Gamma_0} \frac{u^{k}(u+1)^{-a-k+1}e^{tu}}{v^{k}(v+1)^{-a-k+1}e^{tv}(u-v)(u'-v)}\ddbarr{v}
\end{equation}
The contour $\Gamma_0$ %in~\eqref{eq:K2limstq}
is any simple closed contour with the point $0$ inside but the point $-1$ outside, and $\Gamma_{-1}$ is any simple closed contour with  $-1$ inside and  $0$ outside. 
We assume that $\Gamma_0$ and $\Gamma_{-1}$ do not intersect. 
Writing $\frac{(v+1)^a}{(u+1)^a}= (v-u)\sum_{n<a} \frac{(v+1)^n}{(u+1)^{n+1}}$ and using the identity $\det(I+AB) =\det(I+BA)$, we can see that this is equivalent to the formula, for example, in Proposition 3.4 in \cite{Borodin-Ferrari08} with $a(t)=t$ and $b(t)=0$. 
%{\Cr For which parameters in Proposition 3.4 in \cite{Borodin-Ferrari08}? The operators are on different spaces, on $\Gamma_0$ in our case, and on a discrete set in [BF]. How to match them? Explain.}
 \end{rmk}

\section{Proof of Theorems~\ref{thm:limit_one_point_distribution_flat} and~\ref{thm:limit_one_point_distribution_step}}
\label{sec:asymptotic_analysis}

In this section, we prove Theorems~\ref{thm:limit_one_point_distribution_flat} and~\ref{thm:limit_one_point_distribution_step} by computing the limits of the formulas~\eqref{eq:one_point_distribution_Fredholm2} and~\eqref{eq:one_point_distribution_Fredholm_step}
at the relaxation scale. 
%Since we will use $z$ and $r$ later in the limit function, we u
These formulas are of form
\begin{equation}
\label{eq:aux_2016_04_14_01}
\oint %_{|\zz| = \rr} 
\const^{(i)}(\zz) \cdot \det \left(I + K_{\zz}^{(i)}\right) \ddbar{\zz}, \quad i=1,2,
\end{equation}
where we used the variable $\zz$ instead of $z$, 
and the contour is  any simple closed contour in the annulus $0<|\zz|<\rr_0:=\rho^\rho(1-\rho)^{1-\rho}$ which contains the point $0$ inside. 
It turned out that we need to scale the contour such a way that $|\zz|\to \rr_0$ at a particular rate in order to make
both terms $\const^{(i)}(\zz)$ and $\det \big(I +K_{\zz}^{(i)} \big)$ converge.
The correct scaling turned out to be the following: we set 
\begin{equation}
\label{eq:aux_2016_04_14_03}
	\zz^L = (-1)^N \rr^L_0 z. 
\end{equation}
The integral involves the sets~\eqref{eq:rootszlrt}:
\begin{equation}
\begin{split}
	&\roots_{\zz} = \{w\in\complexC : w^N(w+1)^{L-N}-\zz^L=0\}, \\
	&\roots_{\zz,\LL} =\roots_{\zz}\cap \{ w\in \complexC:  \Re(w)<-\rho \}, \\
	&\roots_{\zz,\RR} =\roots_{\zz}\cap\{w\in \complexC : \Re(w)>-\rho\},
\end{split}
\end{equation}
and these sets are invariant under the change $\zz$ to $\zz e^{\ii 2\pi/L}$.
From this we find that the integrand in~\eqref{eq:aux_2016_04_14_01} is invariant under the same change, and hence~\eqref{eq:aux_2016_04_14_01} is equal to
\begin{equation}
\label{eq:aux_2016_04_14_02}
\oint \const^{(i)}(\zz) \cdot \det \left(I + K_{\zz}^{(i)}\right) \ddbar{z}, \quad i=1,2,
\end{equation}
where for $z$, $\zz=\zz(z)$ is any number determined by~\eqref{eq:aux_2016_04_14_03}.
The contour is  any simple closed contour which contains $0$ and lies in the annulus $0<|z|<1$.
We compute the limits of $\const^{(i)}(\zz)$ and $\det \big(I +K_{\zz}^{(i)} \big)$ under the condition~\eqref{eq:aux_2016_04_14_03} for each fixed $z$ satisfying $0<|z|<1$ where other parameters are adjusted according to the flat and step initial conditions.  

%{\Cr There should be a guideline how the overall asymptotic analysis works. Are you doing asymptotic analysis for each fixed $z$? Does $z$ scale as well? Does the kernel concentrate on some part of the original nodes? Do nodes scale? }

%Before we move on to the asymptotic analysis we need to comment on some notations. 
%In the flat case, we set $n=N$ so that $N = N_n =n$ and $L = L_n =\rho^{-1} n$. 
%We also set $\rho_{n} = \rho$. 
%Then these notations are consistent with that in the step initial condition, where we assume $(N_n, L_n)$ are chosen from $\band(c_1,c_2) = \{(N, L): c_1N\le L \le c_2N\}$ and $\rho_n = N_n/L_n$. 
%Throughout this section, we take $n$ as the asymptotic parameter and take the limits as $n\to\infty$. 
%On the other hand, unless there is a confusion we do not include the subscript $n$ in the notations. For example, $N$, $L$ and $\rho$ actually stand for $N_n$, $L_n$ and $\rho_n$.
In order to make the notations simpler, we will suppress the subscript $n$ in $N_n, L_n, \rho_n$ for the step case and write $N, L, \rho$ instead, unless there is any confusion. 
Now we consider the asymptotics of~\eqref{eq:aux_2016_04_14_02} as $L, N\to\infty$ (or equivalently, $n\to\infty$ in the step case). 
Some parts in the formula are same for $i=1$ (flat) and $i=2$ (step) and we will consider the asymptotics of these parts first and then consider the remaining parts separately for $i=1,2$.
Among the large parameters $L, N$ (and $n$), we use $N$ to express the error terms.

%However, similar argument is not valid for Theorem~\ref{thm:limit_one_point_distribution_step} and Theorem~\ref{thm:limit_current_step} due to different angles to see the periodicity. More explicitly, in Theorem~\ref{thm:limit_one_point_distribution_step} the observer moves along a tagged particle so the time needed to watch a full period is $\rho^{-2}N =\rho^{-1} L$, while in Theorem~\ref{thm:limit_current_step} the observer is sitting at a fixed location and the time to experience a full period is $|1-2\rho|^{-1} L$. 

%Note that we have different scaling parameters for the flat and step initial conditions. More explicitly, the density $\rho$, the location $a$ and the time $t$ are all different in these two situations. However, the formula~\eqref{eq:aux_2016_04_14_02} for $i=1$ and $i=2$ still share some basic structures which can be handled in the same way. We will do the asymptotics of these parts first, and then consider the remaining parts separately for different $i$.

%The formulas~\eqref{eq:aux_2016_04_14_01} involve the sets $\roots_{\zz, \LL}$ and $\roots_{\zz, \RR}$.
We first consider  the sets $\roots_{\zz, \LL}$ and $\roots_{\zz, \RR}$  in the large $N$ limit. 
Under the scale~\eqref{eq:aux_2016_04_14_03}, we have  $|\zz|\to \rr_0$, and then these two sets become close at the point $w=-\rho$ (see Figure~\ref{fig:Sigma_Roots2} when $|\zz|=\rr_0$).
As Figure~\ref{fig:Sigma_Roots2} suggests, 
the spacings between the neighboring points of $\roots_{\zz, \LL}$ and $\roots_{\zz, \RR}$ are of order $O(N^{-1})$ (since there are $O(N)$ points on a contour of finite size), but the spacings of the points near the special point $-\rho$ are larger.  
Indeed it is possible to check that the spacings near $-\rho$ are of order $O(N^{-1/2})$. 
We first show that, assuming~\eqref{eq:aux_2016_04_14_03},  %We show that 
\begin{equation}
	\frac{N^{1/2}}{\rho\sqrt{1-\rho}} \left( \roots_{\zz, \LL} + \rho \right) \approx \inodes_{z,\LL}
\end{equation}
for the points near $-\rho$, where the set $\inodes_{z,\LL}$ is defined in~\eqref{eq:def_inf_nodes_L}:
\begin{equation}
	\inodes_{z,\LL} = \{\xi: e^{-\xi^2/2}=z, \Re(\xi)<0\}.
\end{equation}
There is also a similar statement for $\roots_{\zz, \RR}$. 
The precise statement is given in the following lemma whose  proof is postponed to Section \ref{sec:proof_lemma0}.
% but we do not need it in our analysis. 

% Recall the sets of nodes  $\infnodes_{\mathfrak{s},R}:=\{\xi;e^{-\xi^2}=\mathfrak{s},\Re\xi>0\}$ (see \eqref{eq:def_infnodes}) and $R_{s,R}= R_s\cap \Sigma_{\RR}$ (see ...) \marginpar{$R_{s,R}$ ref}
% The following lemma shows that the nodes $R_{s,R}$ near the point $-\rho$ converges, after the scaling by factor $N^{1/2}$, converges to $\infnodes_{\mathfrak{s},R}$. 
% We also obtain an error bound. 
% These limiting structure of nodes and the error bound will be used in the asymptotic analysis later in this section.
% The proof of this lemma is postponed to Section \ref{sec:proof_lemma0}.
 
 \begin{lm}
 \label{lm:asymptotics_nodes}
 Let $z$ be a fixed number satisfying  $0<|z|<1$ and 
 let $\epsilon$ be a real constant satisfying $0<\epsilon<1/2$. 
 Set $\zz^L=(-1)^N\rr^L_0 z$ where $\rr_0=\rho^{\rho}(1-\rho)^{1-\rho}$.
 Define the map $\nodesmapping_{N,\LL}$ from $\roots_{\zz,\LL}\cap \{w: |w+\rho|\le \rho\sqrt{1-\rho}N^{\epsilon/4-1/2}\}$ to $\inodes_{z,\LL}$ by
 \begin{equation}\label{eq:defofnodemapsJL}
 	\nodesmapping_{N,\LL}(w)=\xi, \quad \mbox{ where }\xi\in\inodes_{z,\LL} \mbox{ and } \left|\xi-\frac{N^{1/2}\left(w+\rho\right)}{\rho\sqrt{1-\rho}}\right| \le N^{3\epsilon/4-1/2}\log N.
 \end{equation}
 Then for large enough $N$ we have:
  \begin{enumerate}[(a)]
 \item $\nodesmapping_{N,\LL}$ is well-defined.
 \item $\nodesmapping_{N,\LL}$ is injective.
 \item The following relations hold:
 \begin{equation}
 	\inodes_{z,\LL}^{(N^{\epsilon/4}-1)}\subseteq I(\nodesmapping_{N,\LL})\subseteq\inodes_{z,\LL}^{(N^{\epsilon/4}+1)},
 \end{equation}
 where $I(\nodesmapping_{N,\LL}) := \nodesmapping_{N,\LL} (\roots_{\zz,\LL}\cap \{w:|w+\rho|\le \rho\sqrt{1-\rho}N^{\epsilon/4-1/2}\})$, the image of the map $\nodesmapping_{N,\LL}$, and $\inodes_{z,\LL}^{(c)}:=\inodes_{z,\LL}\cap\{\xi:|\xi|\le c\}$ for all $c>0$. 
 \end{enumerate}
 If we define the mapping $\nodesmapping_{N,\RR}$ in the same way but replace $\roots_{\zz,\LL}$ and $\inodes_{z,\LL}$ by $\roots_{\zz,\RR}$ and $\inodes_{z,\RR}$ respectively, the same results hold for $\nodesmapping_{N,\RR}$.
 \end{lm}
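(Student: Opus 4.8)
The plan is to linearize, near its critical point $w=-\rho$, the equation $w^N(w+1)^{L-N}=\zz^L$ that defines $\roots_{\zz}$, and to match it after the indicated rescaling with the equation $e^{-\xi^2/2}=z$ that defines $\inodes_{z,\LL}$. Write $\mu:=\min(\rho,1-\rho)$. Since $w^N(w+1)^{L-N}$ is nonvanishing on $\{|w+\rho|<\mu/2\}$, the function $G(w):=N\log(w/(-\rho))+(L-N)\log((w+1)/(1-\rho))$ (principal branches, legitimate since the arguments lie near $1$) is analytic there and satisfies $e^{G(w)}=w^N(w+1)^{L-N}/\big((-1)^N\rr_0^L\big)$. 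A short computation gives $G(-\rho)=G'(-\rho)=0$, $G''(-\rho)=-L/(\rho(1-\rho))$, and $|G^{(k)}(-\rho)|/k!\le L\mu^{-k}$ for every $k\ge1$; hence, after the substitution $w+\rho=\rho\sqrt{1-\rho}\,\xi/\sqrt N$ (i.e.\ $\xi=N^{1/2}(w+\rho)/(\rho\sqrt{1-\rho})$), using $N=\rho L$, a Taylor expansion yields
\begin{equation*}
 G(w)=-\tfrac12\xi^2+R_L(\xi),\qquad |R_L(\xi)|\le C_\rho\frac{|\xi|^3}{\sqrt N},\qquad |R_L'(\xi)|\le C_\rho\frac{|\xi|^2}{\sqrt N},
\end{equation*}
with $R_L$ analytic, valid for $|\xi|\le 2N^{\epsilon/4}$ and all large $N$. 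Since $\zz^L=(-1)^N\rr_0^L z$, the condition $w\in\roots_{\zz}$ is equivalent to $e^{G(w)}=z$, so, writing $F_L(\xi):=e^{-\xi^2/2+R_L(\xi)}-z$ and $F_\infty(\xi):=e^{-\xi^2/2}-z$, the rescaling maps $\{w\in\roots_{\zz}:|w+\rho|\le\rho\sqrt{1-\rho}N^{\epsilon/4-1/2}\}$ bijectively onto $\{\xi:|\xi|\le N^{\epsilon/4},\ F_L(\xi)=0\}$; because the scaling factor is a positive real it restricts to a bijection of $\roots_{\zz,\LL}$ with the zeros of $F_L$ satisfying $\Re\xi<0$, while $\inodes_{z,\LL}=\{\xi:\Re\xi<0,\ F_\infty(\xi)=0\}$. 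The lemma is thereby reduced to comparing the zero sets of $F_L$ and $F_\infty$.

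I would then isolate two quantitative inputs. Put $\kappa_z:=\sqrt{-2\log|z|}>0$; since $|z|<1$, every zero $\xi$ of $F_L$ or $F_\infty$ of modulus $\le 2N^{\epsilon/4}$ has $\Re(\xi^2)=-2\log|z|-2\Re R_L(\xi)\ge\kappa_z^2-o(1)$, hence $|\xi|\ge\kappa_z/2$, and if moreover $\Re\xi<0$ then $\arg\xi\in(3\pi/4,5\pi/4)$. \textbf{(Separation.)} Any two distinct such zeros of $F_L$ (or of $F_\infty$) are at distance $\ge cN^{-\epsilon/4}$: writing the zero equation as $\xi^2=-2\log z-4\pi ik+2R_L(\xi)$ (with $R_L\equiv0$ for $F_\infty$), two zeros sharing the same $k$ would give $(\xi_1-\xi_2)(\xi_1+\xi_2)=2(R_L(\xi_1)-R_L(\xi_2))$, impossible as the right side is $o(|\xi_1-\xi_2|)$ while $|\xi_1+\xi_2|\ge\kappa_z-o(1)$; two zeros with distinct $k$ satisfy $|\xi_1^2-\xi_2^2|\ge2\pi-o(1)$, hence $|\xi_1-\xi_2|\ge(2\pi-o(1))/(|\xi_1|+|\xi_2|)\ge cN^{-\epsilon/4}$. \textbf{(Matching.)} If $\xi_0$ is a zero of $F_L$ in range, then $\xi:=\sqrt{\xi_0^2-2R_L(\xi_0)}$ (square root close to $\xi_0$, well defined since $|\xi_0^2|\ge\kappa_z^2/4$) is a zero of $F_\infty$, and $|\xi-\xi_0|=2|R_L(\xi_0)|/|\xi+\xi_0|\le 4C_\rho|\xi_0|^3/(\kappa_z\sqrt N)\le CN^{3\epsilon/4-1/2}$; conversely, if $\xi$ is a zero of $F_\infty$ in range, the map $\zeta\mapsto\sqrt{\xi^2+2R_L(\zeta)}$ is a contraction on a small fixed disk about $\xi$ (Lipschitz constant $\le\sup|R_L'|\cdot 2/\kappa_z=o(1)$) whose fixed point $\xi_0$ is a zero of $F_L$ with $|\xi_0-\xi|\le CN^{3\epsilon/4-1/2}$. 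In either direction the displacement is $\le CN^{3\epsilon/4-1/2}<N^{3\epsilon/4-1/2}\log N$ for large $N$, and since $|\xi-\xi_0|\to0$ while both moduli are $\ge\kappa_z/2$, $\Re\xi$ and $\Re\xi_0$ have the same sign.

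Given these, parts (a)--(c) follow at once, the decisive point being that $3\epsilon/4-1/2<-\epsilon/4$ because $\epsilon<1/2$, so the displacement scale $N^{3\epsilon/4-1/2}\log N$ lies strictly below the separation scale $cN^{-\epsilon/4}$. For \emph{(a)}: to each $w\in\roots_{\zz,\LL}$ in range (equivalently a zero $\xi_0$ of $F_L$ with $|\xi_0|\le N^{\epsilon/4}$, $\Re\xi_0<0$) the matching estimate assigns a zero $\xi\in\inodes_{z,\LL}$ of $F_\infty$ with $|\xi-\xi_0|\le N^{3\epsilon/4-1/2}\log N$, and by separation it is the only element of $\inodes_{z,\LL}$ within that distance of $\xi_0$, so $\nodesmapping_{N,\LL}$ is well defined. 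For \emph{(b)}: two preimages $w_1\ne w_2$ of the same $\xi$ give distinct zeros $\xi_0^{(1)},\xi_0^{(2)}$ of $F_L$, each within $N^{3\epsilon/4-1/2}\log N$ of $\xi$, hence within $2N^{3\epsilon/4-1/2}\log N<cN^{-\epsilon/4}$ of each other, contradicting separation; so $\nodesmapping_{N,\LL}$ is injective. For \emph{(c)}: if $\xi=\nodesmapping_{N,\LL}(w)$ then $|\xi|\le|\xi_0|+N^{3\epsilon/4-1/2}\log N\le N^{\epsilon/4}+1$, which gives $I(\nodesmapping_{N,\LL})\subseteq\inodes_{z,\LL}^{(N^{\epsilon/4}+1)}$; and if $\xi\in\inodes_{z,\LL}$ with $|\xi|\le N^{\epsilon/4}-1$, the matching estimate (second direction) produces a zero $\xi_0$ of $F_L$ with $|\xi_0-\xi|\le N^{3\epsilon/4-1/2}\log N$, $\Re\xi_0<0$, $|\xi_0|\le N^{\epsilon/4}$, corresponding to a $w\in\roots_{\zz,\LL}$ in range with $\nodesmapping_{N,\LL}(w)=\xi$ by the uniqueness from (a), which gives $\inodes_{z,\LL}^{(N^{\epsilon/4}-1)}\subseteq I(\nodesmapping_{N,\LL})$. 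The assertions for $\nodesmapping_{N,\RR}$ are proved word for word with $\Re\xi>0$ throughout and $\inodes_{z,\RR}$ (contained in the sector $\arg\xi\in(-\pi/4,\pi/4)$) replacing $\inodes_{z,\LL}$, using $\Re w>-\rho\iff\Re\xi>0$.

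The main obstacle is the uniformity packaged into the first paragraph: one has to keep $R_L$ and $R_L'$ genuinely negligible over the whole range $|\xi|\le 2N^{\epsilon/4}$, whose elements grow with $N$ — this is what makes the matching displacement $O(N^{3\epsilon/4-1/2})$ rather than something worse — while at the same time guaranteeing that this displacement stays below the $\asymp N^{-\epsilon/4}$ spacing of the two discrete sets. The binding requirement is exactly $\epsilon<1/2$, which is the hypothesis of the lemma; beyond this, the remaining steps are elementary estimates with the explicit functions $F_L$ and $F_\infty$.
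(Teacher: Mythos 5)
Your argument is correct and follows essentially the same route as the paper: rescale at $w=-\rho$, Taylor-expand the Bethe root equation into $e^{-\xi^2/2+O(|\xi|^3N^{-1/2})}=z$, and compare the two discrete zero sets branch by branch, the decisive point in both write-ups being that the displacement scale $N^{3\epsilon/4-1/2}\log N$ sits below the $\asymp N^{-\epsilon/4}$ spacing because $\epsilon<1/2$. The only (immaterial) variation is that to produce a finite-$N$ root near a given $\xi\in\inodes_{z,\LL}$ in part (c) you run a contraction/fixed-point argument, where the paper applies Rouch\'e's theorem on the disk $|\eta-\xi|\le N^{3\epsilon/4-1/2}\log N$; the quantitative content is identical.
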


In the next lemma, we consider the products 
\begin{equation}
\begin{split}
	%&q_{\zz,\LL}(w)= 
	\prod_{u \in \roots_{\zz,\LL}}(w-u), \qquad 
	%q_{\zz,\RR}(w) =
	\prod_{v \in \roots_{\zz,\RR}}(w-v), \qquad
	\prod_{v\in\roots_{\zz,\RR}}\prod_{u\in\roots_{\zz,\LL}}(v-u)
\end{split}
\end{equation}
in the large $N$ limit. 
These factors appear in both  $\const^{(i)}(\zz)$ and $K_{\zz}^{(i)}$ (see~\eqref{eq:def_original_hL_hR}). 
The limits are given in terms of the functions
\begin{equation}
\label{eq:def_h_L}
\hftn_\LL(\xi,z)  := -\frac{1}{\sqrt{2\pi}}
					 \int_{-\infty}^{-\xi}
					 \polylog_{1/2}\left(z e^{(\xi^2-y^2)/2}\right) \dd y,
					 \qquad  \Re(\xi) \ge 0,			
\end{equation} 
and
\begin{equation}
\label{eq:def_h_R}
\hftn_\RR(\xi,z)  := -\frac{1}{\sqrt{2\pi}}
					 \int_{-\infty}^{\xi}
					 \polylog_{1/2}\left(z e^{(\xi^2-y^2)/2}\right) \dd y,
					 \qquad  \Re(\xi) \le 0,			
\end{equation} 
where the integral contour from $-\infty$ to $\mp\xi$ is taken to be $(-\infty, \Re(\mp\xi)] \cup [\Re(\mp\xi),\mp\xi]$. Note that along such a contour $|z e^{(\xi^2-y^2)/2}| <1$ and hence $\polylog_{1/2}( z e^{(\xi^2-y^2)/2} )$ is well defined. 
Furthermore, note $\polylog_{1/2}(\omega) \sim \omega$ as $\omega \to 0$. Thus the integrals are well defined.
We remark that 
\begin{equation}
\label{eq:def_h_RminwithL}
	\hftn_\LL(\xi, z) =\hftn_\RR(-\xi,z), \qquad 	\text{for $\Re(\xi)>0$.}	
\end{equation} 
We also note that  for $\xi \in \inodes_{z,\LL}$, we have $e^{-\xi^2/2}=z$, and hence
$\hftn_\RR(\xi,z) = -\frac{1}{\sqrt{2\pi}} \int_{-\infty}^{\xi} \polylog_{1/2}(e^{-y^2/2}) \dd y$
for for such $\xi$, which is the integral in the definition of $\Psi_z(\xi;x,\tau)$ in~\eqref{eq:def_Psi}
and also $\Phi_z(\xi;x,\tau)$ in~\eqref{eq:def_Phi}, up to a constant factor.

%{\Cr Change the statement to $\sqrt{q_{\zz,\LL}(w)}$?}

\begin{lm}
\label{lm:asymptotics_notations}
Suppose $\zz$, $z$ and $\epsilon$ satisfy the conditions in Lemma \ref{lm:asymptotics_nodes}.
\begin{enumerate}[(a)]
\item
For complex number $\xi$, set $w_N=w_N(\xi)=-\rho+\rho\sqrt{1-\rho}\xi N^{-1/2}$. 
%If $\xi\in \complexC$ and satisfies $\Re\xi\ge 0$, then 
%For any fixed $\xi\in\complexC$ which satisfies $\Re\xi\ge 0$, we have
Then 
\begin{equation}
\label{eq:limit_h_L}
\prod_{u\in\roots_{\zz,\LL}}\sqrt{w_N-u} = (\sqrt{w_N+1})^{L-N}e^{\frac12 \hftn_\LL(\xi,z)}(1+O(N^{\epsilon-1/2}))
\end{equation}
%\begin{equation}
%\label{eq:limit_h_L}
%	\sum_{u \in \roots_{\zz,\LL}} \log(w_N -u) =(L-N) \log(w_N+1) 
%	+ \hftn_{\LL}(\xi,z) + O(N^{\epsilon -1/2})
	%\frac{q_{\zz,\LL}(w_N)}{(1+w_N)^{L-N}} 
	%\frac{h_{s,L}(-\rho+\rho\sqrt{2(1-\rho)}\xi N^{-1/2})}{(1-\rho+\rho\sqrt{2(1-\rho)}\xi N^{-1/2})^{M-N}}
	%=e^{\hftn_{\LL}(\xi,z)}\left(1+O(N^{\epsilon-1/2})\right),
%\end{equation}
for each fixed $\xi\in \complexC$ satisfying $\Re\xi\ge  0$, and 
%as $N\to\infty$, where $\hftn_L(\mathfrak{s},\xi)$ is defined in \eqref{eq:def_hL}. 
%Similarly, if  $\xi\in \complexC$ and satisfies $\Re\xi\le 0$, then  
% for any fixed $\xi\in\complexC$ which satisfies $\Re\xi\le 0$, we have
\begin{equation}
\label{eq:limit_h_R}
\prod_{v\in\roots_{\zz,\RR}}\sqrt{v-w_N} = (\sqrt{-w_N})^{N}e^{\frac12 \hftn_\RR(\xi,z)}(1+O(N^{\epsilon-1/2}))
\end{equation}
%\begin{equation}
%\label{eq:limit_h_R}
%	\sum_{v\in\roots_{\zz,\RR}} \log(v -w_N)  = N \log(-w_N)
%	 + \hftn_{\RR}(\xi,z)+ O(N^{\epsilon-1/2})
	%\frac{q_{\zz,\RR}(w_N)}{(w_N)^{N}} 
%	\frac{h_{s,R}(-\rho+\rho\sqrt{2(1-\rho)}\xi N^{-1/2})}{(-\rho+\rho\sqrt{2(1-\rho)}\xi N^{-1/2})^{N}}
%=e^{\hftn_{\RR}(\xi,z)}(1+O(N^{\epsilon-1/2})).
%\end{equation}
for each fixed $\xi\in \complexC$ satisfying $\Re\xi\le  0$.
%as $N\to\infty$, where $\hftn_R(\mathfrak{s},\xi)$ is defined in \eqref{eq:def_hR}. 
Moreover, for every $w\in\complexC$ which is an $O(1)$-distance away from $\Sigma_{\LL}\cup\Sigma_{\RR}$ for all $N$ (note that the contours depend on $N$), %there is a constant $C=C(w)>0$ such that 
%\begin{equation}
%\label{eq:limit_h_exp}
%\begin{split}
%	&\sum_{u \in \roots_{\zz,\LL}} \log(w -u)  = (L-N) \log(w+1) 
%		+ O(e^{-CN}),\qquad \mbox{ if } \Re(w) >-1, \\
%	&\sum_{v\in\roots_{\zz,\RR}} \log(v -w) = N \log(-w)
%		+ O(e^{-CN}),\qquad \mbox{ if } \Re(w) <0,
%\end{split}
%\end{equation}
\begin{equation}
\label{eq:limit_h_exp}
\begin{split}
	&\prod_{u \in \roots_{\zz,\LL}} \sqrt{w -u}  = (\sqrt{w+1})^{L-N} (1+O(N^{\epsilon-1/2})),\qquad \mbox{ if } \Re(w) >-\rho, \\
	&\prod_{v \in \roots_{\zz,\RR}} \sqrt{v -w}  = (\sqrt{-w})^{N} (1+O(N^{\epsilon-1/2})),\qquad \mbox{ if } \Re(w) <-\rho,
\end{split}
\end{equation}
as $N\to\infty$. 
%The constant $C$ can be taken uniformly for $w$ in every compact subset of $\complexC$ minus $\Sigma_{\LL}\cup\Sigma_{\RR}$.
%, here $C$ is a positive constant.

\item
Fix $c>0$. 
The estimate  \eqref{eq:limit_h_L} in (a) holds uniformly 
for $|\xi|\le N^{\epsilon/4}$ satisfying $\Re \xi\ge c$ after we change the error term to 
\begin{equation}
\label{eq:limit_h_L_large}  %\label{eq:limit_h_R_large}
	O(N^{\epsilon-1/2}\log N).
\end{equation}
The estimate \eqref{eq:limit_h_R} also holds uniformly for 
$|\xi|\le N^{\epsilon/4}$ satisfying $\Re \xi\le -c$ after the same change of the error term.

\item For large enough $N$, we have 
\begin{equation}
\label{eq:limit_h}
\frac{\prod_{u\in\roots_{\zz,\LL}}(\sqrt{-u})^{N}\prod_{v\in\roots_{\zz,\RR}}(\sqrt{v+1})^{L-N}}{\prod_{v\in\roots_{\zz,\RR}}\prod_{u\in\roots_{\zz,\LL}}\sqrt{v-u}}=e^{B(z)}(1+O(N^{\epsilon-1/2}))
\end{equation}
%\begin{equation}
%\label{eq:limit_h}
%N \sum_{u\in\roots_{\zz,\LL}} \log(-u) +(L-N)\sum_{v\in\roots_{\zz,\RR}} \log(v+1)
%- \sum_{v\in\roots_{\zz,\RR}} \sum_{u\in\roots_{\zz,\LL}} \log(v-u) =2B(z) +O(N^{\epsilon-1/2})
%\frac{\prod_{u\in\roots_{\zz,\LL}}(-u)^{N}\prod_{v\in\roots_{\zz,\RR}}(v+1)^{L-N}}{\prod_{v\in \roots_{\zz,\RR},u\in\roots_{\zz,\LL}}(v-u)}=e^{2B(z)}(1+O(N^{\epsilon-1/2}))
%\end{equation}
where $B(z)=\frac{1}{4\pi} \int_0^z \frac{(\polylog_{1/2}(y))^2}{y} \dd y$ is defined in~\eqref{eq:def_constant_B}.

%\item Let $a, t, k$ satisfy \eqref{eq:scaling_step_1}, \eqref{eq:scaling_step_4} and \eqref{eq:scaling_step_2}. Then for large enough $N$, we have
%\begin{align}
%\label{eq:limit_Z_1}
%\prod_{u\in\roots_{s,L}}(-u)^{k-N-1}\prod_{w\in\roots_{s,R}}(w+1)^{-a-N+k}e^{tw}
%=e^{-\mathfrak{F}\left(\mathfrak{s};4\sqrt{2}\beta\sqrt{1-\rho}/3,2x\right)}(1+O(N^{\epsilon-1/2}))
%\end{align}
%where $\mathfrak{F}$ is defined in \eqref{eq:def_F}.
\end{enumerate}
\end{lm}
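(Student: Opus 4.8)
The plan is to prove parts (a), (b), (c) in sequence, deriving the later parts from the earlier ones together with the asymptotics of the nodes established in Lemma~\ref{lm:asymptotics_nodes}. The common mechanism throughout is that a product such as $\prod_{u\in\roots_{\zz,\LL}}(w-u)$ is, up to the leading factor $(w+1)^{L-N}$, essentially $q_{z,\LL}(w)/(w+1)^{L-N}$, and that ratio can be written as an exponential of a sum $\sum_{u\in\roots_{\zz,\LL}}\log\frac{w-u}{w+1}$ (or rather its ``half'' version after taking square roots), which is a Riemann-sum approximation to a contour integral against the equilibrium density of the roots. So the first thing I would do is record the spacing structure of $\roots_{\zz,\LL}$ and $\roots_{\zz,\RR}$: under the scaling $\zz^L=(-1)^N\rr_0^L z$, the roots lie on the contours $\Sigma_\LL,\Sigma_\RR$ which pinch at $w=-\rho$, with spacing $O(N^{-1})$ away from $-\rho$ and $O(N^{-1/2})$ near $-\rho$ (this is exactly the regime described before Lemma~\ref{lm:asymptotics_nodes}); the quantization condition is $N\log(-w)+(L-N)\log(w+1)=L\log\zz+2\pi\ii k$. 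From the implicit function theorem applied to $w^N(w+1)^{L-N}=\zz^L$ I would get precise locations $u_k$, and then convert $\sum_k \tfrac12\log\frac{w-u_k}{w+1}$ into $\tfrac{N}{2}\oint \log\frac{w-u}{w+1}\,\rho_{eq}(u)\,du$ plus an Euler–Maclaurin error of the stated order $O(N^{\epsilon-1/2})$. Matching the resulting contour integral against the definition of $\hftn_\LL$ in~\eqref{eq:def_h_L} — using the substitution $u=-\rho+\rho\sqrt{1-\rho}\,\omega N^{-1/2}$ near the pinch, where the Gaussian $e^{-\omega^2/2}$ and hence $\polylog_{1/2}(ze^{(\xi^2-y^2)/2})$ appear — gives~\eqref{eq:limit_h_L}, and symmetrically~\eqref{eq:limit_h_R}; the identity~\eqref{eq:aux_2016_04_23_02} relating the $\polylog$ integral to a contour integral of $\log(1-ze^{\omega^2/2})$ is the bridge between the sum and the closed form. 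The estimate~\eqref{eq:limit_h_exp} for $w$ bounded away from the contours is the same computation but easier, since then $\log\frac{w-u}{w+1}$ is bounded and the Riemann sum converges to $0$ because $q_{z,\LL}(w)\sim (w+1)^{L-N}$ (the $z^L$ correction is exponentially small there).

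For part (b), the point is purely that the error estimates in (a) were first derived for fixed $\xi$, and one has to check they survive when $\xi$ is allowed to grow up to $N^{\epsilon/4}$ with $\Re\xi\ge c>0$. I would track the $\xi$-dependence of the Euler–Maclaurin/Taylor error terms explicitly: each derivative of the integrand in $w_N=w_N(\xi)$ brings down a factor polynomial in $\xi$, and since $|\xi|\le N^{\epsilon/4}$ these contribute at most a power of $N^{\epsilon/4}$, which together with a stray $\log N$ from summing over the $O(N^{1/2+\epsilon/4})$ relevant nodes near the pinch degrades $O(N^{\epsilon-1/2})$ to $O(N^{\epsilon-1/2}\log N)$. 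The condition $\Re\xi\ge c$ is what keeps $w_N$ on the correct side of $-\rho$ (so that $\sqrt{w_N-u}$ has no branch issues and the product stays comparable to $(\sqrt{w_N+1})^{L-N}$) uniformly; I would make that quantitative.

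Part (c) should then be essentially algebraic given (a). Write the left side of~\eqref{eq:limit_h} as $\prod_{u\in\roots_{\zz,\LL}}(\sqrt{-u})^N \cdot \prod_{v\in\roots_{\zz,\RR}}(\sqrt{v+1})^{L-N} \big/ \prod_{v\in\roots_{\zz,\RR}} q_{z,\LL}(v)^{1/2}$, recognize $\prod_{u\in\roots_{\zz,\LL}}(-u)=q_{z,\LL}(0)$ and $\prod_{v\in\roots_{\zz,\RR}}(v+1)=\pm q_{z,\RR}(-1)$, and evaluate these via $q_{z,\LL}(0)q_{z,\RR}(0)=q_z(0)=-\zz^L$ and similarly at $-1$ — these give the ``leading'' part. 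The subleading part is $\prod_{v\in\roots_{\zz,\RR}}\big(q_{z,\LL}(v)/(v+1)^{L-N}\big)^{-1/2}$, to which I apply~\eqref{eq:limit_h_L}: this turns the product into $\exp\big(-\tfrac12\sum_{v\in\roots_{\zz,\RR}}\hftn_\LL(\cdot,z)\big)$, which is another Riemann sum, now a \emph{double} sum over $\roots_{\zz,\LL}\times\roots_{\zz,\RR}$ after unfolding $\hftn_\LL$, converging to the double integral that one computes to be $B(z)=\tfrac1{4\pi}\int_0^z (\polylog_{1/2}(y))^2\,\tfrac{dy}{y}$ — the $(\polylog_{1/2})^2$ reflecting precisely the two-fold product structure. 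I would verify the $B(z)$ formula by differentiating in $z$: $\tfrac{d}{dz}$ of the double-integral expression should reproduce $\tfrac1{4\pi z}(\polylog_{1/2}(z))^2$, with both sides vanishing at $z=0$.

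The main obstacle I expect is part (b), more precisely getting \emph{uniform} control of the error as $\xi$ grows: the Riemann-sum/Euler–Maclaurin argument in (a) is comfortable for fixed $\xi$, but near the pinch the node spacing is only $O(N^{-1/2})$, so moving $w_N$ a distance $|\xi|N^{-1/2}$ with $|\xi|$ as large as $N^{\epsilon/4}$ means $w_N$ traverses $O(N^{\epsilon/4})$ nodes, and one must argue that the approximation of the partial sum by the partial integral does not accumulate error faster than $N^{\epsilon-1/2}\log N$. This requires the sharp location of the nodes from Lemma~\ref{lm:asymptotics_nodes} (the $N^{3\epsilon/4-1/2}\log N$ control on $|\xi-N^{1/2}(w+\rho)/(\rho\sqrt{1-\rho})|$) fed carefully into the error analysis, and keeping the branch of the square root consistent throughout — which is exactly why the hypothesis $\Re\xi\ge c$ (bounded away from $0$) is imposed rather than $\Re\xi\ge 0$.
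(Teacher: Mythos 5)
Your outline replaces the paper's central device with a Riemann-sum/Euler--Maclaurin argument against an ``equilibrium density'' of roots, and this is where the first real gap sits. The paper never approximates the sum over $\roots_{\zz,\LL}$ by an integral: it uses the exact residue identities \eqref{eq:aux_2015_11_27_01}--\eqref{eq:auxtemp22}, i.e. $\sum_{u\in\roots_{\zz,\LL}}p(u)=(L-N)p(-1)+L\zz^L\oint p(u)(u+\rho)/(u(u+1)q_\zz(u))\,\frac{\dd u}{2\pi\ii}$ with $p(u)=\log(w_N-u)$, so the leading term $(L-N)\log(w_N+1)$ comes out exactly and the entire correction is a single contour integral that localizes, after deforming to a vertical line through $-\rho$ and using $q_\zz(u)=\frac{\zz^L}{z}(e^{-\eta^2/2+O(\cdot)}-z)$, to $-z\int\log(\xi-\eta)\,\eta\,(e^{-\eta^2/2}-z)^{-1}\frac{\dd\eta}{2\pi\ii}$, which is then identified with $\hftn_\LL(\xi,z)$ by matching powers of $z$ (a Faddeeva/erfc identity). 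In your scheme the $O(1)$ correction $\hftn_\LL$ must be extracted from the roots in the $O(N^{-1/2})$-neighborhood of $-\rho$, where the spacing is $O(N^{-1/2})$ and the empirical measure does \emph{not} approach a continuum density (the rescaled roots converge to the discrete set $\inodes_{z,\LL}$ by Lemma~\ref{lm:asymptotics_nodes}); so there is no Riemann-sum mechanism there, and asserting an ``Euler--Maclaurin error of the stated order'' hides exactly the computation that has to be done. Moreover the bridge you invoke, \eqref{eq:aux_2016_04_23_02}, is an identity valid only for $\xi\in\inodes_{z,\LL}$ (where $e^{-\xi^2/2}=z$), while \eqref{eq:limit_h_L} is needed for arbitrary fixed $\xi$ with $\Re\xi\ge0$, and in part (b) uniformly for $|\xi|\le N^{\epsilon/4}$; you would need the general-$\xi$ identity between the vertical-line integral and $\hftn_\LL$ as in \eqref{eq:def_h_L}, which your sketch does not supply.

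Part (c) as you describe it fails for a concrete quantitative reason: you propose to apply \eqref{eq:limit_h_L} once for each of the $N$ points $v\in\roots_{\zz,\RR}$ and multiply, but each application carries a multiplicative error $1+O(N^{\epsilon-1/2}\log N)$, so the product of the errors is $\exp\big(O(N^{1/2+\epsilon}\log N)\big)$, not $1+o(1)$; the same problem appears in your ``leading'' factor, since $\prod_{u\in\roots_{\zz,\LL}}(\sqrt{-u})^{N}$ requires knowing $\prod_u\sqrt{-u}$ to accuracy $o(1/N)$, far beyond what (a) or \eqref{eq:limit_h_exp} gives. Your fallback of evaluating these half-products from $q_\zz(0)=q_\zz(-1)=-\zz^L$ does not close either, because those relations only determine the products over the \emph{full} root set $\roots_{\zz}=\roots_{\zz,\LL}\cup\roots_{\zz,\RR}$, not $q_{\zz,\LL}(0)$ and $q_{\zz,\RR}(-1)$ separately. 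The paper avoids both issues by treating the double sum at once: applying the residue identities twice to $\sum_{u,v}\log(v-u)$ produces exactly $N\sum_u\log(-u)+(L-N)\sum_v\log(v+1)$ plus a double contour integral, so the numerator of \eqref{eq:limit_h} cancels identically and only the double integral remains, which localizes near $-\rho$ and is shown to equal $-2B(z)$ by expanding in powers of $z$ and evaluating Gaussian integrals by parts. If you want to salvage your route for (c), you must redo the argument at the level of the double sum (or obtain per-factor estimates with additive errors summable to $o(1)$), not by exponentiating the part-(a) asymptotics factor by factor.
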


\medskip

%The product and the square root in the above lemma may not be exchanged in general. However, it is possible to exchange for points on the set $\roots_{\zz,\RR}$ or $\roots_{\zz,\RR}$. For example, one can check that $\prod_{u\in\roots_{\zz,\LL}}\sqrt{w-u}=\sqrt{q_{\zz,\LL}(w)}$ if $w$ is on $\roots_{\zz,\RR}$.

%{\Cb The left hand side of~\eqref{eq:limit_h_L} cannot be replaced by $\sqrt{q_{\zz,\LL}(w_N)}$ generally. However, if $w$ is on $\roots_{\zz,\RR}$ then we have $\prod_{u\in\roots_{\zz,\LL}}\sqrt{w-u}=\sqrt{q_{\zz,\LL}(w)}$}

We point out that throughout the section, %(and for both flat and step initial conditions), 
all error terms $O(\cdot)$ depend only on $|z|$, not the argument of $z$. % but are independent of $z$ the choice of $z$. 

The proof of Lemma \ref{lm:asymptotics_notations} is given in Section \ref{sec:proof_lemma_1}.

\subsection{Flat initial condition}

We prove Theorem~\ref{thm:limit_one_point_distribution_flat}. Note that in this flat initial condition we always assume that $\rho = d^{-1}$
%\begin{equation}
	%\rho = d^{-1}
%\end{equation}
for some fixed $d\in\intZ_{\ge 2}$.

We apply Theorem \ref{prop:one_point_distribution_Fredholm2} with
\begin{equation}
\label{eq:parameters_flat}
\begin{split}
k& =k_N,\\
t& =\frac{1}{\rho^2\sqrt{1-\rho}}\tau N^{3/2},\\
a& =(1-\rho) t + k_N\gap -x\rho^{-1/3}(1-\rho)^{2/3} t^{1/3}
   =(1-\rho) t + k_N\gap -\frac{\sqrt{1-\rho}}{\rho}\tau^{1/3}xN^{1/2},
\end{split}
\end{equation}
where $1\le k_N \le N$, and $\tau\in \realR_{>0} ,x\in\realR $ are both fixed constants. Here we assume that $a\in\intZ$. However, the argument still goes through if $a$ is not an integer except that the error term in Lemma~\ref{lm:flat_constant} should be replaced to $O(N^{\epsilon -1/2})$ due to the $O(1)$ perturbation on $a$. This change does not affect the proof.
% of  Theorem~\ref{thm:limit_one_point_distribution_flat}.

%From the discussion at the beginning of this section, we need to consider the asymptotics of~\eqref{eq:aux_2016_04_14_02} for $i=1$ as $N\to\infty$. 

\subsubsection{Asymptotics of $\constf(\zz)$}

Recall the definition of $\constf(\zz)$ in~\eqref{eq:def_const_flat} and rewrite it as the product of three terms %$\constc^{(1)}_{N,1}(\zz) \cdot \constc^{(1)}_{N,2}(\zz) \cdot \constc^{(1)}_{N,3}(\zz)$, where
\begin{equation}
\begin{split}
\constc^{(1)}_{N,1}(\zz) &= \frac{\prod_{u\in \roots_{\zz,\LL}} (\sqrt{-u})^{N}
					   				\prod_{v\in\roots_{\zz,\RR}}(\sqrt{v+1})^{L-N}}
					   		{\prod_{v\in\roots_{\zz,\RR}} \prod_{u \in \roots_{\zz, \LL}}\sqrt{v-u}},\\
\constc^{(1)}_{N,2}(\zz) &=\frac{1}
							{\prod_{v\in\roots_{\zz,\RR}}\sqrt{d(v+\rho)}(\sqrt{v+1})^{d-2}},\\
\constc^{(1)}_{N,3}(\zz) &=\prod_{u\in \roots_{\zz,\LL}} (\sqrt{-u})^{-N}
			 				\prod_{v\in\roots_{\zz,\RR}}(\sqrt{v+1})^{L-N-2a +2k\rho^{-1}} e^{tv}.
\end{split}
\end{equation}

Using Lemma~\ref{lm:asymptotics_notations} (c), we find that
\begin{equation}
\label{eq:aux_2016_04_16_01}
\constc^{(1)}_{N,1}(\zz)= e^{B(z)}(1+O(N^{\epsilon-1/2}))
\end{equation}
where $\epsilon\in (0, \frac12)$ is an arbitrary constant defined at the beginning of Lemma~\ref{lm:asymptotics_nodes}.

%Since $L=\gap N$ in the flat case, for any $v\in\roots_{\zz,\RR}$, $U(v) \cup\{v\}$ is the set of the roots of the equation $w(w+1)^{\gap-1} =\zz^d e^{\ii \theta}$ where $\theta = 2\pi j/N$ with $j\in \{1,2,\cdots, N\}$. 
%This implies that 
%\begin{equation}
%	\prod_{u\in U(v)} (v-u) = \lim_{w\to v} \frac{w(w+1)^{\gap-1} -\zz^\gap e^{\ii \theta}}{w-v}
%	= d (v+\rho) (v+1)^{d-2},\qquad v\in\roots_{\zz,\RR}.
%\end{equation}
%Hence
%\begin{equation}\label{eq:CN12rewri}
%	\constc^{(1)}_{N,2}(\zz) 
%	= \sqrt{\frac{1}{\prod_{v\in\roots_{\zz,\RR}} \left( d (v+\rho) (v+1)^{d-2} \right) }}
%	= \sqrt{ \frac{1}{d^N q_{\zz,\LL}(-\rho) ( q_{\zz,\LL}(-1))^{d-2} } }.
%\end{equation}
%By a simple argument similar to that below~\eqref{eq:aux_2016_04_08_05}, we have
%\begin{equation}
%\prod_{u\in U(v)} \sqrt{v- u} = \sqrt{d (v+\rho)} \left(\sqrt{v+1}\right)^{d-1},\qquad v\in\roots_{\zz,\RR}.
%\end{equation}

On the other hand, by using~\eqref{eq:limit_h_R} with $\xi=0$ (and hence $w_N =-\rho$)  and~\eqref{eq:limit_h_exp} with $w=-1$, 
we obtain
\begin{equation}
	\constc^{(1)}_{N,2}(\zz) = e^{-\frac{1}{2}\hftn_{\RR}(0,z)}\left(1+ O(N^{\epsilon -1/2})\right) .
\end{equation}
%where $A_3(z) = -\frac14\log(1-z)$ is defined in~\eqref{eq:def_constants_A}. We also used a simple identity $\hftn_{\RR}(0,z) = \frac{1}{2} \log(1-z)$ in~\eqref{eq:aux_2016_04_15_03}, which follows by a direct calculation.
It is direct to check that $\hftn_{\RR}(0,z) = \frac{1}{2} \log(1-z)=-2 A_3(z)$ (see~\eqref{eq:def_constants_A}). Hence we find 
\begin{equation}
\label{eq:aux_2016_04_15_03}
	\constc^{(1)}_{N,2}(\zz) = e^{A_3(z)}\left(1+ O(N^{\epsilon -1/2})\right).
\end{equation}

Finally, we have the following result for $\constc^{(1)}_{N,3}(\zz)$. Its proof is given in Section~\ref{sec:proof_lemma_flat_constant}.

\begin{lm}
\label{lm:flat_constant}
Suppose $\zz$, $z$ and $\epsilon$ satisfy the same conditions in Lemma~\ref{lm:asymptotics_nodes}, and $a, t, k$ satisfy~\eqref{eq:parameters_flat}. Then for large enough $N$, we have
\begin{equation}
\label{eq:limit_Z_1}
	\constc^{(1)}_{N,3}(\zz) = e^{ \tau^{1/3}xA_1(z) +\tau A_2(z)}\left(1 +O(N^{2\epsilon -1}) \right).
	%-\frac{N}{2}\sum_{u\in\roots_{\zz,\LL}} \log(-u)
% +\sum_{v\in\roots_{\zz,\RR}}\left(\left(\frac{L-N}{2}-a+k\rho^{-1}\right)\log(v+1)							 +tv \right)
%\log \left( \constc^{(1)}_{N,3}(\zz) \right) = \tau^{1/3}xA_1(z) +\tau A_2(z) +O(N^{2\epsilon -1}).
\end{equation}
%where $\mathfrak{F}$ is defined in \eqref{eq:def_F}.
\end{lm}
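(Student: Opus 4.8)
The plan is to take logarithms of $\constc^{(1)}_{N,3}(\zz)$ and match the leading behavior of each of the three products against $\tau^{1/3}xA_1(z)+\tau A_2(z)$, keeping track of errors of size $O(N^{2\epsilon-1})$. Write
\[
\log \constc^{(1)}_{N,3}(\zz)
= -\tfrac N2 \sum_{u\in\roots_{\zz,\LL}}\log(-u)
  + \tfrac12\bigl(L-N-2a+2k\rho^{-1}\bigr)\sum_{v\in\roots_{\zz,\RR}}\log(v+1)
  + t\sum_{v\in\roots_{\zz,\RR}} v.
\]
So the first thing I would do is derive asymptotic formulas for the three sums $\sum_{u\in\roots_{\zz,\LL}}\log(-u)$, $\sum_{v\in\roots_{\zz,\RR}}\log(v+1)$, and $\sum_{v\in\roots_{\zz,\RR}}v$ as $N\to\infty$ under the scaling $\zz^L=(-1)^N\rr_0^L z$. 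These are power sums / log-sums of the roots of $q_\zz(w)=w^N(w+1)^{L-N}-\zz^L$, and they can be evaluated either via the residue/contour representation $\sum_{w\in\roots_\zz} g(w) = \oint \frac{g(w)q_\zz'(w)}{q_\zz(w)}\,\frac{\dd w}{2\pi\ii}$, splitting the contour into the pieces near $\Sigma_\LL$ and $\Sigma_\RR$, or — more in the spirit of the surrounding lemmas — by using Lemma~\ref{lm:asymptotics_nodes} to localize the contribution of the $O(N^{1/2})$-many roots within $O(N^{\epsilon/4-1/2})$ of $-\rho$ (which produce the polylog terms) and showing the remaining roots contribute only the "smooth" pieces $(\sqrt{w+1})^{L-N}$-type factors that were already analyzed in Lemma~\ref{lm:asymptotics_notations}. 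Concretely, I expect
\[
\sum_{v\in\roots_{\zz,\RR}}\log(v+1) = (\text{explicit }O(N)\text{ term}) + \tfrac{1}{\sqrt{2\pi}}N^{-1/2}(\cdots)\polylog_{3/2}(z) + O(N^{-1+\epsilon}),
\]
and similarly $\sum_{v\in\roots_{\zz,\RR}}v$ picks up a $\polylog_{3/2}$ and $\polylog_{5/2}$ contribution after one more order of expansion around $-\rho$.

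The second step is the bookkeeping of the deterministic prefactors. Using~\eqref{eq:parameters_flat}, the combination $L-N-2a+2k\rho^{-1}$ equals $L-N-2(1-\rho)t+2\frac{\sqrt{1-\rho}}{\rho}\tau^{1/3}xN^{1/2}$ and $t=\rho^{-2}(1-\rho)^{-1/2}\tau N^{3/2}$, so when I multiply these by the $O(N)$ and $O(N^{-1/2})$ pieces of the root-sums, the leading $O(N^{\cdot})$ divergent contributions must cancel against the analogous divergent parts of $-\tfrac N2\sum\log(-u)$ and $t\sum v$ — this cancellation is forced because the left-hand side $\constc^{(1)}_{N,3}(\zz)$ is, by construction, the genuinely convergent part of the distribution formula. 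What survives is precisely the $x$-linear term $\tau^{1/3}x\cdot(-\tfrac{1}{\sqrt{2\pi}}\polylog_{3/2}(z)) = \tau^{1/3}xA_1(z)$ coming from the $N^{1/2}\cdot N^{-1/2}$ cross term, and the $\tau$-term $\tau\cdot(-\tfrac{1}{\sqrt{2\pi}}\polylog_{5/2}(z)) = \tau A_2(z)$ coming from the $N^{3/2}\cdot N^{-1}$ interaction in $t\sum v$ together with the next-order expansion of the root sum. I would organize this so that the polylogarithm identities — e.g. $z\,\frac{\dd}{\dd z}\polylog_{s}(z)=\polylog_{s-1}(z)$ and the integral representation already recorded in the excerpt — do the final matching.

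The error control is the third step: I need each approximation to be $O(N^{2\epsilon-1})$ rather than $O(N^{\epsilon-1/2})$, which is why the statement isolates $\constc^{(1)}_{N,3}$ from $\constc^{(1)}_{N,1},\constc^{(1)}_{N,2}$ (those only need $O(N^{\epsilon-1/2})$). The improvement comes from the fact that here the "bulk" roots enter only through smooth sums whose Euler–Maclaurin / Riemann-sum errors are $O(N^{-1})$, and the delicate near-$-\rho$ roots are handled by Lemma~\ref{lm:asymptotics_nodes}, whose matching is accurate to $O(N^{3\epsilon/4-1/2})$ per node over $O(N^{\epsilon/4})$ nodes, i.e. $O(N^{\epsilon-1/2})$ — but since these nodes are weighted by the additional $N^{-1/2}$ spacing factor in the sums, one gains another half power, giving $O(N^{\epsilon-1})$, and the tail beyond $|\xi|>N^{\epsilon/4}$ is super-polynomially small because of the cubic decay $\Re(-\tfrac13\tau\xi^3)\to-\infty$. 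The main obstacle I anticipate is exactly this error accounting: one must verify that the large deterministic coefficients ($t\sim N^{3/2}$, $a\sim N^{3/2}$) multiplied against the root-sum remainders still land inside $O(N^{2\epsilon-1})$, which requires expanding the root sums one order further than naïvely needed and checking that the $O(N^{1/2})$- and $O(N^0)$-order remainders after the forced cancellations are genuinely of size $O(N^{-1+2\epsilon})$ after multiplication. Everything else is a routine, if lengthy, manipulation of polylogarithm integrals.
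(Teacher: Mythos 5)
Your overall strategy (take logarithms, convert the root sums to contour integrals via \eqref{eq:aux_2015_11_27_01}--\eqref{eq:auxtemp22}, localize near $w=-\rho$ at scale $N^{-1/2}$, and match polylogarithms) is indeed the paper's route, but the proposal leaves the two places where the real work happens unaddressed, and one of its justifications is circular. You assert that the divergent contributions ``must cancel \ldots because $\constc^{(1)}_{N,3}(\zz)$ is, by construction, the genuinely convergent part'' --- but the lemma \emph{is} the statement that this factor converges; nothing forces the cancellation except explicit computation. What actually happens when one combines the three sums into the single function $\gee_1$ of \eqref{eq:def_G_1} is: (i) the $O(N^{3/2})$ and $O(N)$ parts of $\gee_1'(-\rho)$ cancel exactly because of the specific centering in \eqref{eq:parameters_flat} (this must be verified, and it is where the scaling enters); (ii) the $O(N^{1/2})$ term $-\frac{\tau}{2\sqrt{1-\rho}}\xi^2N^{1/2}$ does \emph{not} cancel in the exponent at all --- its leading contribution vanishes only because the integrand is odd in $\xi$ on the symmetric vertical contour through $-\rho$, and, crucially, its products with the $O(N^{-1/2})$ corrections to $\zz^L/q_\zz(w)$ and $L(w+\rho)/(w(w+1))$ (see \eqref{eq:aux_2015_12_16_07}, \eqref{eq:Mzrhozz1}) survive at order one. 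These cross terms produce $\rho$-dependent $\xi^4$ and $\xi^6$ integrands which only after an integration by parts collapse to the $\rho$-independent $-\frac{\tau}{3}\int \xi^4 \frac{z}{e^{-\xi^2/2}-z}\ddbar{\xi}=\tau A_2(z)$, cf.\ \eqref{eq:intrhoitomi}. A per-sum bookkeeping that expands ``one order further than na\"ively needed'' but omits these cross terms and the parity/integration-by-parts step would produce a wrong, $\rho$-dependent coefficient in front of $\polylog_{5/2}$.

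The error accounting is also not right as stated. Since $t,a\sim N^{3/2}$, an absolute error $O(N^{\epsilon-1})$ in $\sum_{v}v$ or $\sum_v\log(v+1)$ becomes $O(N^{1/2+\epsilon})$ after multiplication: the sums must be controlled to absolute accuracy $o(N^{-3/2})$, which rules out the discrete node-matching route --- Lemma~\ref{lm:asymptotics_nodes} locates each root only to accuracy $N^{3\epsilon/4-1/2}\log N$ in the $\xi$ variable, and the claimed extra ``half power from the spacing'' does not materialize once the $N^{3/2}$ coefficients are attached. Only the exact residue representation avoids this. Moreover, to reach the stated error $O(N^{2\epsilon-1})$ (rather than $O(N^{-1/2})$) one must show that the genuine $O(N^{-1/2})$-order term in the expansion is in fact exponentially small; in the paper this requires computing the coefficient $E_2=\frac{1-2\rho}{6\sqrt{1-\rho}}$ exactly and exploiting the symmetry of the domain together with an integration by parts, a step absent from your sketch. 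So while the skeleton of the argument is the right one, the proposal as written has genuine gaps precisely at the cancellation mechanism, the cross-term contribution to $\tau A_2(z)$, and the error bound.
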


Combing this with~\eqref{eq:aux_2016_04_16_01} and~\eqref{eq:aux_2016_04_15_03}, we obtain the asymptotics of $\constf(\zz)$
\begin{equation}
\label{eq:aux_2016_04_19_07}
\constf(\zz) = e^{\tau^{1/3}x A_1(z) +\tau A_2(z) +A_3(z) +B(z)} (1 + O(N^{\epsilon -1/2})),
\end{equation}
where $A_i(z)$ and $B(z)$ are defined in~\eqref{eq:def_constants_A} and~\eqref{eq:def_constant_B}.

\subsubsection{Asymptotics of $\det \big(I + K^{(1)}_\zz  \big)$}

%Write  $\roots_{\zz,\LL}=\{u_1,u_2,\cdots, u_{L-N}\}$.
Recall that
\begin{equation}
K_\zz^{(1)} (u , u') =\frac{f_1(u)}{ f_1(v') (u -v')}
\end{equation}
where $v'\in\roots_{\zz,\RR}$ is uniquely determined from $u'\in\roots_{\zz,\LL}$ by the equation
\begin{equation}
\label{eq:aux_2016_04_16_02}
u'(u'+1)^{d-1} = v'(v'+1)^{d-1},
\end{equation}
and the function $f_1 $ is given by
\begin{equation}
	f_1(w) = \begin{dcases}
\frac{ \tilde g_1(w) q_{\zz,\RR}(w)}{(w+\rho) w^N}, \qquad & w\in\roots_{\zz,\LL},\\
\frac{\tilde g_1(w) q'_{\zz,\RR}(w)}{(w+\rho) w^N}, \qquad & w\in \roots_{\zz,\RR},
\end{dcases}  
\end{equation}
with $\tilde g_1(w) := w^{-k+2} (w+1)^{-a+k+\rho^{-1}} e^{tw}$.
Using~\eqref{eq:aux_2016_04_16_02}, the Fredholm determinant of $K^{(1)}_\zz$ is equal to $\det \big(I+\tilde K_\zz^{(1)} \big)$ where %with the kernel
\begin{equation*}
	\tilde K_\zz^{(1)}(u,u')=\frac{h_1(u)}{h_1(v') (u-v') } %\frac{1}{u_i-v_j},
\end{equation*}
with
\begin{equation}
\label{eq:aux_2016_04_16_04}
	h_1(w) = \begin{dcases}
  \frac{g_1(w) q_{\zz,\RR}(w)}{(w+\rho) w^N}, \qquad & w\in\roots_{\zz,\LL},\\
 \frac{g_1(w) q'_{\zz,\RR}(w)}{(w+\rho) w^N}, \qquad & w\in \roots_{\zz,\RR},
\end{dcases}  
\end{equation}
and  
\begin{equation}
\label{eq:def_g_1}
g_1(w) = \frac{\tilde g_1(w) w^{k+[\tau N^{3/2}/\sqrt{1-\rho}]}
					(w+1)^{(d-1)(k+[\tau N^{3/2}/\sqrt{1-\rho}])}}
			  {\tilde g_1(-\rho) (-\rho)^{k+[\tau N^{3/2}/\sqrt{1-\rho}]}
			  		(-\rho+1)^{(d-1)(k+[\tau N^{3/2}/\sqrt{1-\rho}])}}.
\end{equation}

 The proof of the following lemma is given in Section~\ref{sec:proof_asymptotics_h_1}. 
 %In the following Lemma, we have the asymptotics for $h_1$.
 \begin{lm}
 \label{lm:asymptotics_of_h_1}
 Let $0<\epsilon <1/2$ be a fixed constant. We have the following estimates. 
 \begin{enumerate}[(a)]
 \item 
 For $u \in \roots_{\zz,\LL}$ satisfying $|u+\rho| \le \rho\sqrt{1-\rho} N^{\epsilon/4-1/2}$, 
 \begin{equation}
 h_1(u) = \frac{N^{1/2}}{\rho\sqrt{1-\rho}\xi}e^{\hftn_\RR(\xi,z)-\frac{1}{3}\tau\xi^3+\tau^{1/3}x\xi}(1+O(N^{\epsilon-1/2}\log N))
 \end{equation}
where $\xi = \frac{N^{1/2}(u+\rho)}{\rho\sqrt{1-\rho}}$ and $\hftn_\RR(\xi,z)$ is defined in~\eqref{eq:def_h_R}. The error term $O(N^{\epsilon-1/2}\log N)$ does not depend on $u$ or $\xi$.

\item 
For $v\in \roots_{\zz,\RR}$ satisfying $|v+\rho| \le \rho\sqrt{1-\rho} N^{\epsilon/4-1/2}$, 
 \begin{equation}
 \frac1{h_1(v)} = -\frac{\rho^2(1-\rho)}{N}e^{\hftn_\LL(\zeta,z)+\frac{1}{3}\tau\zeta^3-\tau^{1/3}x\zeta}(1+O(N^{\epsilon-1/2}\log N))
 \end{equation}
 where $\zeta = \frac{N^{1/2}(v+\rho)}{\rho\sqrt{1-\rho}}$, and $\hftn_\LL(\zeta,z)$ is defined in~\eqref{eq:def_h_L}. The error term $O(N^{\epsilon-1/2}\log N)$ does not depend on $v$ or $\zeta$.
 
 \item 
 For $w\in\roots_{\zz}$ satisfying $|w+\rho|\ge\rho\sqrt{1-\rho}N^{\epsilon/4-1/2}$, 
 \begin{equation}
 \label{eq:h_estimate_inf}
 h_1(w)=O(e^{-CN^{3\epsilon/4}}), \qquad w\in\roots_{\zz,\LL}, 
 \end{equation}
 and 
 \begin{equation}
 % \label{eq:h_estimate_inf_2}
  \frac1{h_1(w)}=O(e^{-CN^{3\epsilon/4}}), \qquad w\in\roots_{\zz,\RR}.
  \end{equation}
Here both error terms $O(e^{-CN^{3\epsilon/4}})$ are independent of $w$.
 \end{enumerate}
 \end{lm}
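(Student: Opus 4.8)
\textbf{Proof proposal for Lemma~\ref{lm:asymptotics_of_h_1}.}

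The plan is to track the behavior of $h_1(w)$ through its defining factors in~\eqref{eq:aux_2016_04_16_04}, separating the ``local'' part near $-\rho$ (parts (a), (b)) from the ``tail'' part away from $-\rho$ (part (c)). First I would substitute $w=w_N(\xi)=-\rho+\rho\sqrt{1-\rho}\,\xi N^{-1/2}$ and expand each elementary factor. The factor $(w+\rho)^{-1}$ contributes $N^{1/2}/(\rho\sqrt{1-\rho}\,\xi)$ exactly. For the product $q_{\zz,\RR}(w)=\prod_{v\in\roots_{\zz,\RR}}(w-v)$, I would apply Lemma~\ref{lm:asymptotics_notations}(a), specifically~\eqref{eq:limit_h_R}, which gives $q_{\zz,\RR}(w_N)=(-w_N)^N e^{\hftn_\RR(\xi,z)}(1+O(N^{\epsilon-1/2}))$ up to the square-root bookkeeping; combined with the $w^{-N}=(-w_N)^{-N}(1+O(N^{-1/2}))$ factor in the denominator of $h_1$, the troublesome $(-w_N)^N$ cancels and leaves only $e^{\hftn_\RR(\xi,z)}$. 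The remaining task is to show that $g_1(w_N)$, which is the normalized product of $\tilde g_1$-type factors together with the compensating powers $w^{k+[\cdots]}(w+1)^{(d-1)(k+[\cdots])}$, converges to $e^{-\frac13\tau\xi^3+\tau^{1/3}x\xi}$. Here I would take logarithms and Taylor-expand $\log(w_N/(-\rho))$ and $\log((w_N+1)/(1-\rho))$ to third order in $\xi N^{-1/2}$; the large exponents $k+[\tau N^{3/2}/\sqrt{1-\rho}]$ are of order $N^{3/2}$, so the linear term in the expansion produces the $\tau^{1/3}x\xi$ piece (via the choice of $a$ in~\eqref{eq:parameters_flat}), the quadratic term must cancel (this is exactly why $-\rho$ is the relevant critical point: $\frac{\dd}{\dd w}[N\log(-w)+(L-N)\log(w+1)]$ vanishes there, so the first nonvanishing contribution beyond linear is cubic), and the cubic term produces $-\frac13\tau\xi^3$. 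The coefficient of $t$ in the exponent of $e^{tw}=e^{t(-\rho)}e^{t\rho\sqrt{1-\rho}\xi N^{-1/2}}$ combines with the linear term from the powers; I would verify the arithmetic gives precisely $\tau^{1/3}x\xi$ after the substitution $t=\tau N^{3/2}/(\rho^2\sqrt{1-\rho})$. Part (b) is the mirror image: write $v=w_N(\zeta)$, use $q'_{\zz,\RR}(v)=q'_{\zz}(v)/q_{\zz,\LL}(v)$ as in~\eqref{eq:aux_2015_12_15_01}, so $q'_{\zz,\RR}(v)=L v^{N-1}(v+1)^{L-N-1}(v+\rho)/q_{\zz,\LL}(v)$, and then apply~\eqref{eq:limit_h_L} to $q_{\zz,\LL}(v)=\prod_{u\in\roots_{\zz,\LL}}(v-u)$, yielding the factor $e^{-\hftn_\LL(\zeta,z)}$; the reciprocal $1/h_1(v)$ then picks up $e^{\hftn_\LL(\zeta,z)}$, the sign $-\rho^2(1-\rho)/N$ from the $(v+\rho)$ and power factors, and $e^{\frac13\tau\zeta^3-\tau^{1/3}x\zeta}$ from the same cubic expansion with the opposite sign. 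For the uniformity over $|\xi|\le N^{\epsilon/4}$ I would invoke part (b) of Lemma~\ref{lm:asymptotics_notations}, accepting the degraded error $O(N^{\epsilon-1/2}\log N)$.

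For part (c), the tail estimate, the key point is that when $|w+\rho|\ge \rho\sqrt{1-\rho}N^{\epsilon/4-1/2}$ the quadratic term in the exponent $N\log(-w)+(L-N)\log(w+1)-\log\zz^L$ no longer cancels but instead contributes a negative real part of order $N\cdot(|w+\rho|/\rho)^2\gtrsim N^{\epsilon/2}$; more carefully, along the contour $\Sigma_\LL$ one parametrizes and shows $\Re$ of the relevant exponent decreases like $-c N^{3\epsilon/4}$ once one also accounts for the cubic correction that dominates when $|w+\rho|$ is of order $N^{\epsilon/4-1/2}$ up to order $1$. I would combine the estimate on $q_{\zz,\RR}(w)/w^N$ (controlled by~\eqref{eq:limit_h_exp} when $w$ is bounded away from the contours, and by a direct estimate of $\prod|w-v|$ when $w$ lies on $\Sigma_\LL$ near but not too near $-\rho$) with the exponential decay coming from $g_1(w)$, whose log has real part $\le -C N^{3\epsilon/4}$ in this regime. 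The same computation applied to $1/h_1(w)$ for $w\in\roots_{\zz,\RR}$ gives the second estimate, by symmetry of the critical-point structure under the reflection exchanging the two lobes of $\Sigma$.

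\textbf{Main obstacle.} The hard part will be the uniform tail estimate in part (c): one must control $\Re\big(N\log(-w)+(L-N)\log(w+1)\big)$ as $w$ ranges over the \emph{entire} discrete set $\roots_{\zz,\LL}$ with $|w+\rho|$ bounded below but otherwise unrestricted, and show it is bounded above by $-CN^{3\epsilon/4}$ uniformly. Near $-\rho$ this follows from the quadratic Taylor coefficient (whose sign and nondegeneracy must be checked explicitly: the second derivative of the potential at $-\rho$ is $N/\rho^2-(L-N)/(1-\rho)^2\cdot(\text{sign})$, nonzero since $\rho\ne$ the degenerate value), but for $w$ at order-$1$ distance from $-\rho$ one needs a genuinely global convexity/monotonicity statement about the level sets $|w|^\rho|w+1|^{1-\rho}=|\zz|$, of the kind already used (in the footnote) to show $\Sigma$ splits into two contours; I would lift that monotonicity argument to get a quantitative lower bound on how fast the potential grows as one moves off the critical point along $\Sigma_\LL$. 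Bridging the near-critical regime and the far regime — i.e.\ showing the $N^{3\epsilon/4}$ bound holds \emph{throughout} rather than degrading in some intermediate window — is the delicate point, and it is presumably handled in Section~\ref{sec:proof_asymptotics_h_1} by a careful interpolation; I would structure the argument as a case split at $|w+\rho|\asymp 1$ with matching bounds on each side.
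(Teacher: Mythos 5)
Your overall architecture matches the paper's proof in Section~\ref{sec:proof_asymptotics_h_1}: parts (a) and (b) are handled exactly as you describe --- reduce $h_1$ to the factor $g_1$ by applying Lemma~\ref{lm:asymptotics_notations}(b) to $q_{\zz,\RR}(u)/u^{N}$ (and, for (b), the identity~\eqref{eq:aux_2015_12_15_01} followed by~\eqref{eq:limit_h_L} applied to $q_{\zz,\LL}(v)$), then verify $g_1=e^{-\frac13\tau\xi^3+\tau^{1/3}x\xi}\left(1+O(N^{\epsilon-1/2}\log N)\right)$ by a third-order Taylor expansion under the scaling~\eqref{eq:parameters_flat}. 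One caveat on your stated reason for the quadratic cancellation: the vanishing of $\frac{\dd}{\dd w}\left[N\log(-w)+(L-N)\log(w+1)\right]$ at $-\rho$ only removes the linear term coming from the large power of $w(w+1)^{d-1}$; the cancellation of the $O(N^{1/2})\xi^2$ term is a separate arithmetic identity between the coefficient $[\tau N^{3/2}/\sqrt{1-\rho}]$ and the $-(1-\rho)t\log(w+1)+tw$ contribution of $\tilde g_1$, so it genuinely has to be checked by the computation you defer to (note also that $g_1$ is $k$-independent, which is why no restriction on $k$ enters).

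Part (c) is where your proposal has real soft spots. First, the decay cannot come from ``the quadratic term in $N\log(-w)+(L-N)\log(w+1)-\log\zz^L$'': every $w\in\roots_{\zz}$ satisfies $|w|^N|w+1|^{L-N}=|\zz|^L$ exactly, so the real part of that exponent is identically zero on the root set; all of the decay must come from the $t$- and $a$-dependent exponent of $g_1$, i.e.\ from a quantity of the form $\left(|u+1|^{-(1-\rho)+O(N^{-1})}e^{\Re u}\right)^{t}$ after eliminating $\log|u|$ via the level-curve relation. Second, \eqref{eq:limit_h_exp} cannot be invoked for $w\in\roots_{\zz,\LL}$, since those points lie on $\Sigma_{\LL}$ rather than at $O(1)$ distance from it; the paper instead controls $q_{\zz,\RR}(u)/u^{N}$ in the far regime by redoing the contour-integral representation and using a lower bound $|\Re(u)+\rho|\ge CN^{\epsilon/4-1/2}$ on the far part of $\Sigma_{\LL}$, which yields only a harmless $O(e^{CN^{-\epsilon/4}})$ factor. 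Finally, the ``main obstacle'' you flag is resolved in the paper not by an interpolation or a case split at $|w+\rho|\asymp 1$ but by one global monotonicity claim: along any level curve $|u|^\rho|u+1|^{1-\rho}=\mathrm{const}$, the function $|u+1|^{-c}e^{\Re u}$ with $c=(1-\rho)+O(N^{-1})$ is increasing in $\Re u$; hence $|g_1(u)|$ over the entire far region is maximized at $|u+\rho|=\rho\sqrt{1-\rho}N^{\epsilon/4-1/2}$, where the local expansion~\eqref{eq:aux_2016_04_16_03} with $|\xi|=N^{\epsilon/4}$ and $\Re(\xi^3)>0$ gives the $O(e^{-CN^{3\epsilon/4}})$ bound. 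This is precisely the quantitative lift of the footnote's monotonicity you anticipated, but it must be stated and proved (an elementary implicit differentiation along the curve); as written, your sketch leaves that uniform bound, and hence part (c), unproven.
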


 Lemma~\ref{lm:asymptotics_of_h_1} implies that
 \begin{equation}
 \label{eq:aux_2016_04_18_01}
 \tilde K_\zz^{(1)}(u, u') = -\frac{e^{\hftn_\RR(\xi,z)+\hftn_\LL(\zeta,z)-\frac{1}{3}\tau \xi^3 +\tau^{1/3}x\xi +\frac{1}{3}\tau \zeta^{3} -\tau^{1/3}x\zeta}}{\xi(\xi-\zeta)}\left(1+O(N^{\epsilon-1/2}\log N)\right)
 \end{equation}
 for all $u, u'\in \roots_{\zz,\LL}$ satisfying $|u+\rho|, |u'+\rho| \le \rho\sqrt{1-\rho} N^{\epsilon/4}$, where $\xi := \frac{N^{1/2}(u+\rho)}{\rho\sqrt{1-\rho}}$ and $\zeta := \frac{N^{1/2}(v'+\rho)}{\rho\sqrt{1-\rho}}$ with $v'\in\roots_{\zz,\RR}$ defined by $u'(u'+1)^{d-1} =v'(v'+1)^{d-1}$. 
 Note that  $\zeta = -\eta +O(N^{3\epsilon/4-1/2})$ where $\eta := \frac{N^{1/2}(u'+\rho)}{\rho\sqrt{1-\rho}}$
 since $|u'+\rho| \le \rho\sqrt{1-\rho} N^{\epsilon/4}$. 
 This implies, using $\hftn_\LL(\zeta, z) =\hftn_\RR(-\zeta,z)$ from~\eqref{eq:def_h_RminwithL}, that~\eqref{eq:aux_2016_04_18_01} equals
  \begin{equation}
  \label{eq:aux_2016_04_18_02}
  \tilde K_\zz^{(1)}(u, u') = -\frac{e^{\hftn_\RR(\xi,z)+\hftn_\RR(\eta,z)-\frac{1}{3}\tau \xi^3
  			+\tau^{1/3}x\xi -\frac{1}{3}\tau \eta^{3} + \tau^{1/3}x\eta}}
  		   {\xi(\xi+\eta)}
  \left(1+O(N^{\epsilon-1/2}\log N)\right).
  \end{equation}
  On the other hand, we also have $\tilde K_\zz^{(1)}(u, u')=O(e^{-CN^{3\epsilon/4}})$ when $|u+\rho| \ge \rho\sqrt{1-\rho} N^{\epsilon/4}$ or $|u'+\rho| \ge \rho\sqrt{1-\rho} N^{\epsilon/4}$. 
Hence, together with Lemma~\ref{lm:asymptotics_nodes} which claims that the $\{\xi: -\rho+\rho\sqrt{1-\rho}\xi N^{-1/2}\in\roots_{\zz,\LL}\} \cap \{\xi: |\xi|\le N^{\epsilon/4}\}$ converge to the set $\inodes_{z,\LL}$, we expect that
  \begin{equation}
  \label{eq:aux_2016_04_18_03}
  \lim_{N\to\infty}\det\left(I + \tilde K_\zz^{(1)}\right) =\det(I -\Kf_z)
  \end{equation}
  where $\Kf_z$ is the operator defined on $S_{z,\LL}$ with kernel
  \begin{equation}
\Kf_z(\xi,\eta) = \frac{e^{\hftn_\RR(\xi,z)+\hftn_\RR(\eta,z)-\frac{1}{3}\tau \xi^3
  			+\tau^{1/3}x\xi -\frac{1}{3}\tau \eta^{3} + \tau^{1/3}x\eta}}
  		   {\xi(\xi+\eta)}.
  \end{equation}
Since $e^{-\xi^2/2}=z$ for $\xi \in \inodes_{z,\LL}$, we have $\hftn_\RR(\xi,z) = -\frac{1}{\sqrt{2\pi}}
  \int_{-\infty}^{\xi}	 \polylog_{1/2}(e^{-y^2/2}) \dd y$
for such $\xi$, and hence the kernel $\Kf_z$ is the same as~\eqref{eq:def_inf_kernel_flat} with $x$ replaced by $\tau^{1/3}x$.

In order to complete the proof of~\eqref{eq:aux_2016_04_18_03}, it is enough to prove the following two lemmas.

\begin{lm}
\label{claim:trace_convergence}
For every integer $l$, we have
\begin{equation}
\label{eq:claim_trace_convergence}
\lim_{N\to\infty} \Tr\left(\left(\tilde K_\zz^{(1)}\right)^l\right) 
= \Tr \left(\left(-\Kf_z\right)^l\right).
\end{equation}
\end{lm}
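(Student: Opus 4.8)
The plan is to establish the convergence of traces by writing each trace as an explicit sum over tuples of nodes in $\roots_{\zz,\LL}$ and showing that, after the identification of near-$(-\rho)$ nodes with elements of $\inodes_{z,\LL}$ provided by Lemma~\ref{lm:asymptotics_nodes}, the summand converges pointwise to the corresponding summand for $-\Kf_z$, while the tail of the sum is uniformly negligible. Concretely, for fixed $l$ we have
\begin{equation*}
	\Tr\left(\left(\tilde K_\zz^{(1)}\right)^l\right) = \sum_{u_1,\cdots,u_l\in\roots_{\zz,\LL}} \tilde K_\zz^{(1)}(u_1,u_2)\tilde K_\zz^{(1)}(u_2,u_3)\cdots \tilde K_\zz^{(1)}(u_l,u_1).
\end{equation*}
First I would split the index set according to whether all the $u_i$ satisfy $|u_i+\rho|\le \rho\sqrt{1-\rho}N^{\epsilon/4}$ (the ``bulk'' part) or at least one does not (the ``tail'' part). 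On the bulk part, Lemma~\ref{lm:asymptotics_of_h_1}(a),(b) and the subsequent computation~\eqref{eq:aux_2016_04_18_02} give, for each factor, $\tilde K_\zz^{(1)}(u_i,u_{i+1}) = -\Kf_z(\xi_i,\xi_{i+1})(1+O(N^{\epsilon-1/2}\log N))$ where $\xi_i = \nodesmapping_{N,\LL}(u_i)$; multiplying $l$ such factors keeps the relative error $O(N^{\epsilon-1/2}\log N)$ since $l$ is fixed. On the tail part, Lemma~\ref{lm:asymptotics_of_h_1}(c) forces at least one factor in each product to be $O(e^{-CN^{3\epsilon/4}})$, and I would combine this with a crude uniform bound on the remaining factors (these are of the form $h_1(u)/(h_1(v')(u-v'))$, and $h_1, 1/h_1$ are controlled on all of $\roots_{\zz}$ by parts (a),(b),(c) of Lemma~\ref{lm:asymptotics_of_h_1} together with the spacing estimates) plus the fact that $|\roots_{\zz,\LL}| = L-N = O(N)$, so the total tail contribution is $O(\text{poly}(N)\,e^{-CN^{3\epsilon/4}}) \to 0$.

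Next I would pass to the limit on the bulk part. Using Lemma~\ref{lm:asymptotics_nodes}(c), the image $I(\nodesmapping_{N,\LL})$ sandwiches $\inodes_{z,\LL}^{(N^{\epsilon/4}-1)}$ from below and $\inodes_{z,\LL}^{(N^{\epsilon/4}+1)}$ from above, so as $N\to\infty$ the set of $\xi_i$ ranges over larger and larger truncations of $\inodes_{z,\LL}$; since $\inodes_{z,\LL}$ lies in the sector $\arg(\xi)\in(3\pi/4,5\pi/4)$, the factor $e^{-\frac13\tau\xi^3}$ decays like $e^{-c|\xi|^3}$ along the set, so $\Kf_z(\xi,\eta)$ is summable over $\inodes_{z,\LL}\times\inodes_{z,\LL}$ and the finite truncated sums converge to the full sum $\sum_{\xi_1,\cdots,\xi_l\in\inodes_{z,\LL}}\prod_i(-\Kf_z)(\xi_i,\xi_{i+1}) = \Tr((-\Kf_z)^l)$. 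The only subtlety here is to note that the summand $\prod_i\Kf_z(\xi_i,\xi_{i+1})$ along the approximating sets differs from its value at the true limiting points by an amount controlled by the $N^{3\epsilon/4-1/2}\log N$-closeness in~\eqref{eq:defofnodemapsJL}; since $\Kf_z$ is Lipschitz on compact subsets of the sector and the cubic decay dominates, this perturbation also vanishes in the limit.

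I expect the main obstacle to be the bookkeeping of the uniform tail bound: I need a single estimate for $|\tilde K_\zz^{(1)}(u,u')|$ valid for \emph{all} $u,u'\in\roots_{\zz,\LL}$ (not just those near $-\rho$), strong enough that summing $O(N^l)$ products of $l$ such factors, at least one of which is exponentially small, still tends to zero. This requires (i) a lower bound $|u-v'|\ge cN^{-1}$ (or better) on the node spacings, which follows from the explicit location of $\roots_{\zz,\LL}$ and $\roots_{\zz,\RR}$ on opposite sides of $\Re(w)=-\rho$ together with the $O(N^{-1/2})$ spacing near $-\rho$, and (ii) uniform upper bounds on $h_1$ and $1/h_1$ over the whole root sets, which are exactly what parts (a)--(c) of Lemma~\ref{lm:asymptotics_of_h_1} supply once one checks that the region $|w+\rho|\ge\rho\sqrt{1-\rho}N^{\epsilon/4-1/2}$ treated in part (c) already includes everything not covered by parts (a),(b). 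Granting these bounds, the argument is routine; assembling them carefully is the delicate point, and it is presumably deferred to Section~\ref{sec:proof_of_lemmas} in the paper. A symmetric argument handles the analogous trace identity needed for the step case with $\Ks_z$ in place of $\Kf_z$.
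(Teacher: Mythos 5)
Your proposal follows essentially the same route as the paper's proof: a bulk/tail split of the sum over tuples in $\roots_{\zz,\LL}$, pointwise kernel asymptotics from Lemma~\ref{lm:asymptotics_of_h_1}(a),(b) on the bulk, the node identification of Lemma~\ref{lm:asymptotics_nodes} plus a Lipschitz/derivative bound on $\Kf_z$ and summability of the limiting kernel to pass to the limit, and Lemma~\ref{lm:asymptotics_of_h_1}(c) with a polynomial count to kill the tail. The only cosmetic difference is bookkeeping: the paper writes out $l=1$ in detail and controls the (error)$\times$(number of bulk nodes) product by restricting $\epsilon<2/5$ (to be shrunk for larger $l$), whereas you absorb this via the decay of $\Kf_z$; either works.
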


\begin{lm}
\label{claim:uniform_bound_determinant}
There exists a constant $C$ which does not depend on $z$, such that for all $l\in\intZ_{\ge 1}$ we have
\begin{equation}
\label{eq:aux_2016_04_19_06}
\sum_{w_1,\cdots,w_l\in\roots_{\zz,\LL}}\left|\det\left[ \tilde K_\zz^{(1)} (w_i,w_j)\right]_{i,j=1}^l \right| \le C^l.
\end{equation}
\end{lm}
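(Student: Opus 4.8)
The statement to prove is Lemma~\ref{claim:uniform_bound_determinant}: a uniform (in $l$ and $z$) bound $\sum_{w_1,\dots,w_l}\left|\det[\tilde K_\zz^{(1)}(w_i,w_j)]\right|\le C^l$ on the ``Fredholm expansion terms.''

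\textbf{Plan.} The plan is to exploit the rank-one-like structure of $\tilde K_\zz^{(1)}$: its kernel is $\tilde K_\zz^{(1)}(u,u')=\frac{h_1(u)}{h_1(v')(u-v')}$, where $v'=v'(u')\in\roots_{\zz,\RR}$ is the partner of $u'$. So after extracting the factor $\prod_i h_1(w_i)\prod_i h_1(v_i')^{-1}$ from the $l\times l$ determinant, what remains is the Cauchy-type determinant $\det\left[\frac{1}{w_i-v_j'}\right]_{i,j=1}^l$, which by the Cauchy identity~\eqref{eq:Cauchy_identity} equals $\frac{\prod_{i<j}(w_i-w_j)(v_i'-v_j')}{\prod_{i,j}(w_i-v_j')}$. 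Hadamard's inequality then bounds $\left|\det\left[\frac{h_1(w_i)}{h_1(v_j')(w_i-v_j')}\right]\right|\le l^{l/2}\prod_{i=1}^l\left(\sum_{j=1}^l\left|\frac{h_1(w_i)}{h_1(v_j')(w_i-v_j')}\right|^2\right)^{1/2}$, and summing over $w_1,\dots,w_l\in\roots_{\zz,\LL}$ factorizes into $\prod_{i=1}^l$ of $\sum_{w\in\roots_{\zz,\LL}}\left(\sum_{v'\in\roots_{\zz,\RR}}\big|\tfrac{h_1(w)}{h_1(v')(w-v')}\big|^2\right)^{1/2}$. The $l^{l/2}$ from Hadamard is not of the form $C^l$, so instead I would use the sharper bound available for Cauchy-structured kernels: write $\tilde K_\zz^{(1)}(u,u') = \sum$ or rather observe the kernel is the composition $A\circ B$ with $A(u,v')=\frac{h_1(u)}{u-v'}$ acting $\ell^2(\roots_{\zz,\RR})\to\ell^2(\roots_{\zz,\LL})$ and $B(v',u')=\frac{\delta_{v',v'(u')}}{h_1(v')}$. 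Then $\det(I+\tilde K_\zz^{(1)})=\det(I+AB)$ and the Fredholm terms are controlled by the trace-class (or Hilbert–Schmidt) norms $\|A\|_1$, $\|B\|$ via $\sum_l\frac{1}{l!}\|\cdot\|_?\le e^{\|A\|_1\|B\|}$.

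\textbf{Key steps in order.} (1) Split $\roots_{\zz,\LL}$ into the ``near'' part $\mathcal{N}_\LL:=\roots_{\zz,\LL}\cap\{|w+\rho|\le\rho\sqrt{1-\rho}N^{\epsilon/4-1/2}\}$ and the ``far'' part, similarly for $\roots_{\zz,\RR}$; the contribution of any $w_i$ or $v'_j$ in the far part is $O(e^{-CN^{3\epsilon/4}})$ by Lemma~\ref{lm:asymptotics_of_h_1}(c), so these are harmless and can be absorbed. (2) On the near parts, use Lemma~\ref{lm:asymptotics_nodes} to transfer to the limiting node sets, and Lemma~\ref{lm:asymptotics_of_h_1}(a),(b) to get $|h_1(w)|\asymp \frac{N^{1/2}}{|\xi|}e^{\Re\Psi}$ and $|h_1(v')^{-1}|\asymp \frac{1}{N}e^{\Re\tilde\Psi}$, where the exponents decay cubically in $|\xi|,|\zeta|$ along the relevant sectors (the asymptotes $\arg\xi=\pm 3\pi/4$). (3) Establish the single-variable bounds: $\sum_{w\in\roots_{\zz,\LL}}\Big(\sum_{v'\in\roots_{\zz,\RR}}\big|\tfrac{h_1(w)}{h_1(v')(w-v')}\big|^2\Big)^{1/2}\le C$ uniformly — this needs the spacing estimate $|w-v'|\gtrsim N^{-1/2}$ near $-\rho$ (combined with the cubic decay of the exponentials to make the sum converge) and the $O(N^{-1})$ spacing elsewhere. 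The powers of $N$ must cancel: $h_1(w)$ contributes $N^{1/2}$, $h_1(v')^{-1}$ contributes $N^{-1}$, $(w-v')^{-1}$ contributes $N^{1/2}$, and the Riemann-sum over $\roots_{\zz,\LL}$ contributes $N^{1}$ worth of ``$dw$'' mass when converting the sum to an integral over $\inodes_{z,\LL}$ — I would track these carefully so that the net contribution is $O(1)$. (4) Conclude via the Hilbert–Schmidt/trace bound that $\sum_{w_1,\dots,w_l}\big|\det[\tilde K(w_i,w_j)]\big|\le C^l$ — concretely, use $\det(I+AB)$ expansion: the degree-$l$ term is $\sum\frac{1}{l!}\sum_{\sigma}\mathrm{sgn}(\sigma)\prod(AB)(w_i,w_{\sigma(i)})$, bounded by $\|A\|_{\mathrm{HS}}^l\|B\|_{\mathrm{op}}^l$ or by iterating the single-variable bound of step (3).

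\textbf{Main obstacle.} The hard part will be step (3): making the single-variable sum bound uniform in $z$ (including as $|z|\to 1$, where $\inodes_{z,\LL}$ changes) and, crucially, handling the near-$-\rho$ region where the node spacing degenerates to $O(N^{-1/2})$ and the naive Riemann-sum-to-integral comparison fails. The resolution is that in exactly that region the kernel itself is bounded (the would-be singularity $1/(w-v')$ is tamed because $w$ and $v'$ lie on opposite sides of $\Re=-\rho$ and are separated by $\gtrsim N^{-1/2}$, the same scale as the spacing), so one treats the $O(1)$-many nodes within $O(N^{-1/2})$ of $-\rho$ by brute force and applies the Riemann-sum comparison only outside a fixed-size neighborhood of $-\rho$, where the cubic exponential decay of $e^{\Re\Psi_z(\xi)}$ guarantees absolute convergence of both the sum and the limiting integral. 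I would state and prove the needed spacing lemma (node gaps are $\Theta(N^{-1})$ away from $-\rho$, $\Theta(N^{-1/2})$ at $-\rho$) as a preliminary, or cite it from the companion estimates near Lemma~\ref{lm:asymptotics_nodes}, and then the bound follows by combining Hadamard's inequality in the form $\big|\det[\tilde K(w_i,w_j)]_{i,j=1}^l\big|\le\prod_{i=1}^l\big(\sum_{j=1}^l|\tilde K(w_i,w_j)|^2\big)^{1/2}$ with the factorized summation.
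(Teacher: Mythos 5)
Your final paragraph lands on essentially the paper's argument: Hadamard's inequality in the row form $|\det[\tilde K_\zz^{(1)}(w_i,w_j)]_{i,j=1}^l|\le\prod_{i=1}^l\big(\sum_{j}|\tilde K_\zz^{(1)}(w_i,w_j)|^2\big)^{1/2}$, enlarging the inner sum to all $u'\in\roots_{\zz,\LL}$ so that the outer sum over $w_1,\dots,w_l$ factorizes into the $l$-th power of $\sum_{u\in\roots_{\zz,\LL}}\big(\sum_{u'\in\roots_{\zz,\LL}}|\tilde K_\zz^{(1)}(u,u')|^2\big)^{1/2}$, and proving this single-variable quantity is bounded uniformly because, by the same near/far splitting and Lemma~\ref{lm:asymptotics_of_h_1} used in Lemma~\ref{claim:trace_convergence}, it converges to the corresponding finite quantity for $\Kf_z$. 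Two points in your write-up need correcting, though. First, the $l^{l/2}$ you worry about does not occur: Hadamard with row $\ell^2$-norms carries no such factor (it appears only if one first bounds entries in sup-norm), so the detour through $\det(I+AB)$, trace norms and $e^{\|A\|_1\|B\|}$ is unnecessary — and it would not do the job anyway, since those bounds control the Fredholm determinant itself, not the quantity $\sum_{w_1,\dots,w_l}|\det[\cdot]|$ with the absolute value inside the node sum, which is exactly what the lemma (and the dominated-convergence step it feeds) requires.

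Second, the bookkeeping in your step (3) is wrong as stated: there is no ``Riemann sum over $\roots_{\zz,\LL}$ contributing $N$ worth of $\dd w$ mass,'' because $\inodes_{z,\LL}$ is a discrete set, not a contour. Lemma~\ref{lm:asymptotics_nodes} gives a node-to-node correspondence: after the rescaling $\xi=N^{1/2}(u+\rho)/(\rho\sqrt{1-\rho})$ the near nodes converge individually to points of $\inodes_{z,\LL}$; the factors $N^{1/2}$ from $h_1(u)$, $N^{-1}$ from $1/h_1(v')$ and $N^{1/2}$ from $1/(u-v')$ already cancel, so the kernel is $O(1)$ there; and the sum over the $O(N^{\epsilon/2})$ near nodes converges to the discrete sum over $\inodes_{z,\LL}$, finite by the cubic-exponential decay of $e^{\Re\Psi_z}$ along the asymptotes $\arg\xi=\pm3\pi/4$, while the far nodes contribute only $O(Le^{-CN^{3\epsilon/4}})$ by Lemma~\ref{lm:asymptotics_of_h_1}(c). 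Had there really been an extra Jacobian factor $N$, the single-variable sum would grow like $N$ and the uniform bound would fail, so no such factor can (or need) be tracked. Relatedly, the decay that makes the limit sum converge lives inside the $O(N^{\epsilon/4-1/2})$-neighborhood of $-\rho$ in the rescaled variable, not ``outside a fixed-size neighborhood'': outside that shrinking neighborhood everything is already exponentially small. Finally, uniformity in $z$ is needed only on the fixed integration contour, a compact subset of $|z|<1$ (all error terms in Section 6 depend only on $|z|$), so the regime $|z|\to1$ you flag never enters.
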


Assuming that Lemma~\ref{claim:trace_convergence} is true, we have
\begin{equation*}
\lim_{N\to\infty}\sum_{w_1,\cdots,w_l\in\roots_{\zz,\LL}}\det\left[ \tilde K_\zz^{(1)} (w_i,w_j)\right]_{i,j=1}^l = \sum_{\xi_1,\cdots,\xi_l\in\inodes_{z,\LL}}\det\left[ -\Kf_z (\xi_i,\xi_j)\right]_{i,j=1}^l 
\end{equation*}
for any fixed $l$. Then by applying Lemma~\ref{claim:uniform_bound_determinant} and the dominated convergence theorem, we obtain~\eqref{eq:aux_2016_04_18_03}.
%This completes the proof of Theorems~\ref{thm:limit_one_point_distribution_flat}. 

\medskip
It remains to prove Lemma~\ref{claim:trace_convergence} and~\ref{claim:uniform_bound_determinant}.

\begin{proof}[Proof of Lemma~\ref{claim:trace_convergence}]
We only prove the lemma when $l=1$; the case when $l>1$ is similar. 
Fix $\epsilon$ such that $0<\epsilon<2/5$.  The upper bound $2/5$ is related to the number of terms in the summation (see~\eqref{eq:aux_2016_05_05_02} and~\eqref{eq:aux_2016_04_19_03} below) hence needs to be modified accordingly if $l>1$. 

Note that Lemma~\ref{lm:asymptotics_nodes} implies that for any $u\in\roots_{\zz,\LL}$ satisfying $|u+\rho|\le \rho\sqrt{1-\rho}N^{\epsilon/4}$, there exists a unique $\xi=\nodesmapping(u)\in \inodes_{z,\LL}$ such that $\left|\xi - \frac{N^{1/2}(u+\rho)}{\rho\sqrt{1-\rho}}\right| \le N^{3\epsilon/4-1/2}$. 
We have 
\begin{equation}
\label{eq:aux_2016_04_19_01}
	\left|\Kf_z(\xi,\xi)\right|, \quad \left|\frac{\dd }{\dd \xi} \Kf_z(\xi,\xi)\right| \le C
\end{equation}
uniformly for all $\xi$ satisfying $\dist(\xi, \inodes_{z,\LL}) \le N^{3\epsilon/4-1/2} \log N$; this follows from the exponential decay of  the kernel along the $\inodes_{z, \LL}$. Here $\dist(w, A)$ denotes the distance from point $w$ to set $A$.  Thus we obtain 
\begin{equation}
\label{eq:aux_2016_05_05_02}
	\left|\sum_{u\in\roots_{\zz,\LL} \atop |u+\rho|\le \rho\sqrt{1-\rho}N^{\epsilon/4}}
 \Kf_z\left(\frac{N^{1/2}(u+\rho)}{\rho\sqrt{1-\rho}},\frac{N^{1/2}(u+\rho)}{\rho\sqrt{1-\rho}}\right)-\sum_{\xi\in I(\nodesmapping_{N,\LL})}\Kf_z(\xi,\xi)\right| \le CN^{3\epsilon/4-1/2}|I(\nodesmapping_{N,\LL})|,
 \end{equation}
 where $I(\nodesmapping_{N,\LL})$ is the image of the mapping $\nodesmapping_{N,\LL}$ defined in Lemma~\ref{lm:asymptotics_nodes}. By using the upper bound of $I(\nodesmapping_{N,\LL})$ in Lemma~\ref{lm:asymptotics_nodes} (c), all the points in $I(\nodesmapping_{N,\LL})$ satisfy $|\xi|\le N^{\epsilon/4}+1$. Thus we have $|I(\nodesmapping_{N,\LL})|\le CN^{\epsilon/2}$. The lower bound of $I(\nodesmapping_{N,\LL})$ in the same lemma also implies $\lim_{N\to\infty}I(\nodesmapping_{N,\LL})=\inodes_{z,\LL}$. Since% with the fact
%Furthermore,
\begin{equation}
\sum_{\xi \in \inodes_{z,\LL}} \left|\Kf_z(\xi,\xi) \right| \le C, 
\end{equation}
%The above estimates imply that as $N$ tends to infinity
we obtain
\begin{equation}
\label{eq:aux_2016_04_19_02}
\sum_{u\in\roots_{\zz,\LL} \atop |u+\rho|\le \rho\sqrt{1-\rho}N^{\epsilon/4}}
 \Kf_z\left(\frac{N^{1/2}(u+\rho)}{\rho\sqrt{1-\rho}},\frac{N^{1/2}(u+\rho)}{\rho\sqrt{1-\rho}}\right) 	\to
\sum_{\xi \in \inodes_{z,\LL}}
 \Kf_z(\xi,\xi) 
\end{equation}
where we also used the upper bound of $\epsilon$ which gives $5\epsilon/4-1/2<0$.

Now for each $u = -\rho+\rho\sqrt{1-\rho} \xi\in\roots_{\zz,\LL}$ satisfying $|\xi|\le N^{\epsilon/4}$, by applying Lemma~\ref{lm:asymptotics_of_h_1} (a) and (b) we have
\begin{equation}
\tilde K_\zz^{(1)} (u,u) = -\Kf_z\left(\frac{N^{1/2}(u+\rho)}{\rho\sqrt{1-\rho}},\frac{N^{1/2}(u+\rho)}{\rho\sqrt{1-\rho}}\right) (1+O(N^{\epsilon-1/2}\log N))
\end{equation}
where the error term $O(N^{\epsilon-1/2}\log N)$ is independent of $u$. Therefore we have
\begin{equation}
\label{eq:aux_2016_04_19_03}
	 \left| \sum_{u\in\roots_{\zz,\LL} \atop |u+\rho|\le \rho\sqrt{1-\rho}N^{\epsilon/4}}\tilde K_\zz^{(1)} (u,u) 
         +\sum_{u\in\roots_{\zz,\LL} \atop |u+\rho|\le \rho\sqrt{1-\rho}N^{\epsilon/4}} \Kf_z\left(\frac{N^{1/2}(u+\rho)}{\rho\sqrt{1-\rho}},\frac{N^{1/2}(u+\rho)}{\rho\sqrt{1-\rho}}\right)
  	  \right|
\le  CN^{5\epsilon/4 -1/2}\log N.
\end{equation} 

The last estimate we need is
\begin{equation}
\label{eq:aux_2016_04_19_04}
	\left| \sum_{u\in\roots_{\zz,\LL}}\tilde K_\zz^{(1)} (u,u) 
         -\sum_{u\in\roots_{\zz,\LL} \atop |u+\rho|\le \rho\sqrt{1-\rho}N^{\epsilon/4}}\tilde K_\zz^{(1)} (u,u)
  	  \right|
\le  C L e^{-CN^{3\epsilon/4}},
\end{equation}
which follows from Lemma~\ref{lm:asymptotics_of_h_1} (c) and the fact that there are at most $L-N$ points in the summation. By combining~\eqref{eq:aux_2016_04_19_02},~\eqref{eq:aux_2016_04_19_03} and~\eqref{eq:aux_2016_04_19_04}, we have
\begin{equation}
\lim_{N\to\infty}\sum_{u\in\roots_{\zz,\LL}}\tilde K_\zz^{(1)} (u,u) = -\sum_{\xi \in \inodes_{z,\LL}}
 \Kf_z(\xi,\xi).
\end{equation}
%if we pick $\epsilon<2/5$ at the beginning of the argument. 
\end{proof}

\begin{proof}[Proof of Lemma~\ref{claim:uniform_bound_determinant}]
First we have the following inequality
\begin{equation}
\label{eq:aux_2016_04_19_05}
\sum_{u\in\roots_{\zz,\LL}} \sqrt{\sum_{u'\in \roots_{\zz,\LL}} |\tilde K_\zz^{(1)}(u,u')|^2} \le C
\end{equation}
where $C$ is a constant which is independent of $z$. This follows from the fact that the left hand side converges to 
\begin{equation}
\sum_{\xi\in\inodes_{z,\LL}} \sqrt{\sum_{\xi'\in\inodes_{z,\LL}} |\Kf_z(\xi,\xi')|^2}
\end{equation}
as $N\to\infty$, which can be shown by using Lemma~\ref{lm:asymptotics_of_h_1}. Since the argument of this convergence is almost the same as the proof of Lemma~\ref{claim:trace_convergence}, we omit the details. The above limit is bounded and hence~\eqref{eq:aux_2016_04_19_05} holds. 

Now we prove the lemma. By the Hadamard's inequality, we have
\begin{equation}
\left| \det \left[ \tilde K_\zz^{(1)} (w_i, w_j)\right]_{i,j=1}^l\right| \le \prod_{i=1}^l\sqrt{\sum_{1\le j\le l}|\tilde K_\zz^{(1)} (w_i, w_j)|^2} \le \prod_{i=1}^l\sqrt{\sum_{u'\in \roots_{\zz,\LL}}|\tilde K_\zz^{(1)} (w_i, u')|^2}
\end{equation}
for all distinct $w_1,w_2,\cdots,w_l\in\roots_{\zz,\LL}$. As a result,
\begin{equation}
\begin{split}
	\sum_{w_1,\cdots,w_l\in\roots_{\zz,\LL}}\left|\det\left[ \tilde K_\zz^{(1)} (w_i,w_j)\right]_{i,j=1}^l \right|
& 	\le \sum_{w_1,\cdots,w_l \in \roots_{\zz,\LL}}
	\prod_{i=1}^l\sqrt{\sum_{u'\in 	\roots_{\zz,\LL}}|\tilde K_\zz^{(1)} (w_i, u')|^2}\\
&   =\left(\sum_{u\in\roots_{\zz,\LL}} \sqrt{\sum_{u'\in \roots_{\zz,\LL}} |\tilde K_\zz^{(1)}(u,u')|^2} \right)^l,
\end{split}
\end{equation}
and~\eqref{eq:aux_2016_04_19_06} follows immediately.
\end{proof}

\subsubsection{Proof of Theorem~\ref{thm:limit_one_point_distribution_flat}}

In summary,  under the scaling~\eqref{eq:parameters_flat}, we have~\eqref{eq:aux_2016_04_19_07} and~\eqref{eq:aux_2016_04_18_03}. These two imply that~\eqref{eq:aux_2016_04_14_01} with $i=1$ converges to $\FF(\tau^{1/3}x;\tau)$, where $\FF(x;\tau)$ is defined in~\eqref{eq:def_FF}. This proves Theorem~\ref{thm:limit_one_point_distribution_flat}.

\subsection{Step initial condition}

We now prove Theorem~\ref{thm:limit_one_point_distribution_step}.

We apply Theorem~\ref{prop:one_point_distribution_Fredholm} with
\begin{equation}
\label{eq:parameters_step_1}
\begin{split}
\rho_{n}	& =N_n/L_n,\\
t_{n}		& =\frac{N_n}{\rho_{n}^2}
				\left[\frac{\tau}{\sqrt{1-\rho_{n}}}N_n^{1/2}\right]
 			    +\frac{1}{\rho_{n}^2}\gamma_{n} N_n +\frac{1}{\rho_{n}^2}(N_n-k_{n}),
\end{split}
\end{equation}
and
\begin{equation}\label{eq:parameters_step_2}
\begin{split}
	a_n& = (1-\rho_n)t_n  -\rho_n^{-1}(N_n-k_n)
	 		- \rho_n^{-1/3}(1-\rho_n)^{2/3}x t_n^{1/3},
%	 &=  \frac{1-\rho}{\rho^2}N[\beta N^{1/2}]+\frac{(1-\rho)\gamma_N-\rho}{\rho^2}N+\frac{2\rho-1}{\rho^2}k_N-\frac{\sqrt{1-\rho}}{\sqrt{2}\rho}xN^{1/2}
\end{split}
\end{equation}
where $\rho_n\in(c_1,c_2)$ with fixed constants $c_1,c_2$ satisfying $0<c_1<c_2<1$, and $\gamma_n = \gamma+ O(N_n^{-1/2})$ for fixed $\gamma\in\realR$. 
As we mentioned before, we suppress the subscript $n$ for notational convenience, but we still write $\gamma_n$ to distinguish it from $\gamma$, which is a fixed constant. 

We also assume $a=a_n$ given in~\eqref{eq:parameters_step_2} is an integer so that the Theorem~\ref{prop:one_point_distribution_Fredholm} applies.

%Now we consider the asymptotics of~\eqref{eq:aux_2016_04_14_02} for $i=2$, i.e., the limit of the following expression as $N\to\infty$
%\begin{equation}
%\int_{|z| = r} \consts(\zz) \cdot \det \left(I + K_{\zz}^{(2)}\right) \ddbar{z}.
%\end{equation}

%Similarly to the flat case, we consider $\const^{(2)}(\zz)$ and $\det \left(I + K_{\zz}^{(2)}\right)$ separately.

\subsubsection{Asymptotics of $\consts(\zz)$}

Recall the definition of $\consts(\zz)$ in~\eqref{eq:aux_2016_04_09_05} and rewrite it as
the product of the two terms % $\constc_{N,1}^{(2)}(\zz)\cdot \constc_{N,2}^{(2)}(\zz)$, where
\begin{equation}
\begin{split}
   \constc_{N,1}^{(2)}(\zz)  
&  =\frac{\prod_{u\in\roots_{\zz,\LL}}(-u)^N\prod_{v\in\roots_{\zz,\RR}}(v+1)^{L-N}}
		 {\prod_{u\in\roots_{\zz,\LL}}\prod_{v\in\roots_{\zz,\RR}}(v-u)},\\
   \constc_{N,2}^{(2)}(\zz)
&  =\prod_{u\in\roots_{\zz,\LL}}(-u)^{k-N-1}
	\prod_{v\in\roots_{\zz,\RR}}(v+1)^{-a-N+k} e^{tv}.  
\end{split}
\end{equation}
%\footnote{Note that if $\rho = d^{-1}$ is a fixed number and $d\in\intZ_{\ge 2}$, then $\constc_{N,1}^{(2)}(\zz) = \left(\constc_{N,1}^{(1)}(\zz)\right)^2$.} 
By applying Lemma~\ref{lm:asymptotics_notations} (c), we have
\begin{equation}
\label{eq:aux_2016_04_20_01}
	\constc_{N,1}^{(2)}(\zz) = e^{2B(z)} \left(1+O(N^{\epsilon-1/2})\right).
\end{equation}
On the other hand, for $\constc_{N,2}^{(2)}(\zz)$, we have the following result which is analogous  to Lemma~\ref{lm:flat_constant}. 

\begin{lm}
\label{lm:step_constant}
Suppose $\zz$, $z$, and $\epsilon$ satisfy the same conditions in Lemma~\ref{lm:asymptotics_nodes}, and $a, t$ satisfy~\eqref{eq:parameters_step_1} and~\eqref{eq:parameters_step_2} with $1\le k\le N$. Then for large enough $N$, we have
\begin{equation}
	\constc_{N,2}^{(2)}(\zz)= e^{\tau^{1/3}x A_1(z) +\tau A_2(z)} \left( 1 +O(N^{\epsilon-1/2}) \right).
%(k- N- 1)\sum_{u\in\roots_{\zz,\LL}} \log (-u)
%	 	+\sum_{v\in \roots_{\zz,\RR}} \left((-a-N+k) \log (v+1) +tv\right) 
%= 		\tau^{1/3}x A_1(z) +\tau A_2(z) +O(N^{\epsilon-1/2}).
\end{equation}
\end{lm}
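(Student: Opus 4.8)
The plan is to reduce the asymptotics of $\constc_{N,2}^{(2)}(\zz)$ to the same type of estimate already carried out for $\constc^{(1)}_{N,3}(\zz)$ in Lemma~\ref{lm:flat_constant}, since the two quantities have essentially the same structure up to the different exponents dictated by the step versus flat initial condition. First I would take the logarithm and write $\log \constc_{N,2}^{(2)}(\zz) = (k-N-1)\sum_{u\in\roots_{\zz,\LL}}\log(-u) + (-a-N+k)\sum_{v\in\roots_{\zz,\RR}}\log(v+1) + t\sum_{v\in\roots_{\zz,\RR}} v$, choosing the branch of $\log$ with cut on $\realR_{\le 0}$ as in the rest of the paper (note $-u$ has $\Re(-u)>\rho>0$ for $u\in\roots_{\zz,\LL}$ and $v+1$ has $\Re(v+1)>1-\rho>0$ for $v\in\roots_{\zz,\RR}$, so the logarithms are well-defined).

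The key step is to evaluate the three sums over the roots. The sums $\sum_{u\in\roots_{\zz,\LL}}\log(-u)$, $\sum_{v\in\roots_{\zz,\RR}}\log(v+1)$, and $\sum_{v\in\roots_{\zz,\RR}}v$ are symmetric functions of the roots of $q_{\zz,\LL}$ and $q_{\zz,\RR}$, and can be written as contour integrals $\frac{1}{2\pi\ii}\oint \log(\cdot)\, \frac{q'_{\zz,\LL}(w)}{q_{\zz,\LL}(w)}\dd w$ etc., or estimated directly using the same Riemann-sum-to-integral comparison underlying Lemma~\ref{lm:asymptotics_notations}: the bulk of the roots are spaced $O(N^{-1})$ apart along $\Sigma_\LL \cup \Sigma_\RR$ and contribute a leading term of order $N$ plus lower-order corrections, while the $O(N^{1/2})$ roots near $-\rho$ give an $O(1)$ contribution captured by the functions $\hftn_\LL, \hftn_\RR$ and ultimately by $A_1(z), A_2(z)$. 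Concretely, I expect that after inserting the scaling~\eqref{eq:parameters_step_1}--\eqref{eq:parameters_step_2} for $a$, $t$, $k$, the $O(N^{3/2})$ and $O(N)$ terms coming from $t\sum v$ and from the products cancel against the corresponding terms in $(-a-N+k)\sum\log(v+1)$ by the choice of centering in $a_n$; the surviving $O(N^{1/2})$ term produces $\tau^{1/3}x\, A_1(z)$, the $O(1)$ term linear in $\tau$ produces $\tau A_2(z)$, and all remaining pieces are $O(N^{\epsilon-1/2})$.

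In practice the cleanest route is to observe that $\constc_{N,2}^{(2)}(\zz)$ differs from a suitable power/combination of $\constc^{(1)}_{N,3}(\zz)$-type factors only through elementary algebraic identities for the roots (namely $\prod_{u\in\roots_{\zz,\LL}}(-u)^N \prod_{v\in\roots_{\zz,\RR}}(v+1)^{L-N}$-type relations and $q_{\zz,\LL}q_{\zz,\RR}=q_{\zz}$, together with $|\zz|\to \rr_0$ under~\eqref{eq:aux_2016_04_14_03}), so that the heart of the matter is a single Laplace-type expansion of $\sum_{v\in\roots_{\zz,\RR}}\big[(-a-N+k)\log(v+1)+tv\big]$ about the critical point $v=-\rho$. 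I would perform a Taylor expansion of the summand in $\xi = N^{1/2}(v+\rho)/(\rho\sqrt{1-\rho})$, use Lemma~\ref{lm:asymptotics_nodes} to replace the discrete set $\{\xi\}$ by $\inodes_{z,\RR}$ with controlled error, and use Lemma~\ref{lm:asymptotics_notations}(a),(b) (and~\eqref{eq:aux_2016_04_23_02}) to identify the integral terms with the polylogarithm expressions in $A_1, A_2$.

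The main obstacle will be the bookkeeping of the large-order cancellations: several terms of size $N^{3/2}$, $N$, and $N^{1/2}\log N$ appear separately in the three sums and in the definition of $a_n$, and one must track them carefully enough to see that only $\tau^{1/3}x A_1(z) + \tau A_2(z)$ survives with an $O(N^{\epsilon-1/2})$ remainder, rather than, say, an $O(1)$ or divergent discrepancy. A secondary technical point is ensuring uniformity of the error in $\arg(z)$ (so that the final contour integral over $z$ converges), which follows because, as noted in the paper, all the $O(\cdot)$ estimates in Lemmas~\ref{lm:asymptotics_nodes} and~\ref{lm:asymptotics_notations} depend only on $|z|$; I would simply invoke this and the analyticity of all the limiting functions in $|z|<1$ to conclude.
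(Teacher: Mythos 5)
Your high-level skeleton (take logarithms, reduce to the three sums $\sum_{u\in\roots_{\zz,\LL}}\log(-u)$, $\sum_{v\in\roots_{\zz,\RR}}\log(v+1)$, $\sum_{v\in\roots_{\zz,\RR}}v$, localize near $-\rho$, identify $A_1,A_2$ as polylogarithm integrals) is the right one, but the mechanism you propose for the reduction is exactly where the difficulty sits, and as described it does not reach the required precision. The ``cleanest route'' you choose --- a Laplace-type expansion of $\sum_{v\in\roots_{\zz,\RR}}[(-a-N+k)\log(v+1)+tv]$ about $v=-\rho$, replacing the roots near $-\rho$ by $\inodes_{z,\RR}$ via Lemma~\ref{lm:asymptotics_nodes} --- fails on two counts. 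First, the summand is not small away from $-\rho$: with $t,a=O(N^{3/2})$ each bulk root contributes $O(N^{3/2})$, so the bulk is $O(N^{5/2})$, while the target error is $O(N^{\epsilon-1/2})$; a Riemann-sum comparison at spacing $O(N^{-1})$ cannot certify cancellations at that depth. Second, even in the window $|v+\rho|\le \rho\sqrt{1-\rho}N^{\epsilon/4-1/2}$ the expansion contains terms of size $N^{1/2}\xi^2$, and Lemma~\ref{lm:asymptotics_nodes} only locates the nodes to accuracy $N^{3\epsilon/4-1/2}\log N$ in $\xi$, so replacing the roots by $\inodes_{z,\RR}$ perturbs those terms by amounts $\gg 1$. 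The paper avoids discretization entirely: by the exact residue identities \eqref{eq:aux_2015_11_27_01}, \eqref{eq:aux_2015_11_28_02} and \eqref{eq:auxtemp22}, and because the specific functions $\log(-u)$, $\log(v+1)$, $v$ vanish at $-1$ resp.\ $0$, the bulk terms $(L-N)p(-1)$ and $Np(0)$ vanish identically and $\log\constc^{(2)}_{N,2}(\zz)$ is \emph{exactly} $L\zz^L\int(\gee_2(w)-\gee_2(-\rho))\frac{(w+\rho)}{w(w+1)q_\zz(w)}\frac{\dd w}{2\pi\ii}$ over the vertical line through $-\rho$, with $\gee_2$ as in \eqref{eq:aux_2016_04_20_04}; localization then comes from the decay of $|\zz^L/q_\zz(w)|$, not from decay of the summand (this is the proof of Lemma~\ref{lm:flat_constant} in Section~\ref{sec:proof_lemma_flat_constant}, reused with $\gee_1$ replaced by $\gee_2$).

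Your accounting of which terms survive is also off, and this conceals precisely the cancellations you flag as the main obstacle. On the symmetric vertical contour the weight is odd in $\xi$ to leading order, so the huge $-\frac12\xi^2\big[\tau N^{1/2}/\sqrt{1-\rho}\big]$ term \emph{and} the $-\frac12\gamma\xi^2$ term in \eqref{eq:aux_2016_04_20_03} drop out by parity --- this is why the limit is $\gamma$-independent, a point your sketch never addresses --- but the cross terms between that $O(N^{1/2})$ quadratic piece and the $O(N^{-1/2})$ corrections to the weight (the subleading expansions \eqref{eq:aux_2015_12_16_07} and \eqref{eq:Mzrhozz1}) survive at $O(1)$ and, after the integration by parts in \eqref{eq:intrhoitomi}, are exactly what converts the naive cubic contribution into $-\frac13\tau\xi^4$, i.e.\ into $\tau A_2(z)$. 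So $\tau A_2(z)$ is not produced by ``the $O(1)$ term linear in $\tau$'' alone, and killing the residual $O(N^{-1/2})$ term again uses the parity of the contour together with the explicit coefficient $E_2$ in \eqref{eq:F1F1rhoexp}. Without the exact contour-integral representation and this symmetry bookkeeping, your plan would either miss part of the coefficient of $A_2(z)$ (and the $\gamma$-cancellation) or be unable to justify the stated $O(N^{\epsilon-1/2})$ error.
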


Combining with~\eqref{eq:aux_2016_04_20_01} we obtain
\begin{equation}
\label{eq:estimate_consts}
\consts(\zz) = e^{\tau^{1/3}x A_1(z) +\tau A_2(z) +2B(z)}
				\left(1+O(N^{\epsilon-1/2})\right).
\end{equation}

\subsubsection{Asymptotics of $\det\left(I+K_{\zz}^{(2)}\right)$}

Recall that
\begin{equation}
K_{\zz}^{(2)} (u, u') = f_2(u) \sum_{v\in\roots_{\zz,\RR}} \frac{1}{(u-v)(u'-v) f_2(v)}, \qquad u,u'\in\roots_{\zz,\LL},
\end{equation}
where $f_2$ is given by
\begin{equation}
	f_2(w) = \begin{dcases}
			 \frac{\tilde g_2(w) (q_{\zz,\RR}(w))^2}{(w+\rho)w^{2N}}, 
			\qquad & w\in\roots_{\zz,\LL},\\
			\frac{\tilde g_2(w) (q'_{\zz,\RR}(w))^2}{(w+\rho)w^{2N}}, 
			\qquad & w\in\roots_{\zz,\RR},	
         \end{dcases}
\end{equation}
with $\tilde g_2(w) := w^{N-k+2} (w+1)^{-a -N +k+1} e^{tw}$. Note that $w^N(w+1)^{L-N}=\zz^L$ for all $w\in\roots_\zz$. Therefore the Fredholm determinant of $ K_\zz^{(2)}$ is equal to $\det\big(I+\tilde K_\zz^{(2)}\big)$ where % the kernel
\begin{equation}
\tilde K_\zz^{(2)} (u,u') = h_2(u) \sum_{v\in\roots_{\zz,\RR}} \frac{1}{(u-v)(u'-v) h_2(v)}, \qquad u,u'\in\roots_{\zz,\LL},
\end{equation}
where
\begin{equation}
\label{eq:def_h_2}
h_2(w) = \begin{dcases}
			\frac{g_2(w) (q_{\zz,\RR}(w))^2}{(w+\rho) w^{2N}}, 
			\qquad & w\in\roots_{\zz,\LL},\\
		         \frac{g_2(w) (q'_{\zz,\RR}(w))^2}{(w+\rho) w^{2N}}, 
			\qquad & w\in\roots_{\zz,\RR},	
         \end{dcases}
\end{equation}
with
\begin{equation}
\label{eq:aux_2016_04_20_02}
	g_2(w): =%\frac{\tilde g_2(w)}{\tilde g_2(-\rho)}
		\frac{ \tilde g_2(w)	w^{N\left[\frac{\tau}{\sqrt{1-\rho}} N^{1/2}\right]}
			    (w+1)^{(L-N)\left[\frac{\tau}{\sqrt{1-\rho}} N^{1/2}\right]}}
			 {\tilde g_2(-\rho) (-\rho)^{N\left[\frac{\tau}{\sqrt{1-\rho}} N^{1/2}\right]}
			 	(1-\rho)^{(L-N)\left[\frac{\tau}{\sqrt{1-\rho}} N^{1/2}\right]}}.
\end{equation}

Similarly to Lemma~\ref{lm:asymptotics_of_h_1}, we have the following asymptotics for $h_2(w)$.
See Section~\ref{sec:proof_of_lemmas} for the proof. 

\begin{lm}
\label{lm:asymptotics_of_h_2}
Let $0<\epsilon<1/2$ be a fixed constant. 
\begin{enumerate}[(a)]
\item When $u\in\roots_{\zz,\LL}$ and $|u+\rho| \le \rho\sqrt{1-\rho} N^{\epsilon/4-1/2}$, we have
\begin{equation}
\label{eq:aux_2016_04_20_07}
h_2(u) = \frac{N^{1/2}}{\rho\sqrt{1-\rho}\xi} 
		  e^{2\hftn_\RR(\xi,z)-\frac{1}{3}\tau \xi^3 + \tau^{1/3}x\xi +\frac12\gamma\xi^2}
		  (1+O(N^{\epsilon-1/2}\log N)),
\end{equation}
where $\xi=\frac{N^{1/2}(u+\rho)}{\rho\sqrt{1-\rho}}$ and $\hftn_\RR$ is defined in~\eqref{eq:def_h_R}. The error term $O(N^{\epsilon-1/2}\log N)$ is independent of $u$ or $\xi$.

\item When $v\in\roots_{\zz,\RR}$ and $|v+\rho| \le \rho\sqrt{1-\rho} N^{\epsilon/4-1/2}$, we have
\begin{equation}
\frac{1}{h_2(v)} = \frac{\rho^3(1-\rho)^{3/2}}{\zeta N^{3/2}} 
		  e^{2\hftn_\LL(\zeta,z)+\frac{1}{3}\tau \zeta^3 - \tau^{1/3}x\zeta -\frac12\gamma\zeta^2}
		  (1+O(N^{\epsilon-1/2}\log N)),
\end{equation}
where $\zeta=\frac{N^{1/2}(v+\rho)}{\rho\sqrt{1-\rho}}$ and $\hftn_\LL$ is defined in~\eqref{eq:def_h_L}. The error term $O(N^{\epsilon-1/2}\log N)$ is independent of $v$ or $\zeta$.

\item When $w\in\roots_{\zz}$ and $|w+\rho|\ge\rho\sqrt{1-\rho}N^{\epsilon/4-1/2}$, we have
 \begin{equation}
 %\label{eq:h_estimate_inf}
 h_2(w)=O(e^{-CN^{3\epsilon/4}}), \qquad w\in\roots_{\zz,\LL}
 \end{equation}
 and
  \begin{equation}
 % \label{eq:h_estimate_inf_2}
  \frac1{h_2(w)}=O(e^{-CN^{3\epsilon/4}}), \qquad w\in\roots_{\zz,\RR}.
  \end{equation}
Here both error terms $O(e^{-CN^{3\epsilon/4}})$ are independent of $w$.
\end{enumerate}
\end{lm}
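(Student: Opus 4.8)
The plan is to follow the route of the proof of Lemma~\ref{lm:asymptotics_of_h_1} almost verbatim; the only structural differences are that $h_2$ in \eqref{eq:def_h_2} carries the \emph{squared} factors $(q_{\zz,\RR}(w))^2$ and $(q'_{\zz,\RR}(w))^2$ instead of $q_{\zz,\RR}(w)$ and $q'_{\zz,\RR}(w)$ — which will turn $\hftn_\RR$, $\hftn_\LL$ into $2\hftn_\RR$, $2\hftn_\LL$ — and that the time \eqref{eq:parameters_step_1} contains the extra term $\gamma_n N/\rho^2$, which will be responsible for the new Gaussian factor $e^{\gamma\xi^2/2}$ that has no analogue in the flat case.

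First I would reduce everything to the single scalar function $g_2$ of \eqref{eq:aux_2016_04_20_02}. For part (a), with $w=w_N(\xi):=-\rho+\rho\sqrt{1-\rho}\,\xi N^{-1/2}$, squaring \eqref{eq:limit_h_R} (and tracking the sign) gives $q_{\zz,\RR}(w)=\prod_{v\in\roots_{\zz,\RR}}(w-v)=w^N e^{\hftn_\RR(\xi,z)}(1+O(N^{\epsilon-1/2}))$ for $\Re\xi\le 0$, hence $(q_{\zz,\RR}(w))^2 w^{-2N}=e^{2\hftn_\RR(\xi,z)}(1+O(N^{\epsilon-1/2}))$; since $1/(w+\rho)=N^{1/2}/(\rho\sqrt{1-\rho}\,\xi)$, this yields $h_2(w)=\dfrac{g_2(w)}{w+\rho}\,e^{2\hftn_\RR(\xi,z)}(1+O(N^{\epsilon-1/2}))$, which already exhibits the claimed prefactor and $e^{2\hftn_\RR}$ modulo the analysis of $g_2$. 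For part (b) I would use the identity \eqref{eq:aux_2015_12_15_01}, $q'_{\zz,\RR}(v)=L v^{N-1}(v+1)^{L-N-1}(v+\rho)/q_{\zz,\LL}(v)$ for $v\in\roots_{\zz,\RR}$, together with the consequence of \eqref{eq:limit_h_L} that $q_{\zz,\LL}(v)=\prod_{u\in\roots_{\zz,\LL}}(v-u)=(v+1)^{L-N}e^{\hftn_\LL(\zeta,z)}(1+O(N^{\epsilon-1/2}))$ for $v=w_N(\zeta)$ with $\Re\zeta\ge0$; substituting into $1/h_2(v)=(v+\rho)v^{2N}/(g_2(v)(q'_{\zz,\RR}(v))^2)$ and evaluating the remaining elementary factors at $v\approx-\rho$ produces the prefactor $\rho^3(1-\rho)^{3/2}/(\zeta N^{3/2})$ and the factor $e^{2\hftn_\LL(\zeta,z)}$, again leaving only $1/g_2(v)$.

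The core computation is the expansion of $\log g_2(w)$ along $w=w_N(\xi)$. Writing $M:=[\tau N^{1/2}/\sqrt{1-\rho}]$ and $\delta:=w+\rho$, one has $\log g_2(w)=(N-k+2+NM)\log\frac{w}{-\rho}+(-a-N+k+1+(L-N)M)\log\frac{w+1}{1-\rho}+t\,\delta$; I would insert $L-N=N(1-\rho)/\rho$ and the formulas \eqref{eq:parameters_step_1}--\eqref{eq:parameters_step_2} for $t$ and $a$, then Taylor-expand the two logarithms to order $\delta^3$. The decisive point is that the pieces proportional to $NM$ and to $(N-k)/\rho^2$ contribute \emph{identically zero} to the coefficients of $\delta^1$ and $\delta^2$ (these are polynomial identities in $M$, so the $O(1)$ rounding in the floor is immaterial), while the $\gamma_n N/\rho^2$ piece contributes nothing at order $\delta^1$ and exactly $\frac12\gamma_n\xi^2$ at order $\delta^2$; the $\delta^1$ term from the $x t^{1/3}$ part of $a$ yields $\tau^{1/3}x\xi$ and the $\delta^3$ term from $NM$ yields $-\frac13\tau\xi^3$. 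Using $\gamma_n=\gamma+O(N^{-1/2})$ (which enters with coefficient $O(|\xi|^2)=O(N^{\epsilon/2})$ and is therefore absorbed into the error), this gives $g_2(w)=e^{-\frac13\tau\xi^3+\tau^{1/3}x\xi+\frac12\gamma\xi^2}(1+O(N^{\epsilon-1/2}\log N))$ uniformly for $|\xi|\le N^{\epsilon/4}$; combined with the previous paragraph, parts (a) and (b) follow (for (b), $1/g_2(v)$ contributes the reciprocal exponent $e^{\frac13\tau\zeta^3-\tau^{1/3}x\zeta-\frac12\gamma\zeta^2}$). Part (c) is the tail estimate: for $|w+\rho|\ge\rho\sqrt{1-\rho}N^{\epsilon/4-1/2}$ I would show that the real part of the full exponent of $h_2(w)$ (resp. of $1/h_2(w)$ for $w\in\roots_{\zz,\RR}$) is at most $-CN^{3\epsilon/4}$, by combining the cubic decay $\Re(-\frac13\tau\xi^3)$ along the asymptotic directions of $\roots_{\zz,\LL}$, $\roots_{\zz,\RR}$ with the sign of $\Re\hftn_\RR$, $\Re\hftn_\LL$ along $\Sigma_\LL$, $\Sigma_\RR$, exactly as in the corresponding part of Lemma~\ref{lm:asymptotics_of_h_1}, using the geometry of $\Sigma_\LL$, $\Sigma_\RR$ described in Section~\ref{sec:disfsc} together with the estimates \eqref{eq:limit_h_L}--\eqref{eq:limit_h_R} and Lemma~\ref{lm:asymptotics_notations}(b).

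\textbf{The main obstacle} is the uniform error bookkeeping rather than any single identity: one must control the Taylor remainders and the approximations of Lemma~\ref{lm:asymptotics_notations} simultaneously over the whole range $|\xi|\le N^{\epsilon/4}$, which is what forces $\epsilon$ to be small and what necessitates the $\log N$-corrected estimate of Lemma~\ref{lm:asymptotics_notations}(b); and one must check carefully that the a priori divergent contributions to $\log g_2$ of orders $N\xi$ and $N^{1/2}\xi^2$ really cancel to the required precision, including the interplay of the floor function in $M$ with the $t$- and $a$-dependence in \eqref{eq:parameters_step_1}--\eqref{eq:parameters_step_2}.
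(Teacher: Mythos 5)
Your proposal is correct and follows essentially the same route as the paper's proof: reduce to the scalar factor $g_2$ of \eqref{eq:aux_2016_04_20_02}, Taylor-expand $\log g_2$ using \eqref{eq:parameters_step_1}--\eqref{eq:parameters_step_2} so that the $O(N^{3/2})$- and $O(N)$-order contributions cancel exactly (the paper organizes this via the expansion \eqref{eq:aux_2016_04_20_03} of $\gee_2$ together with \eqref{eq:aux_2016_04_20_05}, which is the same computation), leaving $-\frac13\tau\xi^3+\tau^{1/3}x\xi+\frac12\gamma\xi^2$, then apply Lemma~\ref{lm:asymptotics_notations} and the identity \eqref{eq:aux_2015_12_15_01} for the squared products, and treat the tail region as in Lemma~\ref{lm:asymptotics_of_h_1}(c). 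Your cancellation claims at orders $\delta$ and $\delta^2$ and the prefactors in parts (a) and (b) check out, and deferring part (c) to the monotonicity argument of Lemma~\ref{lm:asymptotics_of_h_1}(c) is exactly what the paper does.
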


From Lemma~\ref{lm:asymptotics_of_h_2} and Lemma~\ref{lm:asymptotics_nodes}, we expect, as in the flat case, that 
\begin{equation}
\label{eq:aux_2016_04_20_08}
	\lim_{N\to\infty}\det\big(I+\tilde K_{\zz}^{(2)}\big) = \det \big(I  -\Ks_z \big)
\end{equation}
where $\Ks_z$ is an operator defined on $\inodes_{z,\LL}$ with kernel
\begin{equation}
\Ks_z(\xi_1,\xi_2) =\sum_{\eta\in\inodes_{z,\LL}}
  \frac{e^{\Phi_z(\xi_1;\tau^{1/3}x,\tau)+\Phi_z(\eta;\tau^{1/3}x,\tau)
  					+\frac{1}{2}\gamma(\xi_1^2-\eta^2)}}
       {\xi_1\eta(\xi_1+\eta)(\xi_2+\eta)}
\end{equation}
and
\begin{equation}
\Phi_z(\xi;\tau^{1/3}x,\tau)=-\frac13\tau\xi^3	 +\tau^{1/3}x\xi 	
    				-\sqrt{\frac{2}{\pi}}\int_{-\infty}^{\xi} \polylog_{1/2}(e^{-\omega^2/2})\dd \omega, \qquad \xi\in \inodes_{z,\LL}
\end{equation}
is defined in~\eqref{eq:def_Phi}. 
The rigorous proof is similar to that of~\eqref{eq:aux_2016_04_18_03}, and we omit the details.

\subsubsection{Proof of Theorem~\ref{thm:limit_one_point_distribution_step}}

In summary, under the scaling~\eqref{eq:parameters_step_1} and~\eqref{eq:parameters_step_2}, we have~\eqref{eq:estimate_consts} and~\eqref{eq:aux_2016_04_20_08}. These two imply that~\eqref{eq:aux_2016_04_14_01} with $i=2$ converges to $\FS(\tau^{1/3}x;\tau,\gamma)$, where $\FS(x;\tau,\gamma)$ is defined in~\eqref{eq:def_FS}. This proves Theorem~\ref{thm:limit_one_point_distribution_step}.

\section{Proof of Lemma~\ref{lm:asymptotics_nodes},~\ref{lm:asymptotics_notations},~\ref{lm:flat_constant},~\ref{lm:asymptotics_of_h_1},~\ref{lm:step_constant}, and~\ref{lm:asymptotics_of_h_2}}
\label{sec:proof_of_lemmas}

\subsection{Proof of Lemma \ref{lm:asymptotics_nodes}}
\label{sec:proof_lemma0}

We prove the results for $\nodesmapping_{N,\LL}$. The proof for $\nodesmapping_{N,\RR}$ is similar.

%\subsubsection*{(a)}

\bigskip

\noindent (a) \quad Let $w$ be an arbitrary point in the domain of $\nodesmapping_{N,\LL}$.  
We first show that there exists a $\xi$ in $\inodes_{z,\LL}$ satisfying $\left|\xi-\frac{N^{1/2}\left(w+\rho\right)}{\rho\sqrt{1-\rho}}\right|\le  N^{3\epsilon/4-1/2}\log N$ 
when $N$ is sufficiently large.
Write $w:=-\rho + \rho\sqrt{1-\rho}\eta N^{-1/2}$ where $|\eta|\le N^{\epsilon/4}$ and $\Re \eta<0$. 
Since $w\in \roots_{\zz,\LL}$, it satisfies $q_\zz(w)=0$ and hence we have 
\begin{equation*}
	\left(-\rho+\rho\sqrt{1-\rho}\eta N^{-1/2}\right)^N
	\left(1-\rho+\rho\sqrt{1-\rho}\eta N^{-1/2}\right)^{L-N}
	=\zz^L=(-1)^N\rr^L_0z.
\end{equation*}
Since $\rr_0=\rho^\rho (1-\rho)^{1-\rho}$, $\rr^L_0= \rho^N(1-\rho)^{L-N}$, and 
\begin{equation}\label{eq:etatepom}
	\left(1-\sqrt{1-\rho}\eta N^{-1/2}\right)^N
	\left(1+\frac{\rho}{\sqrt{1-\rho}}\eta N^{-1/2}\right)^{L-N}=z.
\end{equation}
%By taking the log both sides we have
%\begin{equation*}
%N\log\left(1-\sqrt{2(1-\rho)}\xi N^{-1/2}\right)+N(\rho^{-1}-1)\log\left(1+\frac{\sqrt{2}\rho}{\sqrt{(1-\rho)}}\xi N^{-1/2}\right)-\log \mathfrak{s}\in 2\pi i\intZ.
%\end{equation*}
When $\eta N^{-1/2}$ is small and $N$ is large, the Taylor expansion yields that %, using $|\eta|\le N^{\epsilon/4}$, that 
%When $N$ is large, the above relation becomes
\begin{equation}\label{eq:etaminieqa}
	e^{-\eta^2/2+E_N(\eta) }=z %-\xi^2+O(N^{3\epsilon/4-1/2})-\log\mathfrak{s}\in2\pi i\intZ.
\end{equation}
where $E_N(\eta)$ is the error term satisfying $E_N(\eta)=O(\eta^3 N^{-1/2})$. 
Note that 
\begin{equation}\label{eq:ENbounde}
	E_N(\eta)= O(N^{3\epsilon/4-1/2}) \qquad \text{uniformly for $|\eta|\le N^{\epsilon/4}$,}
\end{equation}
hence uniformly for $w$ in the domain of $\nodesmapping_{N,\LL}$.
The above calculation implies that $\eta^2/2-E_N(\eta)=-\log |z|+ \ii\theta$ for some $\theta\in \realR$. 
Note that since $z$ is a constant satisfying $0<|z|<1$, 
we have $\Re (-\log |z|)>0$, and hence there is a constant $c>0$ such that $\Re \eta<-c$
for all $\eta$ satisfying \eqref{eq:etaminieqa} and satisfies $|\eta|\le N^{\epsilon/4}$.
Now let $\xi$ be the point satisfying $\Re\xi<0$ and $\xi^2/2= -\log |z|+ \ii\theta$. 
Then $\xi\in \inodes_{z,\LL}$ and $\eta^2/2-E_N(\eta)= \xi^2/2$, which implies that 
$|\eta^2-\xi^2|= O(N^{3\epsilon/4-1/2})$.
Note that $\Re \xi\le -c$ for some (possibly different) constant $c>0$. 
Hence 
$|\eta+\xi|\ge |\Re(\eta+\xi)|\ge c$
for a positive constant uniformly for $\eta$ and $\xi$. 
There we find that $|\eta-\xi|= O(N^{3\epsilon/4-1/2})$. 
This proves the existence of $\xi$. 

We now show the uniqueness of such $\xi$. 
Suppose that there are two different points $\xi$ and $\xi'$ in $\inodes_{z,\LL}$ satisfying $|\xi-\eta|,|\xi'-\eta|\le N^{3\epsilon/4-1/2}\log N$. From the fact that $e^{-\xi^2/2}=e^{-\xi'^2/2}$ we have $|\xi^2-\xi'^2|\ge 4\pi$. 
On the other hand, $|\xi-\xi'|\le|\xi-\eta|+|\xi'-\eta|\le  2N^{3\epsilon/4-1/2}\log N$, and $|\xi+\xi'|\le |\xi-\eta|+|\xi'-\eta|+2|\eta|\le 3N^{\epsilon/4}$. These two estimates imply that $|\xi^2-\xi'^2|\le 6N^{\epsilon-1/2}\log N$. 
This contradicts with the previous lower bound $4\pi$. Thus for sufficiently large $N$, there is a unique $\xi\in\inodes_{z,\LL}$ satisfying  $|\xi-\eta|\le N^{3\epsilon/4-1/2}\log N$. 
The map $\nodesmapping_{N,\LL}$ is thus well-defined.

\bigskip

\noindent (b) \quad
To show that $\nodesmapping_{N,\LL}$ is injective, 
we show that if $w:=-\rho + \rho\sqrt{1-\rho}\eta N^{-1/2}$ and $w':=-\rho + \rho\sqrt{1-\rho}\eta' N^{-1/2}$ are 
two different points in the domain, then $\xi=\nodesmapping_{N,\LL}(w)$ and $\xi'=\nodesmapping_{N,\LL}(w')$ are different. 
Note that $|\eta|, |\eta'|\le N^{\epsilon/4}$. 
Since $w$ and $w'$ are solutions of the polynomial equation $w^N(1-w)^{L-N}=(-1)^L\rr^L_0z$, by 
noting \eqref{eq:etatepom} and \eqref{eq:etaminieqa} in (a), we find that 
\begin{equation}\label{eq:etaEetapr}
	-\eta^2/2+E_N(\eta)= \log|z| + \ii\theta_1, 
	\qquad
	-(\eta')^2/2+E_N(\eta')= \log|z| + \ii\theta_2	
\end{equation}
for some real numbers $\theta_1$ and $\theta_2$ such that $|\theta_1-\theta_2|\ge 2\pi$. 
Now 
$|\xi^2-(\xi')^2|\ge |\eta^2-(\eta')^2|- |\xi^2-\eta^2|-|(\xi')^2-(\eta')^2|$. 
Since
$|\xi-\eta|\le N^{3\epsilon/4-1/2}\log N$ by the definition of $\xi$ (see \eqref{eq:defofnodemapsJL}), 
and we have $|\xi|, |\eta|\le N^{\epsilon/4}$, we find that
$|\xi^2-\eta^2|= O(N^{\epsilon-1/2}\log N)$. 
We have the same estimate for $|(\xi')^2-(\eta')^2|$.
On the other hand, from \eqref{eq:etaEetapr}, 
$|\eta^2-(\eta')^2| \ge 4\pi - 2|E_N(\eta)|-2|E_N(\eta')|\ge 4\pi - O(N^{3\epsilon/4-1/2})$ from \eqref{eq:ENbounde}. 
Therefore, we see that
$|\xi^2-(\xi')^2|\ge 4\pi - O(N^{\epsilon-1/2}\log N)$. 
Taking $N\to \infty$, we find that $\xi^2$ and $(\xi')^2$ are different, and hence $\xi$ and $\xi'$ are different. 

\bigskip

\noindent (c) \quad
The definition of $\nodesmapping_{N,\LL}$ already gives
\begin{equation*}
I(\nodesmapping_{N,\LL})
\subseteq \inodes_{z,\LL}^{N^{\epsilon/4}+N^{3\epsilon/4-1/2}\log N}
\subseteq \inodes_{z,\LL}^{N^{\epsilon/4}+1}
\end{equation*}
for sufficiently large $N$. 
To prove the other inclusion property, it is enough to show that for every $\xi\in\inodes_{z,\LL}^{N^{\epsilon/4}-1}$, 
there exists $\eta$ satisfying $|\xi-\eta|\le N^{3\epsilon/4-1/2}\log N$ and \eqref{eq:etatepom}.
Indeed, if such $\eta$ exists, then $w:=-\rho + \rho\sqrt{1-\rho}\eta N^{-1/2}$ lies in $\roots_{\zz,\LL}\cap \{w;|w+\rho|\le \rho\sqrt{1-\rho}N^{\epsilon/4-1/2}\}$ since $|\eta|\le |\eta-\xi|+|\xi|\le  N^{\epsilon/4}-1+N^{3\epsilon/4-1/2}\log N\le N^{\epsilon/4}$, and \eqref{eq:defofnodemapsJL} is satisfied. 
Now, in order to show that there is such an $\eta$, it is enough to show that, 
by using $e^{-\xi^2/2}=z$ and taking the logarithm of \eqref{eq:etatepom}, the function 
\begin{equation}
\label{eq:aux_2016_2_26_02}
	P(\eta):= N\log\left(1-\sqrt{1-\rho}\eta N^{-1/2}\right)+N(\rho^{-1}-1)\log\left(1+\frac{\rho}{\sqrt{1-\rho}}\eta N^{-1/2}\right)+\frac{\xi^2}{2}
\end{equation}
has a zero in inside the disk $|\eta- \xi|\le N^{3\epsilon/4-1/2}\log N$. 
Since $\eta N^{-1/2}$ is small,  we have (see \eqref{eq:etaminieqa}) $P(\eta)= -\eta^2/2+E_N(\eta)+\xi^2/2$ 
for $\eta$ in the disk. 
The function $Q(\eta)=-\eta^2/2+\xi^2/2$ clearly has a zero in the disk and 
$|Q(\eta)|=|\eta-\xi||\eta+\xi|/2\ge  c N^{3\epsilon/4-1/2}\log N$ on the circle $|\eta- \xi|= N^{3\epsilon/4-1/2}\log N$. 
Here we used that $\Re \xi\ge -c$ for some positive constant $c$ and hence $\eta$ satisfies the same bound for a different constant. 
Since $|E_N(\eta)|\le O(N^{3\epsilon/4-1/2})$  (see \eqref{eq:ENbounde}), we find that $|E_N(\eta)|< |Q(\eta)|$ 
on the circle, and hence by Rouch\'e's theorem, we find that $P(\eta)$ has a zero in the disk. 
This proves $\inodes_{z,\LL}^{N^{\epsilon/4}-1}\subseteq I(\nodesmapping_{N,\LL})$.

\subsection{Proof of Lemma \ref{lm:asymptotics_notations}}
\label{sec:proof_lemma_1}

We use the following simple identities in the proof. 

\begin{lm}
Let $\Sigma_{\LL, \oout}$ be a simple closed contour in the left half plane $\Re z\le  -\rho$ which encloses 
$\Sigma_{\LL}=\{u:|u|^\rho|u+1|^{1-\rho}=|\zz|,\Re u<-\rho\}$ inside so that  $\Sigma_{\LL, \oout}$ encloses all points in $\roots_{\zz,\LL}$ inside. % for all $N$.
Then for every function $p(z)$ which is analytic inside $\Sigma_{\LL, \oout}$ and is continuous up to the boundary,
\begin{equation}\label{eq:aux_2015_11_27_01}
\begin{split}
 	\sum_{u\in\roots_{\zz,\LL}}p(u)  
		&=(L-N)p(-1)+L\zz^L \oint_{\Sigma_{\LL, \oout}}\frac{p(u)(u+\rho)}{u(u+1)q_\zz(u)}\ddbarr{u}.
\end{split}
\end{equation}
where $q_\zz(u)=u^N(u+1)^{L-N}-\zz^L$. Similarly, 
let $\Sigma_{\RR,\oout}$ be a simple closed contour in the left half plane $\Re z \ge  -\rho$ which encloses 
$\Sigma_{\RR}=\{u:|u|^\rho|u+1|^{1-\rho}=|\zz|,\Re u>-\rho\}$ so that  $\Sigma_{\RR,\oout}$ encloses all points in $\roots_{\zz,\RR}$. % for all $N$.
Then for every function $p(z)$ which is analytic inside $\Sigma_{\RR,\oout}$ and is continuous up to the boundary,
\begin{equation}\label{eq:aux_2015_11_28_02}
\begin{split}
 	\sum_{v\in\roots_{\zz,\RR}}p(v)  
		&= N p(0) +L\zz^L \oint_{\Sigma_{\RR,\oout}}\frac{p(v)(v+\rho)}{v(v+1)q_\zz(v)}\ddbarr{v}.
\end{split}
\end{equation}
In particular, when $p(z)=1$, we find that 
\begin{equation}\label{eq:auxtemp22}
\begin{split}
 	0=  \oint_{\Sigma_{\LL,\oout}}\frac{u+\rho}{u(u+1)q_\zz(u)} \ddbarr{u}, 
	\quad \text{ and } \quad 
 	0=  \oint_{\Sigma_{\RR,\oout}}\frac{v+\rho}{v(v+1)q_\zz(v)} \ddbarr{v}. 	
\end{split}
\end{equation}
\end{lm}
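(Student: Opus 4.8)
The plan is to derive both identities from the residue theorem applied to the logarithmic derivative $q_\zz'/q_\zz$, after splitting off an explicit rational piece. Recall from the discussion preceding~\eqref{eq:rootszlrt} that for $\zz$ in the relevant range $0<|\zz|<r_0$ all $L$ zeros of $q_\zz$ are simple, that $\roots_{\zz,\LL}$ consists of exactly $L-N$ of them (all lying on $\Sigma_\LL$, hence in $\Re(w)<-\rho$), and that the contour $\Sigma_\LL$ encircles the point $-1$ while $\Sigma_\RR$ encircles $0$. Note also that $q_\zz(0)=q_\zz(-1)=-\zz^L\neq 0$, so neither $0$ nor $-1$ is a zero of $q_\zz$.

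First I would record the factorization $q_\zz'(w)=L(w+\rho)w^{N-1}(w+1)^{L-N-1}$ and, after multiplying numerator and denominator by $w(w+1)$ and using $w^N(w+1)^{L-N}=q_\zz(w)+\zz^L$, the decomposition
\begin{equation*}
\frac{q_\zz'(w)}{q_\zz(w)}=\frac{L(w+\rho)\bigl(q_\zz(w)+\zz^L\bigr)}{w(w+1)q_\zz(w)}=\frac{L(w+\rho)}{w(w+1)}+\frac{L\zz^L(w+\rho)}{w(w+1)q_\zz(w)}.
\end{equation*}
Since $\Sigma_{\LL,\oout}$ lies in $\Re(w)\le-\rho$, encloses $\Sigma_\LL$ (hence all of $\roots_{\zz,\LL}$ together with the point $-1$), and leaves $0$ outside, and since $p$ is analytic inside and continuous up to $\Sigma_{\LL,\oout}$, the residue theorem gives $\oint_{\Sigma_{\LL,\oout}}p(w)\,q_\zz'(w)/q_\zz(w)\,\ddbarr{w}=\sum_{u\in\roots_{\zz,\LL}}p(u)$. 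Then I would integrate the two summands of the decomposition separately over $\Sigma_{\LL,\oout}$: the rational summand $L(w+\rho)/(w(w+1))$ has simple poles only at $0$ and $-1$, of which only $-1$ is enclosed, with residue $L(-1+\rho)/(-1)=L(1-\rho)=L-N$, so its contribution is $(L-N)p(-1)$; the second summand is exactly $L\zz^L(w+\rho)/(w(w+1)q_\zz(w))$, whose contour integral is the remaining term in~\eqref{eq:aux_2015_11_27_01}. Adding gives~\eqref{eq:aux_2015_11_27_01}. The identity~\eqref{eq:aux_2015_11_28_02} follows in exactly the same way with $\Sigma_{\RR,\oout}\subset\{\Re(w)\ge-\rho\}$: this time $0$ is enclosed and $-1$ is not, and the residue of the rational summand at $0$ is $L\rho\cdot p(0)=Np(0)$. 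Finally, putting $p\equiv 1$ and using $|\roots_{\zz,\LL}|=L-N$ and $|\roots_{\zz,\RR}|=N$ makes the left-hand sides of the two identities equal to $L-N$ and $N$ respectively, which forces the two remaining integrals to vanish (since $\zz^L\neq 0$); this is~\eqref{eq:auxtemp22}.

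I do not expect any genuine analytic difficulty here; the lemma is essentially a bookkeeping exercise. The one point that requires care is keeping track of which of the two excluded points $0$ and $-1$ lies inside each of the contours $\Sigma_{\LL,\oout}$ and $\Sigma_{\RR,\oout}$, together with the observation that $0$ and $-1$ are \emph{not} zeros of $q_\zz$, so that $q_\zz'/q_\zz$ contributes no residue there — consequently the poles at $0$ and $-1$ appearing in the two summands of the decomposition must cancel, consistently with the computation above. One should also invoke the fact, established in the text before~\eqref{eq:rootszlrt}, that $\Sigma_\LL$ genuinely surrounds $-1$ and $\Sigma_\RR$ genuinely surrounds $0$, since this is what places the relevant residue inside the correct contour.
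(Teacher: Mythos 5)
Your proposal is correct and follows essentially the same route as the paper: the identity $q_\zz'(w)/q_\zz(w)=\frac{L(w+\rho)}{w(w+1)}\bigl(1+\frac{\zz^L}{q_\zz(w)}\bigr)$, the residue theorem applied to $p\,q_\zz'/q_\zz$ over $\Sigma_{\LL,\oout}$ (resp.\ $\Sigma_{\RR,\oout}$), and the evaluation of the rational piece by its residue at $-1$ (resp.\ $0$). The only difference is that you spell out the residue computations $L(1-\rho)=L-N$ and $L\rho=N$ and the $p\equiv 1$ specialization, which the paper leaves implicit.
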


\begin{proof}
%By residue theorem,  
%\begin{equation*}
%	\sum_{u\in\roots_{s,L}}p(u) =\oint_{\Sigma_{L,+}}\frac{p(u)q'_s(u)}{q_s(u)}\frac{du}{2\pi i}
%\end{equation*}
Note that $\frac{q_\zz'(w)}{q_\zz(w)}= \big( \frac{N}{w}+ \frac{L-N}{w+1}\big) \frac{q_\zz(w)+\zz^L}{q_z(w)}= \frac{L(w+\rho)}{w(w+1)}\big(1+ \frac{\zz^L}{q_\zz(w)}\big)$.
Hence by residue theorem, 
\begin{equation*}
\begin{split}
 	\sum_{u\in\roots_{\zz, \LL}}p(u)  
	=\oint_{\Sigma_{\LL,\oout}}\frac{p(u)q'_\zz(u)}{q_\zz(u)} \ddbarr{u}
	&= L \oint_{\Sigma_{\LL,\oout}}\frac{p(u)(u+\rho)}{u(u+1)} \ddbarr{u}
		+L\zz^L \oint_{\Sigma_{\LL,\oout}}\frac{p(u)(u+\rho)}{u(u+1)q_\zz(u)} \ddbarr{u}.%\\
%		&=(M-N)p(-1)+Ms^M \oint_{\Sigma_{L,+}}\frac{p(u)(u+\rho)}{u(u+1)q_s(u)}\frac{du}{2\pi i}.
\end{split}
\end{equation*} 
\end{proof}

We now prove Lemma \ref{lm:asymptotics_notations}.

%\marginpar{Change (1) to (a), and so on}

\bigskip

\noindent (a) \quad
Applying the identity \eqref{eq:aux_2015_11_27_01} to the function $p(u)=\log(w_N-u)$ and using \eqref{eq:auxtemp22}, we obtain
\begin{equation}
\label{eq:aux_2015_12_03_02}
\begin{split}
	&\sum_{u\in\roots_{\zz,\LL}}\log(w_N-u) %\\
	= (L-N)\log(w_N+1)+ I_N
%&=(M-N)\log(z_N+1)+Ms^M\int_{-\rho_{\mathfrak{a}}-i\infty}^{-\rho_{\mathfrak{a}}+i\infty}\log\left(\frac{N^{1/2}(z_N-u)}{\rho\sqrt{2(1-\rho)}}\right)\frac{u+\rho}{u(u+1)q_s(u)}\frac{du}{2\pi i},
\end{split}
\end{equation}
where 
\begin{equation}
\label{eq:aux_2015_12_03_02b}
\begin{split}
	I_N:= L\zz^L\oint_{\Sigma_{\LL,\oout}} \log\left(\frac{N^{1/2}(w_N-u)}{\rho\sqrt{1-\rho}}\right)
	\frac{u+\rho}{u(u+1)q_\zz(u)} \ddbarr{u}.
\end{split}
\end{equation}
It is now enough to prove that if $\Re\xi \ge 0$, then
\begin{equation}
	I_N =  \hftn_{\LL}(z,\xi) \left(1+O(N^{\epsilon-1/2})\right).
\end{equation}
For this purpose, let 
\begin{equation}
\label{eq:rhoaro}
	\rho_{\mathfrak{a}}=\rho-\mathfrak{a}\rho\sqrt{1-\rho}N^{-1/2}
\end{equation} 
for some real  constant $\mathfrak{a}$ satisfying $\mathfrak{a}^2/2<-\log|z|$ and $\mathfrak{a}<\Re\xi$. 
By Lemma~\ref{lm:asymptotics_nodes}, we can deform the contour $\Sigma_{\LL,\oout}$ to $-\rho_\mathfrak{a}+\ii\realR$: 
\begin{equation}
\label{eq:aux_2015_12_03_02c}
\begin{split}
	I_N= L\zz^L \int_{-\rho_\mathfrak{a}-\ii\infty}^{-\rho_\mathfrak{a}+\ii\infty} \log\left(\frac{N^{1/2}(w_N-u)}{\rho\sqrt{1-\rho}}\right)
		\frac{u+\rho}{u(u+1)q_\zz(u)} \ddbarr{u}.
\end{split}
\end{equation}
We then split the integral into two parts, $|\Im u|\le \rho\sqrt{1-\rho}N^{\epsilon/3-1/2}$ and $|\Im u|\ge \rho\sqrt{1-\rho}N^{\epsilon/3-1/2}$ where $\epsilon$ is the fixed parameter satisfying $0<\epsilon <\frac12$ as defined in Lemma~\ref{lm:asymptotics_nodes}, and estimate them separately. 
For the first part, note that if $u+\rho\to 0$ as $N\to \infty$, then 
\begin{equation}\label{eq:qsratiossM}
\begin{split}
	q_\zz(u)
	&= u^N(1+u)^{L-N}-(-\rho)^N(1-\rho)^{L-N} z \\
	&= (-\rho)^N(1-\rho)^{L-N} 
	\left[  \left( 1- \frac{u+\rho}{\rho}\right)^N 
	\left( 1+ \frac{u+\rho}{1-\rho} \right)^{L-N} - z \right] \\
	&= \frac{\zz^L}{z}
	\left(  e^{- \frac1{2\rho^2(1-\rho)}N(u+\rho)^2+ O(N(u+\rho)^3)} - z \right)
\end{split}
\end{equation}
since $N/L=\rho$. 
Hence changing the variables $u=-\rho+\rho\sqrt{1-\rho}\eta N^{-1/2}$, the first part of~\eqref{eq:aux_2015_12_03_02c} 
for $|\Im u|\le \rho\sqrt{1-\rho}N^{\epsilon/3-1/2}$ is equal to 
\begin{equation}\label{eq:aux_2015_12_05_03} 
\begin{split}
	& -z \int_{\mathfrak{a}-\ii N^{\epsilon/3}}^{\mathfrak{a}+\ii N^{\epsilon/3}}
	\log(\xi-\eta)\frac{\eta}{e^{-\eta^2/2}-z }(1+O(N^{\epsilon-1/2}))\ddbarr{\eta} \\
	&= -z \int_{\mathfrak{a}-\ii\infty}^{\mathfrak{a}+\ii\infty}
		\log(\xi-\eta)\frac{\eta}{e^{-\eta^2/2}-z }\ddbarr{\eta} (1+O(N^{\epsilon-1/2})).
%	&= \hftn_\LL(z,\xi)(1+O(N^{\epsilon-1/2})),
\end{split}
\end{equation}

To compute the second part, note that for $u=-\rho_\mathfrak{a} +iy$ with $|y|\ge \rho\sqrt{1-\rho}N^{\epsilon/3-1/2}$,
(recall that $\rr_0^L=\rho^N(1-\rho)^{L-N}$)
\begin{equation}
\begin{split}
	\left|\frac{u^N(u+1)^{L-N}}{\rr_0^L}\right|
	%= \frac{|-\rho_\mathfrak{a}+iy|^N |1-\rho_\mathfrak{a}+iy|^{M-N}}{\rho^N (1-\rho)^{M-N}}
	&= \frac{\rho_\mathfrak{a}^N(1-\rho_\mathfrak{a})^{L-N}}{\rho^N (1-\rho)^{L-N}} 
	\left(1+ \frac{y^2}{\rho_\mathfrak{a}^2}\right)^{N/2} \left(1+\frac{y^2}{(1-\rho_\mathfrak{a})^2}\right)^{(L-N)/2} \\
	&\ge C_1 \left(1+ \frac{y^2}{\rho_\mathfrak{a}^2}\right)^{N/2} \left(1+\frac{y^2}{(1-\rho_\mathfrak{a})^2}\right)^{(L-N)/2} \\
	&\ge C_1 e^{C_2 N^{2\epsilon/3}} y^2
\end{split}
\end{equation}
for some positive constants $C_1$ and $C_2$. 
Since the last bound in the above estimate is $\ge 2$ for all large enough $N$, we find that 
\begin{equation}
\label{eq:aux_2016_2_25_01}
	\left| \frac{q_\zz(u)}{\zz^L}  \right| = \left| \frac{u^N(u+1)^{L-N}}{\zz^L}-1\right| 
	\ge \frac1{2} C_1 e^{C_2 N^{2\epsilon/3}}y^2
\end{equation}
for same $u$. 
Also 
since $w_N-u= (\xi-\mathfrak{a})  N^{-1/2}\rho\sqrt{1-\rho}+\ii y$ and $\Re(\xi-\mathfrak{a})>0$, 
we find $|w_N-u|\le CN^{-1/2} +|y| \le 2|y|$for all large enough $N$ 
and $|w_N-u|\ge \Re(\xi-\mathfrak{a}) N^{-1/2}  \rho\sqrt{1-\rho}\ge CN^{-1/2}$, and hence there is a constant $C_3$ such that 
\begin{equation}
\label{eq:aux_2015_12_05_05}
	|\log(w_N-u)| \le \frac12 \log N +C_3+\log |y|
\end{equation}
for all large enough $N$. Also noting the trivial bound $|u+\rho| \le |u(u+1)|$ for such $u$, the absolute value of the second part of the integral of  \eqref{eq:aux_2015_12_03_02c} 
satisfying $|\Im u|\ge \rho\sqrt{1-\rho}N^{\epsilon/3-1/2}$ is bounded above by 
\begin{equation}\label{eq:aux_2015_12_03_04}
\begin{split}
	C_4L^{1/2}e^{-C_2N^{2\epsilon/3}}  \int_{y\in\realR; |y| \ge N^{\epsilon/3}} 
	\frac{\log N + C_5+\log|y| }{y^2} dy
\end{split}	
\end{equation}
for some positive constants $C_4, C_5$. This is bounded by $C'e^{-C N^{2\epsilon/3}}$ for some positive constants $C, C'$.
By combining~\eqref{eq:aux_2015_12_03_02},~\eqref{eq:aux_2015_12_03_02c},~\eqref{eq:aux_2015_12_05_03} and~\eqref{eq:aux_2015_12_03_04}, and comparing with~\eqref{eq:limit_h_L} it remains to show that
\begin{equation}
\hftn_{\LL}(\xi,z) = -z \int_{\mathfrak{a}-\ii\infty}^{\mathfrak{a}+\ii\infty}
		\frac{\eta\log(\xi-\eta)}{e^{-\eta^2/2}-z }\ddbarr{\eta},
\end{equation}
i.e., 
\begin{equation}
\label{eq:aux_2016_04_15_01}
-\frac{1}{\sqrt{2\pi}}
\int_{-\infty}^{-\xi}
\polylog_{1/2}\left(z e^{(\xi^2-y^2)/2}\right) \dd y
	    =-z \int_{\mathfrak{a}-\ii\infty}^{\mathfrak{a}+\ii\infty}
		\frac{\eta\log(\xi-\eta)}{e^{-\eta^2/2}-z }\ddbarr{\eta}.
\end{equation}
To prove~\eqref{eq:aux_2016_04_15_01}, it is sufficient to show that the coefficient of $z^j$ matches in both sides for all $j\ge 1$. Therefore~\eqref{eq:aux_2016_04_15_01} is reduced to the following identity (after absorbing $\sqrt{j}$ into $\xi$ and $y$)
\begin{equation}
\label{eq:aux_2016_04_15_02}
\frac{1}{\sqrt{2\pi}} e^{s^2/2}\int_{-\infty}^{-s} e^{-x^2/2} \dd x
 = \int_{\Re(\eta) =c <\Re(s)} \eta \log(s -\eta) e^{\eta^2/2} \ddbarr{\eta}.
\end{equation}
Note that the right hand side of~\eqref{eq:aux_2016_04_15_02}, after integration by parts, equals to
\begin{equation}
\int_{\Re(\eta) =c <\Re(s)} \frac{e^{\eta^2/2}}{s-\eta} \ddbarr{\eta}
\end{equation}
which is the integral representation of Faddeeva function and hence matches the left hand side of~\eqref{eq:aux_2016_04_15_02}, 

We finished the proof of~\eqref{eq:limit_h_L}. The proof of~\eqref{eq:limit_h_R} and~\eqref{eq:limit_h_exp} are similar.

\bigskip

\noindent (b) \quad
Consider the estimates \eqref{eq:aux_2015_12_05_03} and \eqref{eq:aux_2015_12_03_04}  in the proof of (a). 
We set $\mathfrak{a}=0$. 
Note that in the proof of (a), we chose $\mathfrak{a}$ so that $\mathfrak{a}<\Re \xi$, and this condition is satisfied since we assume that $\Re\xi\ge c>0$.
The equation \eqref{eq:aux_2015_12_05_03} remains the same: %to \eqref{eq:aux_2015_12_03_03} should be replaced by
\begin{equation}
\begin{split}
	& -z \int_{-\ii N^{\epsilon/3}}^{\ii N^{\epsilon/3}}
	\log(\xi-\eta)\frac{\eta}{e^{-\eta^2/2}-z }(1+O(N^{\epsilon-1/2}))\ddbarr{\eta}. 
\end{split}
\end{equation}
Since $c\le |\xi-\eta|\le 2N^{\epsilon/3}$, we find that the above is equal to 
\begin{equation}
\begin{split}
	\hftn_\LL(\xi,z) +O(N^{\epsilon-1/2}\log N).
\end{split}
\end{equation}
On the other hand, \eqref{eq:aux_2015_12_05_05} is unchanged since 
$CN^{-1/2}\le |w_N-u|\le 2|y|$for all large enough $N$ as before. 
The other case is similar. %Similarly we also have \eqref{eq:limit_h_R_large}.

\bigskip

\noindent (c) \quad
We first find an integral representation of the logarithm of the left hand side. 
By using  \eqref{eq:aux_2015_11_28_02} with $p(v)=\log (v-u)$, 
\begin{equation*}
\begin{split}
	&\sum_{\substack{v\in\roots_{\zz,\RR} \\ u\in\roots_{\zz,\LL}}} \log(v-u)
	=\sum_{u\in\roots_{\zz,\LL}}\left(N\log(-u)+ L \zz^L \oint_{\Sigma_{\RR,\oout}}\frac{(v+\rho)\log(v-u)}{v(v+1)q_\zz(v)}\ddbarr{v}\right).
\end{split}
\end{equation*}
Exchanging the sum and the integral, applying \eqref{eq:aux_2015_11_27_01} with $p(u)=\log(v-u)$, and then using the residue theorem, the above becomes 
% to the integral , we find 
\begin{equation}
\begin{split}
	\sum_{\substack{v\in\roots_{\zz,\RR} \\ u\in\roots_{\zz,\LL}}} \log(v-u)
%	=\sum_{u\in\roots_{s,L}}\left(N\log(-u)+ M s^M \oint_{\Sigma_{R,+}}\frac{(w+\rho)\log(w-u)}{w(w+1)q_s(w)}\frac{dw}{2\pi i}\right)\\
%	=&N\sum_{u\in\roots_{s,L}}\log(-u)+\int_{\Sigma_{R,+}}\frac{M(w+\rho)s^M}{w(w+1)q_s(w)}\left(\sum_{u\in\roots_{s,L}}\log(w-u)\right)\frac{dw}{2\pi i}\\
%	=&N\sum_{u\in\roots_{s,L}}\log(-u)+(M-N)\oint_{\Sigma_{R,+}}\frac{M(w+\rho)s^M}{w(w+1)q_s(w)}\log(w+1)\frac{dw}{2\pi i}\\
%	&+\int_{\Sigma_{R,+}}\frac{M(w+\rho)s^M}{w(w+1)q_s(w)}\left(\int_{\Sigma_{L,+}}\frac{M\log(w-u)(u+\rho)s^M}{u(u+1)q_s(u)}\frac{du}{2\pi i}\right)\frac{dw}{2\pi i}\\
	=&N\sum_{u\in\roots_{\zz,\LL}}\log(-u)+(L-N)\sum_{v\in\roots_{\zz,\RR}}\log(v+1)\\
	&+ L^2\zz^{2L} \oint_{\Sigma_{\RR,\oout}}\frac{(v+\rho)}{v(v+1)q_\zz(v)} 
	\left(\oint_{\Sigma_{\LL,\oout}}\frac{(u+\rho)\log(v-u)}{u(u+1)q_\zz(u)}
	\ddbarr{u}\right)\ddbarr{v}.
\end{split}
\end{equation}
Hence~\eqref{eq:limit_h} is obtained if we prove that the last double integral term is $-2B(z)(1+O(N^{\epsilon-1/2}))$.
By using \eqref{eq:auxtemp22} and replacing the contours to vertical lines, it is enough to prove that 
\begin{equation*}
	L^2\zz^{2L} \iint \log \left(\frac{N^{1/2}(v-u)}{\rho\sqrt{1-\rho}} \right)
	\frac{(v+\rho)(u+\rho) }{v(v+1)q_\zz(v)u(u+1)q_\zz(u)} \ddbarr{u}\ddbarr{v}
	=2B(z)(1+O(N^{\epsilon-1/2}))
\end{equation*}
where the contours are appropriate vertical lines, which we choose as follows. 
Note that the sign is changed since we orient the vertical lines from bottom to top.  
Fix two real numbers $\mathfrak{a}<\mathfrak{b}$ in the interval $(-\sqrt{-\log|z|},\sqrt{-\log|z|})$. 
We take the line $-\rho+\mathfrak{a}\rho\sqrt{1-\rho}N^{-1/2} + \ii\realR$  as the contour for $u$ and
the line $-\rho+\mathfrak{b}\rho\sqrt{1-\rho}N^{-1/2} + \ii\realR$  as the contour for $v$. 
The double integral is similar to the integral \eqref{eq:aux_2015_12_03_02c} considered in (b). 
We estimate the double integral similarly as in the proof of (b). 
We change the variables as $u=-\rho+\rho\sqrt{1-\rho}\eta N^{-1/2}$ and $v=-\rho+\rho\sqrt{1-\rho}\zeta N^{-1/2}$ and split the integral into two parts: the part in the region $\{(\eta,\zeta);|\eta|\le N^{\epsilon/3},|\zeta|\le N^{\epsilon/3}\}$ and the part in the region $\{(\eta,\zeta);|\eta|\ge N^{\epsilon/3} \text{ or }|\zeta|\ge N^{\epsilon/3}\}$. 
As in (b), it is direct to check (see \eqref{eq:qsratiossM}) that the integral for first part is equal to 
\begin{equation*}
	z^2 \iint\frac{\eta\zeta \log(\zeta-\eta)}{(e^{-\eta^2/2}-z)(e^{-\zeta^2/2}-z)} \ddbarr{\eta}\ddbarr{\zeta}(1+O(N^{\epsilon-1/2})).
	\end{equation*}
Here the contours for $\eta$ and $\zeta$ are $\Re(\eta) =\mathfrak{a}$ and $\Re(\zeta) = \mathfrak{b}$ respectively.
The second part is $O(e^{-CN^{2\epsilon/3}})$. 
We skip the details here since the analysis is an easy modification of the estimate \eqref{eq:aux_2015_12_03_04}. Now it remains to prove
\begin{equation}
2B(z)= z^2 \iint\frac{\eta\zeta \log(\zeta-\eta)}{(e^{-\eta^2/2}-z)(e^{-\zeta^2/2}-z)} \ddbarr{\eta}\ddbarr{\zeta},
\end{equation}
i.e., to prove the following identity
\begin{equation}
\frac{1}{2\pi} \int_0^z \frac{(\polylog_{1/2}(y))^2}{y} \dd y 
= z^2 \iint\frac{\eta\zeta \log(\zeta-\eta)}{(e^{-\eta^2/2}-z)(e^{-\zeta^2/2}-z)} \ddbarr{\eta}\ddbarr{\zeta}.
\end{equation}
By comparing the coefficient of $z^{k+l}$ in both sides, it is sufficient to show
\begin{equation}
\frac{1}{2\pi (k+l) \sqrt{kl}} = \iint \eta\zeta \log(\zeta -\eta) e^{(k\eta^2 +l\zeta^2)/2} \ddbarr{\eta}\ddbarr{\zeta}
\end{equation}
for all $k , l \ge 1$. In fact, after integrating by parts (with respect to $\dd \eta$ and $\dd \xi$), we have 
\begin{equation}
\begin{split}
   & (k+l) \iint \eta\zeta \log(\zeta -\eta) e^{(k\eta^2 +l\zeta^2)/2} \ddbarr{\eta}\ddbarr{\zeta}\\
 = & \iint \frac{\zeta}{\zeta -\eta} e^{(k\eta^2 +l\zeta^2)/2} \ddbarr{\eta}\ddbarr{\zeta}
     + \iint \frac{\eta}{\eta -\zeta} e^{(k\eta^2 +l\zeta^2)/2} \ddbarr{\eta}\ddbarr{\zeta}\\
 = & \iint e^{(k\eta^2 +l\zeta^2)/2} \ddbarr{\eta}\ddbarr{\zeta} =\frac{1}{ 2\pi \sqrt{kl}}\ .
\end{split}
\end{equation}

\subsection{Proof of Lemma~\ref{lm:flat_constant} and~\ref{lm:step_constant}}
\label{sec:proof_lemma_flat_constant}

First consider Lemma~\ref{lm:flat_constant}. 
From the definition, $\log \big( \constc^{(1)}_{N,3}(\zz)\big)$ is equal to 
\begin{equation}
-\frac{N}{2}\sum_{u\in\roots_{\zz,\LL}} \log(-u)
 +\sum_{v\in\roots_{\zz,\RR}}\left(\left(\frac{L-N}{2}-a+k\rho^{-1}\right)\log(v+1)
 							 +tv \right)
\end{equation}
plus an integer times $2\pi \ii$. 
Using \eqref{eq:aux_2015_11_27_01} with $p(u)=\log(-u)$,  using \eqref{eq:aux_2015_11_28_02} twice with  $p(v)=\log(v+1)$ and $p(v)=v$, using \eqref{eq:auxtemp22}, and then changing the contours to a common vertical line, we find that %together with \eqref{eq:aux_2015_12_03_05}, we write
	\begin{equation*}
	\begin{split}
	& -\frac{N}{2}\sum_{u\in\roots_{\zz,\LL}} \log(-u)
	   +\sum_{v\in\roots_{\zz,\RR}}\left(\left(\frac{L-N}{2}-a+k\rho^{-1}\right)\log(v+1)
	  	 							 +tv \right)\\
   &= L\zz^L \int_{-\rho-\ii\infty}^{-\rho+\ii\infty}(\gee_1(w)-\gee_1(-\rho))
   						\frac{(w+\rho)}{w(w+1)q_\zz(w)}\ddbarr{w},
	\end{split}
	\end{equation*}
		where 
	\begin{equation}
	\label{eq:def_G_1}
	\gee_1(w)= -\frac{N}{2} \log(-w) +\left(-\frac{L-N}{2}+ a -k\rho^{-1}\right)\log(w+1) -tw.
	\end{equation}
		Note that the part of the integral involving $\gee_1(-\rho)$ is zero due to \eqref{eq:auxtemp22} and hence it is free to add it here. 
		We change the variables as $w=-\rho+\rho\sqrt{1-\rho}\xi N^{-1/2}$. 
		As in the proof of Lemma~\ref{lm:asymptotics_notations} (a), we split the integral into two parts: $|\xi|\le N^{\epsilon/4}$ and $|\xi|\ge N^{\epsilon/4}$ where $0<\epsilon<\frac12$ is a fixed real number. 
%Note that we use $\epsilon/4$ unlike the proof of (2) for which we used $\epsilon/3$. 
The second part is almost same as the case of Lemma~\ref{lm:asymptotics_notations} (a) and we obtain the estimate $O(e^{-CN^{\epsilon/2}})$. 
We do not provide the details. 
However, the analysis of the first part is more delicate and requires higher order expansions and the symmetry of the integrand.

A tedious calculation using Taylor expansion and~\eqref{eq:parameters_flat} shows that  for $w=-\rho+\delta$ with $\delta=O(N^{\epsilon/4-1/2})$, 
\begin{equation}
\label{eq:aux_2016_04_18_04}
\begin{split}
	 & \gee_1(w)-\gee_1(-\rho) \\
	=& \gee_1'(-\rho) \delta + \frac12 \gee_1''(-\rho)\delta^2 + \frac16   
	   \gee'''(-\rho)\delta^3 + E'_1  N^{3/2}\delta^4 + O(N^{3/2}\delta^5) \\
	=& -\frac{\tau^{1/3}xN^{1/2}}{\rho\sqrt{1-\rho}}\delta 
	   +\frac{N}{4\rho^2(1-\rho)} \left(1-\frac{2}{\sqrt{1-\rho}}\tau N^{1/2}\right)\delta^2 
	   +\frac{\tau N^{3/2}}{3\rho^2(1-\rho)^{5/2}} \delta^3 \\
	 & \qquad + E''_1 N^{1/2}\delta^2 +E''_2 N \delta^3 +E''_3 N^{3/2} \delta^4 +O(N^{5\epsilon/4-1})
\end{split}
\end{equation}
where $E'_1, E''_1,E''_2$ and $E''_3$ are independent of $\delta$ and all bounded uniformly on $N$.
%Note here that $\gee_1$ involves the term $k=k_N$, but it disappears in the leading terms of the above expansion. 
Hence we find that for $w=-\rho+\rho\sqrt{1-\rho}\xi N^{-1/2}$ with $|\xi|\le N^{\epsilon/4}$, 
\begin{equation}\label{eq:F1F1rhoexp}
\begin{split}
	\gee_1(w)-\gee_1(-\rho) 
	&=- \frac{\tau }{2\sqrt{1-\rho}}\xi^2N^{1/2} +\left(-\tau^{1/3}x\xi +\frac14\xi^2 +\frac{\rho\tau}{3(1-\rho)}\xi^3\right) \\
	&\qquad + E_1 N^{-1/2}\xi^2 +E_2N^{-1/2}\xi^3 + E_3 N^{-1/2} \xi^4 +O(N^{5\epsilon/4-1})
\end{split}
\end{equation}
where $E_i$ is independent of $\xi$ and uniformly on $N$ for all $i=1,2,3$. A careful calculation shows that $E_2 = \frac{1-2\rho}{6\sqrt{1-\rho}}$, which is the only $E_i$ term which makes nonzero contribution to the $O(N^{-1/2})$ in the integral~\eqref{eq:intrhoitomi}.

We also have, after calculating the next order term in \eqref{eq:qsratiossM}, for $z=-\rho+\delta$ with $\delta \in i\realR$, 
\begin{equation}
\begin{split}
	q_\zz(w)
	&= \frac{\zz^L}{z}
	\left[  e^{- \frac1{2\rho^2(1-\rho)}N\delta^2+ \frac{2\rho-1}{3\rho^3(1-\rho)^2} N\delta^3 + E'_4 N \delta^4 + O(N\delta^5)} - z \right] 	
\end{split}
\end{equation}
Hence for $w=-\rho+\rho\sqrt{1-\rho}\xi N^{-1/2}$ with $\xi\in i\realR$ satisfying $|\xi|\le N^{\epsilon/4}$, 
\begin{equation}\label{eq:aux_2015_12_16_07}
\begin{split}
	\frac{q_\zz(w)}{\zz^L} 
	&=\frac1z e^{-\frac12\xi^2+\frac{2\rho-1}{3\sqrt{1-\rho}}\xi^3N^{-1/2}+ E''_4 \xi^4 N^{-1}+ O(N^{5\epsilon/4-3/2})}-1\\
	&=\frac{e^{-\xi^2/2}-z}{z}  \left(1+\frac{2\rho-1}{3\sqrt{1-\rho}}\frac{e^{-\xi^2/2}}{e^{-\xi^2/2}-z}\xi^3 N^{-1/2}+E_4 \xi^4 N^{-1}+ O(N^{5\epsilon/4-3/2})\right) . 
\end{split}
\end{equation}
It is easy to check that for same $w$, 
\begin{equation}\label{eq:Mzrhozz1}
	\frac{L(w+\rho)}{w(w+1)}=-\frac{1}{\rho\sqrt{1-\rho}}\xi N^{1/2}
	\left(1+\frac{1-2\rho}{\sqrt{1-\rho}}\xi N^{-1/2}+E_5 \xi^2 N^{-1}+ O(N^{3\epsilon/4-3/2})\right).
\end{equation}
We now evaluate the first part of the integral using the above estimates. 
Noting that the integration domain is symmetric about the origin and hence the integral of an odd function is zero, we obtain 
\begin{equation}\label{eq:intrhoitomi}
\begin{split}
	&\int_{-\rho-\ii\rho\sqrt{1-\rho}N^{\epsilon/4}}^{-\rho+\ii\rho\sqrt{1-\rho}N^{\epsilon/4}}(\gee_1(w)-\gee_1(-\rho))\frac{L(w+\rho)\zz^L}{w(w+1)q_\zz(w)}\ddbarr{w}\\
	&= \int_{-\ii N^{\epsilon/4}}^{\ii N^{\epsilon/4}}
	\left( \tau^{1/3}x\xi^2  +\frac{(3-8\rho) \tau}{6(1-\rho)}\xi^4  \right)   \frac{z}{e^{-\xi^2/2}-z} \ddbarr{\xi} \\
	&\qquad +  \frac{(1-2\rho)\tau}{6(1-\rho)} \int_{-\ii N^{\epsilon/4}}^{\ii N^{\epsilon/4}}
	\frac{z e^{-\xi^2/2}}{(e^{-\xi^2/2}-z)^2}\xi^6 \ddbarr{\xi}
	+ O(N^{2\epsilon-1}). 
\end{split}
\end{equation}
Here one needs further use the symmetry of the domain to find the error bound $O(N^{2\epsilon-1})$. More explicitly, one find that the $O(N^{-1/2})$ term, after using this symmetry and then integrating by parts, equals
\begin{equation}
\begin{split}
&N^{-1/2}\int_{-\ii N^{\epsilon/4}}^{\ii N^{\epsilon/4}} 
\left( -\frac{1}{12}\frac{1-2\rho}{\sqrt{1-\rho}} 	
			\frac{ze^{-\xi^2/2}}{(e^{-\xi^2/2}-z)^2}\xi^6
       -\frac{1}{4}\frac{1-2\rho}{\sqrt{1-\rho}} 
       		\frac{z}{e^{-\xi^2/2}-z}\xi^4
       -E_2\frac{z}{e^{-\xi^2/2}-z}\xi^4
\right) \ddbarr{\xi}\\
 & \qquad = -N^{-1/2} \frac{1-2\rho}{12\sqrt{1-\rho}} \left. \left( \frac{z}{e^{-\xi^2/2}-z}\xi^5 \right) \right|_{-\ii N^{\epsilon/4}}^{\ii N^{\epsilon/4}} =O(e^{-CN^{\epsilon/2}}).
\end{split}
\end{equation}

%Note that the term $\gamma_N$, which appears in the coefficient of $\xi^2$ in \eqref{eq:F1F1rhoexp} disappeared in the leading terms, and the leading terms depend only on $\rho$, $\beta$, and $x$, and does not depend on $k_N$ and $\gamma_N$. 
After integrating by parts the last integral, we find that \eqref{eq:intrhoitomi} is equal to 
\begin{equation}
\begin{split}
	\int_{-\ii N^{\epsilon/4}}^{\ii N^{\epsilon/4}}
	\left( \tau^{1/3}x\xi^2 - \frac1{3}\tau \xi^4  \right)   \frac{z}{e^{-\xi^2/2}-z}
	\ddbarr{\xi}
	+ O(N^{2\epsilon-1}). 
\end{split}
\end{equation}
A direct calculation shows that
\begin{equation}
\int_{\Re(\xi) =0} \xi^2  \frac{z}{e^{-\xi^2/2}-z}
	\ddbarr{\xi} = -\frac{1}{\sqrt{2\pi}}\polylog_{3/2}(z)=A_1(z)
\end{equation}
and
\begin{equation}
\int_{\Re(\xi) =0} \xi^4  \frac{z}{e^{-\xi^2/2}-z}
	\ddbarr{\xi} = \frac{3}{\sqrt{2\pi}}\polylog_{5/2}(z)= -3A_2(z).
\end{equation}
Thus the lemme is proved. %~\eqref{eq:limit_Z_1} follows. 

The proof of Lemma~\ref{lm:step_constant} is similar. 
The only difference is that if we define
\begin{equation}
\label{eq:aux_2016_04_20_04}
\gee_2(w) = (k- N- 1)\log (-w)+ (a+N-k) \log (w+1) -tw,
\end{equation}
we should have
\begin{equation}
\label{eq:aux_2016_04_20_03}
\gee_2(w)-\gee_2(-\rho) 
= -\frac{1}{2}\xi^2\left[\frac{\tau}{\sqrt{1-\rho}} N^{1/2}\right]
  + \left( -\tau^{1/3}x\xi -\frac{1}{2}\gamma \xi^2+\frac{\rho\tau}{3(1-\rho)}\xi^3\right)
   +\mbox{ error terms}
\end{equation}
which replaces the $\gee_1(w)$ estimate in~\eqref{eq:aux_2016_04_18_04}. Other estimates are the same. We omit the details.
In the final formula, the error term is slightly different from Lemma~\ref{lm:flat_constant}.
This is due to the $O(1)$ perturbations of the coefficient and the integer part appeared in $t$.

\subsection{Proof of Lemma~\ref{lm:asymptotics_of_h_1} and~\ref{lm:asymptotics_of_h_2}}
\label{sec:proof_asymptotics_h_1}

%\marginpar{Change (1) to (a), and so on}

Consider Lemma~\ref{lm:asymptotics_of_h_1} first. 

\bigskip

\noindent (a) \quad
By using Lemma~\ref{lm:asymptotics_notations} (b), we find that
\begin{equation*}
\frac{q_{\zz,\RR}(u)}{u^{N}}=e^{\hftn_\RR(z,\xi)}\left(1+O(N^{\epsilon-1/2}\log N)\right)
\end{equation*}
uniformly in $|\xi|\le N^{\epsilon/4}$. Thus we only need to show
\begin{equation}
\label{eq:aux_2016_04_16_03}
g_1(u) = e^{-\frac{1}{3}\tau \xi^3 +\tau^{1/3}x\xi}\left(1+O(N^{\epsilon-1/2}\log N)\right)
\end{equation}
for all $u\in\roots_{\zz,\LL}$ such that $\left|u+\rho\right| \le \rho\sqrt{1-\rho}N^{\epsilon/4-1/2}$.

By taking the logarithm, it is sufficient to show that
\begin{multline}
	\left(2+\left[\frac{\tau N^{3/2}}{\sqrt{1-\rho}}\right]\right)
		\log\left(\frac{u}{-\rho}\right)
+ 	\left((d-1)\left(k+\left[\frac{\tau N^{3/2}}{\sqrt{1-\rho}}\right]\right)
					+k+\rho^{-1}-a\right)
				\log\left(\frac{u+1}{1-\rho}\right)
+	t(u+\rho)\\
= -\frac{1}{3}\tau \xi^3 +\tau^{1/3}x\xi +O(N^{\epsilon-1/2}\log N),
\end{multline}
which is further reduced to, after inserting~\eqref{eq:parameters_flat} and also dropping the $O(1)$ constants $2, \rho^{-1}$ and the notation $[\cdot]$ in the coefficients which only gives an $O(N^{\epsilon/4-1/2})$ error,
\begin{multline}
	\frac{\tau N^{3/2}}{\sqrt{1-\rho}}\log\left(\frac{u}{-\rho}\right)
+	\left(-\frac{(1-\rho)^{3/2}}{\rho^2}\tau N^{3/2}
%			+\frac{\sqrt{1-\rho}}{\rho}\tau N^{3/2}
			+\frac{\sqrt{1-\rho}}{\rho}\tau^{1/3}x N^{1/2}\right) 
	\log\left(\frac{u+1}{1-\rho}\right)
+	\frac{1}{\rho^2\sqrt{1-\rho}}\tau N^{3/2}(u+\rho)\\
= 	-\frac{1}{3}\tau \xi^3 +\tau^{1/3}x\xi +O(N^{\epsilon-1/2}\log N).
\end{multline}
Similarly to~\eqref{eq:F1F1rhoexp} (without the $E_j$ terms), this equation can be checked directly by using Taylor expansions. We omit the details here.

\bigskip

\noindent (b) \quad
 The proof is the same as part (a) except for the part $q'_{\zz,\RR}(v)/v^N$, which we use the following identity (see~\eqref{eq:aux_2015_12_15_01})
\begin{equation}
\frac{q'_{\zz,\RR}(v)}{v^N} =\frac{v^{L-N}}{q_{\zz,\LL}(v)} \frac{L(v+\rho)}{v(v+1)}
\end{equation}
and then apply Lemma~\ref{lm:asymptotics_notations} (b).

\bigskip

\noindent (c) \quad
Suppose that $u\in\roots_{\zz,\LL}$ and satisfies $|u+\rho|\ge \rho\sqrt{1-\rho}N^{\epsilon/4-1/2}$.
We first estimate the following term
%\begin{equation}
%\label{eq:aux_2015_12_07_01}
%	\frac{g_1(u)}{u+\rho}= \left( \frac{u}{-\rho}\right)^{j-2} 
%	\left( \frac{u+1}{1-\rho} \right)^{j(d-1)-a+(k+1)d}
%	\frac{e^{t(u+\rho)}}{u+\rho}
%\end{equation}
\begin{equation}
\label{eq:aux_2015_12_07_01}
	{g_1(u)}= \left( \frac{u}{-\rho}\right)^{j-2}
	\left( \frac{u+1}{1-\rho} \right)^{j(d-1)-a+(k+1)d}
	{e^{t(u+\rho)}}
\end{equation}
where $j=[\tau N^{3/2}/\sqrt{1-\rho}]$.
%in \eqref{eq:aux_2015_12_05_02}. 
We have the following claim.

\noindent Claim: 
Let $\Gamma$ be the contour $|u|^\rho|u+1|^{1-\rho}=constant$. 
Let $c$ be a real constant such that $0<c\le 1-\rho$. Then %on the contour $|z|^\rho|z+1|^{1-\rho}=$constant, 
the function $|u+1|^{-c}e^{\Re u}$ increases as $\Re u$ increases and $u$ stays on $\Gamma$, $u\in \Gamma$.  
% , where $c$ is any real constant satisfying $0<c\le 1-\rho$.
For $c>1-\rho$, the same holds for the part of $\Gamma$ such that $|u+\rho|^2\ge \rho(c-1+\rho)$.
%$|z+1|^{-c}e^{\Re z}$ increases as $z\in \Gamma$ and $\Re z$ increases.

\begin{proof}[Proof of Claim] 
Write $u=x+\ii y$, $x,y\in \realR$. 
Since $|u|^\rho|u+1|^{1-\rho}$ is a constant, we have 
%Write $z=u+iv$ for $u,v\in\realR$. The fact that $|z|^\rho|z+1|^{1-\rho}$ is a constant gives the following
\begin{equation}\label{eq:diffform12}
	\left(\frac{\rho x}{x^2+y^2}+\frac{(1-\rho)(x+1)}{(x+1)^2+y^2}\right)\dd x 
+\left(\frac{\rho y}{x^2+y^2}+\frac{(1-\rho) y}{(x+1)^2+y^2}\right) \dd y=0.
\end{equation}
Now
\begin{align*}
	\dd \log \left( |u+1|^{-c}e^{\Re u} \right) %-c\log\sqrt{(x+1)^2+y^2}+x\right)%\\
=&\left(1-\frac{c (x+1)}{(x+1)^2+y^2}\right)\dd x-\frac{c y}{(x+1)^2+y^2}\dd y. %\\
%=&\frac{(u+\rho)^2+\rho(1-\rho-c)+v^2}{u^2+2\rho u+\rho+v^2}du.
\end{align*}
Inserting \eqref{eq:diffform12}, we can remove $\dd y$-term and find, after direct calculations, that 
\begin{align*}
	\frac{\dd}{\dd x}  \log \left( |u+1|^{-c}e^{\Re u} \right) %-c\log\sqrt{(x+1)^2+y^2}+x\right)%\\
	=&\frac{(x+\rho)^2+\rho(1-\rho-c)+y^2}{x^2+2\rho x+\rho+y^2}.
\end{align*}
It is easy to check that the derivative is non-negative under the conditions of the Claim. 
\end{proof}

Taking the absolute value of \eqref{eq:aux_2015_12_07_01}, and noting that 
$|u||u+1|^{(1-\rho)/\rho}$ is a constant since $u\in \roots_{\zz,\LL}$ (recall that $N/L=\rho =d^{-1}$), 
we find, after substituting \eqref{eq:parameters_flat}, that the absolute value of 
\eqref{eq:aux_2015_12_07_01} is 
\begin{equation*}
	C|u+1|^{3d-2} |u+1|^{-(1-\rho)t +\tau^{1/3}x\frac{\sqrt{1-\rho}}{\rho}N^{1/2}} e^{t \Re u}
\end{equation*} 
for some constant $C>0$. 
Since $\roots_{\zz,\LL}$ is bounded, $|u+1|^{3d-2}$ is bounded. 
On the other hand, since $t N^{-1}=O(N^{1/2})$, the above is bounded by 
\begin{equation}\label{eq:absoltq}
	C\left(|u+1|^{-(1-\rho)+O(N^{-1})}e^{\Re u}\right)^{t}
\end{equation} 
for a different constant $C$. 
Applying the Claim with $c= (1-\rho)+O(N^{-1})$, we find that 
for $u\in\roots_{\zz,\LL}$ satisfying $|u+\rho|\ge \rho\sqrt{1-\rho}N^{\epsilon/4-1/2}$, 
\eqref{eq:absoltq} is the largest when $|u+\rho|= \rho\sqrt{1-\rho}N^{\epsilon/4-1/2}$.
Hence the absolute value of \eqref{eq:aux_2015_12_07_01} for the same range of $u$ is bounded above by 
the value when $|u+\rho|= \rho\sqrt{1-\rho}N^{\epsilon/4-1/2}$. 
Now for $u\in \roots_{\zz,\LL}$ with $|u+\rho|= \rho\sqrt{1-\rho}N^{\epsilon/4-1/2}$, we had proved the asymptotic formula 
\eqref{eq:aux_2016_04_16_03}.
Noting that $\xi$ here is given by $\xi=\frac{N^{1/2}(u+\rho)}{\rho\sqrt{1-\rho}}$. Then $|\xi|= N^{\epsilon/4}$, 
and it is also direct to check that $\Re(\xi^3)>0$ since $u\in \roots_{\zz,\LL}$ and $u$ is close to $-\rho$: see Figure~\ref{fig:limiting_nodes} for the limiting curve of $\roots_{\zz}$. 
Hence we find that \eqref{eq:aux_2015_12_07_01} is bounded by $O(e^{-CN^{3\epsilon/4}})$.  

%\marginpar{Estimate not clear.}
We now consider the term $\frac{q_{\zz,\RR}(u)}{u^{N}}$ in \eqref{eq:aux_2016_04_16_04}. 
We apply \eqref{eq:aux_2015_11_28_02} to $p(v)= \log(-u+v)$, where the branch cut is defined as before, i.e., along the non-positive real axis so that $p(v)$ is analytic for $v$ such that $\Re (v)>-\rho$, in particular, for $v$ inside $\Sigma_{\RR,\oout}$. 
Then %Also using \eqref{eq:auxtemp22}, we find that
\begin{equation}\label{eq:loghzmn}
\begin{split}
\sum_{v\in \roots_{\zz,\RR}} \log(-u+v) - (L-N) \log(-u) =- L\zz^L \int_{-\rho-\ii\infty}^{-\rho+\ii\infty}\log\left(\frac{u-w}{u+\rho}\right)\frac{(w+\rho)}{w(w+1)q_\zz(w)}\ddbarr{w}
\end{split}
\end{equation}
where the minus sign in front of the integral is due to the orientation change when we deform the contour from $\Sigma_{\RR,\oout}$ to $-\rho+i\realR$. 
Similar to the proof of Lemma~\ref{lm:asymptotics_notations} (a), we split the integration contour into two parts: $|\Im w|\le \rho\sqrt{1-\rho}N^{\epsilon/5-1/2}$ and $|\Im w|\ge \rho\sqrt{1-\rho}N^{\epsilon/5-1/2}$. 
In order to evaluate the first part, we set $\xi=\frac{(u+\rho)N^{1/2}}{\rho\sqrt{1-\rho}}$. Then 
$|\xi|\ge N^{\epsilon/4}$. 
We change the variables $\eta=\frac{(w+\rho)N^{1/2}}{\rho\sqrt{1-\rho}}$. 
For this first part, we have $|\eta|\le N^{\epsilon/5}$, and hence \eqref{eq:aux_2015_12_16_07} and \eqref{eq:Mzrhozz1} can be applied, and hence the first part is equal to 
\begin{align*}
	\int_{-\ii N^{\epsilon/5}}^{\ii N^{\epsilon/5}}\log\left(1-\frac{\eta}{\xi}\right)\frac{\eta z}{e^{-\eta^2/2}-z} 	
	\left(1+O(N^{3\epsilon/5-1/2}) \right)\ddbarr{\eta}
\end{align*}
where the error term is uniform in $\eta$. 
Since $|\eta|/|\xi|\le N^{-\epsilon/20}$, we may use the Taylor's theorem to the logarithm and find that the above integral is of order $O(1/|\xi|)$ and hence is bounded by $O(N^{-\epsilon/4})$. 
On the other hand, in the second part, we use the same change of variables $\eta=\frac{(w+\rho)N^{1/2}}{\rho\sqrt{1-\rho}}$ and the integral becomes 
\begin{equation}
\label{eq:aux_2016_2_24_02}
	-\rho(1-\rho)\left(\int_{-\ii\infty}^{-\ii N^{\epsilon/5}}+\int_{\ii N^{\epsilon/5}}^{\ii\infty}\right)\log\left(\frac{w-u}{u+\rho}\right)\frac{ \zz^L}{q_\zz(w)} \frac{\eta}{w(w+1)}	
	\ddbarr{\eta}
\end{equation} 
where $w=w(\eta)= -\rho+\sqrt{1-\rho} \eta N^{-1/2}$.
%Here we keep $w$ in part of the integrand to make the expression short. 
We now estimate the integrand. 
First we need a lower bound of $|\Re (u)+\rho|$.
Since $\roots_{\zz,\LL}$ is contained in the contour $\Sigma_{\LL}=\{u;|u^N(u+1)^{L-N}|=|\zz|^L\}$, we will find the lower bound
for $u\in \Sigma_{\LL}$ satisfying $|u+\rho|\ge\rho\sqrt{1-\rho}N^{\epsilon/4-1/2}$.  
It is now straightforward to check that the contour $\Sigma_{\LL}$
intersects any vertical line $\Re u= constant$ at most twice. 
In addition, we know from Lemma \ref{lm:asymptotics_nodes} that the part of $\roots_{\zz,\LL}$ in $|u+\rho|\le \rho\sqrt{1-\rho}N^{\epsilon/4-1/2}$ converges, after a rescaling, to a part of $\inodes_{z,\LL}$.
From this and the estimates in Lemma \ref{lm:asymptotics_nodes}, we can see that the part of the contour 
$\Sigma_{\LL}$ satisfying $|u+\rho|\ge\rho\sqrt{1-\rho}N^{\epsilon/4-1/2}$
is on the left of the vertical line $\Re z=-\rho$ with distance at least $CN^{\epsilon/4-1/2}$ for  some positive constant $C$. 
Hence for $u\in\Sigma_{\LL}$ satisfying $|u+\rho|\ge N^{\epsilon/4-1/2}$, we have $|\Re (u)+\rho|\ge CN^{\epsilon/4-1/2}$. This fact together with the trivial bound  $|u|\le C$ imply
\begin{equation}
CN^{\epsilon/4-1/2}\le \left|
\frac{u-w}{u+\rho}
\right|\le \frac{|w+\rho|+C}{CN^{\epsilon/4-1/2}}
\end{equation}
for all $w$ satisfying $\Re w=-\rho$. Thus we find 
\begin{equation}
\label{eq:aux_2016_2_24_03}
\left|\log\left(\frac{u-w}{u+\rho}\right)\right|\le C\log|w+\rho|+C\log N\le C\log |\eta|+C\log N.
\end{equation}
We also recall the estimate we did in \eqref{eq:aux_2016_2_25_01}, which gives %\marginpar{JB: from this.}
\begin{equation}
\label{eq:aux_2016_2_24_04}
\left|\frac{\zz^L}{q_\zz(w)}\right|\le Ce^{-CN^{2\epsilon/5}}|\eta|^{-2}.
\end{equation}
Note that the exponent here is slightly not the same as that of \eqref{eq:aux_2016_2_25_01} since we have a different $\epsilon$ here. By plugging in both estimates \eqref{eq:aux_2016_2_24_03} and \eqref{eq:aux_2016_2_24_04} and also the trivial bound $|w(w+1)|\ge C$ in \eqref{eq:aux_2016_2_24_02}, we get an upper bounded $e^{-CN^{2\epsilon/5}}$ of \eqref{eq:aux_2016_2_24_02}. Together with the bound for the integral on the first part of contour, we immediately obtain that
\begin{equation}
\label{eq:aux_2015_12_05_08}
\frac{q_{\zz,\RR}(u)}{u^{L-N}}=O(e^{CN^{-\epsilon/4}}).
\end{equation}

Combing this and the bound of~\eqref{eq:aux_2015_12_07_01} we obtained before, and the trivial bound $|u+\rho|^{-1}\le CN^{1/2-\epsilon/4}$, we have~\eqref{eq:h_estimate_inf}. The case of $\roots_{\zz,\RR}$ is similar. This completes the proof of Lemma~\ref{lm:asymptotics_of_h_1}. 

\bigskip

The proof of Lemma~\ref{lm:asymptotics_of_h_2} is similar. 
For (a), note that
\begin{equation}
N \log\left(\frac{u}{-\rho}\right) +(L-N) \log \left(\frac{u+1}{-\rho +1}\right) 
= -\frac{\xi^2}{2} +\frac{2\rho-1}{3\sqrt{1-\rho}}\xi^3 N^{-1/2} +\mbox{ error terms }.
\end{equation}
This implies
\begin{equation}
\label{eq:aux_2016_04_20_05}
\left[\frac{\tau}{\sqrt{1-\rho}} N^{1/2}\right]
 \left(N \log\left(\frac{u}{-\rho}\right) 
		+(L-N) \log \left(\frac{u+1}{-\rho +1}\right)\right)
=-\frac{1}{2}\xi^2\left[\frac{\tau}{\sqrt{1-\rho}} N^{1/2}\right]
	+\frac{2\rho -1}{3(1-\rho)} \tau \xi^3 +\mbox{ error terms }.
\end{equation}
Also note that
\begin{equation}
\label{eq:aux_2016_04_20_06}
\frac{\tilde g_2(u)}{\tilde g_2(-\rho)} = e^{-\gee_2(u) +\gee_2(-\rho) +\mbox{ error terms }}
\end{equation}
where $\gee_2$ is the function defined in~\eqref{eq:aux_2016_04_20_04}. By combining~\eqref{eq:aux_2016_04_20_03},~\eqref{eq:aux_2016_04_20_05} and~\eqref{eq:aux_2016_04_20_06}, we obtain 
\begin{equation}
g_2(w) = e^{-\frac{1}{3}\tau \xi^3 + \tau^{1/3}x\xi +\frac12\gamma\xi^2+ \mbox{ error term }}.
\end{equation}
Then we apply Lemma~\ref{lm:asymptotics_notations} (a) and obtain~\eqref{eq:aux_2016_04_20_07}. 
The rest of the arguments are similar to those of  Lemma~\ref{lm:asymptotics_of_h_1}. We omit the details.

\section{Proof of Theorem~\ref{thm:limit_current_step}}\label{sec:proofofthm:limit_current_step}

This theorem follows easily from Theorem~\ref{thm:limit_one_point_distribution_step}. 
We only prove part (b). The part (a) is similar and easier. 
%The proof follows Both (a) and (b)  follow directly from Theorem~\ref{thm:limit_one_point_distribution_step}. 

For notation simplification, we omit the subscript $n$, except for $\gamma_n$. % unless there is a confusion (such as $\gamma_n$ we define later).

We write $t$ defined in~\eqref{eq:aux_2016_04_11_01} as
\begin{equation}
\label{eq:aux_2016_04_11_03}
t	=\frac{N}{\rho^2}\left[\frac{\tau}{\sqrt{1-\rho}}N^{1/2}\right]
 			+\frac{1}{\rho^2}\gamma_n  N +\frac{1}{\rho^2}(N-k),
\end{equation}
where $k = k_n$ is an integer sequence such that $1\le k \le N$
and $\gamma_n$ is a sequence given by
\begin{equation}
\gamma_n =  \gamma + j + x\rho^{2/3}(1-\rho)^{2/3} t^{1/3}N^{-1} +\rho N^{-1} +\epsilon N^{-1}
\end{equation}
with some integer $j= j_n$ and an error term $\epsilon =\epsilon_n$ satisfying $0\le \epsilon<1$. Note that $j$ and $\epsilon$ appeared in $\gamma_n$ and $k$ are both uniquely determined by the equation~\eqref{eq:aux_2016_04_11_03}. Furthermore, by comparing~\eqref{eq:aux_2016_04_11_01} and~\eqref{eq:aux_2016_04_11_03}, 
we see that $j$ is uniformly bounded, and
\begin{equation}
\gamma_n =\gamma +j +O(N^{-1/2}).
\end{equation}

Note the following equivalence between the particle location (in the periodic TASEP) and the time-integrated current
\begin{equation}
\label{eq:aux_2016_04_12_02}
x_{k}(t) \ge iL +m +1\quad  \Longleftrightarrow\quad J_m(t) \ge iN+ (N+1-k) +m\chi_{m\le 0}
\end{equation}
for all $i \ge 1$ and all $k$ such that $x_{k}(0) \le iL +m$. 

%Now we check 
It is direct to check that 
\begin{equation}
\label{eq:aux_2016_04_20_09}
\begin{split}
  & x_k(0)+(1-\rho)t -\rho^{-1}(1-\rho)(N-k) -x\rho^{-1/3}(1-\rho)^{2/3} t^{1/3}\\
%&=  (1-\rho) t -\rho^{-1}(N-k) -x\rho^{-1/3}(1-\rho)^{2/3} t^{1/3}\\
%&=  (1-2\rho) \left( \frac{L}{|1-2\rho|} 								\left[\frac{|1-2\rho|\tau}{\sqrt{\rho(1-\rho)}}L^{1/2} \right]					-\frac{\gamma L}{1 -2\rho} +\frac{m}{1 -2\rho}			  \right)\\
%  &\ + \rho \left(                    \frac{N}{\rho^2}\left[\frac{\tau}{\sqrt{1-\rho}}N^{1/2}\right]
%   		            	+\frac{1}{\rho^2}\gamma_n N +\frac{1}{\rho^2}(N-k)                  \right)\\
%  &\ - \rho^{-1}(N-k) -x\rho^{-1/3}(1-\rho)^{2/3} t^{1/3}\\
&=  iL +m +1 +\epsilon \rho^{-1}
\end{split}
\end{equation}
with the integer $i$ given by
\begin{equation}
i = \sign(1-2\rho) \left[\frac{|1-2\rho|\tau}{\sqrt{\rho(1-\rho)}}L^{1/2} \right]
    + \left[\frac{\tau}{\sqrt{1-\rho}}N^{1/2}\right]   +j.
\end{equation}
Theorem~\ref{thm:limit_one_point_distribution_step} implies that
\begin{equation}
\lim_{L\to\infty}\prob\left( x_k(t) \ge iL +m +1\right) = \FS(\tau^{1/3}x; \tau, \gamma +j) = \FS(\tau^{1/3}x; \tau, \gamma).
\end{equation}
%where we used the property that $\FS(x;\tau,\gamma)$ is periodic in $\gamma$ in the second equation. This property was shown in Subsection~\ref{sec:limdisdefste}.
%the next section, see~\eqref{eq:aux_2016_04_12_01}.
Therefore, by the equivalence relation~\eqref{eq:aux_2016_04_12_02}, we obtain
\begin{equation}
\lim_{L\to\infty} \prob\left( J_m(t) \ge iN+ (N+1-k) +m\chi_{m\le 0} \right) = \FS(\tau^{1/3}x; \tau, \gamma).
\end{equation}

It remains to show that
\begin{equation}
iN+ (N+1-k) +m\chi_{m\le 0} =\rho(1-\rho)t -|m|/2 +(1-2\rho)m/2 -\rho^{2/3}(1-\rho)^{2/3} xt^{1/3} +O(1).
\end{equation}
(More precisely, $O(1)$ is equal to $1-\rho -\epsilon$.) This follows by multiplying $\rho$ to both sides of~\eqref{eq:aux_2016_04_20_09}.

%\cite{Baik-Liu16b}
%\cite{Liu16}
%\cite{Baik-Liu16c}
%\bibliographystyle{abbrv}
%\bibliography{bibliography}

%\frenchspacing
%\bibliographystyle{plain}
%\bibliographystyle{cpam}
%\bibliography{bibliography}

\def\cydot{\leavevmode\raise.4ex\hbox{.}}

\end{document}